\newtheorem{theorem}{Theorem}[section]
\newtheorem{corollary}{Corollary}[section]
\newtheorem{lemma}{Lemma}[section]
\newtheorem{assumption}{Assumption}[section]
\newtheorem{remark}{Remark}[section]
\begin{document}

\setcounter{page}{0}

\thispagestyle{empty}

\renewcommand{\thefootnote}{\fnsymbol{footnote}}

\begin{center}
	{\LARGE Structural Periodic Vector Autoregressions 
}
\end{center}

\bigskip

	{\Large \qquad Daniel Dzikowski\footnote{TU Dortmund University,  Department of Statistics, D-44221 Dortmund, Germany; dzikowski@statistik.tu-dortmund.de; corresponding author}
	\qquad  \qquad \qquad  Carsten Jentsch\footnote{TU Dortmund University,  Department of Statistics, D-44221 Dortmund, Germany; jentsch@statistik.tu-dortmund.de}}

	{\large \quad TU Dortmund University
	\qquad  \quad \;  \; \; TU Dortmund University}

\medskip

\begin{center}
	{\today}
\end{center}

\begin{abstract} 
While seasonality inherent to raw macroeconomic data is commonly removed by seasonal adjustment techniques before it is used for structural inference, this may distort valuable information in the data. As an alternative method to commonly used structural vector autoregressions (SVARs) for seasonally adjusted data, we propose to model potential periodicity in seasonally unadjusted (raw) data directly by structural periodic vector autoregressions (SPVARs). This approach does not only allow for periodically time-varying intercepts, but also for periodic autoregressive parameters and innovations variances. As this larger flexibility leads to an increased number of parameters, we propose linearly constrained estimation techniques.  
Moreover, based on SPVARs, we provide two novel identification schemes and propose a general framework for impulse response analyses that allows for direct consideration of seasonal patterns. We provide asymptotic theory for SPVAR estimators 
and impulse responses under flexible linear restrictions and introduce a test for seasonality in impulse responses. For the construction of confidence intervals, we discuss several residual-based (seasonal) bootstrap methods and prove their bootstrap consistency under different assumptions. A real data application shows that useful information about the periodic structure in the data may be lost when relying on common seasonal adjustment methods.
\end{abstract}

\noindent
{\bf Keywords:} Periodic Vector Autoregressions, Impulse Response Analysis, Linear Restrictions,  Seasonal Adjustment, Residual-Based Seasonal Bootstrap

\medskip

\noindent
{\bf JEL Codes:} C30, C32


\newpage

\renewcommand{\thefootnote}{\arabic{footnote}}

\setcounter{footnote}{0}


\section{Introduction}

Macroeconomic time series often exhibit seasonal structure defined as recurrent intra-year movement. The reasons for seasonal fluctuations in economic time series are manifold. Classical literature as, for instance, \cite{hylleberg1992modelling},   \cite{hylleberg1993seasonality},  \cite{canova1995seasonal} and \cite{franses1996recent} argue that seasonality of time series may be caused by calendar and weather effects or by seasonally varying behavior of economic agents, among others. To illustrate, we consider a seasonally unadjusted monthly data set consisting of industrial production (IP), CPI inflation (INF) and the federal funds rate (FFR). 
In Figure \ref{sd_acf_data}, we present estimated spectral densities (SDs) and autocovariance functions (ACFs) for all three (univariate) time series. We find that IP and INF exhibit strong periodic annual patterns, while FFR shows no periodic structure at all. The standard approach to deal with seasonality is to remove the seasonal structure of macroeconomic time series by using seasonal adjustment methods \emph{prior} to a further analysis. The most popular approaches are the classical X-11 seasonal adjustment program established by \cite{shishkin1967x},  the TRAMO/SEATS program by \cite{gomez1996programs} and model-based upgrades X-12-ARIMA \citep{findley1998new},  
X-13-ARIMA-SEATS
\citep{monsell2007x} and Demetra$+$ \cite{eurostat2009guidelines}. However, by removing seasonality in raw macroeconomic data \emph{before} it is used for structural inference, this may distort valuable information in the data. Hence, in the literature, seasonal adjustment methods are often controversial due to their high complexity, lack of linkage to economic theory and the resulting lack of transparency; see, for instance, \cite{gersovitz1978seasonality},  \cite{bell1984issues}, \cite{osborn1989performance}, \cite{franses1996recent} and \cite{mcelroy2022review}. \cite{ghysels1996seasonal} and \cite{ghysels2001econometric} show that seasonal adjustment methods may distort the structure of the data leading to misspecified standard errors and inference. Recently, \cite{doppelt2021should} argue that the results of SVAR analyses are highly dependent on the used seasonal adjustment method and that structural analysis of seasonally unadjusted data should be preferred. For instance, when relying on the common technique of seasonal demeaning, this may not capture the entire periodic structure in the data. This becomes visible in Figure \ref{sd_acf_demean}, where the estimated SDs and ACFs of \emph{seasonally demeaned} IP, INF and FFR are displayed. Apparently, while seasonal demeaning removes almost all periodicity of INF, it is clearly not able to adequately remove the entire periodic structure of IP. That is, even after seasonal demeaning, IP still exhibits pronounced periodic structure. 
While this leftover periodicity is not surprising as only a simple seasonal demeaning is used to get Figure \ref{sd_acf_demean}, even after applying more sophisticated seasonal adjustment methods, the seasonally adjusted data may still display seasonal structure; see \cite{del2004consequences}, \cite{bell2012unit} and \cite{findley2017detecting}.  As an example, \cite{rudebusch2015puzzle} and \cite{lunsford2017lingering} found residual or leftover seasonality in seasonally adjusted gross domestic product estimates by the Bureau of Economic Analysis (BEA).  Even after BEA introduced a new estimation strategy to remove residual seasonality of macro variables in 2018, \cite{consolvo2019residual} and \cite{lunsford2025residual} detected remaining residual seasonality in several adjusted macro variables after BEAs improvements.

 \begin{figure}[t]
\centering
\includegraphics[scale=0.8]{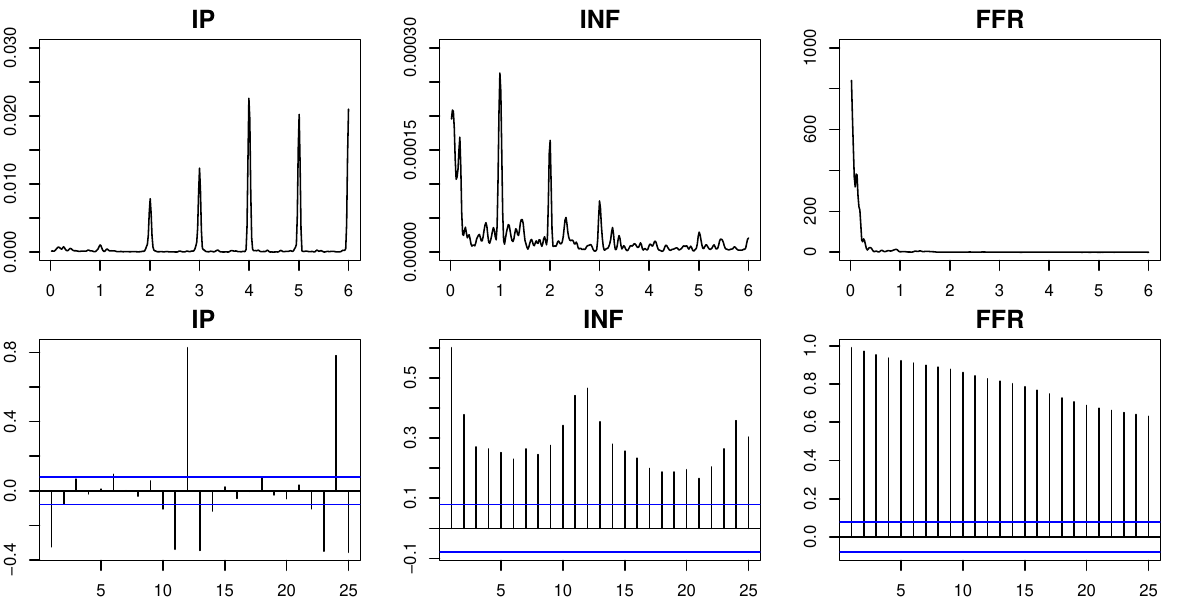}
\caption{Estimated SDs (top panels) and ACFs (bottom panels) of monthly seasonally  unadjusted industrial production (IP), inflation (INF) and federal funds rate (FFR) from 1968-2019. The x-axis of SD plots denotes the normalized frequencies times $S=12$.
}
\label{sd_acf_data}
\end{figure}

In this paper, to address the issue of often highly complex and not transparent seasonal adjustment methods that lack a reasonable economic interpretation, we propose an alternative to the commonly used vector autoregressions (VARs) fitted \emph{after} seasonal adjustment. More precisely, we propose the direct modeling of seasonally unadjusted macroeconomic data based on periodic autoregressions. This approach enables a general framework for structural impulse response analyses that allows to take seasonal patterns directly into account. Periodic autoregressions are useful to capture not only the seasonal structure present in the mean of time series data, but also in their covariances in the sense that both the mean and the (auto)covariances become periodic functions of time. 
Processes with periodically varying covariances, often called periodically correlated stochastic processes, have a long history. First introduced by \cite{gladyshev1961periodically}, pioneering work on (univariate) periodic autoregressive (PAR) processes was done by \cite{jones1967time}, \cite{pagano1978periodic} and \cite{troutman1979some}, who derived inference techniques on periodic autoregressive parameter estimates, predictions using PAR models and the connection between PAR and the related stationary multivariate autoregressive process. PARs for modeling seasonally unadjusted macroeconomic data have been used by \cite{osborn1988seasonality}, \cite{osborn1989performance}, \cite{novales1997forecasting}, \cite{franses2005forecasting} and \cite{ghysels2006forecasting}, who mainly focused on forecasting studies. However, although PAR models have received some attention in economics and certain methodological developments have been made \citep[see e.g.][]{franses2004periodic}, they have not received much popularity for structural analysis. This is mainly because, for analyzing seasonally unadjusted macroeconomic data, a vector-valued extension of the PAR model class is required that naturally comes with a lot of parameters.

\begin{figure}[t]
\centering
\includegraphics[scale=0.8]{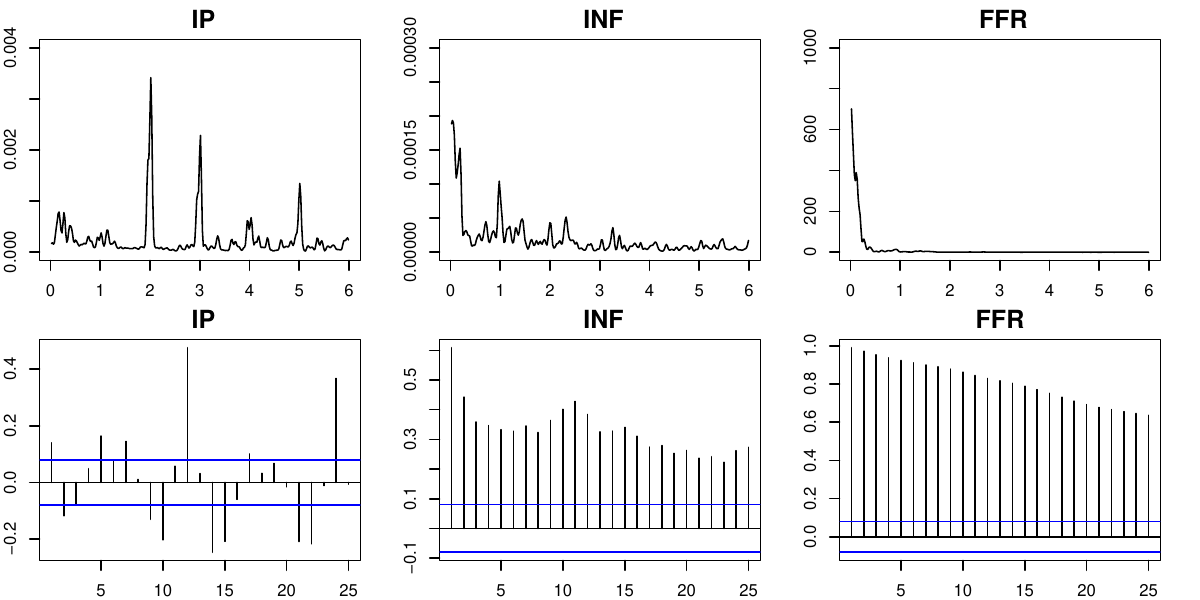}
\caption{Estimated SDs (top panels) and ACFs (bottom panels) of monthly seasonally demeaned industrial production (IP), inflation (INF) and federal funds rate (FFR) from 1968-2019. The x-axis of SD plots denotes the normalized frequencies times $S=12$.
}
\label{sd_acf_demean}
\end{figure}

Asymptotic theory for such periodic vector autoregressive (PVAR) models have been developed by \cite{ula1990periodic,ula1993forecasting}, \cite{franses2004periodic} and \cite{aknouche2007causality}, who derived periodic stationarity conditions and parameter estimation methods. Testing for stationarity against periodic stationary alternatives is considered in \cite{jentsch2012periodic}. Further, P(V)AR models are often considered in stochastic volatility modeling of financial data, see e.g.~\cite{bollerslev1996periodic}, \cite{aknouche2009quasi} and \cite{aknouche2017periodic}. Thanks to their large number of parameters, PVAR models are extremely flexible. However, unconstrained PVAR estimation does often not work well in applications even with moderate sample sizes as typically encountered in macroeconometrics due to pronounced overfitting issues.  To address these, \cite{ursu2009modelling} and \cite{boubacar2023estimating} derived least squares estimators under linear parameter restrictions. However, they only allow to impose linear restrictions on parameters \emph{within} a given season to control the number of parameters to be estimated. In this paper, we extend their approach by explicitly allowing also for practically relevant linear constraints \emph{across} seasons. The importance of constraints across seasons in a PVAR setup is also illustrated in Figures \ref{sd_acf_data} and \ref{sd_acf_demean}. Both figures show that for INF and FFR the autoregressive coefficients may be restricted to be the same across seasons, given evidence that there is hardly any seasonality left after allowing for seasonal mean shifts. This renders the model more parsimonious, facilitating the estimation of the PVAR
model. 

This paper is organized as follows. Section \ref{Section_PVAR} introduces PVAR processes and their properties. 
In Section \ref{Section_Estimation}, we derive (contraint) least-squares estimation for the PVAR parameters that extend the approaches of \cite{ursu2009modelling} and \cite{boubacar2023estimating} by allowing also for practically relevant constraints \emph{across seasons}.
We provide asymptotic theory for the (constrained) estimators of the PVAR parameters under (possibly weak) periodic white noise assumptions. Section \ref{Section_SPVAR} introduces the structural PVAR model, which enables a direct structural analysis of periodic time series without prior seasonal adjustment. We discuss two identification schemes for SPVAR models and derive asymptotic theory for corresponding (structural) impulse response analysis. Section \ref{Section_Bootstrapping} proposes suitable residual-based seasonal bootstrap methods for the construction of confidence intervals and provides bootstrap consistency results. Further, a test for seasonality in impulse responses is introduced. 
A real data application on seasonally unadjusted IP, INF and FFR is provided in Section \ref{Section_Data}, while Section \ref{Section_Conclusion} summarizes our findings. All proofs and additional figures are deferred to the appendix.

\section{Periodic Vector Autoregressive Process}\label{Section_PVAR}

In this section, we discuss periodic vector autoregressive (PVAR) processes and their properties. Let $\{ y_t \}_{t\in \mathbb{Z}} = \{ y_{Sn+s} \}_{n\in \mathbb{Z}, s = 1,\dots, S}$ be an $m$-dimensional PVAR process \citep{ursu2009modelling} described by the model equation 
\begin{align}\label{PVAR_model}
y_{Sn+s} =  \nu(s) +  A_{1}(s) y_{Sn+s-1} + \dots +  A_{p(s)}(s) y_{Sn+s-p(s)} + \epsilon_{Sn+s},	\quad	n \in \mathbb{Z},
\end{align}
where $S$ is the number of seasons within one cycle and $y_{Sn+s}$ denotes the observation in season $s$ and cycle $n$. The model order $p(s)$, the $m$-dimensional seasonal intercept  $\nu(s)$, and the ($m \times m$) seasonal autoregressive coefficient matrices $A_{i}(s)$, $i=1,\dots, p(s)$ are allowed to vary across the seasons $s=1,\dots,S$. The innovation process $\{ \epsilon_{Sn+s} \}_{n\in \mathbb{Z}, s = 1,\dots, S}$ is assumed to be an $m$-dimensional \emph{periodic white noise process} with period $S$ and non-singular covariance matrices $E(\epsilon_{Sn+s} \epsilon^{\prime}_{Sn+s})=\Sigma_{\epsilon}(s)$, $s=1\dots, S$ meaning that 
\begin{align}\label{periodic_white_noise}
\epsilon_{Sn+s} = H_0(s) w_{Sn+s},  \quad  n \in \mathbb{Z},
\end{align} 
where  $\{w_{t} \}_{t\in \mathbb{Z}}$ is a white noise process with $E( w_{t} w_{t}^{\prime} ) = \boldsymbol{I}_{m}$ for all $t\in\mathbb{Z}$ 
and the non-singular matrix $H_0(s)$ satisfies $ \Sigma_{\epsilon}(s) = H_0(s)H_0(s)^{\prime}$. 
Here, $\boldsymbol{I}_{m}$ denotes the $m$-dimensional identity matrix and $ \Sigma_{\epsilon}(s)$ is allowed to vary across the seasons $s=1,\dots,S$.  Since the model order $p(s)$ of a PVAR process can vary across seasons as well, we introduce the notation PVAR($\boldsymbol{p}$) with $\boldsymbol{p} = (p(1), \dots, p(S))$ and PVAR($p$), if $p(s) = p$ for all $s=1,\dots,S$. For $S=1$, the PVAR($\boldsymbol{p}$)=PVAR($p$) process collapses to a common VAR($p$) process. However, the PVAR($\boldsymbol{p}$) in \eqref{PVAR_model} can also be represented with constant order $p$, where $p =\max\{p(1), \dots, p(S)\}$ by imposing zero restrictions on $A_{j}(s)$ for $p(s) < j \leq p$ \citep{lund2000recursive}. 

Throughout the paper, we assume that $\{ y_{Sn+s} \}_{n\in \mathbb{Z}, s = 1,\dots, S}$ is  periodically stationary and a causal infinite-order moving-average representation of \eqref{PVAR_model} exists. \cite{franses2004periodic} use the higher-dimensional VAR representation of PVAR($\boldsymbol{p}$) processes to obtain conditions for periodic stationarity and for causality of PVAR processes. These conditions can be found in Appendix C. The moving-average representation of a periodically stationary (and causal) process $\{y_{Sn+s}\}_{n\in \mathbb{Z}, s = 1,\dots, S}$ is given by
\begin{align}\label{PMA_representation}
y_{Sn+s} = \mu(s) +  \sum\limits_{k=0}^{\infty} \Phi_{k}(s)  \epsilon_{Sn+s-k},	\quad	n\in\mathbb{Z}, 
\end{align}
where $\Phi_{k}(s)$ are $(m \times m)$-dimensional matrices consisting of the moving average coefficients and $\mu(s) = E(y_{Sn+s})$, $s=1,\dots, S$.  Due to  periodic stationarity, the moving average coefficient matrices $\Phi_{k}(s)$ are absolutely summable for all $s=1,\dots,S$. Following \cite{ursu2009modelling}, they can be obtained recursively by the relations
\begin{align*}
\Phi_{0}(s) = \boldsymbol{I}_m   \quad \text{and}  \quad \Phi_{k}(s)  = \sum\limits_{j=1}^{k} A_{j}(s) \Phi_{k-j}(s-j) ,	\quad 	k\in\mathbb{N}.
\end{align*}
The moving average coefficient matrices are periodic in the sense that $\Phi_{k}(Sn+s) =  \Phi_{k}(s), \; n \in \mathbb{Z},\; k \in \mathbb{N}_{0}.$
Reduced-form impulse responses are defined as responses of the system $k$ periods after a one-time unit shock at time $Sn+s$ \citep{kilian2017structural}, while moving average coefficient matrices can be interpreted as the effect of a one-time unit shock at time $Sn + s -k$ on $y_{Sn+s}$.  For a standard VAR, the reduced-form impulse responses are given by its moving average coefficient matrices due to the time-invariant structure of VARs. However, in PVARs, this equality does not hold since the shocks' effect 
on the system of variables depends on the season the shock occurs. Hence, in order to get PVAR reduced-form impulse responses, denoted as $\Phi_{k}^{IR}(s)$, we need to shift the seasonal index of periodic moving average coefficient matrices $k$-steps further, that is  
\[  \Phi_{k}^{IR}(s) =  \Phi_{k}(s+k)=  \sum\limits_{j=1}^{k} A_{j}(s+k) \Phi_{k-j}(s+k-j) , \; \; \;k \in \mathbb{N}_{0} . 
\]
For $i,j = 1,\dots, m$, the element in the $i$-th row and $j$-th column of $\Phi_{k}^{IR}(s)$ quantifies the effect of a (one-time) unit shock in the $j$-th component of $\epsilon_{Sn+s}$ on the $i$-th component of $y_{Sn+k}$, while all other components of $\epsilon_{Sn+s}$ are not shocked at all. Please note that the reduced-form errors are typically contemporaneously correlated with each other and that this contemporaneous correlation is not reflected in the reduced-form impulse responses. Section \ref{Section_SPVAR} provides an overview of how this contemporaneous structure can be taken into account.  

\section{Estimation of PVARs and Asymptotic Inference}\label{Section_Estimation}
In this section, we provide asymptotic theory for a least squares estimation approach for the PVAR model parameters, which allows for flexible linear restrictions. In this regard, we extend the setup of \cite{ursu2009modelling} and cover more general forms of linear restrictions, which also allow for constraints across seasons\footnote{In comparison to \cite{ursu2009modelling}, we find a representation of the PVAR($\boldsymbol{p}$) process that allows for imposing general linear restrictions on the parameter vector. \cite{ursu2009modelling} consider each season of the PVAR($\boldsymbol{p}$) process individually and are therefore only able to restrict the seasonal parameters separately such that no constraints across seasons can be imposed. For example, in our approach, the PVAR can be constrained in such a way that all autoregressive parameters are constant across seasons, and thus a VAR is estimated.}. 
 In addition, it should be noted that \emph{unconstrained} PVAR estimation does not work well in most applications due to pronounced overfitting issues. However, in view of Figures \ref{sd_acf_data} and \ref{sd_acf_demean}, it may be often not necessary to fit a full PVAR model, i.e., without any constraints across seasons, to explain the dependence structure of the data.  Consequently, general linear restrictions play a crucial role in our setup. To address these issues, we derived a PVAR setup which is also able to collapse to a non-periodic VAR under suitable linear restrictions.

\subsection{Estimation of PVAR Models}\label{Subsection_Estimation}
In order to estimate the PVAR parameters suitably under general linear restrictions, a multivariate least squares approach is used. We assume that we observe $N$ complete cycles consisting of $S$ seasons, that is, we have $y_1,  \dots ,y_{SN}$. Additionally, for notational convenience, suppose that pre-sample values $y_{s^{\prime}},\dots,y_0$, where $s^{\prime} = \min \{1-p(1), \dots, S-p(S)\}$ are available.
Then, the PVAR($\boldsymbol{p}$) process in \eqref{PVAR_model} can be rewritten as
\begin{align}\label{Linear_Model}
 Z = B X + E, 
\end{align} 
where $Z,  E,  B$ and $X$ are defined as
\begin{align*}
Z &= (y_1,  \dots ,y_{SN}),  \; \; \; \; \; \;
E = (\epsilon_1 , \dots, \epsilon_{SN}), \\
B &= \bigl(\nu(1), A_{1}(1), \dots, A_{p(1)}(1), \nu(2), A_{1}(2), \dots, A_{p(2)}(2),\dots,\nu(S), A_{1}(S), \dots, A_{p(S)}(S)\bigr), \\
X &= \begin{pmatrix}
X_0(1) & & 0 & X_1(1) & & 0 & & X_{N-1}(1) &  & 0  \\   & \ddots & & &  \ddots & & \dots & &  \ddots & \\
0 &  & X_0(S) & 0 &  & X_1(S) & & 0 & & X_{N-1}(S)
\end{pmatrix},
\end{align*}
which are of dimension $(m \times SN)$, $(m \times SN)$, $(m \times \sum_{s=1}^S (mp(s)+1))$ and $(\sum_{s=1}^S (mp(s)+1) \times SN)$, respectively. The periodic autoregressive parameters are included in $B$, while the entries $X_{n}(s)$ of $X$ are $(mp(s)+1)$-dimensional random vectors given by $X_{n}(s) = (1,y_{Sn+s-1}^{\prime}, \dots, y_{Sn+s-p(s)}^{\prime})^{\prime}$.
Vectorizing the model equation \eqref{Linear_Model} yields
\begin{align}\label{Vec_Linear_Model}
 z= vec\{Z\} &=  vec\{B X\} + vec\{E\}  =\{X^{\prime} \otimes I_m\}  \beta + e, 
\end{align}
where $z= (y_1^{\prime} , \dots, y_{SN}^{\prime})^{\prime}$, $\beta = vec\{B\}$ and $e =  vec\{E\} = (\epsilon_1^{\prime} , \dots, \epsilon_{SN}^{\prime})^{\prime}$, respectively. Note that $\beta$ is given by $\beta = (\beta^{\prime}(1), \dots, \beta^{\prime}(S))^{\prime}$ with $\beta(s) = vec\{(\nu(s),A_{1}(s), \dots, A_{p(s)}(s)) \}$ and that the covariance matrix $\Sigma_E $ of $e$ is block-diagonal with $\Sigma_E = I_N \otimes \Sigma_{\epsilon} $, where
$\Sigma_{\epsilon}  =  \text{diag}\bigl(
\Sigma_{\epsilon}(1),\Sigma_{\epsilon}(2) , \dots,  \Sigma_{\epsilon}(S) \bigr) \in \mathbb{R}^{Sm \times Sm}$ is also block-diagonal. 

\cite{ursu2009modelling} and \cite{boubacar2023estimating} use linear restrictions that can be imposed on parameters \emph{within} one season and do not allow for restrictions \emph{across} seasons.  However, restrictions across seasons can play a crucial role, especially in macroeconomic applications, since they allow parameters to be constrained as constant across the seasons. In general, any set of linear constraints imposed on $\beta$ can be represented by
\begin{align}\label{restriction}
   \beta = R\gamma + r, 
\end{align}
where $R$ is a known $(m \sum_{s=1}^S (mp(s)+1) \times M)$-dimensional matrix with rank $M$, $r$ is a $m \sum_{s=1}^S (mp(s)+1)$-dimensional known vector and $\gamma$ is an $M$-dimensional unconstrained parameter vector of interest.  Substituting $ \beta = R\gamma + r$ into  \eqref{Vec_Linear_Model} yields
\begin{align*} 
\bold{z}_r = \{X^{\prime} \otimes \boldsymbol{I}_m\}  R\gamma + e,
\end{align*}
where 
$\bold{z}_r =z -\{X^{\prime} \otimes \boldsymbol{I}_m\} r$. Hence, the least squares estimator $\widehat{\gamma}_{LS}$ of $\gamma$ is defined as the minimizer of the residual sum of squares $RSS(\gamma)$, where
\begin{align*}
RSS(\gamma) =  \sum\limits_{n=0}^{N-1} \sum\limits_{s=1}^{S}\ \epsilon_{Sn+s}^{\prime} \epsilon_{Sn+s}  = e^{\prime} e = [\bold{z}_r-  \{X^{\prime} \otimes \boldsymbol{I}_m)\}  R\gamma]^{\prime}  [\bold{z}_r-  \{X^{\prime} \otimes  \boldsymbol{I}_m)\}  R\gamma].
\end{align*} 
Hence, according to \cite{lutkepohl2005new}, the corresponding least squares estimator is given by 
$\widehat{\gamma} = \left[R^{\prime} \{X X^{\prime} \otimes \boldsymbol{I}_m \} R \right]^{-1} R^{\prime} \{ X \otimes \boldsymbol{I}_m \} \bold{z}_r$
and the restricted least squares estimator $\widehat{\beta}_{res}$ for $\beta$ is obtained by 
replacing $\gamma$ by $\widehat{\gamma}$ in $\beta = R \gamma + r$,  that is
\begin{align}\label{Res_PVAR_Estimator}
\widehat{\beta}_{res} = R \widehat{\gamma} + r=R \left[R^{\prime} \{X X^{\prime} \otimes \boldsymbol{I}_m \} R \right]^{-1} R^{\prime} \{ X \otimes \boldsymbol{I}_m \} \bold{z}_r + r.
\end{align}
The unrestricted least-squares estimator $\widehat{\beta}_{LS}$ is obtained by setting $R = \boldsymbol{I}_{m \sum_{s=1}^S (mp(s)+1)}$ and $r=\boldsymbol{0}$ in \eqref{Res_PVAR_Estimator} leading to
$\widehat{\beta}_{LS} = \left[ \{X X^{\prime} \otimes  \boldsymbol{I}_m \} \right]^{-1} \{ X \otimes  \boldsymbol{I}_m \} z$.
On the other hand, if $p(s) = p$ for all $s=1,\dots,S$ and $R$ and $r$ are defined as $R =\boldsymbol{1}_S\otimes \boldsymbol{I}_{m  (mp+1)} $ and $r = \boldsymbol{0}$, where $\boldsymbol{1}_l$ denotes the $l$-dimensional vector of ones, the seasonal PVAR($p$) estimators $\beta(s)$ are restricted to be the same across seasons.
Once the periodic autoregressive parameters are estimated for all $s=1,\dots,S$, a natural candidate to estimate the periodic covariance matrices $\Sigma_{\epsilon}(s)$ is 
\[ 
\widehat{\Sigma}_{\epsilon}(s) = \frac{1}{N- k(s)} \sum\limits_{n=0}^{N-1} \widehat{\epsilon}_{Sn+s}  \widehat{\epsilon}_{Sn+s}^{\prime},
\] 
where $k(s)= \lfloor M(s)/m \rfloor$, with $M(s)$ denoting the number of freely varying parameters in season $s$. 
In the following, let $\sigma(s) = vech\{ \Sigma_{\epsilon}(s)\}$ and denote by $\widehat{\sigma}(s) = vech\{\widehat{\Sigma}_{\epsilon} (s)\}$ its empirical counterpart, where the $vech$ operator stacks the entries on the lower-triangular part of a square matrix columnwise below each other.  Further, let $\sigma = (\sigma(1)^{\prime}, \dots,  \sigma(S)^{\prime})^{\prime}$ and $\widehat{\sigma} = (\widehat{\sigma}(1)^{\prime}, \dots,  \widehat{\sigma}(S)^{\prime})^{\prime}$, which are both of dimension $S\widetilde{m}$ with $\widetilde{m} = m(m+1)/2$. 

\subsection{Asymptotic Theory for PVAR Estimators}\label{subsection_Asymptotics}

In order to derive the joint asymptotic properties of  $\widehat{\beta}_{res}$ under general linear restrictions and periodic covariance matrix estimators $\widehat{\sigma}$, some notation is introduced. At first, for $s=1\dots, S$ and $k\in\mathbb{N}$, we define $mp(s) \times m$ matrices $C_k(s)=(\Phi_{k-1}^\prime(s-1), \ldots,\Phi_{k-p(s)}^\prime(s-p(s))^{\prime}$,
where $\Phi_{k}(s)$ denote the moving average coefficient matrices in \eqref{PMA_representation}. In addition,  let $L_{mS}$ be the duplication matrix that satisfies 
$
 (vech\{ B_1 \}^{\prime}, \dots, vech\{ B_S \}^{\prime} )^{\prime}  = L_{mS}  (vec\{ B_1 \}^{\prime}, \dots, vec\{ B_S \}^{\prime})^{\prime}  
$
for symmetric ($m\times m$) matrices $B_1, \dots,B_S$.  Further, for $a,b,c \in \mathbb{Z}$ and $s_1,s_2=1,\ldots,S$, we define the ($m^2 \times m)$ and ($m^2 \times m^2$) dimensional matrices 
\begin{align}\label{cumulants1}
&\kappa_{a,b}(s_1,s_2) \;=  E(vec \{ \epsilon_{Sn+s_1} \epsilon_{Sn+s_1-a}^{\prime} \}  \epsilon_{S(n-b)+s_2}^{\prime}), \\ &\tau_{a,b,c}(s_1, s_2) = \begin{cases}   E(vec \{ \epsilon_{Sn+s_1} \epsilon_{Sn+s_1-a}^{\prime} \}  vec \{ \epsilon_{S(n-b)+s_2} \epsilon_{S(n-b)+s_2-c}^{\prime} \}) &         \\ \quad\quad - vec \{\Sigma_{\epsilon}(s_1) \} vec \{\Sigma_{\epsilon}(s_2) \}^{\prime},  & a=c=0 \\ E(vec \{ \epsilon_{Sn+s_1} \epsilon_{Sn+s_1-a}^{\prime} \}  vec \{ \epsilon_{S(n-b)+s_2} \epsilon_{S(n-b)+s_2-c}^{\prime} \}), & \text{ else}  \label{cumulants2} \end{cases}
\end{align}
and the $(Sm^2 \times Sm^2)$ matrices $\tau_{a,b,c}=(\tau_{a,b,c}(s_1,s_2))_{s_1, s_2 = 1, \dots, S}$.
The symbols $\overset{d}{\to}$ and $\overset{p}{\to}$ denote convergence in distribution and in probability, respectively, and $ \mathcal{N}(\boldsymbol{\mu}, \boldsymbol{\Sigma})$ denotes a normal distribution with mean vector $\boldsymbol{\mu}$ and covariance matrix $\boldsymbol{\Sigma}$.


Note that some of the matrices in \eqref{cumulants1} and \eqref{cumulants2} simplify under an independence assumption imposed on the error process. In contrast to this case, also called strong PVAR, we aim to cover so-called weak PVARs as well that do explicitly not make any independence assumption, but allow for an uncorrelated, but possibly dependent periodic white noise process. To enable the derivation of asymptotic theory for weak PVARs, we impose the following assumptions on the process $\{y_{Sn+s}\}_{n \in \mathbb{Z}, s = 1,\dots,S}$.

\begin{assumption}[PVAR with Periodic Weak White Noise]\ \label{Mixing_Assumptions} 
\begin{itemize} 
\item[(i)] Let the true DGP \eqref{PVAR_model} be periodically stationary. 
\item[(ii)] Let $\{\epsilon_{Sn+s} \}_{n \in \mathbb{Z}, s = 1,\dots,S}$ be a periodic white noise process with period $S$ and 
$E(\epsilon_{Sn+s} \epsilon^{\prime}_{Sn+s})=\Sigma_{\epsilon}(s), s=1\dots, S$ with
$\epsilon_{Sn+s} = H_0(s) w_{Sn+s}$,
$n \in \mathbb{Z}$,
where  $\{w_{t} \}_{t \in \mathbb{Z}}$ is a strictly stationary white noise process with $E(w_{t} w_{t}^{\prime} ) = \boldsymbol{I}_{m}$, and non-singular matrix $H_0(s)$ satisfying $ \Sigma_{\epsilon}(s) = H_0(s)H_0(s)^{\prime}$.
\item[($iii)$] The process $\{w_{t} \}_{t \in \mathbb{Z}}$ is $\alpha$-mixing with $\sum_{h=1}^{\infty} h^{\rho-2}(\alpha(h))^{\delta/(2\rho-2+\delta)} < \infty$ for $\rho=4$ and some $\delta > 0$, 
where $\alpha(h) = \sup_{A \in \mathcal{F}_{-\infty}^0 , B \in \mathcal{F}_{h}^\infty} |P(A\cap B)- P(A)P(B)|$, $h\in\mathbb{N}$ denote the $\alpha$-mixing coefficients of $\{w_{t} \}_{t \in \mathbb{Z}}$ with $\mathcal{F}_{-\infty}^0 = \sigma(\dots, w_{-2}, w_{-1},w_0)$ and $\mathcal{F}_{h}^\infty = \sigma(w_{h}, w_{h+1},\dots)$. Further, we assume $E|w_{t}|^{\rho+2\delta}_{\rho+2\delta}<\infty$, where $|.|_{p}$ denotes the entry-wise $p$-norm.
\end{itemize}
\end{assumption}
Instead of the common i.i.d.~assumption for the white noise process $\{w_{t}\}_{t\in\mathbb{Z}}$, we impose a less restrictive $\alpha$-mixing assumption.
As demonstrated in our real-data application in Figure \ref{WWN1} in Section \ref{Section_Data} below, where the (P)VAR residuals are likely to be non-linearly dependent, the derivation of asymptotic theory under this more general assumption of uncorrelated, but possibly dependent innovations, is important in practice. 
Following \cite{jentsch2022asymptotically}, we impose a typical $\alpha$-mixing assumption, which implies absolute summability of (joint) cumulants up to order $4$. 
The resulting periodic weak white noise process $\{\epsilon_{Sn+s} \}_{n\in\mathbb{Z}, s=1,\dots,S}$ cover a large class of uncorrelated, but possibly dependent periodically (strictly) stationary processes and allows, e.g., for periodic conditional heteroscedasticity, see \cite{francq2011asymptotic},  \cite{bibi2016periodic}.
While \cite{boubacar2023estimating} already showed asymptotic normality in the weak PVAR case for least squares estimators with linear restrictions \emph{within} a given season, we generalize their result and enable linearly constrained least squares estimators that also allow for restrictions \emph{across} seasons. Additionally, we provide asymptotic results for covariance estimators. Under Assumption \ref{Mixing_Assumptions}, we obtain a joint CLT of $\widehat{\beta}_{res}$ and $\widehat{\sigma}$.

\begin{theorem}[Joint CLT with Periodic Weak White Noise]\label{theo_joint_clt}\
Under Assumption \ref{Mixing_Assumptions},  we have
\[ 
\sqrt{N} \begin{pmatrix}
\widehat{\beta}_{res} - \beta  \\  \widehat{\sigma} - \sigma \end{pmatrix}  \overset{d}{\to} \mathcal{N} (\boldsymbol{0}, V) , \; \; \; V = \begin{pmatrix}
V^{(1,1)} & V^{(1,2)}  \\ V^{(2,1)} & V^{(2,2) } 
\end{pmatrix},
\]
where the submatrices $V^{(1,1)} , V^{(2,1)}=V^{(1,2)\prime}$ and $V^{(2,2)} $ are of dimension $(m\sum_{s=1}^S(mp(s) +1) \times m\sum_{s=1}^S(mp(s) +1)),  (S\widetilde{m} \times m\sum_{s=1}^S(mp(s) +1)), (m\sum_{s=1}^S(mp(s) +1) \times S\widetilde{m})$ and $(S\widetilde{m} \times S\widetilde{m})$ and are provided in Appendix A.
\end{theorem}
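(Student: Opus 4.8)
The plan is to reduce the joint statement to a single central limit theorem for a vector of partial sums of a function of the mixing innovations, and then to identify the limiting covariance blockwise. For the first block, I substitute the restricted model $\mathbf{z}_r = \{X'\otimes \boldsymbol{I}_m\}R\gamma + e$ (from \eqref{restriction} and \eqref{Vec_Linear_Model}) into the estimator \eqref{Res_PVAR_Estimator}, which gives the exact identity
\[
\sqrt{T}\,(\widehat{\beta}_{res}-\beta) = R\Big[\tfrac{1}{T}R'\{XX'\otimes \boldsymbol{I}_m\}R\Big]^{-1}R'\,\tfrac{1}{\sqrt{T}}\{X\otimes \boldsymbol{I}_m\}e .
\]
Since $X$ is block-diagonal across seasons, the Gram matrix $\tfrac{1}{N}XX'$ is block-diagonal with $s$-th block $\tfrac{1}{N}\sum_{n=0}^{N-1}X_n(s)X_n(s)'$. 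Periodic stationarity (Assumption \ref{Mixing_Assumptions}(i)) together with the mixing and moment conditions (Assumption \ref{Mixing_Assumptions}(iii)) yields an ergodic-type law of large numbers, so that $\tfrac{1}{T}XX' = \tfrac{1}{SN}XX' \overset{p}{\to} \tfrac{1}{S}\Gamma$ with $\Gamma=\mathrm{diag}(\Gamma(1),\dots,\Gamma(S))$ and $\Gamma(s)=E(X_n(s)X_n(s)')$. Thus the bracketed factor converges in probability to a constant, and the first block is governed by the score $\tfrac{1}{\sqrt{T}}\{X\otimes \boldsymbol{I}_m\}e$, whose $s$-th seasonal component is $\tfrac{1}{\sqrt{T}}\sum_n X_n(s)\otimes\epsilon_{Sn+s}$.

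For the covariance block I write $\widehat{\epsilon}_{Sn+s}=\epsilon_{Sn+s}-(\widehat{B}(s)-B(s))X_n(s)$ and expand $\widehat{\Sigma}_{\epsilon}(s)$. The cross terms are products of $\tfrac{1}{N}\sum_n\epsilon_{Sn+s}X_n(s)'$ and $\widehat{B}(s)-B(s)$; because $\epsilon_{Sn+s}$ is uncorrelated with the lagged regressors composing $X_n(s)$, the former is $O_p(N^{-1/2})$ and the latter is $O_p(N^{-1/2})$, so after multiplication by $\sqrt{N}$ these (and the quadratic remainder) are negligible. Hence the estimation error in $\widehat{\beta}_{res}$ does not enter at leading order, and with the normalization $N-k(s)$ asymptotically equivalent to $N$,
\[
\sqrt{N}\,\big(\widehat{\sigma}(s)-\sigma(s)\big)=\tfrac{1}{\sqrt{N}}\sum_{n=0}^{N-1}\big(vech\{\epsilon_{Sn+s}\epsilon_{Sn+s}'\}-\sigma(s)\big)+o_p(1).
\]

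Next I stack the two leading terms into one vector $\tfrac{1}{\sqrt{N}}\sum_n V_n$, where $V_n$ collects $\{X_n(s)\otimes\epsilon_{Sn+s}\}_{s}$ and $\{vech\{\epsilon_{Sn+s}\epsilon_{Sn+s}'\}-\sigma(s)\}_{s}$. Since $y$, and hence $X_n(s)$, is a causal moving average in the innovations by \eqref{PMA_representation} with $\epsilon_{Sn+s-k}$ entering $X_n(s)$ through the matrices $C_k(s)$, the vector $V_n$ is a generally infinite-order function of the $\alpha$-mixing process $\{w_t\}$. I would establish the CLT by truncation: replace $X_n(s)$ by its finite moving-average surrogate $\sum_{k=1}^{K}C_k(s)\,\epsilon_{Sn+s-k}$ (plus the intercept term), apply a CLT for $\alpha$-mixing sequences to the resulting $K$-dependent approximation, and control the truncation tail uniformly in $N$ using absolute summability of $\Phi_k(s)$ and of cumulants up to order $4$, which follows from Assumption \ref{Mixing_Assumptions}(iii) along the lines of \citep{jentsch2022asymptotically}. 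The limiting covariance is the absolutely convergent sum over cycle lags of the autocovariances of $V_n$, and the factor $S$ in $SV$ is exactly the mismatch between the $\sqrt{T}=\sqrt{S}\sqrt{N}$ scaling in the statement and the per-cycle $\sqrt{N}$ scaling used here. Combining via Slutsky with the probability limit of the Gram matrix then delivers the stated normal limit, with $V^{(1,1)}$ arising from $R[R'\{\Gamma\otimes \boldsymbol{I}_m\}R]^{-1}R'$ sandwiching the long-run variance of the score, $V^{(2,2)}$ from the long-run variance of the centered $vech\{\epsilon\epsilon'\}$ process (expressed through $L_{mS}$ and the fourth-order objects $\tau_{a,b,c}$ in \eqref{cumulants2}), and the off-diagonal $V^{(1,2)}=V^{(2,1)\prime}$ from the cross-covariance of the score with the $vech$ process, which upon inserting the moving-average representation and the $C_k(s)$ reduces to the third-order objects $\kappa_{a,b}$ in \eqref{cumulants1}. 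Under strong, symmetric white noise these $\kappa$ terms vanish and $V$ is block-diagonal; it is the genuinely weak periodic setting that forces the off-diagonal blocks to be retained.

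The main obstacle I anticipate is concentrated in the third step and is twofold: first, rigorously passing from the infinite moving-average representation to the mixing CLT, where the uniform control of the truncation tail relies essentially on the order-$4$ cumulant summability implied by Assumption \ref{Mixing_Assumptions}(iii); and second, the bookkeeping required to collapse the resulting long-run covariance — a sum over seasons $s_1,s_2$, cycle lags, and moving-average lags of innovation cumulants weighted by $C_k(s)$ and $R$ — into the compact closed form in terms of $\kappa_{a,b}$, $\tau_{a,b,c}$, $L_{mS}$, and $R$ that is recorded in the supplement. The remaining ingredients (the Gram-matrix LLN and the negligibility of the plug-in error for $\widehat{\sigma}$) are comparatively routine given periodic stationarity and the orthogonality $E(\epsilon_{Sn+s}X_n(s)')=\boldsymbol{0}$.
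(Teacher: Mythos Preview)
Your proposal is correct and follows essentially the same route as the paper: a Slutsky reduction via the law of large numbers for $\tfrac{1}{N}XX'$, negligibility of the plug-in error so that $\widehat{\sigma}$ may be replaced by the infeasible $\widetilde{\sigma}$, a truncation of the periodic moving-average representation at level $q$ (the paper invokes Proposition~6.3.9 in \cite{brockwell1991time} for the three-part approximation argument you describe), a CLT for the truncated $\alpha$-mixing functional, and identification of the long-run covariance through the $C_k(s)$, $\kappa_{a,b}$ and $\tau_{a,b,c}$ objects. One small imprecision: the off-diagonal block $V^{(2,1)}$ is not built from $\kappa$ alone; the cross-covariance between $vech\{\epsilon\epsilon'\}$ and the score components $X_n(s)\otimes\epsilon_{Sn+s}$ produces both third-order $\kappa_{0,h}$ terms (from the intercept part of $X_n(s)$) and fourth-order $\tau_{0,h,k}$ terms (from the lagged-innovation part), as the paper's expressions for $[\boldsymbol{\Omega}_q^{V^{(2,1)}}(s)]^{(1)}$ and $[\boldsymbol{\Omega}_q^{V^{(2,1)}}(s)]^{(2)}$ make explicit.
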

Due to the seasonal structure of the estimators $\widehat{\beta}_{res}$ and $\widehat{\sigma}$, the submatrices of $V$ are composed of further season-wise submatrices leading to a tedious and lengthy notation. In particular, the submatrices of $V$ depend on infinite sums of $\tau_{a,b,c}(s_1,s_2)$ and $\kappa_{a,b}(s_1,s_2)$, $s_1,s_2 = 1, \dots, S$ and $a,b,c \in \mathbb{Z}$ given in \eqref{cumulants1} and \eqref{cumulants2}, which are cumbersome to estimate. The submatrix $V^{(2,1)} = V^{(1,2)\prime}$ is generally not the zero-matrix, meaning that the estimators $\widehat{\beta}_{res}$ and $\widehat{\sigma}$ become asymptotically dependent. We recommend bootstrap methods to be discussed in Section \ref{Section_Bootstrapping} to approximate the limiting variances when conducting inference. If we impose stronger assumptions such as e.g.~martingale difference sequences or i.i.d.~assumptions, the submatrices of $V$ simplify a lot. Nevertheless, as demonstrated in \cite{bruggemann2016inference} and \cite{jentsch2019dynamic}, this will not allow for valid wild bootstrap inference. The simplifications in the i.i.d.~case are discussed in Remark \ref{strongPVAR_Remark} below. 
For the special case of $S=1$, where the weak (restricted) PVAR collapses to a weak (restricted) VAR, we obtain the same results as in \cite{francq2007multivariate} and \cite{bruggemann2016inference}.

\begin{remark}[Relationship to \cite{boubacar2023estimating}]\label{Main_Ursu_Remark}
Under Assumption \ref{Mixing_Assumptions} and if no restrictions across seasons are imposed, that is, we can write the linear restrictions as $\beta(s)=R(s) \gamma(s)+ r(s)$, $s=1,\ldots,S$, 
where $R(s)$ is of dimension $(m^2p(s)+1) \times {M(s)}$ 
and $r(s)$ of dimension $m^2p(s)+1$ for $s = 1,\dots, S$, our asymptotic results cover the asymptotic results stated in Theorem 3.1 of \cite{boubacar2023estimating} as a special case. However, according to our derivations, we do not obtain block diagonality of $V^{(1,1)}$, which is in contrast to the result claimed in \cite{boubacar2023estimating}. $\beta(s_1)$ and $\beta(s_2)$ are typically asymptotically dependent, as $\tau_{a,b,c}(s_1,s_2) \neq 0$ for $s_1 \neq s_2$ and $a,c \in \mathbb{N}$, $b \in \mathbb{Z}$.
\end{remark}



\begin{remark}[PVAR with Periodic Strong White Noise]\label{strongPVAR_Remark}
For the special case of a strong PVAR estimated under general linear constraints, which is also covered by Theorem \ref{theo_joint_clt}, we
impose a periodic strong white noise \citep{ursu2009modelling} assumption on $\{y_{Sn+s}\}_{n \in \mathbb{Z}, s = 1,\dots,S}$. This is achieved by replacing $(ii)$ and $(iii)$ in Assumption \ref{Mixing_Assumptions} by $(ii)^\prime$ and $(iii)^\prime$, where
\begin{itemize} 
\item[$(ii)^{\prime}$] Let $\{\epsilon_{Sn+s}\}_{n \in \mathbb{Z}, s = 1,\dots,S}$ be a periodic white noise process with period $S$ and 
$E(\epsilon_{Sn+s} \epsilon^{\prime}_{Sn+s})=\Sigma_{\epsilon}(s), s=1\dots, S$ with
$\epsilon_{Sn+s} = H_0(s) w_{Sn+s}$,
$n \in \mathbb{Z}$,
where  $\{w_{t} \}_{t \in \mathbb{Z}}$
is an i.i.d.~white noise with $E( w_{t}w_{t}^{\prime} ) = \boldsymbol{I}_{m}$, and the non-singular matrix $H_0(s)$ satisfies $ \Sigma_{\epsilon}(s) = H_0(s)H_0(s)^{\prime}$. 
\item[$(iii)^{\prime}$] Let $E|w_{t}|^4_4<\infty$, where $|.|_{p}$ denotes the entry-wise $p$-norm. 
\end{itemize}
In this case, we obtain a joint CLT of the PVAR estimators $\widehat{\beta}_{res}$ and $\widehat{\sigma}$ with simplified limiting variances. Details can be found in Corollary B.1 
in Appendix B.
\end{remark}

\section{Structural PVAR Analysis}\label{Section_SPVAR}

In this section, we discuss structural analyses in PVAR setups and propose two novel identification methods for structural-form PVAR models (SPVARs). Further, we derive asymptotic properties of the periodic (structural) impulse responses. 

\subsection{SPVAR Identification Methods}

As in the VAR setup, PVAR models do not capture contemporaneous effects between economic variables. This structure remains in the error term as (possibly) periodic contemporaneous correlation. However, in contrast to VAR setups, this periodic contemporaneous correlation in the PVAR error term consists not only of the periodic contemporaneous effects \emph{between} the variables, but also of the periodic \emph{heteroscedastic} effects of the variables. 
This raises the question of whether the periodic heteroscedasticity in the error component should be explained by the SPVAR model or whether it should remain in the structural shocks in the form that structural shocks occur more/less pronounced in some seasons than in others. To compare the periodic dynamics within the data, we generally propose to standardize the structural shocks across seasons, so that the periodic heteroscedastic structures of the variables are also captured by the SPVAR model. There is also the question of whether and to what extent the periodic contemporaneous correlations between the variables are explainable. 
However, as discussed in the following, a complete explanation of these correlations requires a very large number of restrictions, which may be too restrictive in practice.


In view of commonly used identification approaches used for structural VARs, a natural way to identify the PVAR system structurally is to simply generalize the SVAR identification by imposing restrictions for each period. That is, 
we impose the following assumptions on the variance-covariance matrices of the periodic white noise error term 
\begin{align}\label{SPVAR_Scheme1}
\Sigma_{\epsilon}(s) =H_{0}(s) H_{0}(s)^{\prime} \quad \text{for all } s=1,\dots, S, 
\end{align} 
where $H_{0}(s) \in \mathbb{R}^{m \times m}$ is invertible due to the non-singularity of $\Sigma_{\epsilon}(s)$. For each season $s = 1,\dots, S$, the system of equations consists of $m(m+1)/2$ independent equations and $m^2$ unknowns. Thus, \eqref{SPVAR_Scheme1} is uniquely solvable, if at least $m(m-1)/2$ restrictions are imposed on $H_{0}(s)$ for \emph{each} season $s=1,\dots, S$.\footnote{A trivial way to identify structural shocks is defining $H_{0}(s)$ as the lower-triangular Cholesky factor of $\Sigma_{\epsilon}(s)$, $s=1,\dots, S$.  However, this recursive short-run identification method does not necessarily lead to economically interpretable structural shocks.} Transforming the reduced-form PVAR model \eqref{PVAR_model} by left-multiplying with the inverse SPVAR impact matrix $H^{-1}_{0}(s)$
yields
\begin{align}\label{SPVAR_Model1}
H_{0}^{-1}(s) y_{Sn+s} &=  H_{0}^{-1}(s) A_{1}(s) y_{Sn+s-1} + \dots +  H_{0}^{-1}(s) A_{p(s)}(s) y_{Sn+s-p(s)} + w_{Sn+s},
\end{align} 
where $w_{Sn+s} = H_{0}^{-1}(s)\epsilon_{Sn+s}$ denotes the structural error satisfying $\Sigma_w(s) =Var(w_{Sn+s}) = Var(H_{0}^{-1}(s) \epsilon_{Sn+s}) =  \boldsymbol{I}_m$ for all $s=1,\dots,S$.
This means that, in line with Assumption \ref{Mixing_Assumptions} (ii), the structural errors become a strictly stationary white noise process with the identity matrix as variance-covariance matrix. Further, in this setup, the SPVAR model in \eqref{SPVAR_Model1} is indeed able to explain all the periodic contemporaneous effects and the periodic heteroscedasticity of the variables, since the endogenous variables are driven entirely by strictly stationary white noise shocks.

However, in comparison to the SPVAR setup, the usual SVAR identification techniques such as, e.g., short-run, long-run, proxy, sign or inequality restrictions (see \cite{kilian2017structural} for an overview) are always based on $\widebar{\Sigma}_{\epsilon} = 1/S \sum_{s=1}^S \Sigma_{\epsilon}(s)$, which can be interpreted as the pooled (average) version of the variance-covariance matrices $\Sigma_{\epsilon}(s)$ in our periodic setup. Note that, to achieve full identification of \eqref{SPVAR_Scheme1}, identifying restrictions have to be imposed on \emph{each} season $s=1,\dots ,S$. Hence, structural identification techniques on the PVAR based on \eqref{SPVAR_Scheme1} can be more restrictive than the same identification schemes used in an SVAR setup. This is also made clear by the fact that the implicit assumption $Var(w_{Sn+s}) =\Sigma_w(s) =  \boldsymbol{I}_m$ for all $s$ in \eqref{SPVAR_Model1} implies the corresponding assumption in SVARs $ \widebar{\Sigma}_w = 1/S \sum_{s=1}^S \Sigma_{w}(s) = \boldsymbol{I}_m$, while the reverse direction is generally not true. Accordingly, it is questionable whether these harsher restrictions are appropriate in practical issues. However, it must be made clear that they have to be imposed to capture the entire periodic contemporaneous structure of the data by the SPVAR model.

The general idea behind our proposed structural identification methods for PVARs is based on the assumption that periodic contemporaneous structures cancel out \emph{on average} over all seasons. That is, instead of imposing $Var(w_{Sn+s}) =\Sigma_w(s) =  \boldsymbol{I}_m$ for all periods $s=1,\ldots,S$, we restrict (only) the pooled (average) variance-covariance matrix of the structural shocks to a unit matrix. That is, we impose that
\begin{align}\label{SVAR_Ident}
\widebar{\Sigma}_{\epsilon} = H_0 H_0^{\prime}
\end{align} holds. By construction $H_0 \in \mathbb{R}^{m \times m}$ is invertible and can be interpreted as the non-seasonal impact matrix of the structural shocks. Note that \eqref{SVAR_Ident} is the natural counterpart to the SVAR identification problem.

For structural analyses in PVAR setups, we propose two different identification techniques called \emph{full identification} and \emph{approximate in-average identification}, which are discussed in detail in the following. The full identification approach for SPVARs addresses both sources of periodic contemporaneous correlation, that is, the periodic contemporaneous effect as well as the heteroscedasticity of the variables.
However, many restrictions have to be made to achieve full identification. The (approximate) in-average identification approach allows the structural shocks to have possibly fully occupied seasonal variance-covariance matrices, while their pooled variance-covariance matrix is (approximately) a unit matrix. Approximate in-average identification has the advantage over full identification that identification can be achieved with much fewer restrictions. However, it cannot explain periodic contemporaneous effects of the variables and these remain in the structural error term.

\subsubsection{Full Identification of SPVARs}

The most general SPVAR identification problem based on \eqref{SVAR_Ident} that is able to explain the complete periodic contemporaneous correlation of the variables can be formulated as follows
\begin{align}\label{SPVAR_Scheme4}
&\Sigma_{\epsilon}(s) = H_0  W(s) W(s)^{\prime} H_0^{\prime}, \; \; \text{ for all } s = 1, \dots, S  \\ s.t.  \; \;  &\frac{1}{S} \sum_{s=1}^S W(s) W(s)^{\prime} = \boldsymbol{I}_m,  \nonumber
\end{align} 
where $H_0 \in \mathbb{R}^{m \times m}$ and $W(s) \in \mathbb{R}^{m \times m}, s = 1,\dots, S$ are invertible. In \eqref{SPVAR_Scheme4}, we naturally impose $m(m+1)/2$ restrictions by $1/S \sum_{s=1}^S W(s) W(s)^{\prime} = \boldsymbol{I}_m$ to achieve \eqref{SVAR_Ident}. Therefore, $W(s)$ can be interpreted as a seasonal matrix containing all the periodic heteroscedastic and contemporaneous effects around the mean level, while $H_0$ can be interpreted as the non-seasonal impact matrix. As we have $m^2+Sm^2$ parameters from $H_0$ and $W(s), s = 1,\dots, S$, additional $m(m-1)/2 + Sm(m-1)/2$ restrictions on $H_0$ and $W(s), s = 1,\dots,S$ are needed, in order to solve \eqref{SPVAR_Scheme4} uniquely. Using $H_0(s) = H_0 W(s)$ as SPVAR impact matrix yields
\begin{align}\label{SPVAR_Model4}
H_{0}^{-1}(s) y_{Sn+s} &=  H_{0}^{-1}(s) A_{1}(s) y_{Sn+s-1} + \dots +  H_{0}^{-1}(s) A_{p(s)}(s) y_{Sn+s-p(s)} + w_{Sn+s},
\end{align} 
where $\Sigma_w(s) = \boldsymbol{I}_m$ for all $s=1,\dots,S$. This means that both the periodic heteroscedasticity and periodic contemporaneous effects are fully captured by the SPVAR model. However, the issue of this approach is that due to the large number of restrictions that are needed to fully identify \eqref{SPVAR_Scheme4}, there is no way around restricting the periodic contemporaneous structure between structural shocks and variables, i.e., restricting $W(s),s=1,\dots,S$. Since the periodic contemporaneous structure between structural shocks and variables is unclear, it can be questionable to constrain this structure in any way. Trivial identification strategies such as setting $W(s)$ as the Cholesky factor of $W(s)W(s)^{\prime}$ or restricting $W(s)$ to be a symmetric matrix for all seasons $s=1,\dots,S$ combined with imposing $m(m-1)/2$ restrictions on $H_0$ solve the identification problem, but rely on highly questionable assumptions on the periodic contemporaneous structure of the data. For example, a Cholesky identification would assume that there are no periodic contemporaneous effects between some structural shocks and variables, while e.g. the assumption of a symmetric matrix $W(s)$ would mean that the periodic contemporaneous effects between the $j$-th shock and the $i$-th variable and between the $i$-th shock and the $j$-th variable are equal for $i \neq j$, $i,j = 1,\dots, m $.
\\

\begin{remark}[Extended SVAR identification for SPVARs]\label{SPVAR_Ident_SVARgeneralization}  Note that the generalization of SVAR identification discussed in \eqref{SPVAR_Scheme1} is a special case of the identification problem \eqref{SPVAR_Scheme4}. In order to get full identification of $H_0(s)$ of equation \eqref{SPVAR_Scheme1}, one idea would be to impose identification techniques as, e.g., short-run, long-run or sign restrictions on $H_0(s)$ for all seasons $s = 1,\dots, S$. However, these restrictions on $H_0(s)$ implicitly constrain the periodic contemporaneous structure of the structural shocks on the variables and can therefore be questionable in practice. For example, imposing a short-run restriction on the $ij$-th element of $H_0(s)$ for all seasons $s= 1,\dots, S$ implies a short-run restriction on the $ij$-th element of $H_0 = (h_{ij})_{i,j = 1,\dots, m}$ and additional restrictions on $W(s) = (w_{ij}(s))_{i,j = 1,\dots, m}$ that the periodic contemporaneous effects of all other structural shocks $w_l, l \neq j$ cancel out, i.e.
\begin{align*}
    \sum_{l \neq j} h_{il}w_{lj}(s) = 0, \quad \quad \text{for all } s = 1,\dots,S.
\end{align*}
\end{remark}

\subsubsection{Approximate In-Average Identification of SPVARs}\label{Section_Approx_identification}

Since restrictions on the periodic contemporaneous dependence structure between structural shocks and variables can be highly questionable, we propose an alternative identification method for SPVARs that is based on \eqref{SVAR_Ident}. The idea behind this identification method is to identify seasonal structural shocks whose variances are \emph{normalized} for all seasons, but whose contemporaneous effects may vary seasonally, but cancel out \emph{on average} across all seasons. This alternative identification method is not as flexible as \eqref{SPVAR_Scheme4}, but can be identified without restricting the periodic contemporaneous dependence structure. The system of equations can be formulated by 
\begin{align}\label{SPVAR_Scheme3}
&\Sigma_{\epsilon}(s) = H_0 \;\Lambda(s) \; H_0^{\prime}, \; \;  \text{ for all } s = 1, \dots, S  \\ s.t.  \; \;  &\frac{1}{S} \sum_{s=1}^S \Lambda(s) = \boldsymbol{I}_m,  \nonumber
\end{align} 
where $\Lambda(s) \in \mathbb{R}^{m \times m}$ is a symmetric seasonal matrix. 
In total, we have $Sm(m+1)/2$ independent equations and $m^2+Sm(m+1)/2$ parameters from $H_0$ and $\Lambda(s), s = 1,\dots, S$. As $m(m+1)/2$ restrictions are naturally imposed by $1/S \sum_{s=1}^S \Lambda(s) = \boldsymbol{I}_m$ to achieve \eqref{SVAR_Ident}, additional $m(m-1)/2$ restrictions 
are needed (as in SVAR identification), in order to solve \eqref{SPVAR_Scheme3} uniquely\footnote{From a mathematical perspective, restrictions can also be imposed on the seasonal covariance matrix $\Lambda(s)$, but this is not really practicable as such restrictions are highly questionable. Accordingly, in most applications, we restrict the non-seasonal impact matrix $H_0$ exclusively to obtain identification.}. Using $\widetilde{H}_{0}(s) = H_0 \; diag[\Lambda(s)]^{1/2}$ as SPVAR transformations matrix, where $diag[\Lambda(s)]$ is a diagonal matrix containing the diagonal elements of $\Lambda(s)$,
yields
\begin{align}\label{SPVAR_Model3}
\widetilde{H}_{0}^{-1}(s) y_{Sn+s} &=  \widetilde{H}_{0}^{-1}(s) A_{1}(s) y_{Sn+s-1} + \dots +  \widetilde{H}_{0}^{-1}(s) A_{p(s)}(s) y_{Sn+s-p(s)} + \widetilde{w}_{Sn+s}.
\end{align} 
The covariance matrix of the structural shocks is then given by $\Sigma_{\widetilde{w}}(s) = \Lambda_0(s)$ for all $s=1,\dots,S$ with $\Lambda_0(s) = diag[\Lambda(s)]^{-1/2} \; \Lambda(s) \; diag[\Lambda(s)]^{-1/2}$. We introduce the notation of $\widetilde{H}_0(s)$ and $\widetilde{w}_{Sn+s}$, because the structural shocks $\widetilde{w}_{Sn+s}$ do not necessarily correspond to those $w_{Sn+s}$ from Assumption \ref{Mixing_Assumptions}, due to $\Sigma_{\widetilde{w}}(s) = \Lambda_0(s) \neq \boldsymbol{I}_m$. The construction of $\widetilde{H}_0(s)$ ensures that the variances of $\widetilde{w}_{Sn+s}$ are normalized to one for all seasons. Accordingly  $\Lambda_0(s)$ can be described as the seasonal correlation matrix of the structural shocks meaning that periodic heteroscedasticity of the PVAR shocks is completely captured by the SPVAR model \eqref{SPVAR_Model3}. The main advantage of \eqref{SPVAR_Scheme3} compared to \eqref{SPVAR_Scheme4} is that significantly fewer restrictions need to be imposed on $\widetilde{H}_{0}(s)$ in order to identify the SPVAR system.  However, using this approach, periodic contemporaneous effects are explicitly not captured by the SPVAR in \eqref{SPVAR_Model3} and in general, due to the variance normalization of $\widetilde{w}_{Sn+s}$, it is not true that the periodic correlations across the seasons cancel each other out completely. That is, because $1/S \sum_{s=1}^S \Lambda(s) = \boldsymbol{I}_m$ does not necessarily imply $\widebar{\Sigma}_{\widetilde{w}} = 1/S \sum_{s=1}^S \Lambda_0(s) = \boldsymbol{I}_m$. That is, with greater periodic heteroscedasticity and periodic contemporaneous effects, the off-diagonal elements of the pooled variance matrix $\bar{\Sigma}_{\widetilde{w}}$ of the structural shocks in \eqref{SPVAR_Model3} deviate more from zero.
Hence, if the reduced-form PVAR shocks either show weak periodic heteroscedasticity or weak periodic (contemporaneous) correlation, 
the identification method from \eqref{SPVAR_Scheme3} is very suitable for structural identification since $\widebar{\Sigma}_{\widetilde{w}}  \approx \boldsymbol{I}_m$ holds in this case. This means that the contemporaneous structure in the structural shocks is typically negligibly small such that the structural shocks can be considered orthogonal. 

\begin{remark}[Standardization of PVAR Errors]\label{SPVAR_Correlation_Identification} If the reduced-form PVAR shocks are highly periodic heteroscedastic, it can be very useful to first standardize the reduced-form PVAR shocks and then to identify the structural shocks from the standardized PVAR shocks afterwards. This is achieved by multiplying the left-hand side in equation \eqref{SPVAR_Scheme3} by the matrix  $diag[\Sigma_{\epsilon}(s)]^{-1/2}$ from the left and right, respectively, and replacing the SPVAR impact matrix in \eqref{SPVAR_Scheme3} by $\widetilde{H}_{0}(s) = diag[\Sigma_{\epsilon}(s)]^{1/2}\; H_0 \; diag[\Lambda(s)]$. 
\end{remark}

\begin{remark}[Exact In-Average Identification]\label{SPVAR_Seasonal_Variances} 

If only the static, non-seasonal contemporaneous effects of the system of variables are of interest, then the SPVAR transformation matrix from \eqref{SPVAR_Scheme3} can be replaced by $\widetilde{H}_0(s) = H_0$ for $s =1,\dots,S$. This leads to structural shocks $\widetilde{w}_{Sn+s}$ with $\Sigma_{\widetilde{w}}(s) = \Lambda(s)$ for $s=1,\dots,S$ and $\widebar{\Sigma}_{\widetilde{w}} = \boldsymbol{I}_m$ implying that the structural shocks are fully orthogonal in-average. However, this identification approach does neither capture periodic heteroscedastic nor contemporaneous effects of the variables. Note that in-average identification is equivalent to identifying \eqref{SVAR_Ident} directly.
\end{remark}
\begin{remark}[Practical Identification Techniques]\label{SPVAR_Identification_Methods}
To obtain complete identification of scheme \eqref{SPVAR_Scheme3}, identification techniques that only affect the impact matrix (such as short-run or static sign restrictions) can be applied directly to \eqref{SVAR_Ident} to first identify the non-seasonal impact matrix $H_0$ before computing $\widetilde{H}_0(s)$ for $s=1,\dots,S$. By construction, $\widetilde{H}_0(s),s=1,\dots,S$ is fully identified, if $H_0$ is fully identified. Long-run restrictions can be imposed in the sense that the long-run effects apply on average over the seasons meaning that we impose restrictions on the pooled (average) periodic long run impulse response $\widebar{\Theta}^{LR}$ given by
\[ \widebar{\Theta}^{LR} = \frac{1}{S} \sum_{s=1}^S \Biggl[\Bigl(\sum_{k=0}^{\infty}\Phi^{IR}_{k}(s) \Bigr) \widetilde{H}_{0}(s) \Biggl].\]
\end{remark}

\subsection{Structural Impulse Responses}

Now, we suppose that the structural PVAR impact matrix $H_0(s),s=1,\dots,S$ is fully identified. Structural impulse responses at lag $k$ are defined as a reaction of $y_{Sn+s+k}$ in response to a one-time impulse in $w_{Sn+s}$.  Hence,  following \cite{kilian2017structural}, structural PVAR impulse responses for each season $s=1,\dots, S$ are defined as
\[  \Theta^{SIR}_{k}(s) = \frac{\partial y_{Sn+s+k}}{\partial w_{Sn+s}} , \quad k\in\mathbb{N}_0.\]
For $i,j = 1,\dots,m$, the element in the $i$-th row and $j$-th column of $\Theta^{SIR}_{k}(s)$ denotes the structural impulse response of the $i$-th component of $y_{Sn+s+k}$ to a unit shock in the $j$-th component of $w_{Sn+s}$ at season $s$.  The structural impulse responses can also be deduced by transforming the moving average representation of $y_{Sn+s}$ as follows
\begin{align*}
y_{Sn + s} =  \mu(s) +  \sum_{k=0}^{\infty} \Phi_{k}(s)\epsilon_{Sn+s-k} =   \mu(s) + \sum_{k=0}^{\infty} \Theta_{k}(s)w_{Sn+s-k},
 \end{align*}
where $\Theta_{k}(s) = \Phi_{k}(s)H_{0}(s-k)$ are the structural moving average coefficient matrices indicating the effect of $w_{Sn+s-k}$ on $y_{Sn + s}$. Note that $H_{0}(s) = H_{0}(Sn+s)$ for all $n \in \mathbb{Z}$ and $s=1,\dots,S$. 
The structural impulse responses $\Theta^{SIR}_{k}(s)$ can be expressed analytically as
\[ 
\Theta^{SIR}_{k}(s)  = \Theta_{k}(s+k) = \Phi_{k}(s+k)H_{0}(s) =  \Phi^{IR}_{k}(s) H_{0}(s).  
\]
In writing down the asymptotic properties of the impulse responses $\Phi_{k}^{IR}(s)$ and the structural impulse responses $\Theta_{k}^{SIR}(s)$, $s =1,\dots, S$, we use the higher-dimensional VAR representation of the PVAR which is given in Appendix C. It can be easily shown that the reduced-form impulse responses of the higher-dimensional VAR form, denoted by $\Pi_{h}^{IR}$, $h \in \mathbb{N}_0$, in Appendix C, consist of the PVAR impulse responses $\Phi_{k}^{IR}(s)$ for all $k,s$ satisfying $hS < k+s \leq  (h+1)S$. The structural impulse responses of the higher-dimensional VAR form, denoted by $\Psi_{h}^{SIR}$, $h \in \mathbb{N}_0$, consist of $\Theta_{k}^{SIR}(s)$ for all $k,s$ that satisfy $hS < k+s \leq  (h+1)S$. Precisely, we have
\[
\Psi_{h}^{SIR} = \Pi_h^{IR}\boldsymbol{H}_0=    \begin{pmatrix}
 \Theta_{Sh}^{SIR}(1) &  \Theta_{Sh-1}^{SIR}(2) & \dots &  \Theta_{Sh-S+1}^{SIR}(S) \\  \Theta_{Sh+1}^{SIR}(1) &  \Theta_{Sh}^{SIR}(2) & \ddots & \vdots \\ \vdots & \ddots & \ddots &  \Theta_{Sh-1}^{SIR}(S) \\   \Theta_{Sh+S-1}^{SIR}(1) & \dots &  \Theta_{Sh+1}^{SIR}(S-1) &  \Theta_{Sh}^{SIR}(S) \end{pmatrix}, \; \; \; h\in\mathbb{N}_0,
\]
where $\boldsymbol{H}_0 = \text{diag}\bigl( H_0(1),H_0(2),  \dots, H_0(S) \bigr) \in \mathbb{R}^{Sm \times Sm}$. Let $\widehat{\Pi}_{h}^{IR}$ and $\widehat{\Psi}_{h}^{SIR}$ be the empirical counterparts of $\Pi_{h}^{IR}$ and $\Psi_{h}^{SIR}$ for $h\in\mathbb{N}_0$, respectively.  Note that $\Pi_{h}^{IR}$, $h \in \mathbb{N}_0$ can be represented as a continuously differentiable function of the PVAR coefficients $\beta$, while the structural impulse responses $\Psi_{h}^{SIR}$ can be represented as a continuously differentiable function depending on $\beta$ and $\sigma$.
Consequently, from Theorem \ref{theo_joint_clt} and following \cite{lutkepohl2005new}, Proposition 3.6, we immediately get the following result. 

\begin{theorem}[CLT of Impulse Responses]\label{CLT_IR_SIR}\
Under Assumption \ref{Mixing_Assumptions},  we have
\begin{align*}
&\sqrt{N} \bigl( vec \{ \widehat{\Pi}_{h}^{IR} - \Pi_{h}^{IR} \}  \bigr) \overset{d}{\to}  \mathcal{N}\bigl(\boldsymbol{0}, \Sigma_h^{\Pi}  \bigr),  \\
 &\sqrt{N} \bigl( vec \{ \widehat{\Psi}_{h}^{SIR} - \Psi_{h}^{SIR} \}  \bigr) \overset{d}{\to}  \mathcal{N}\bigl(\boldsymbol{0}, \Sigma_h^{\Psi}  \bigr), \; \; \;  h\in\mathbb{N}_0,
\end{align*}
where
\begin{align*}
 \Sigma_h^{\Pi} = G^{\Pi}_h V^{(1,1)} G_h^{\Pi \prime}  \quad   \text{and}  \quad
 \Sigma_h^{\Psi} = F_h V^{(1,1)} F_h^{\prime} + D_h V^{(2,2)} D_h^{\prime} + F_h V^{(2,1)\prime} D_h^{\prime} + D_h V^{(2,1)} F_h^{\prime}
\end{align*}
with $G_{h}  = \frac{\partial vec \{\Pi_{h}^{IR} \} }{\partial \beta^{\prime}}$, $F_{h} = \frac{\partial vec \{ \Psi_{h}^{SIR} \} }{\partial \beta^{\prime}}$ and $D_{h} = \frac{\partial vec \{ \Psi_{h}^{SIR} \} }{\partial \sigma^{\prime}}$. 
\end{theorem}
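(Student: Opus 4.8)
The plan is to derive both limit laws from the joint central limit theorem in Theorem \ref{theo_joint_clt} by a single application of the multivariate delta method (\cite{lutkepohl2005new}, Proposition 3.6). The starting point is the observation that both $\Pi_h^{IR}$ and $\Psi_h^{SIR}$ are continuously differentiable functions of the limiting quantities $\beta$ and $\sigma$: the VAR impulse responses $\Pi_h^{IR} = \Pi_h \boldsymbol{A}_0^{-1} = J\boldsymbol{A}^h J' \boldsymbol{A}_0^{-1}$ depend on $\beta$ only, through the companion matrix $\boldsymbol{A}$ and through $\boldsymbol{A}_0$, whereas the structural impulse responses $\Psi_h^{SIR} = \Pi_h^{IR} H_0$ depend on both $\beta$ and $\sigma$, the latter entering through $H_0 = \text{diag}(H_0(1),\dots,H_0(S))$ via the factorization $\Sigma_{\epsilon}(s) = H_0(s)H_0(s)'$ in \eqref{SPVAR_Trafo}. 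Once continuous differentiability is established and the relevant Jacobians are identified, the two assertions follow directly from Theorem \ref{theo_joint_clt}.

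First I would verify continuous differentiability of the map $\beta \mapsto vec\{\Pi_h^{IR}\}$. The companion matrix $\boldsymbol{A}$ of the higher-dimensional VAR($P$) in \eqref{VAR_model} is built from the blocks $\boldsymbol{A}_0^{-1}\boldsymbol{A}_1,\dots,\boldsymbol{A}_0^{-1}\boldsymbol{A}_P$, each of which is a rational function of the entries of $\beta$ whose denominator equals $\det(\boldsymbol{A}_0) = 1$; hence $\boldsymbol{A}$, its finite matrix powers $\boldsymbol{A}^h$, and the product $J\boldsymbol{A}^h J'\boldsymbol{A}_0^{-1}$ are all smooth in $\beta$. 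Differentiating the matrix power via the product rule $\partial\, vec\{\boldsymbol{A}^h\} = \sum_{i=0}^{h-1} ((\boldsymbol{A}')^{h-1-i} \otimes \boldsymbol{A}^i)\, \partial\, vec\{\boldsymbol{A}\}$, combined with the chain rule through $\boldsymbol{A}_0^{-1}$, produces $G_h^{\Pi} = \partial\, vec\{\Pi_h^{IR}\}/\partial\beta'$. For the structural responses I would write $vec\{\Psi_h^{SIR}\} = (H_0' \otimes \boldsymbol{I}_{Sm})\, vec\{\Pi_h^{IR}\}$ and apply the product rule in $(\beta,\sigma)$: differentiation with respect to $\beta$ (holding $H_0$ fixed) yields $F_h$, while differentiation with respect to $\sigma$ (holding $\Pi_h^{IR}$ fixed) yields $D_h$, the latter requiring the Jacobian of the chosen identification map $\sigma(s)\mapsto H_0(s)$. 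Under any of the standard smooth identification schemes, e.g.\ the Cholesky factorization mentioned after \eqref{SPVAR_Trafo}, this map is continuously differentiable on the set of positive definite $\Sigma_{\epsilon}(s)$, so $D_h$ is well defined.

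With the Jacobians in hand, the delta method applies. For $\widehat{\Pi}_h^{IR}$, which is a function of $\widehat{\beta}_{res}$ alone, only the top-left block $V^{(1,1)}$ of the limiting covariance $SV$ enters, giving $\sqrt{T}\, vec\{\widehat{\Pi}_h^{IR} - \Pi_h^{IR}\} \overset{d}{\to} \mathcal{N}(\boldsymbol{0},\, S\, G_h^{\Pi} V^{(1,1)} G_h^{\Pi\prime})$. For $\widehat{\Psi}_h^{SIR}$, which depends on both $\widehat{\beta}_{res}$ and $\widehat{\sigma}$, the stacked Jacobian $[F_h, D_h]$ acts on the full covariance, so that $\Sigma_h^{\Psi} = [F_h, D_h]\, V\, [F_h, D_h]'$, and expanding this block product while using $V^{(1,2)} = V^{(2,1)\prime}$ reproduces the four-term expression $F_h V^{(1,1)} F_h' + D_h V^{(2,2)} D_h' + F_h V^{(2,1)\prime} D_h' + D_h V^{(2,1)} F_h'$ stated in the theorem.

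I expect the only genuine difficulty to be notational rather than conceptual: carrying out the Jacobian computations explicitly — in particular differentiating the matrix power $\boldsymbol{A}^h$ through its dependence on $\boldsymbol{A}_0^{-1}$, and differentiating the identification map to obtain $D_h$ — is lengthy because of the block companion structure, and these explicit forms are best relegated to the supplementary material. For the limit theorem itself, however, nothing beyond continuous differentiability of the two maps is required, so the result follows immediately from Theorem \ref{theo_joint_clt} and \cite{lutkepohl2005new}, Proposition 3.6.
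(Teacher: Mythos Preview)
Your proposal is correct and follows essentially the same approach as the paper: both invoke the joint CLT of Theorem~\ref{theo_joint_clt} together with the multivariate delta method (\cite{lutkepohl2005new}, Proposition~3.6), identifying the Jacobians $G_h$, $F_h$, $D_h$ via the relations $\Pi_h^{IR}=\Pi_h\boldsymbol{A}_0^{-1}$ and $\Psi_h^{SIR}=\Pi_h^{IR}H_0$ and the matrix-power differentiation rule. The paper's supplementary material carries out exactly the explicit Jacobian computations you sketch and relegate there.
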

The closed-form solutions of the derivatives $G_{h}, F_{h}$ and $D_{h}$ are stated in Appendix C. 
In the strong PVAR case, the same statements as in Theorem \ref{CLT_IR_SIR} hold with the matrices $V^{(1,1)}, V^{(2,1)}$ and $V^{(2,2)}$ replaced by $V^{(1,1)}_{iid}, V^{(2,1)}_{iid}$ and $V^{(2,2)}_{iid}$, respectively. The i.i.d. limiting variance matrices can be found in Appendix B. 

\section{Bootstrap Inference in PVARs}\label{Section_Bootstrapping}

Analytical expressions of the limiting variances of the PVAR estimators and the (structural) impulse responses are very complex and cumbersome to estimate. Therefore, we propose residual-based bootstrap methods 
to approximate the (limiting) distributions of the SPVAR estimators 
and prove their bootstrap consistency. We also construct a bootstrap-based test for seasonality in SPVAR impulse responses.

\subsection{Bootstrap Schemes}\label{Subsection_Bootstrapschemes}

In this section, we discuss different residual-based bootstrap methods that are suitable for conducting inference in different PVAR setups. In Section \ref{Subsubsection_Bootstrapscheme_weak}, we describe residual-based bootstrap schemes that are tailored for the assumption of a periodic weak white noise, while Section \ref{Subsubsection_Bootstrapscheme_strong} contains simplified versions that are sufficient for periodic strong white noise.

\subsubsection{Residual-based seasonal block bootstrapping}\label{Subsubsection_Bootstrapscheme_weak}

We describe residual-based seasonal block bootstrap methods for PVAR processes that work under Assumption \ref{Mixing_Assumptions}, where the PVAR error process  $\{\epsilon_{Sn+s} \}_{n\in\mathbb{Z}, s = 1,\dots,S}$ is assumed to be periodic, possibly \emph{weak} white noise. In this context, block bootstrapping is required to mimic the potential non-linear serial dependence, while a seasonal version is required because $\{\epsilon_{Sn+s} \}_{n\in\mathbb{Z}, s = 1,\dots,S}$ is allowed to be periodically correlated. Based on sample values $y_1,\dots, y_{SN}$ and pre-sample values $y_{s^{\prime}},\dots,y_0$, suppose that estimates of the PVAR($\boldsymbol{p})$ parameters $\beta$ and $\sigma$ are given under general linear constraints $\beta= R\gamma + r$ as stated in Section \ref{Subsection_Estimation}. 

The algorithm is as follows. We initialize the algorithm by choosing a positive integer block size $b < SN$ and define $l = \lceil SN/b \rceil$, where $\lceil\cdot\rceil$ denotes the ceiling function. Next, we define $(m \times b)$ blocks $\widehat{\mathcal{E}}_i = (\widehat{\epsilon}_i, \dots,  \widehat{\epsilon}_{i+b-1})$ for $i= 1,\dots, SN-b+1$. Then, we proceed as follows:
\begin{itemize} 
\item[$(1)$] For $t=1,b+1,2b+1, \dots, (l-1)b+1$, define the block of pseudo residuals by $\widehat{\mathcal{E}}^*_t  = \widehat{\mathcal{E}}_{k_t}$,
where $k_1, k_{b+1}, k_{2b+1}, \dots$ are independent, discrete uniform random variables on the set
\[
\Big\{t\in\{1,\ldots,SN\}:t+kS\text{ for }k\in\{-R_{1,SN},-R_{1,SN}+1,\ldots, R_{2,SN}-1, R_{2,SN}\}\Big\},
\]
where $R_{1,SN} = \lfloor\frac{t-1}{S}\rfloor$ and $R_{2,SN} = \lfloor\frac{SN-b-t}{S}\rfloor$ and $\lfloor\cdot\rfloor$ denotes the floor function.
\item[$(2)$]
Collect the blocks $(\widehat{\mathcal{E}}^*_{1}, \dots, \widehat{\mathcal{E}}^*_{(l-1)b+1})$ and remove the last $lb-SN$ observations, leaving a sample $\widehat{\epsilon}_1^{*}, \dots, \widehat{\epsilon}_{SN}^{*}$ of size $SN$.
\item[$(3)$] Set $(y^*_{s^{\prime}},\dots,y^*_0) =(y_{s^{\prime}},\dots,y_0)$ and use estimates of $\beta$ and $\sigma$ and $\widehat{\epsilon}_1^{*}, \dots, \widehat{\epsilon}_{SN}^{*}$ to recursively compute bootstrap observations $y^*_{1},\dots,y^*_{SN}$ according to
\begin{align*}
y^{*}_{Sn+s} =  \widehat{\nu}(s) +  \widehat{A}_{1}(s) y^{*}_{Sn+s-1} + \dots +  \widehat{A}_{p(s)}(s) y^{*}_{Sn+s-p(s)} + \widehat{\epsilon}^{*}_{Sn+s}
\end{align*}
for $s=1,\dots,S$ and $n=0,\dots, N-1$.
\item[$(4)$] Re-estimate $\beta$, $ \sigma$, $\Pi_h^{IR}$ and $\Psi_h^{SIR}$ from the bootstrap (pre)sample $(y^*_{s^{\prime}},\ldots,y_0^*),y_1^*\dots,y^*_{SN}$ to get $\widehat \beta_{res}^*$, $\widehat \sigma^*$, $\widehat \Pi_h^{IR*}$ and $\widehat \Psi_h^{SIR*}$ and use the same linear constraints and identification scheme as in the calculation of the estimates $\widehat \beta_{res}$, $\widehat \sigma$, $\widehat \Pi_h^{IR}$ and $\widehat \Psi_h^{SIR}$.
\end{itemize}
In the bootstrap algorithm, a seasonal block bootstrap proposed by \cite{dudek2014generalized,dudek2016generalized} is applied directly to the PVAR residuals. Alternatively, under Assumption \ref{Mixing_Assumptions}, a non-seasonal resampling variant can also be used. To initialize this bootstrap, we choose a positive integer block size $b < SN$ and define $l = \lceil SN/b \rceil$. Next, we define $(m \times b)$ blocks $\widehat{\mathcal{U}}_i = (\widehat{u}_i, \dots,  \widehat{u}_{i+b-1})$, $i= 1,\dots, SN-b+1$, where $\widehat{u}_{Sn+s} = \widehat{\Sigma}_{\epsilon}^{-1/2}(s)  \widehat{\epsilon}_{Sn+s}$ denote the \emph{standardized} residuals for $s= 1,\dots,S$ and $n = 0,\dots,N-1$. Then, the non-seasonal variant is obtained by replacing steps $(1)-(2)$ by $(1^{\prime})-(2^{\prime})$, where 
 \begin{itemize} 
\item[$(1^{\prime})$]  
For $t=1, \dots,l$, define the block of pseudo standardized residuals by $\widehat{\mathcal{U}}^*_t  = \widehat{\mathcal{U}}_{k_t}$, where $k_1,\dots, k_{l}$ are independent, discrete uniform random variables on $\{ 1,\dots, SN-b+1\}$.
\item[$(2^{\prime})$] Collect the blocks $(\widehat{\mathcal{U}}^*_{1}, \dots, \widehat{\mathcal{U}}^*_{l})$ and remove the last $lb-SN$ observations, leaving a sample $\widehat{u}_1^{*}, \dots, \widehat{u}_{SN}^{*}$ of size $SN$ and generate bootstrap PVAR residuals $\widehat{\epsilon}_1^{*}, \dots, \widehat{\epsilon}_{SN}^{*}$ by \[\widehat{\epsilon}_{Sn+s}^* = \widehat{\Sigma}_{\epsilon}^{1/2}(s)  \widehat{u}_{Sn+s}^*, \; \; s = 1,\dots, S,\; n = 0, \dots, N-1.\]
\end{itemize}
In this alternative bootstrap method, the periodic residuals are first standardized and then a classical moving-block bootstrap approach, as described, e.g., in \cite{bruggemann2016inference} and \cite{jentsch2022asymptotically}, is applied to the adequately standardized residuals. 
As both bootstrap algorithms need a suitably chosen block length $b$ as initialization, we refer to \cite{nordman2009note} and \cite{bertail2024optimal}, who developed optimal choices for the block length of non-seasonal and seasonal block bootstrap methods, respectively. With $b=1$, both block bootstrap methods collapse to the residual-based independent bootstraps to be discussed in Section \ref{Subsubsection_Bootstrapscheme_strong}. 

By repeating the bootstrap schemes $L$ times, where $L$ is large, we can get bootstrap quantiles for all estimators $\widehat \beta_{res}$, $\widehat \sigma$, $\widehat \Pi_h^{IR}$ and $\widehat \Psi_h^{SIR}$, which enables in particular the construction of (standard percentile) confidence intervals for the structural impulse responses.
 

\subsubsection{Residual-based periodic independent bootstrapping}\label{Subsubsection_Bootstrapscheme_strong}

If Assumption \ref{Mixing_Assumptions} with $(ii)$ and $(iii)$ replaced by $(ii)^\prime$ and $(iii)^\prime$ holds, where the PVAR innovation process $\{\epsilon_{Sn+s} \}_{n\in\mathbb{Z}, s = 1,\dots,S}$ is assumed to be periodic strong white noise, simplified residual-based independent bootstrap techniques can be used. In Section \ref{Subsubsection_Bootstrapscheme_weak}, by setting $b=1$ in steps $(1)-(2)$, the seasonal version of the residual-based independent bootstrap is obtained, while the non-seasonal version is obtained by setting $b=1$ in steps $(1^{\prime})-(2^{\prime})$.

The residual-based independent bootstrap methods provide valid confidence intervals only when Assumption \ref{Mixing_Assumptions} with $(ii)$ and $(iii)$ replaced by $(ii)^\prime$ and $(iii)^\prime$ hold. However, as illustrated in Figure  \ref{WWN1}, the periodic strong white noise assumptions may be violated in applications. Hence, we recommend to use residual-based block bootstrap methods for inference of periodically correlated time series. 

\subsection{Bootstrap Theory for PVARs}\label{Subsection_BootTheory_Weak}

We show bootstrap consistency for the residual-based seasonal block bootstrap methods discussed in Sections \ref{Subsubsection_Bootstrapscheme_weak} and \ref{Subsubsection_Bootstrapscheme_strong} under periodic weak and strong white noise assumptions, respectively. Addressing first the weak PVAR case, we need the following assumption.

\begin{assumption}[Periodic Weak White Noise with finite eighth-order Moments]\ \label{Mixing_assumption8}
The weak white noise process $\{w_{t} \}_{t\in\mathbb{Z}}$ as defined in Assumption \ref{Mixing_Assumptions} (ii) is $\alpha$-mixing such that Assumption \ref{Mixing_Assumptions} (iii) holds for $\rho = 8$. 
\end{assumption}
For weak white noise processes, such an assumption is made, e.g., in \cite{gonccalves2007asymptotic} to guarantee summability of cumulants up to order eight 
to prove consistency of wild and pairwise bootstrap methods for AR($\infty$) processes, in \cite{bruggemann2016inference} to prove consistency of a residual-based moving block bootstrap for VAR($p$) models, and in \cite{jentsch2019dynamic,jentsch2022asymptotically} to prove consistency of a residual-based moving-block bootstrap for Proxy SVARs. We now show that the residual-based seasonal block bootstrap can approximate the limiting distributions of $\sqrt{N}\bigl( \bigl(\widehat{\beta}_{res}- \beta\bigr)^{\prime}, \bigl(\widehat{\sigma}- \sigma \bigr)^{\prime}\bigr)^{\prime}$, $\sqrt{N}\bigl( vec \{ \widehat{\Pi}_{h}^{IR} - \Pi_{h}^{IR} \}\bigr),$ and $\sqrt{N} \bigl(vec \{ \widehat{\Psi}_{h}^{SIR} - \Psi_{h}^{SIR} \}\bigr),$ $h\in \mathbb{N}_0$ derived in Theorems \ref{theo_joint_clt} and \ref{CLT_IR_SIR}, respectively.

\begin{theorem}[Residual-Based Seasonal Block Bootstrap Consistency]\label{Mixing_bootstrap_con}\ Suppose Assumptions \ref{Mixing_Assumptions} and \ref{Mixing_assumption8} hold and the PVAR process \eqref{PVAR_model} fulfills the linear restrictions \eqref{restriction}. Further, suppose one of the residual-based seasonal block bootstrap schemes discussed in Section \ref{Subsubsection_Bootstrapscheme_weak} is used to obtain bootstrap estimators $\widehat{\beta}^{*}_{res}, \widehat{\sigma}^{*}, \widehat{\Pi}_{h}^{IR*}$ and $\widehat{\Psi}_{h}^{SIR*}$ for $h\in \mathbb{N}_0.$ If $b \to \infty$ such that $b^3/N \to 0$ as  $N \to \infty$, we get
\begin{align*}
&\sup_{x \in \mathbb{R}^{m_1}}\Big|P^{*}\Bigl(\sqrt{N}\bigl( \bigl(\widehat{\beta}^{*}_{res}- \widehat{\beta}_{res}\bigr)^{\prime}, \bigl(\widehat{\sigma}^{*}- \widehat{\sigma} \bigr)^{\prime}\bigr)^{\prime} \leq x \Bigr) - P\Bigl(\sqrt{N}\bigl( \bigl(\widehat{\beta}_{res}- \beta\bigr)^{\prime}, \bigl(\widehat{\sigma}- \sigma \bigr)^{\prime}\Bigr)^{\prime} \leq x \bigr)\Big|  \overset{p}{\to}  0,
\end{align*}
where $m_1 = m\sum_{s=1}^S(mp(s) +1)+S\widetilde{m}$, and, for $h\in \mathbb{Z}$, we have
\begin{align*}
&\sup_{x \in \mathbb{R}^{(Sm)^2}}\Big|P^{*}\Bigl(\sqrt{N}\bigl( vec \{ \widehat{\Pi}_{h}^{IR*} - \widehat{\Pi}_{h}^{IR} \}\bigr) \leq x \Bigr) - P\Bigl(\sqrt{N}\bigl( vec \{ \widehat{\Pi}_{h}^{IR} - \Pi_{h}^{IR} \}\bigr) \leq x \Bigr)\Big|  \overset{p}{\to}  0,  \\
&\sup_{x \in \mathbb{R}^{(Sm)^2}}\Big|P^{*}\Bigl(\sqrt{N}\bigl( vec \{ \widehat{\Psi}_{h}^{SIR*} - \widehat{\Psi}_{h}^{SIR} \}\bigr) \leq x \Bigr) - P\Bigl(\sqrt{N}\bigl( vec \{ \widehat{\Psi}_{h}^{SIR} - \Psi_{h}^{SIR} \}\bigr) \leq x \Bigr)\Big|  \overset{p}{\to}  0,
\end{align*}  
where $P^{*}$ denotes the probability measure induced by the residual-based seasonal block bootstrap methods (conditional on the data). Note that $x \leq y$ for some $x,y \in \mathbb{R}^K$ stands for $x_i \leq y_i$ for all $i=1,\dots,K$. 
\end{theorem}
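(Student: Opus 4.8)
The plan is to reduce all three statements to a single conditional central limit theorem for the joint estimator pair and then transport it to the impulse responses by a bootstrap delta method. Since $\Pi_h^{IR}$ is a continuously differentiable function of $\beta$ and $\Psi_h^{SIR}$ a continuously differentiable function of $(\beta,\sigma)$ (as noted before Theorem \ref{CLT_IR_SIR}), and since the plug-in Jacobians $G_h,F_h,D_h$ are consistently estimated on the bootstrap sample, the last two displays follow from the first exactly as Theorem \ref{CLT_IR_SIR} follows from Theorem \ref{theo_joint_clt}, now in the bootstrap world and invoking the continuity of the Gaussian limit together with Polya's theorem to pass from weak convergence to the Kolmogorov-distance (sup over half-spaces) statement. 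It therefore suffices to show that, conditionally on the data and in probability, $\sqrt{T}((\hat\beta_{res}^*-\hat\beta_{res})',(\hat\sigma^*-\hat\sigma)')'$ converges to the same $\mathcal N(\boldsymbol 0, SV)$ law that Theorem \ref{theo_joint_clt} assigns to $\sqrt{T}((\hat\beta_{res}-\beta)',(\hat\sigma-\sigma)')'$.

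First I would linearize both sides. Using $\beta=R\gamma+r$ and the restricted least-squares form \eqref{Res_PVAR_Estimator}, I write $\sqrt{T}(\hat\beta_{res}-\beta)=R[R'(\tfrac1T XX'\otimes I_m)R]^{-1}R'\tfrac{1}{\sqrt T}(X\otimes I_m)e+o_p(1)$, and expand $\sqrt{T}(\hat\sigma-\sigma)$ around the season-wise sample second moments of the residuals, the $\hat\beta$-dependence contributing only at order $o_p(1)$ since $E[\epsilon_{Sn+s}X_n(s)']=0$. Both statistics thereby become fixed continuous linear (plus $o_p(1)$) transformations of the single normalized sum $\tfrac{1}{\sqrt T}\sum_{n,s}g_{n,s}$, where $g_{n,s}$ stacks $X_n(s)\otimes\epsilon_{Sn+s}$ and $vech\{\epsilon_{Sn+s}\epsilon_{Sn+s}'-\Sigma_\epsilon(s)\}$; the proof of Theorem \ref{theo_joint_clt} identifies the limit of this sum through the cumulant objects \eqref{cumulants1}--\eqref{cumulants2}. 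The bootstrap statistics admit the identical expansion with $\epsilon$ replaced by the resampled estimated residuals $\hat\epsilon^*$, the regressors $X_n(s)$ by the bootstrap regressors $X_n^*(s)$ generated recursively in step $(3)$, and $\tfrac1T XX'$ by $\tfrac1T X^*X^{*\prime}$. A preliminary step is then to verify the bootstrap design convergence $\tfrac1T X^*X^{*\prime}\overset{p}\to$ the same block-diagonal limit as $\tfrac1T XX'$: because the bootstrap DGP of step $(3)$ is a causal PVAR driven by blockwise-resampled pseudo-innovations with coefficients converging to the truth, the bootstrap process is asymptotically periodically stationary with converging season-wise moments, analogously to the VAR argument of \cite{bruggemann2016inference}.

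The central step is the conditional CLT for the bootstrap score $\tfrac{1}{\sqrt T}\sum_{n,s}g_{n,s}^*$. Conditionally on the data the seasonal block bootstrap of \cite{dudek2014generalized,dudek2016generalized} draws $l=\lceil T/b\rceil$ independent blocks, so the bootstrap score is a sum of $l$ conditionally i.i.d. block contributions and a conditional Lindeberg--Feller CLT applies once (a) its conditional covariance converges in probability to $SV$ and (b) a Lyapunov condition holds. For (a), the conditional block covariance is a $b$-lag-window estimator of the long-run variance built from products of the estimated residuals: letting $b\to\infty$ removes the truncation bias so that the infinite cumulant sums $\tau_{a,b,c}(s_1,s_2)$ and $\kappa_{a,b}(s_1,s_2)$ entering $V$ are recovered, while the rate $b^3/T\to0$ controls the estimator's variance and the error incurred by using $\hat\epsilon$ in place of $\epsilon$ (here the $\sqrt T$-consistency from Theorem \ref{theo_joint_clt} and the eighth-order cumulant summability of Assumption \ref{Mixing_assumption8} enter, mirroring \cite{jentsch2022asymptotically}). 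Crucially, the seasonal resampling restricts each block's starting index to a fixed residue class modulo $S$, so season-$s$ residuals are placed only into season-$s$ positions and the bootstrap reproduces the \emph{periodic} covariance and cumulant structure rather than averaging it out; this is precisely what delivers the correct off-diagonal block $V^{(2,1)}=V^{(1,2)\prime}$ coupling $\hat\beta_{res}$ and $\hat\sigma$. Condition (b) follows from the eighth-moment control of Assumption \ref{Mixing_assumption8} and $b/T\to0$.

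I expect step (a) to be the main obstacle: one must show that the $b$-truncated, estimated-residual-based long-run-variance estimator is consistent for the \emph{full} matrix $V$ -- cross-term $V^{(2,1)}$ included, which is generically nonzero in the weak case -- and that the seasonal block mechanism faithfully reproduces the periodicity carried by $\tau_{a,b,c}(s_1,s_2)$ and $\kappa_{a,b}(s_1,s_2)$ across the residue classes rather than collapsing them. The bias--variance trade-off here is exactly balanced by the stated rate $b^3/T\to0$, and Assumption \ref{Mixing_assumption8} is what renders the variance term tractable, as in the stationary VAR and Proxy-SVAR bootstrap consistency proofs cited in the text. Combining the conditional CLT with the design convergence and Slutsky/continuous-mapping arguments yields the joint statement, and the delta method of the first paragraph upgrades it to the two impulse-response displays.
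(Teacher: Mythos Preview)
Your overall strategy---reduce to a conditional CLT for the joint parameter pair, then transport via the bootstrap delta method and Polya's theorem---matches the paper. The paper, however, organizes the conditional CLT differently in two respects, and the second of these exposes a loose step in your sketch. First, rather than absorbing the estimation error $\hat\epsilon$ versus $\epsilon$ into the variance-consistency argument as you propose, the paper isolates it in a separate equivalence lemma (Lemma~\ref{Equivalence_BE}): it passes to an ``oracle'' bootstrap $(\widecheck\beta^*,\widecheck\sigma^*)$ built by block-resampling the \emph{true} innovations and shows $\sqrt{T}\bigl((\hat\beta_{res}^*-\hat\beta_{res})-(\widecheck\beta^*-\widecheck\beta)\bigr)=o_{P^*}(1)$, after which all further work is with $\widecheck\epsilon^*$. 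Second, and more importantly, the paper does not apply Lindeberg--Feller directly to block sums of your $g_{n,s}^*=(X_n^*(s)\otimes\hat\epsilon_{Sn+s}^*,\ldots)$: those block contributions are \emph{not} conditionally i.i.d., because the recursively generated regressors $X_n^*(s)$ depend on the entire history of resampled innovations across all previous blocks via step (3) of the algorithm. Instead, the paper first expands $X_n^*(s)$ through the periodic MA representation using the matrices $C_k(s)$, truncates at a finite order $q$ to obtain $A_q^*=\widecheck Q_N^* R_q\,N^{-1/2}\sum_n \widecheck W_{n,q}^*$, and then invokes Proposition~6.3.9 of \cite{brockwell1991time}: (a) a CLT for $A_q^*$ via Lemma~\ref{CLT_Boot_Innovations}, where $\widecheck W_{n,q}^*$ involves only finite-lag products of bootstrap innovations so that block sums \emph{are} conditionally i.i.d.; (b) $V_q\to V$ by cumulant summability and the geometric decay of $C_k(s)$; (c) uniform negligibility of $A^*-A_q^*$. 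Your identification of the $b$-lag-window mechanism, the role of $b^3/T\to0$, the eighth-moment assumption, and the seasonal index-set restriction preserving periodic cumulants is all correct, but the block-i.i.d.\ claim for the full score needs this MA-truncation layer to be valid.
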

The proof of Theorem \ref{Mixing_bootstrap_con} can be found in Appendix D. 
It is worth noting that in the case, where the more restrictive periodic strong white noise  assumption applies, the independent bootstrap, i.e., the block bootstrap with block length $b=1$, is sufficient.

\begin{corollary}[Residual-Based independent Bootstrap Consistency]\label{IID_bootstrap_con}\ Suppose Assumption \ref{Mixing_Assumptions} with $(ii)$ and $(iii)$ replaced by $(ii)^\prime$ and $(iii)^\prime$ and Assumption \ref{Mixing_assumption8} hold\footnote{Note that in the periodic strong white noise case, Assumption \ref{Mixing_assumption8} boils down to the condition $E|w_{t}|^8_8<\infty$, which could be relaxed to $E|w_{t}|^{4+\delta}_{4+\delta}<\infty$ for some $\delta>0$.} and suppose that the PVAR process \eqref{PVAR_model} fulfills the linear restrictions \eqref{restriction}. Further, one of the residual-based independent bootstrap schemes discussed in Section \ref{Subsubsection_Bootstrapscheme_strong} is used to obtain bootstrap estimators $\widehat{\beta}^*_{res}, \widehat{\sigma}^*, \widehat{\Pi}_{h}^{IR*}$ and $\widehat{\Psi}_{h}^{SIR*}$ for $h\in \mathbb{N}_0$. Then the same statements as in Theorem \ref{Mixing_bootstrap_con} hold,  
where $P^*$ denotes the probability measure induced by the residual-based independent bootstrap (conditional on the data).  
\end{corollary}

\subsection{Test for Seasonal Impulse Responses}\label{Section_SIR_Test}

In this section, we derive a bootstrap-based test for (seasonal) differences in the SPVAR and SVAR impulse response functions. This approach allows to test for the null hypothesis that there is no seasonal difference between SPVAR and SVAR impulse response functions, while the alternative finds a difference in at least one season. Formally, for fixed $i,j = 1,\dots, m$, we consider the null hypothesis
\begin{align}\label{H_0_1}
K_0: \Theta_{ij}^{SIR}(s)  = \Theta^{VAR}_{ij}, \;  \; \text{ for all } s = 1,\dots, S,
\end{align}
where $\Theta^{SIR}_{ij}(s)$ is the $(n_{IR}+1)$-dimensional vector consisting of seasonal impulse responses up to lag $n_{IR}$ of the $i$-th variable to a unit shock in the $j$-th structural shock, that is, $\Theta^{SIR}_{ij}(s) = (\Theta^{SIR}_{0,ij}(s), \dots, \Theta^{SIR}_{n_{IR},ij}(s))^{\prime}$. $\Theta^{VAR}_{ij}$ is the non-seasonal counterpart from the SVAR model on seasonally adjusted data. Then, given seasonally adjusted data $y_1^{SA},\ldots,y_{SN}^{SA}$ and seasonally unadjusted data $y_1^{SU},\ldots,y_{SN}^{SU}$, the test statistic $\widehat Q_{N}(i,j)$ is defined by the sum of squared differences in the estimated SPVAR and SVAR impulse response functions, that is,
\begin{align}\label{Teststatistik}
\widehat{Q}_{N}(i,j) = \sum_{s=1}^S | \widehat{\Theta}^{SIR}_{ij}(s) - \widehat{\Theta}^{VAR}_{ij}|_2^2.
\end{align}

We use a bootstrap-based method to mimic the distribution of $\widehat{Q}_N(i,j)$ under the null hypothesis $K_0$ in \eqref{H_0_1}. For this purpose, the bootstrap algorithm has to address the following points. Firstly, $\widehat{\Theta}^{SIR}_{ij}(s)$ and $\widehat{\Theta}^{VAR}_{ij}$ are generally not independent to each other and secondly, $\widehat{\Theta}^{SIR}_{ij}(s)$ is determined from seasonally unadjusted data, while $\widehat{\Theta}^{VAR}_{ij}$ is determined from seasonally adjusted data. Hence, the impulse response functions $\widehat{\Theta}^{SIR}_{ij}(s)$ and $\widehat{\Theta}^{VAR}_{ij}$ are therefore not only determined by two different models, but also originate from different data sets that are obviously highly interdependent. Accordingly, the bootstrap can only mimic the distribution of $\widehat{Q}_N(i,j)$ properly under the null if this dependence between the seasonally adjusted and unadjusted data is adequately captured.

In order to take these dependencies into account, we propose to generate bootstrap pseudo observations of the seasonally adjusted and unadjusted data by the following steps. First, we initialize the algorithm by choosing a VAR model order $p_{var}$ as the reduced-form VAR model and a (possibly) restricted PVAR of order $p = \max\{p(1), \dots, p(S)\}$ as the reduced-form PVAR model. If the PVAR model order varies across the seasons, note that we impose zero restrictions on $A_{j}(s)$ for $p(s) < j \leq p$. Further, we suppose that their structural model variants are identified using the same identification strategies. We also suppose that $p_{var}$ starting values for seasonally adjusted data and $p$ starting values for seasonally unadjusted data are available. Then, the algorithm works as follows:
\begin{itemize} 
\item[$(1)$] Independently, fit an SVAR($p_{var}$) model to the seasonally adjusted data and an SPVAR($p$) model to the seasonally unadjusted data. Calculate the SVAR estimates $\widehat{\nu}$, $\widehat{A}_1, \dots, \widehat{A}_{p_{var}}$, $\widehat{H}_0^{var}$ and the SPVAR estimates $\widehat{\nu}(s),\widehat{A}_1(s), \dots, \widehat{A}_p(s), \widehat{H}_0(s)$ for $s=1,\dots, S$, which allows to compute the structural residuals $(\widehat{w}_1^{VAR}, \dots, \widehat{w}_{SN}^{VAR})$ and $(\widehat{w}_1^{PVAR}, \dots, \widehat{w}_{SN}^{PVAR})$.  
\item[$(2)$] Calculate the (seasonally) pooled estimates of the PVAR coefficient matrices $\widebar{\widehat{A}}_1, \dots, \widebar{\widehat{A}}_p$, where  $\widebar{\widehat{A}}_i = 1/S \sum_{s=1}^S \widehat{A}_i(s),$ $i = 1,\dots, p$ and the (seasonally) pooled periodic impact matrix $\widebar{\widehat{H}}_0 = 1/S \sum_{s=1}^S \widehat{H}_0(s)$. Further, calculate the pooled coefficients $\widehat{A}^{pool}_i$, $i = 1,\dots, max\{p,p_{var}\}$ by pooling the VAR-estimates $\widehat{A}_1, \dots, \widehat{A}_{p_{var}}$ with the pooled PVAR estimates $\widebar{\widehat{A}}_1, \dots, \widebar{\widehat{A}}_p$ to get $\widehat{A}^{pool}_i=1/2(\widehat{A}_i+\widebar{\widehat{A}}_i)$ as well as the pooled impact matrix $\widehat{H}_0^{pool}$ by $\widehat{H}_0^{pool} = 1/2 ( \widebar{\widehat{H}}_0 + \widehat{H}_0^{var})$.
\item[$(3)$] Let $\widehat{\zeta}_{t} = ((\widehat{w}_{t}^{VAR})^\prime, (\widehat{w}_{t}^{PVAR})^\prime)^{\prime}$, $t = 1,\dots,SN$ denote the $2m$-dimensional vectors of residuals and perform a seasonal block bootstrap to obtain bootstrap pseudo observations $\widehat{\zeta}_1^*, \dots, \widehat{\zeta}_{SN}^*$.
\item[$(4)$] Use $(\nu, \widehat{A}^{pool}_1, \dots, \widehat{A}^{pool}_{max\{p,p_{var}\}}, \widehat{H}_0^{pool})$  and the first $m$ components of $\widehat{\zeta}_1^*, \dots, \widehat{\zeta}_{SN}^*$ to construct bootstrap pseudo observations $y^{SA*} = (y_1^{SA*},\dots,y_{SN}^{SA*})$ of the seasonally adjusted data and use $(\nu(1),\dots, \nu(S),  \widehat{A}^{pool}_1, \dots, \widehat{A}^{pool}_{max\{p,p_{var}\}}, \widehat{H}_0^{pool})$ and the last $m$ components of $\widehat{\zeta}_1^*, \dots, \widehat{\zeta}_{SN}^*$ to construct bootstrap pseudo observations $y^{SU*} = (y_1^{SU*},\dots,y_{SN}^{SU*})$ of the seasonally unadjusted data under the null. 
\item[$(5)$] Fit an SVAR($p_{var}$) to $y^{SA*}$ and an SPVAR($p$) to $y^{SU*}$ and calculate the bootstrap versions $\widehat{\Theta}^{VAR*}_{ij}$ and $\widehat{\Theta}^{SIR*}_{ij}(s), s = 1,\dots,S$. Calculate the bootstrap test statistic $\widehat{Q}_N^*(i,j)$ according to \eqref{Teststatistik}, but based on
the bootstrap versions $\widehat{\Theta}^{VAR*}_{ij}$ and $\widehat{\Theta}^{SIR*}_{ij}(s)$.
\item[$(6)$] Repeat $(3) - (5)$ $L$ times, where $L$ is large, and reject $K_0$ in \eqref{H_0_1}, if $\widehat{Q}_N(i,j) > \widehat{q}^*_{ 1-\alpha}$, where $\widehat{q}^*_{ 1-\alpha}$ is the $1-\alpha$ quantile of $\widehat{Q}_N^{*(1)}(i,j), \dots,\widehat{Q}_N^{*(L)}(i,j)$.
\end{itemize}
In Step (1), we model the seasonally adjusted and unadjusted data independently, which means that the dependence structure between the data is preserved in the residuals. In Step (2), we calculate the pooled autoregressive estimates and the pooled impact matrix, which ensure that the generated pseudo observations in Step (4) come from a model that satisfies the null hypothesis $K_0$. In order to mimic the dependence between the SPVAR and SVAR model estimates, we merge the residuals from both models in Step (3) and apply a seasonal block bootstrap that is able to capture this dependence structure as well as the potential seasonal structure in the SPVAR residuals. Step (4) produces the bootstrap pseudo samples of seasonally adjusted and unadjusted data. For both samples, the same pooled autoregressive coefficients and impact matrix are used to ensure that the true seasonal and non-seasonal impulse responses are the same. Note that the pooled autoregressive coefficients are computed by the mean of the VAR estimates and the pooled PVAR estimates.
To calculate the bootstrap test statistic $\widehat{Q}_N^*(i,j)$ under the null, in Step (5), we first fit (separately) the SVAR($p_{var}$) model to $y^{SA*}$ and the SPVAR($p$) model to $y^{SU*}$. Then, we obtain the bootstrap versions of the corresponding impulse responses by identifying both structural models by the same restrictions as used in Step (1). We calculate the bootstrap test statistics $\widehat{Q}_N^{*(l)}(i,j)$, $l=1,\ldots,L$ by inserting the bootstrap versions of the impulse responses and we reject $K_0$ if the test statistic $\widehat{Q}_N(i,j)$ exceeds the critical value $\widehat{q}^*_{1-\alpha}$.

We also derive a bootstrap-based test for the statistical significance of differences in SPVAR and SVAR impulse response functions, both based on seasonally unadjusted data. The null of this test can be formulated by $\widetilde K_0: \Theta_{ij}^{SIR}(s)  = \widebar{\Theta}^{SIR}_{ij}, \; \forall s=1,\dots,S$, where $\widebar{\Theta}^{SIR}_{ij}$ is the impulse response function from the SVAR($p$) with seasonal means. The test algorithm can be found in Appendix E.

\section{Real Data Application}\label{Section_Data}

In this section, we compare the results of applying a structural PVAR to seasonally unadjusted macroeconomic data with those of applying a structural VAR to seasonally adjusted data. The variables included in the model are monthly industrial production (IP) as a measure of real economic activity,  CPI inflation (INF) and federal funds rate (FFR) from Jan 1968 - Dec 2019 taken from the FRED database.  
The ADF test rejects unit roots in INF and the FFR, while it does not reject a unit root in industrial production. Hence, we address this by taking log differences of 
industrial production, whereas INF and FFR remain untouched. Furthermore, HEGY tests \citep{hylleberg1990seasonal} reject the null of seasonal unit roots for different frequencies in the variables. This result is also confirmed by the OCSB test \citep{osborn1988seasontest}, which also rejects the presence of seasonal unit roots. For structural PVAR analysis, IP and INF are considered on a seasonally unadjusted basis, while their seasonally adjusted counterparts are considered for SVAR analysis. As FFR appears to be non-periodic, we use it for both analyses. 
 
Due to the monthly frequency, an obvious choice for the number of seasons is $S =12$, resulting in $N=52$ cycles (years).  The choice  $S =12$ is strongly supported by the estimates of the SDs and ACFs of IP and INF given in Figure \ref{sd_acf_data}. As demonstrated in Figure \ref{sd_acf_data} and \ref{sd_acf_demean} in the introductory section, we find that IP and INF exhibit strong periodic patterns, while FFR has no periodic structure at all. The periodic structure of IP 
cannot be eliminated by seasonal demeaning alone, while the periodic structure of INF can be explained mainly by seasonal variation in the mean. Consequently, seasonal demeaning appears to be not sufficient to remove the entire seasonal component of IP, while in the case of INF the periodic structure is almost eliminated completely.

\subsection{Reduced-Form PVAR Model}\label{subsection_Reduced}

On the seasonally adjusted data basis, we use a VAR model of order $p= 9$ as the benchmark reduced-form model. The model order is chosen based on BIC. 
Based on Figures \ref{sd_acf_data}  and \ref{sd_acf_demean}, we allow the seasonal intercepts of IP and INF to be periodically varying over the seasons and set the intercepts of FFR to be non-periodic within our PVAR setup. For choosing a suitable model order for the reduced-form PVAR, we set the model order of the PVAR for each season to be the same. For the sake of simplicity, we then choose the model order that minimizes the seasonal BIC for all PVARs with seasonal means and non-periodic autoregressive effects\footnote{Since we do not know which autoregressive coefficients are periodic and which are not without knowing the model order, we proceed in this approach by determining the model order from VAR models with seasonal intercept. Alternatively, the bottom-up approach can be run for different PVAR orders, with the selected model determined by the seasonal BIC.}. Based on BIC, the PVAR model order is selected as $p(s)=12$ for $s=1,\dots,S = 12$.

In order to capture the periodic structure, which goes beyond seasonality in means, and to choose the linear restrictions of the reduced-form PVAR model, we use a slightly more sophisticated bottom-up strategy. We define 
$\widehat{\beta}^{(PVAR)}_i = \bigl(vec \{ \bigl(\widehat{A}_{1}(s), \dots, \widehat{A}_{12}(s) \bigr) \}_i, s = 1,2, \dots, S = 12 \bigr)^{\prime} $ as the periodic $S$-dimensional vector of the $i$-th autoregressive parameter for $i = 1, \dots, pm^2 = 108$, where $vec \{ \bigl(\widehat{A}_{1}(s), \dots, \widehat{A}_{12}(s) \bigr) \}_i$ is the $i$-th component of $vec \{ \bigl(\widehat{A}_{1}(s), \dots, \widehat{A}_{12}(s) \bigr) \}$. $\widebar{\widehat{\beta}}^{(PVAR)}_i$ is defined as the mean of $\widehat{\beta}^{(PVAR)}_i$ for $i=1,\dots,pm^2$. Then, the algorithm for the bottom-up approach is as follows: 

\begin{itemize} 
\item[$(1)$] For $l = 1$, fit all possible $pm^2 = 108$ PVAR models in which exactly one autoregressive parameter is allowed to vary periodically, while all other parameters are restricted to be constant over the seasons. For $i = 1,\dots, pm^2$, estimate the Wald-type statistic \begin{align*}
 \widehat{\lambda}_i = (\widehat{\beta}^{(PVAR)}_i - \widebar{\widehat{\beta}}^{(PVAR)}_i \; \boldsymbol{1}_S)^{\prime} \; \widehat{\Sigma}_{\widehat{\beta}^{(PVAR)}_i}^{-1} \; (\widehat{\beta}^{(PVAR)}_i - \widebar{\widehat{\beta}}^{(PVAR)}_i \; \boldsymbol{1}_S)
 \end{align*}
 and determine $i_1 = argmax \{ \widehat \lambda_i\} $. Choose that PVAR model that allows the $i_1$-th parameter to vary seasonally over the seasons, while all other parameters are restricted to be constant.
\item[$(2)$] For $l \in \mathbb{N}$ with $l\geq 2$, take the chosen PVAR model from iteration $l-1$ and fit all possible $pm^2-l+1$ PVAR models in which exactly one additional autoregressive parameter is allowed to vary periodically, while all remaining parameters are restricted to be constant over the seasons. For $i \in \{ 1,\dots, pm^2 \} \setminus \{i_1,\dots, i_{l-1} \}$, calculate the Wald-type statistic $\widehat{\lambda}_i$ again and determine $i_l = argmax \{ \widehat \lambda_i\}$. Choose that PVAR model that allows the parameters $i_1,\dots, i_{l}$ to vary seasonally over the seasons, while all remaining parameters are restricted to be constant. 
\item[$(3)$] Stop the procedure, if in iteration $l$ the quantity $max\{ \widehat{\lambda}_i \} /S $ does not exceed the $1-\alpha$ quantile of the $\chi^2_{1}$ distribution and take the PVAR model from iteration $l-1$.
\end{itemize}

In order to get $\widehat{\Sigma}_{\widehat{\beta}^{(PVAR)}_i}^{-1}$ for each PVAR model in Step (1) above, we propose to use residual-based seasonal block bootstrap procedures from Section \ref{Subsection_Bootstrapschemes}. 
Note that, in the first iteration of the bottom-up strategy, we begin by using a VAR with seasonal intercepts, before gradually allowing the model to incorporate periodically varying autoregressive parameters at the most periodic points. 
We proceed this way until the most periodic component is no longer significant. 
Note that in each step, assuming that the initial PVAR model is the true model, the Wald-type statistic divided by $S$ is asymptotically $\chi^2_{1}$-distributed.

\begin{figure}[t!]
\centering
\includegraphics[scale=0.8]{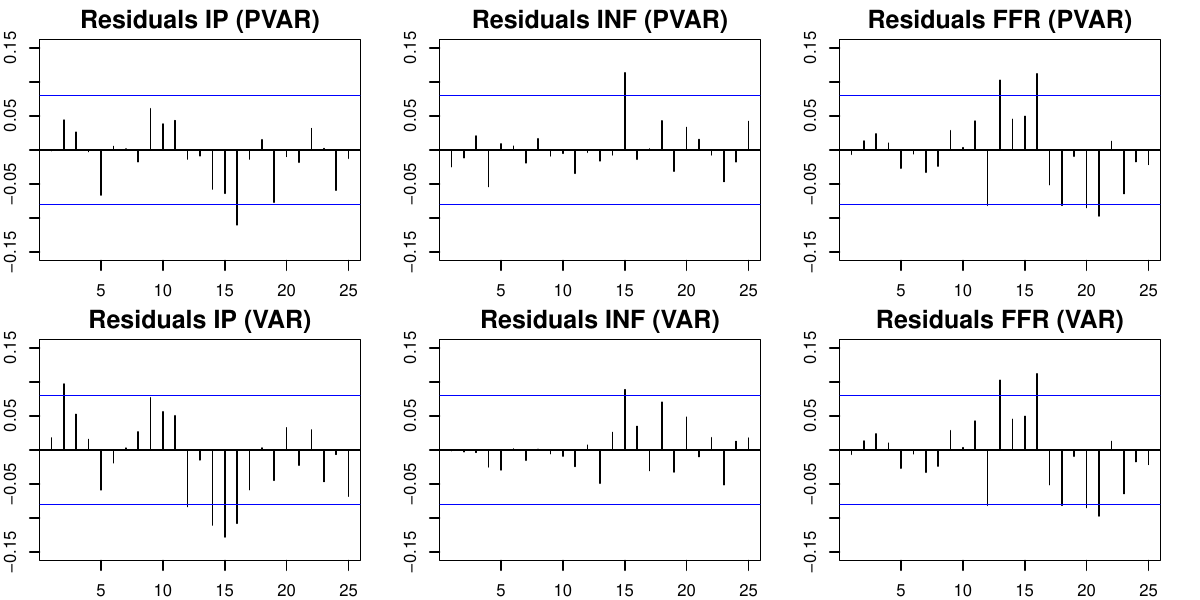}
\caption{ACFs of PVAR (as defined in \eqref{Restricted_PVAR9}) residuals $\widehat{\epsilon}_1, \widehat{\epsilon}_2, \dots, \widehat{\epsilon}_{SN}$ (top panels) and ACFs of residuals obtained by a VAR(12) with seasonal intercept 
(bottom panels). For both analyses, seasonally unadjusted monthly data from 1968-2019 are used.}
\label{ACF_PVAR_VAR}
\end{figure}

In general, we propose $\alpha \in [0.05,0.10]$. Note that too small $\alpha$ can lead to periodically inflexible models, while too large $\alpha$ can lead to models that overfit the periodic structure. In our real data analysis, we use the level $\alpha = 0.075$ and stop after four iterations of the bottom-up strategy. The resulting seasonally varying autoregressive parameters are the one-month delayed effects of IP on INF and on IP, the five-month delayed effects of IP on IP and the twelve-month delayed effects of IP on IP. All the remaining autoregressive coefficients are restricted to be constant. This finding is in line with the findings from Figures \ref{sd_acf_data} and \ref{sd_acf_demean} that IP is the variable in our system that contains most of the periodic structure, which goes beyond seasonality in means, while INF is slightly periodic and FFR is non-periodic.

Consequently, the reduced-form PVAR(12) with period $S=12$, where $y^{ip}, \pi$ and $i$ denote IP, INF and FFR, for $s = 1,\dots, 12$ and $n = 0,\dots ,51$, is as follows
\begin{align}\label{Restricted_PVAR9}
y_{12n+s} =  \nu(s) + \sum\limits_{i \in \mathcal{A}} A_{i}(s) y_{12n+s-i}  + \sum\limits_{i \in \mathcal{A}^c} A_{i}y_{12n+s-i} + \epsilon_{12n+s},
\end{align}
where  $y_{12n+s} = (y^{ip}_{12n+s}, \pi_{12n+s}, i_{12n+s})^{\prime}$, $\nu(s) = (\nu^{y^{ip}}(s), \nu^{\pi}(s), \nu^{i})^{\prime}$, $\mathcal{A} = \{ 1,5,12 \}$, $\mathcal{A}^c = \{1,\dots,12 \} \setminus \mathcal{A}$  and 
\begin{align*}
A_1(s) = \begin{pmatrix}
a^{(1)}_{11}(s) & a^{(1)}_{12} & a^{(1)}_{13} \\ a^{(1)}_{21}(s)  & a^{(1)}_{22} & a^{(1)}_{23} \\ a^{(1)}_{31} & a^{(1)}_{32} & a^{(1)}_{33} \end{pmatrix}, \; \; \; A_i(s) = \begin{pmatrix}
a^{(i)}_{11}(s) & a^{(i)}_{12} & a^{(i)}_{13} \\ a^{(i)}_{21}  & a^{(i)}_{22} & a^{(i)}_{23} \\ a^{(i)}_{31} & a^{(i)}_{32} & a^{(i)}_{33} \end{pmatrix}  \quad   \text{for } i = 5,12. 
\end{align*}

To investigate the validity of the imposed restrictions, we compare the residuals resulting from fitting the restricted PVAR(12) from \eqref{Restricted_PVAR9} with those of a VAR(12) with seasonal intercept. From Figures \ref{ACF_PVAR_VAR} and \ref{SD_PVAR_VAR}, it can be clearly seen that in particular the IP residuals from the PVAR have much less periodic structure than those from the VAR. Thus, the PVAR model \eqref{Restricted_PVAR9} is able to capture the periodicity of the data better than the VAR model with seasonal intercept, especially for IP, but also somewhat for INF. Note that the FFR is modeled in the same way in both reduced-form models, namely not periodically. Further, based on the seasonal BIC, the PVAR is preferred over the VAR with seasonal mean.
 
\subsection{Identification}\label{subsection_Identification}

In our analysis, we use full and approximate in-average identification 
to identify unit variance aggregate demand $(ad)$, aggregate supply $(as)$ and monetary policy $(mp)$ shocks. In both identification methods, a mixture of short- and long-run restrictions following \cite{shapiro1988sources}, \cite{gali1992well} and \cite{rubio2010structural} are used to identify structural $ad, as$ and $mp$ shocks. 

To identify the shocks by the full identification approach, we assume that $ad$ and $mp$ shocks do not have an effect on IP in the long-run for all seasons. Additionally, we impose that $mp$ shocks do not have contemporaneous effects on IP in all seasons. These restrictions result in a total of $Sm(m-1)/2 = 36$ restrictions in identification scheme \eqref{SPVAR_Scheme1}, with two long-run and one short-run restriction applying to all $S$ seasons. As \eqref{SPVAR_Scheme1} is a special case of \eqref{SPVAR_Scheme4}, these restrictions can also be formulated as restrictions on $H_0$ and $W(s),s=1,\dots, S$ in \eqref{SPVAR_Scheme4}. The short-run restriction is imposed as described in Remark \ref{SPVAR_Ident_SVARgeneralization} by setting the corresponding static effect in $H_0$ to zero and by setting the cumulative contemporaneous effects of all other structural shocks to zero for all $S$ seasons. Each long-run restriction is imposed by setting the corresponding element of the pooled (average) long-run impulse response 
\[ \frac{1}{S} \sum_{s=1}^S \Biggl[\Bigl(\sum_{k=0}^{\infty}\Phi^{IR}_{k}(s) \Bigr) H_{0} \Biggl]\]
to zero and by setting the corresponding element of the periodic long-run impulse response 
\[ \Biggl[\Bigl(\sum_{k=0}^{\infty}\Phi^{IR}_{k}(s) \Bigr) H_{0} W(s) \Biggl]\]
to zero for each season $s=1,\dots,S$. Consequently, each short- and long-run restriction results in $S+1$ restrictions on \eqref{SPVAR_Scheme4}, totalling $(S+1)m(m-1)/2 = 39$ restrictions, which are needed to fully identify the system. 

To achieve approximate in-average identification, we only impose the above restrictions on average, rather than on each season individually. Based on \eqref{SPVAR_Scheme3}, we identify the non-seasonal impact matrix $H_0$ and thus also $\widetilde{H}_0(s) = H_0 \; diag[\Lambda(s)]$ for $s = 1,\dots, S$ by imposing the short-run restriction on $H_0$ and the long-run restrictions on the pooled long-run impulse response $\widebar{\Theta}^{LR}$ given in Remark \ref{SPVAR_Identification_Methods}. Consequently, $m(m-1)/2 = 3$ restrictions are imposed in \eqref{SPVAR_Scheme3} which are needed to fully identify the system.

\subsection{Results}\label{Subsection_Results}

In the following, seasonal structural impulse responses using the restricted SPVAR(12) on seasonally unadjusted data and structural impulse responses using an SVAR(9) on seasonally adjusted data are analyzed, in which the aggregate demand $(ad)$, aggregate supply $(as)$ and monetary policy shocks $(mp)$ are identified. To identify the structural $ad$, $as$ and $mp$ shocks, we use a mixture of short- and long-run restrictions discussed in Section \ref{subsection_Identification}. In this Section we only compare SPVAR impulse responses identified by the full identification approach with SVAR impulse responses. Impulse responses identified by approximate in-average identification method can be found in Appendix F. SVAR impulse responses are identified using the same short- and long-run restrictions. 

The black bold line indicates the structural impulse response of the SVAR on seasonally adjusted data which can also be used as the benchmark,  while each of the red lines indicates a structural impulse response generated by the SPVAR for the $S=12$ seasons, i.e., from January to December. The dashed lines give the (pointwise) 68\% confidence intervals of the structural impulse responses of the SPVAR (in red) and the SVAR (in black).
 
\begin{figure}[t!]
\centering
\includegraphics[scale=0.7]{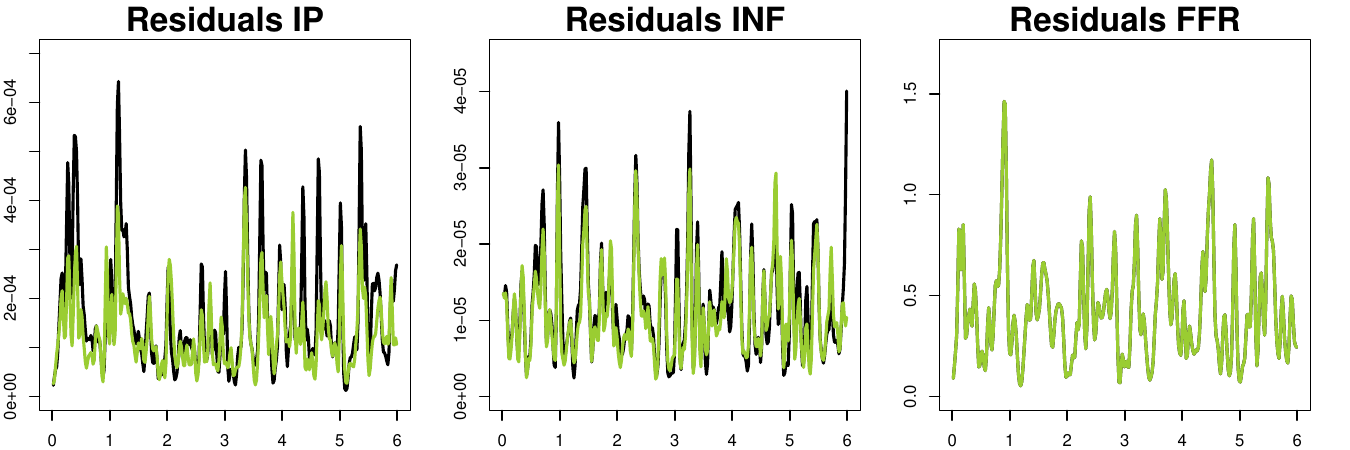}
\caption{Estimated SDs of PVAR (as defined in \eqref{Restricted_PVAR9}) residuals $\widehat{\epsilon}_1, \widehat{\epsilon}_2, \dots, \widehat{\epsilon}_{SN}$ (green) and of residuals obtained by a VAR(12) with seasonal intercept 
 (black). For both analyses, seasonally unadjusted monthly data from 1968-2019 are used. The x-axis of SD plots denotes the normalized frequencies times $S=12$.}
\label{SD_PVAR_VAR}
\end{figure}


For the construction of confidence intervals of the seasonal impulse responses, we use the non-seasonal variant of the residual-based seasonal block bootstrap described in Section \ref{Subsubsection_Bootstrapscheme_weak}, because the bottom row in Figure \ref{WWN1} shows significant autoregressive structure in the squared structural shocks such that the strong periodic white noise assumption appears to not hold. The block length of the residual-based seasonal block bootstrap is set to $b=5$, which is in line with the choice in \cite{jentsch2022asymptotically}, who considered a similar data generating process, and with the results in \cite{bertail2024optimal}.
We use standard percentile intervals for the construction of the confidence intervals. For the SVAR analysis, we use a non-seasonal block bootstrap to construct the confidence intervals. $L = 1000$ bootstrap repetitions are used for each bootstrap method.
 
\begin{figure}[t!]
\centering
\includegraphics[scale=0.8]{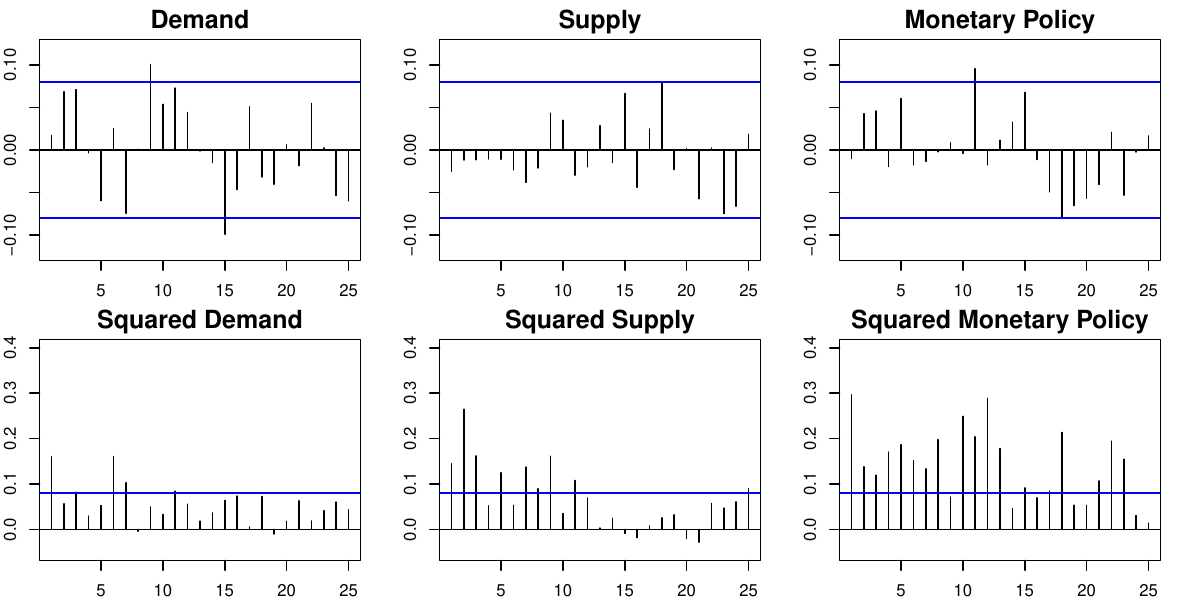}
\caption{ACFs of structural PVAR shocks  $\widehat{w}_1, \dots, \widehat{w}_{SN}$ (top panels)  and ACFs of squared structural PVAR shocks $\widehat{w}^2_1, \dots, \widehat{w}^2_{SN}$ (bottom panels). Structural shocks are identified based on the full identification approach.}
\label{WWN1}
\end{figure}
 
Figures \ref{NS_IR1}, \ref{NS_IR2}, and \ref{NS_IR3} present the seasonal structural impulse responses generated by a monetary policy shock. This shock is normalized as an unexpected increase in FFR of 25 basis points in each season. The figures show the effects on IP, INF and FFR, respectively. The impulse response functions generated by positive demand and negative supply shocks are deferred to the Appendix G.  
It appears that monetary policy shocks do not have significant periodic effects on the macro variables, suggesting that the timing of unexpected rate hikes or cuts by the Fed does not affect the macro variables. However, this may also be because the short- and long-run restrictions of the $mp$ shock to IP naturally limit the periodic flexibility of the impulse response. It is noticeable that the confidence bands for the SPVAR responses, especially for the INF and FFR impulse responses, are narrower compared to the non-seasonal SVAR impulse responses. Accordingly, it appears that we can make more precise statements on the basis of the SPVAR impulse responses in the individual seasons than at the non-seasonal SVAR level.

 \begin{figure}[t!]
\centering
\includegraphics[scale=0.75]{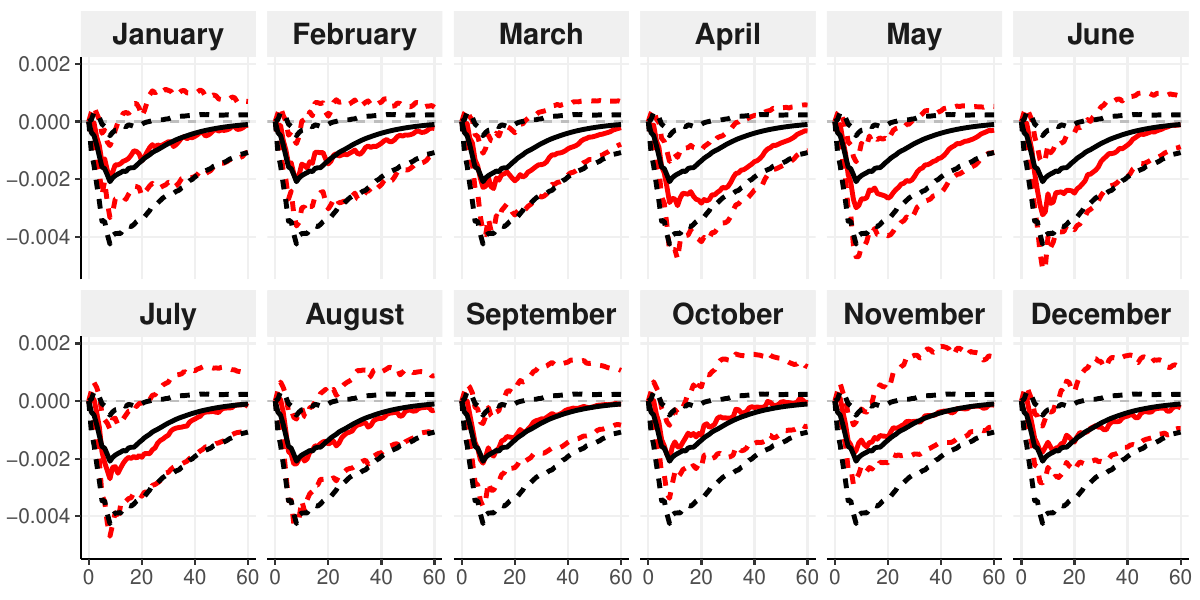}
\caption{Seasonal structural impulse responses of IP after a monetary policy shock. SPVAR impulse responses are in red, while SVAR impulse responses are in black and are constant across the seasons. The corresponding month indicates the time of occurrence of the shock.}
\label{NS_IR1}
\end{figure}

\begin{figure}[t!]
\centering
\includegraphics[scale=0.75]{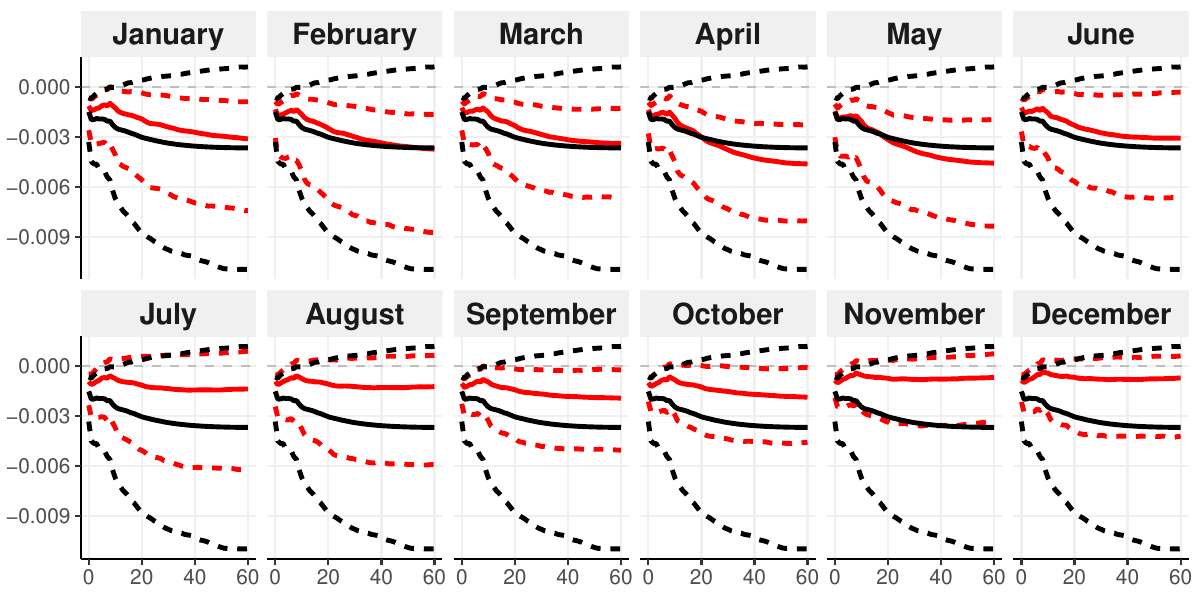}
\caption{Seasonal structural impulse responses of INF after a monetary policy shock. SPVAR impulse responses are in red, while SVAR impulse responses are in black and are constant across the seasons. The corresponding month indicates the time of occurrence of the shock.
}
\label{NS_IR2}
\end{figure}

 \begin{figure}[t!]
\centering
\includegraphics[scale=0.75]{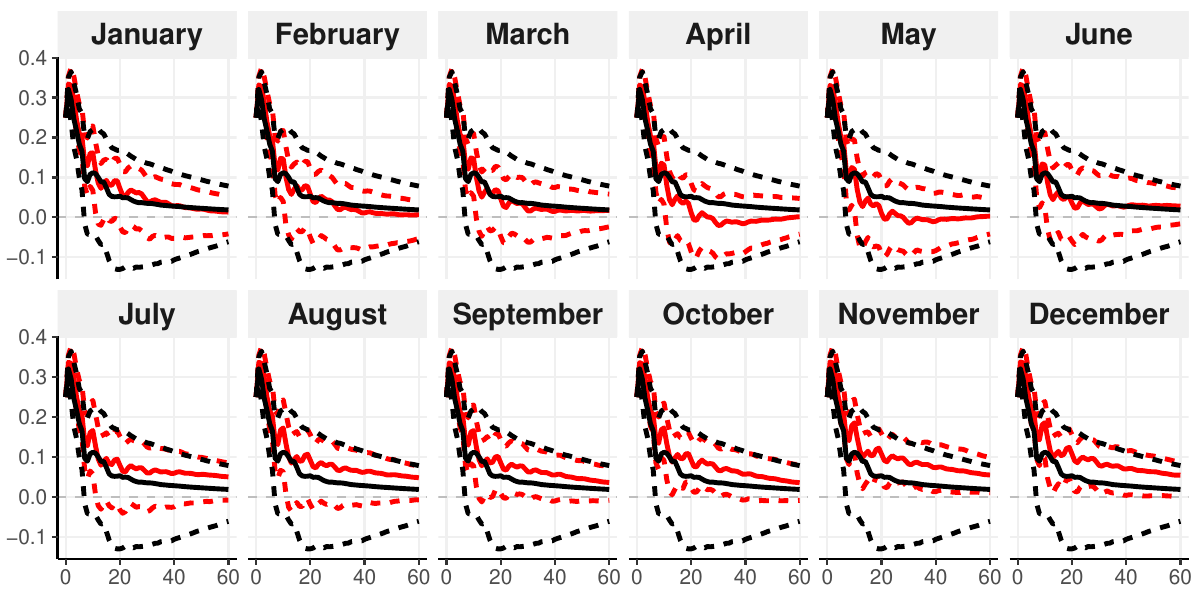}
\caption{Seasonal structural impulse responses of FFR after a monetary policy shock. SPVAR impulse responses are in red, while SVAR impulse responses are in black and are constant across the seasons. The corresponding month indicates the time of occurrence of the shock.}
\label{NS_IR3}
\end{figure}

In terms of periodic patterns, the structural impulse responses of macro variables triggered by positive demand or negative supply shocks exhibit significantly more periodicity, particularly in May and September. After a positive demand shock, we usually observe a strong upward shift in September, while observing an downward shift in May. The opposite pattern is observed after negative supply shocks. 

All in all, we find that the SVAR impulse responses are approximately given by the mean of the seasonal structural impulse responses. Interestingly, we do not see pronounced efficiency losses in the seasonal structural impulse responses that would materialize in terms of considerably increased confidence intervals compared to the SVAR impulse responses.  Instead, we even notice some efficiency gains in certain seasons resulting in more accurate statements; see, e.g., the December panels in Figures G.1 - G.6 in Appendix G.

We also perform the bootstrap-based test for the statistical significance of the seasonal differences in the SPVAR impulse response functions described in Section \ref{Section_SIR_Test}. Using the test statistic $\widehat Q_N(i,j)$, we test the null hypothesis $K_0: \Theta_{ij}^{SIR}(s)  = \Theta^{VAR}_{ij}, \text{ for all } s = 1,\dots, S$, i.e., whether there are significant differences in the SPVAR impulse response function $\Theta_{ij}^{SIR}(s)$ and the SVAR impulse response function $\Theta^{VAR}_{ij}$ from seasonally adjusted data for $i = y^{ip}, \pi, i$ and $ j = ad, as, mp$. 
In order to calculate the critical value under the null, we use the reduced-form PVAR model from Section \ref{subsection_Reduced}, the VAR model order $p_{var} = 9$
, the block length $b = 5$ and $L = 1000$ bootstrap repetitions. Both, SPVAR and SVAR models, are identified using a mixture of short- and long-run restrictions discussed in Section \ref{subsection_Identification}. We use the full identification approach for SPVAR identification.

Table \ref{SIR_Test_Table} shows the $p$-values of the test for seasonality of the impulse responses for different maximum number of lags $n_{IR}$. 
Overall, the test shows signficant periodicities in the impulse responses following demand and supply shocks, while monetary policy shocks do not appear to have any significant periodic effects on the macro variables. 
It should be noted that the contemporaneous effects of a monetary policy shock on the FFR are standardized for all seasons. However, as shown in Table \ref{SIR_Test_Table}, the impulse responses of the variables following monetary policy shocks are not seasonally significant even without standardization. 

\begin{table}[t!]
\centering
\footnotesize
\begin{tabular}{c|ccccccccc}
   \multicolumn{10}{c}{$K_0 : \Theta_{ij}^{SIR}(s)  = \Theta^{VAR}_{ij}$} \\ \hline  $n_{IR}$ & $ ad \to y^{ip}$ & $ ad \to  \pi$ &  $ ad \to  i$ & $as \to y^{ip}$ & $ as\to \pi$ & $ as \to i$ & $mp \to y^{ip}$ & $ mp \to \pi$ &  $mp \to i$ \\ 
  \hline
0 & 0.056 & 0.211 & 0.107 & 0.085 & 0.125 & 0.143 & 1.00 (1.00) & 0.79 (0.73) & 1.00 (0.67)\\ 
  12 & 0.302 & 0.044 & 0.052 & 0.047 & 0.070 & 0.092 & 0.93 (0.83) & 0.74 (0.68) & 0.87 (0.72) \\ 
  24 & 0.408 & 0.034 & 0.033 & 0.023 & 0.078 & 0.087 & 0.91 (0.82 & 0.75 (0.68) & 0.91 (0.77) \\ 
  36 & 0.472 & 0.019 & 0.024 & 0.012 & 0.082 & 0.077 & 0.88 (0.79) & 0.76 (0.69) & 0.92 (0.79) \\ 
  48 & 0.515 & 0.010 & 0.025 & 0.008 & 0.075 & 0.069 & 0.88 (0.88) & 0.77 (0.77) & 0.91 (0.91) \\ 
  60 & 0.538 & 0.007 & 0.021 & 0.007 & 0.064 & 0.063 & 0.89 (0.81) & 0.77 (0.69) & 0.91 (0.79) \\ 
   \hline
\end{tabular}
\caption{$p$-values of test on seasonality of impulse responses. The $p$-values of the impulse responses after a non-normalized monetary policy shock are given in parentheses.}
\label{SIR_Test_Table}
\end{table}

\subsection{Simulation Study}\label{Subsection_Simulation}


In Monte Carlo simulations, we examine the seasonal impulse response intervals constructed by the residual-based seasonal bootstrap methods discussed in Section \ref{Subsubsection_Bootstrapscheme_weak}. For this purpose, we generate seasonal data from the restricted structural PVAR that was obtained in the real data application when modeling the 
dynamics of IP, INF and FFR.
Hence, in the simulation study, we generate a 3-dimensional sample $y_1, \dots, y_{12N}$ by
\begin{align*}
y_{12n+s} =  \nu(s) + \sum\limits_{i \in \mathcal{A}} A_{i}(s) y_{12n+s-i}  + \sum\limits_{i \in \mathcal{A}^c} A_{i}y_{12n+s-i} + H_0(s) w_{12n+s},
\end{align*}
for $s = 1,\dots, 12$ and $n = 0,\dots ,N-1$,
where we use the estimated reduced-form PVAR(12) model described in \eqref{Restricted_PVAR9} 
and the estimated SPVAR impact matrix identified by the full identification approach as the true DGP. 
We generate the structural shocks $\{ w_{t} \}_{t \in \mathbb{Z}}$ using different DGPs, where one variant relies on i.i.d.~errors and three variants make use of differently parametrized GARCH error processes to cover both, the periodic strong and weak white noise cases. Using $v_t = (v_{1,t}, v_{2,t}, v_{3,t})^{\prime} \sim i.i.d. \; \mathcal{N}(\boldsymbol{0}, \boldsymbol{I}_3)$ for all $t \in \mathbb{Z}$, we define $w_{i,t} = \sigma_{i,t} v_{i,t}$ with  $\sigma^2_{i,t} = a_0 + a_1 w_{i,t}^2 + b_1 \sigma^2_{i,t-1}, i = 1,2,3$ and $a_0 = 1 - a_1 - b_1$ implying that the components of the structural shocks follow univariate independent GARCH(1,1) processes with $E(w_{i,t}^2) = 1,i = 1,2,3$. We use the following GARCH(1,1) specifications of $\{w_t\}_{t\in\mathbb{Z}}$ for simulations:
\begin{itemize}
    \item[(G0)] $a_1=0$, $b_1=0$    \   (G1) $a_1=0.05$, $b_1=0.9$  \   (G2)    $a_1=0.3$, $b_1=0.6$    \   (G3) $a_1=0.5$, $b_1=0$,
\end{itemize}
where the case G0 represents the periodic strong white noise case, and (G1)-(G3) represent variants of the periodic weak white noise case. All these cases are also considered in \cite{bruggemann2016inference}.

\begin{table}[t!]
\centering
\scriptsize
\begin{tabular}{l|l|c|ccccccccc}
 $b$ & $ N$ & $s$ & $ ad \to y^{ip}$ & $ ad \to  \pi$ &  $ ad \to  i$ & $as \to y^{ip}$ & $ as\to \pi$ & $ as \to i$ & $mp \to y^{ip}$ & $ mp \to \pi$ &  $mp \to i$ \\ 
  \hline
   & &  1 & 0.902 & 0.798 & 0.846 & 0.780 & 0.903 & 0.907 & 0.925 & 0.925 & 0.859 \\ 
  & &   2 & 0.884 & 0.871 & 0.796 & 0.947 & 0.882 & 0.940 & 0.921 & 0.920 & 0.859 \\ 
  & &   3 & 0.791 & 0.956 & 0.923 & 0.931 & 0.771 & 0.874 & 0.906 & 0.919 & 0.816 \\ 
   & &  4 & 0.857 & 0.720 & 0.722 & 0.831 & 0.935 & 0.867 & 0.910 & 0.934 & 0.844 \\ 
  & &   5 & 0.921 & 0.914 & 0.812 & 0.921 & 0.852 & 0.870 & 0.902 & 0.953 & 0.822 \\ 
 1  & 50 &  6 & 0.894 & 0.960 & 0.928 & 0.939 & 0.830 & 0.943 & 0.926 & 0.900 & 0.829 \\ 
  & &   7 & 0.866 & 0.920 & 0.821 & 0.922 & 0.901 & 0.956 & 0.887 & 0.861 & 0.785 \\ 
  & &   8 & 0.929 & 0.593 & 0.803 & 0.765 & 0.923 & 0.865 & 0.918 & 0.863 & 0.794 \\ 
  & &    9 & 0.948 & 0.745 & 0.963 & 0.645 & 0.877 & 0.802 & 0.906 & 0.900 & 0.794 \\ 
  & &   10 & 0.917 & 0.902 & 0.786 & 0.804 & 0.915 & 0.903 & 0.886 & 0.903 & 0.808 \\ 
  & &   11 & 0.722 & 0.947 & 0.923 & 0.952 & 0.692 & 0.958 & 0.877 & 0.858 & 0.757 \\ 
   & &  12 & 0.704 & 0.918 & 0.826 & 0.948 & 0.907 & 0.929 & 0.885 & 0.865 & 0.813 \\ 
  \hline
  & &  1 & 0.895 & 0.800 & 0.850 & 0.779 & 0.897 & 0.894 & 0.929 & 0.920 & 0.869 \\ 
  & &   2 & 0.886 & 0.870 & 0.787 & 0.947 & 0.884 & 0.944 & 0.918 & 0.920 & 0.869 \\ 
  & &   3 & 0.780 & 0.951 & 0.920 & 0.931 & 0.769 & 0.873 & 0.906 & 0.919 & 0.820 \\ 
  & &   4 & 0.850 & 0.713 & 0.719 & 0.829 & 0.937 & 0.870 & 0.908 & 0.929 & 0.845 \\ 
   & &  5 & 0.925 & 0.914 & 0.808 & 0.923 & 0.851 & 0.869 & 0.907 & 0.953 & 0.830 \\ 
 5 & 50 &   6 & 0.894 & 0.961 & 0.935 & 0.940 & 0.820 & 0.942 & 0.928 & 0.901 & 0.835 \\ 
  & &    7 & 0.867 & 0.914 & 0.813 & 0.919 & 0.896 & 0.954 & 0.888 & 0.860 & 0.785 \\ 
  & &   8 & 0.920 & 0.580 & 0.802 & 0.763 & 0.919 & 0.853 & 0.919 & 0.859 & 0.793 \\ 
  & &   9 & 0.950 & 0.727 & 0.962 & 0.632 & 0.884 & 0.806 & 0.908 & 0.894 & 0.800 \\ 
  & &   10 & 0.913 & 0.894 & 0.781 & 0.805 & 0.911 & 0.903 & 0.891 & 0.911 & 0.811 \\ 
  & &   11 & 0.709 & 0.945 & 0.923 & 0.954 & 0.681 & 0.958 & 0.889 & 0.867 & 0.762 \\ 
  & &   12 & 0.704 & 0.913 & 0.829 & 0.953 & 0.906 & 0.937 & 0.880 & 0.863 & 0.817 \\ \hline
  &  &    1 & 0.933 & 0.855 & 0.899 & 0.821 & 0.909 & 0.882 & 0.920 & 0.892 & 0.859 \\ 
     &  &    2 & 0.900 & 0.857 & 0.814 & 0.932 & 0.900 & 0.901 & 0.920 & 0.891 & 0.862 \\ 
     &  &    3 & 0.843 & 0.927 & 0.905 & 0.927 & 0.836 & 0.905 & 0.911 & 0.906 & 0.836 \\ 
     &  &    4 & 0.915 & 0.792 & 0.817 & 0.869 & 0.910 & 0.890 & 0.903 & 0.903 & 0.846 \\ 
     &  &    5 & 0.920 & 0.907 & 0.848 & 0.908 & 0.899 & 0.893 & 0.902 & 0.932 & 0.847 \\ 
   1 & 100 &    6 & 0.895 & 0.921 & 0.918 & 0.915 & 0.875 & 0.929 & 0.919 & 0.883 & 0.853 \\ 
     &  &    7 & 0.917 & 0.926 & 0.851 & 0.926 & 0.925 & 0.942 & 0.900 & 0.869 & 0.823 \\ 
     &  &    8 & 0.928 & 0.719 & 0.873 & 0.838 & 0.904 & 0.868 & 0.920 & 0.869 & 0.817 \\ 
     &  &    9 & 0.897 & 0.822 & 0.961 & 0.775 & 0.875 & 0.827 & 0.895 & 0.889 & 0.819 \\ 
     &  &   10 & 0.929 & 0.908 & 0.854 & 0.857 & 0.925 & 0.903 & 0.909 & 0.901 & 0.837 \\ 
     &  &   11 & 0.813 & 0.926 & 0.938 & 0.912 & 0.767 & 0.925 & 0.883 & 0.870 & 0.806 \\ 
     &  &   12 & 0.796 & 0.918 & 0.872 & 0.932 & 0.933 & 0.918 & 0.896 & 0.876 & 0.818 \\ 
    \hline
    &  &    1 & 0.932 & 0.855 & 0.896 & 0.822 & 0.914 & 0.882 & 0.915 & 0.905 & 0.860 \\ 
  &  &    2 & 0.898 & 0.867 & 0.821 & 0.931 & 0.898 & 0.902 & 0.920 & 0.896 & 0.864 \\ 
     &  &    3 & 0.843 & 0.918 & 0.908 & 0.935 & 0.831 & 0.900 & 0.901 & 0.910 & 0.840 \\ 
     &  &    4 & 0.915 & 0.791 & 0.820 & 0.875 & 0.914 & 0.888 & 0.904 & 0.903 & 0.847 \\ 
     &  &    5 & 0.916 & 0.905 & 0.844 & 0.918 & 0.897 & 0.892 & 0.909 & 0.940 & 0.835 \\ 
  7 &  100 &    6 & 0.894 & 0.920 & 0.917 & 0.919 & 0.883 & 0.931 & 0.917 & 0.886 & 0.844 \\ 
     &  &    7 & 0.916 & 0.921 & 0.856 & 0.924 & 0.917 & 0.936 & 0.908 & 0.876 & 0.820 \\ 
     &  &    8 & 0.929 & 0.720 & 0.864 & 0.838 & 0.907 & 0.874 & 0.917 & 0.868 & 0.821 \\ 
     &  &    9 & 0.899 & 0.822 & 0.971 & 0.770 & 0.872 & 0.828 & 0.895 & 0.894 & 0.826 \\ 
     &  &   10 & 0.928 & 0.903 & 0.844 & 0.856 & 0.919 & 0.899 & 0.906 & 0.900 & 0.838 \\ 
     &  &   11 & 0.805 & 0.924 & 0.933 & 0.914 & 0.765 & 0.933 & 0.885 & 0.868 & 0.807 \\ 
     &  &   12 & 0.797 & 0.913 & 0.868 & 0.935 & 0.937 & 0.914 & 0.896 & 0.877 & 0.815 \\ 
\end{tabular}
\caption{Simulated coverage of 90\%-confidence intervals of seasonal structural impulse response of IP ($y^{ip}$), INF $(\pi)$ and FFR ($i$) $k=12$ months after a shock to aggregate demand $(ad)$, aggregate supply ($as$) and monetary policy ($mp$). Structural schocks are generated using GARCH specification G0.}
\label{Cov_G0_k12}
\end{table}

\begin{table}[t!]
\centering
\scriptsize
\begin{tabular}{l|l|c|ccccccccc}
$b$ &$ N$ & $s$ & $ ad \to y^{ip}$ & $ ad \to  \pi$ &  $ ad \to  i$ & $as \to y^{ip}$ & $ as\to \pi$ & $ as \to i$ & $mp \to y^{ip}$ & $ mp \to \pi$ &  $mp \to i$ \\ 
  \hline
  & & 1 & 0.869 & 0.771 & 0.808 & 0.740 & 0.854 & 0.882 & 0.923 & 0.919 & 0.857 \\ 
  & &  2 & 0.865 & 0.848 & 0.742 & 0.935 & 0.853 & 0.955 & 0.911 & 0.927 & 0.859 \\ 
  & &  3 & 0.744 & 0.932 & 0.890 & 0.931 & 0.729 & 0.840 & 0.888 & 0.919 & 0.809 \\ 
  & &  4 & 0.856 & 0.682 & 0.675 & 0.825 & 0.916 & 0.870 & 0.880 & 0.928 & 0.835 \\ 
  & &  5 & 0.908 & 0.922 & 0.796 & 0.908 & 0.818 & 0.843 & 0.881 & 0.924 & 0.825 \\ 
 1 & 50 &  6 & 0.874 & 0.948 & 0.901 & 0.916 & 0.774 & 0.925 & 0.911 & 0.904 & 0.833 \\ 
   & &  7 & 0.829 & 0.909 & 0.777 & 0.905 & 0.877 & 0.948 & 0.869 & 0.870 & 0.781 \\ 
   & & 8 & 0.905 & 0.550 & 0.739 & 0.723 & 0.921 & 0.852 & 0.904 & 0.881 & 0.787 \\ 
  & &  9 & 0.954 & 0.670 & 0.949 & 0.609 & 0.893 & 0.781 & 0.886 & 0.898 & 0.793 \\ 
  & &  10 & 0.893 & 0.863 & 0.723 & 0.776 & 0.878 & 0.888 & 0.884 & 0.899 & 0.799 \\ 
  & &  11 & 0.639 & 0.945 & 0.885 & 0.935 & 0.641 & 0.946 & 0.868 & 0.863 & 0.737 \\ 
  & &  12 & 0.677 & 0.883 & 0.775 & 0.944 & 0.867 & 0.931 & 0.879 & 0.864 & 0.796 \\ 
   \hline
 & & 1 & 0.879 & 0.771 & 0.803 & 0.749 & 0.861 & 0.895 & 0.932 & 0.921 & 0.856 \\ 
  & &  2 & 0.871 & 0.854 & 0.746 & 0.944 & 0.851 & 0.954 & 0.916 & 0.933 & 0.858 \\ 
  & &  3 & 0.745 & 0.938 & 0.904 & 0.941 & 0.729 & 0.848 & 0.890 & 0.922 & 0.803 \\ 
  & &  4 & 0.857 & 0.680 & 0.669 & 0.828 & 0.919 & 0.875 & 0.890 & 0.931 & 0.845 \\ 
  & &  5 & 0.908 & 0.927 & 0.796 & 0.909 & 0.821 & 0.839 & 0.890 & 0.937 & 0.828 \\ 
 5 & 50 &  6 & 0.878 & 0.952 & 0.900 & 0.930 & 0.775 & 0.943 & 0.915 & 0.908 & 0.823 \\ 
  & &  7 & 0.828 & 0.917 & 0.779 & 0.915 & 0.878 & 0.957 & 0.875 & 0.870 & 0.776 \\ 
  & &  8 & 0.907 & 0.547 & 0.741 & 0.744 & 0.924 & 0.859 & 0.907 & 0.883 & 0.795 \\ 
  & &  9 & 0.951 & 0.667 & 0.953 & 0.594 & 0.897 & 0.789 & 0.895 & 0.897 & 0.790 \\ 
  & &  10 & 0.897 & 0.857 & 0.719 & 0.785 & 0.890 & 0.900 & 0.882 & 0.904 & 0.804 \\ 
  & &  11 & 0.646 & 0.946 & 0.894 & 0.943 & 0.643 & 0.951 & 0.867 & 0.864 & 0.737 \\ 
  & &  12 & 0.670 & 0.877 & 0.779 & 0.945 & 0.880 & 0.932 & 0.880 & 0.869 & 0.796 \\ 
   \hline
  &  &    1 & 0.897 & 0.816 & 0.846 & 0.809 & 0.878 & 0.872 & 0.916 & 0.900 & 0.835 \\ 
     &  &    2 & 0.860 & 0.844 & 0.793 & 0.908 & 0.873 & 0.899 & 0.926 & 0.911 & 0.850 \\ 
     &  &    3 & 0.779 & 0.886 & 0.865 & 0.912 & 0.779 & 0.879 & 0.894 & 0.913 & 0.834 \\ 
     &  &    4 & 0.890 & 0.755 & 0.754 & 0.846 & 0.905 & 0.873 & 0.893 & 0.911 & 0.829 \\ 
     &  &    5 & 0.897 & 0.899 & 0.810 & 0.889 & 0.845 & 0.866 & 0.881 & 0.923 & 0.820 \\ 
  1 & 100 &    6 & 0.873 & 0.903 & 0.870 & 0.897 & 0.827 & 0.922 & 0.908 & 0.894 & 0.829 \\ 
     &  &    7 & 0.891 & 0.913 & 0.821 & 0.910 & 0.897 & 0.930 & 0.879 & 0.867 & 0.817 \\ 
     &  &    8 & 0.925 & 0.658 & 0.817 & 0.788 & 0.896 & 0.858 & 0.906 & 0.864 & 0.817 \\ 
     &  &    9 & 0.887 & 0.759 & 0.941 & 0.714 & 0.863 & 0.824 & 0.899 & 0.894 & 0.807 \\ 
     &  &   10 & 0.899 & 0.892 & 0.795 & 0.828 & 0.876 & 0.893 & 0.892 & 0.903 & 0.818 \\ 
     &  &   11 & 0.719 & 0.905 & 0.891 & 0.907 & 0.720 & 0.912 & 0.896 & 0.861 & 0.787 \\ 
     &  &   12 & 0.756 & 0.899 & 0.832 & 0.915 & 0.903 & 0.914 & 0.896 & 0.858 & 0.805 \\ 
   \hline
    &  &    1 & 0.904 & 0.817 & 0.850 & 0.816 & 0.884 & 0.885 & 0.923 & 0.899 & 0.839 \\ 
     &  &    2 & 0.869 & 0.848 & 0.795 & 0.923 & 0.878 & 0.905 & 0.932 & 0.919 & 0.856 \\ 
     &  &    3 & 0.806 & 0.895 & 0.878 & 0.927 & 0.790 & 0.883 & 0.901 & 0.920 & 0.834 \\ 
     &  &    4 & 0.890 & 0.759 & 0.757 & 0.853 & 0.918 & 0.883 & 0.901 & 0.919 & 0.834 \\ 
     &  &    5 & 0.906 & 0.896 & 0.823 & 0.898 & 0.860 & 0.872 & 0.887 & 0.932 & 0.826 \\ 
   7 & 100 &    6 & 0.878 & 0.915 & 0.888 & 0.914 & 0.823 & 0.932 & 0.907 & 0.904 & 0.839 \\ 
    &  &    7 & 0.884 & 0.922 & 0.825 & 0.917 & 0.903 & 0.936 & 0.886 & 0.872 & 0.816 \\ 
     &  &    8 & 0.926 & 0.674 & 0.818 & 0.808 & 0.913 & 0.871 & 0.918 & 0.874 & 0.822 \\ 
     &  &    9 & 0.896 & 0.760 & 0.951 & 0.725 & 0.868 & 0.831 & 0.909 & 0.899 & 0.810 \\ 
     &  &   10 & 0.903 & 0.896 & 0.801 & 0.830 & 0.891 & 0.891 & 0.903 & 0.905 & 0.821 \\ 
     &  &   11 & 0.726 & 0.916 & 0.904 & 0.911 & 0.717 & 0.914 & 0.896 & 0.868 & 0.784 \\ 
     &  &   12 & 0.751 & 0.897 & 0.829 & 0.927 & 0.909 & 0.917 & 0.897 & 0.862 & 0.814 \\ 
\end{tabular}
\caption{Simulated coverage of 90\%-confidence intervals of seasonal structural impulse response of industrial production ($y^{ip}$), inflation $(\pi)$ and federal funds rate ($i$) $k=12$ months after a shock to aggregate demand $(ad)$, aggregate supply ($as$) and monetary policy ($mp$). Structural schocks are generated using GARCH specification G3.}
\label{Cov_G3_k12}
\end{table}

We simulate a total of $M=1000$ time series of length $N = 20,50,100$. 
Further, we bootstrap the structural impulse responses by using the non-seasonal variant of the residual-based seasonal block bootstrap for PVAR processes discussed in Section \ref{Subsubsection_Bootstrapscheme_weak} with block sizes $b \in \{1,3\}$ for $N=20$, $b\in\{1,5\}$ for $N=50$ and $b\in\{1,7\}$ for $N=100$ 
to obtain approximations of the confidence intervals. The block sizes are adopted from \cite{jentsch2022asymptotically}, who considered similar block and sample size ratios. Note that the block size has to increase with increasing sample size and that for $b=1$, the residual-based seasonal block bootstrap collapses to a seasonal independent variant. The confidence intervals are constructed by standard percentile intervals. The nominal coverage rate is 90\% and we use $L = 1000$ bootstrap repetitions.

Tables \ref{Cov_G0_k12} and \ref{Cov_G3_k12} 
show the simulated coverage rates for lag $k = 12$ and $N \in \{50,100 \}$ for GARCH specifications G0 and G3.  
The tables for lag $k = 0$, the tables for sample size $N=20$ and the tables for the other GARCH specifications G1 and G2 can be found in Appendix H, respectively. Due to a simplified presentation, we only provide coverage rates for lags $k \in\{0, 12\}$. For other lags, the coverage rate is not particularly different.
Generally, it can be clearly seen that the simulated coverage rates approach the nominal coverage rate of 90\% for increasing $N$. Table \ref{Cov_G0_k12} shows small efficiency losses when using the seasonal block bootstrap compared to its independent variant ($b=1$). This is to be expected, as the i.i.d.~case G0 contains the assumption of strong periodic white noise. When we consider Table \ref{Cov_G3_k12}, the simulated coverage rates approach the nominal coverage rate somewhat more closely when using the seasonal block bootstrap. We also find this result for cases G1 and G2. This is because the seasonal independent bootstrap (for $b=1$) is not able to capture the non-linear dependencies of the GARCH errors, while its block variant is able to do so. All in all, the standard percentile intervals hit the nominal coverage very well. In the case of a rather small data basis consisting of $N = 20$ years, the coverage rates look moderate and deviate quite a bit from the nominal coverage for some entries. For $N = 50, 100$, the coverage rates improve significantly and come very close to nominal coverage.

\section{Conclusion}\label{Section_Conclusion}

When dealing with seasonal data, deseasonalization using standard seasonal adjustment procedures is the standard methodology nowadays. This is the case even though seasonal adjustment procedures may distort the structure of the data and remove useful information.  In this paper, we offer an alternative \emph{direct} approach to analyze seasonally unadjusted macroeconomic data by structural models.  Instead of first seasonally adjusting the data and then fitting a structural VAR model, we propose to fit a structural periodic VAR directly on raw macroeconomic data that is seasonally unadjusted. We provide a PVAR representation that enables linearly restricted estimation of PVAR models under more general linear constraints than \cite{ursu2009modelling}, which allows to impose also constraints \emph{across seasons}. In addition, different identification methods for structural PVAR models are proposed, which, compared to the identification in SVARs, also take the (contemporaneous) seasonality of the data into account.  Further,  we show consistency and asymptotic normality of (constrained) least squares estimators for PVAR coefficients and structural impulse responses under periodic weak white noise assumption. For standard error estimation and confidence interval construction, we introduce different residual-based bootstrap methods for PVARs and prove their bootstrap consistency. We also discuss bootstrap-based statistical testing for seasonality in impulse responses.

Our empirical results illustrate the analysis of seasonally unadjusted industrial production, CPI inflation and federal funds rate using a linearly constrained structural PVAR.  For the structural analysis of the data, we identify three structural shocks,  i.e.,  monetary policy shock, aggregate supply and demand shock based on a mixture of short- and long-run restrictions.
We determine seasonal structural impulse responses based on the linearly restricted structural PVAR and compare the results with those from a standard SVAR analysis on the same US macro variables in seasonally adjusted form. We find that monetary policy shocks do not seem to have significant periodic effects on the macroeconomic variables. However, we observe significant periodic effects caused by aggregate supply and demand shocks. 

Moreover, we do not find pronounced efficiency losses of suitably constrained structural PVARs compared to seasonally adjusted SVARs, but even efficiency gains in some individual seasons. In total, we clearly see that useful insights into the dynamics of the macro variables are lost, if we seasonally adjust the data first. In the simulation study, we show that the simulated coverage rates constructed by the proposed residual-based bootstrap methods for PVARs closely match the nominal coverage rate, even for realistic sample sizes.

\section*{Funding}

Financial support by the Deutsche Forschungsgemeinschaft (DFG, German Research Foundation; Project-ID 520388526; TRR 391: Spatio-temporal Statistics for the Transition of Energy and Transport) and by the
Mercator Research Center Ruhr (MERCUR) with
project number Pe-2019-0044 is gratefully acknowledged.

\bibliographystyle{agsm}
\bibliography{literatur}

\begin{titlepage} 
\renewcommand{\thefootnote}{\fnsymbol{footnote}}
\begin{center}
	{\LARGE Online Supplemental Appendices to \\ \bigskip Structural Periodic Vector Autoregressions 
 }
\end{center}
\bigskip

\begin{center}
	{\Large  Daniel Dzikowski\footnote{TU Dortmund University,  Department of Statistics, D-44221 Dortmund, Germany; dzikowski@statistik.tu-dortmund.de; corresponding author} \\   TU Dortmund University}
 \end{center}
 \begin{center}
	 {\Large  Carsten Jentsch\footnote{TU Dortmund University,  Department of Statistics, D-44221 Dortmund, Germany; jentsch@statistik.tu-dortmund.de} \\  TU Dortmund University}
  \end{center}

\bigskip

\begin{center}
	{\today}
\end{center}

\end{titlepage}
\appendix 
\numberwithin{equation}{section}
\numberwithin{figure}{section}
\numberwithin{table}{section}
\section{Proof of Theorem \ref*{theo_joint_clt}}\label{Joint_CLT_Proof}

In order to prove the joint CLT of the PVAR estimators under Assumption \ref*{Mixing_Assumptions}, some preparatory work has to be done. At first, for $a,b,c \in \mathbb{Z}$, we define $\kappa_{a,b}(s_1,s_2)$, $\tau_{a,b,c}(s_1, s_2)$, $s_1,s_2 = 1,\dots, S$ and $\tau_{a,b,c}$
as in (\ref*{cumulants1}) and (\ref*{cumulants2}). As given in (\ref*{Res_PVAR_Estimator}), the multivariate LS estimator under general linear restrictions $\beta = R \gamma + r$ is given by:
\begin{align*}
\widehat{\gamma} = \left[R^{\prime} \{X X^{\prime} \otimes \boldsymbol{I}_m \} R \right]^{-1} R^{\prime} \{ X \otimes \boldsymbol{I}_m \} \bold{z}.
\end{align*}
Plugging in $\bold{z}  = \{X^{\prime} \otimes \boldsymbol{I}_m\}  R\gamma + e$ gives

\begin{align*}
 \widehat{\gamma} - \gamma  =  \left[R^{\prime} \{X X^{\prime} \otimes \boldsymbol{I}_m \} R \right]^{-1} R^{\prime} \{ X \otimes \boldsymbol{I}_m \} e = \left[R^{\prime} \{X X^{\prime} \otimes \boldsymbol{I}_m \} R \right]^{-1} R^{\prime} \; vec\{EX^{\prime}\},
\end{align*}
where
\[
vec\{EX^{\prime}\} = \sum\limits_{n=0}^{N-1} \begin{pmatrix}
\epsilon_{Sn+1}\ \\
vec\{\epsilon_{Sn+1} y_{Sn}^{\prime} \}
 \\ \vdots \\ vec\{\epsilon_{Sn+1} y_{Sn-p(1)+1}^{\prime} \} \\
 \epsilon_{Sn+2} \\ vec\{\epsilon_{Sn+2} y_{Sn+1}^{\prime} \}
 \\ \vdots \\ vec\{\epsilon_{Sn+2} y_{Sn-p(2)+2}^{\prime} \} \\ \vdots \\\epsilon_{Sn+S-1} \\ vec\{\epsilon_{Sn+S} y_{Sn+S-1}^{\prime} \}
 \\ \vdots \\ vec\{\epsilon_{Sn+S} y_{Sn+S-p(S)}^{\prime} \}
\end{pmatrix}.
\]
Due to Assumption \ref*{Mixing_Assumptions} $(i)$, the moving-average representation $y_{Sn+s} = \mu(s) +  \sum_{k=0}^{\infty} \Phi_{k}(s)  \epsilon_{Sn+s-k}$ of the PVAR($\boldsymbol{p}$) process exists and $\mu(s)=\sum_{k=0}^\infty \Phi_{k}(s)  \nu(s-k)$ can be used to represent $X_n(s)$ as
\begin{align*}
X_n(s) &= \begin{pmatrix} 1 \\ y_{Sn+s-1} \\ \vdots \\ y_{Sn+s-p(s)} \end{pmatrix} =  \begin{pmatrix}   1 \\ \mu(s-1)+  \sum\limits_{k=0}^{\infty}  \Phi_{k}(s-1)  \epsilon_{Sn+s-1-k} \\ \vdots \\  \mu(s-p(s)) + \sum\limits_{k=0}^{\infty}  \Phi_{k}(s-p(s))  \epsilon_{Sn+s-p(s)-k}  \end{pmatrix}\\&= \begin{pmatrix}
1 \\
 \sum\limits_{k=1}^{\infty} \begin{pmatrix}  \Phi_{k-1}(s-1) [ \epsilon_{Sn+s-k} + \nu(s-k)] \\ \vdots \\  \Phi_{k-1}(s-p(s)) [ \epsilon_{Sn+s-p(s)+1-k} + \nu(s-p(s)+1-k)]  \end{pmatrix} \end{pmatrix} \\
&= \begin{pmatrix} 1 \\ \sum\limits_{k=1}^{\infty} \begin{pmatrix}  \Phi_{k-1}(s-1) [ \epsilon_{Sn+s-k} + \nu(s-k)] \\ \vdots \\  \Phi_{k-p(s)}(s-p(s)) [ \epsilon_{Sn+s-k} + \nu(s-k)] \end{pmatrix} \end{pmatrix} \\ 
&= \begin{pmatrix} 1 \\  \sum\limits_{k=1}^{\infty} C_k(s)[ \epsilon_{Sn+s-k}+ \nu(s-k)] \end{pmatrix},
\end{align*}
where
\[ 
C_k(s) = \begin{pmatrix}
\Phi_{k-1}(s-1) \\ \vdots \\ \Phi_{k-p(s)}(s-p(s)) 
\end{pmatrix} \in \mathbb{R}^{mp(s) \times m}. 
\]
Note that $\Phi_{k-p(s)}(s)= \boldsymbol{0}$ for $k < p(s), s=1,\dots,S$. Hence, for $vec\{EX^{\prime}\}$, we get
\begin{align*}
vec\{EX^{\prime}\} =&  \begin{pmatrix}
\boldsymbol{I}_m & \boldsymbol{0}_{m\times m^2} & \dots & \dots &  \boldsymbol{0} \\ \boldsymbol{0}_{m^2p(1) \times m} & \sum_{k=1}^{\infty}  (C_k(1) \otimes \boldsymbol{I}_m)  & \ddots  & & \vdots \\ \vdots & \ddots & \ddots & \ddots & \vdots \\  \vdots &  & \ddots & \boldsymbol{I}_m &  \boldsymbol{0}_{m \times m^2} \\ \boldsymbol{0} & \dots & \dots & \boldsymbol{0}_{m^2p(S) \times m} & \sum_{k=1}^{\infty}  (C_k(S)  \otimes \boldsymbol{I}_m)
\end{pmatrix}  \\ &\times
 \sum\limits_{n=0}^{N-1} \begin{pmatrix} \epsilon_{Sn+1} \\ vec\{\epsilon_{Sn+1} [\epsilon_{Sn+1-k}^{\prime} + \nu(1-k)^{\prime}] \} \\ \epsilon_{Sn+2} \\ vec\{\epsilon_{Sn+2}[ \epsilon_{Sn+2-k}^{\prime} +   \nu(2-k)^{\prime}] \} \\ \vdots \\ \epsilon_{Sn+S}  \\ vec\{\epsilon_{Sn+S} [ \epsilon_{Sn+S-k}^{\prime} +   \nu(S-k)^{\prime} ]   \} \end{pmatrix}.
 \end{align*}
Note that, here and in the following notation, the sums over $k$ in the first matrix above are also applied to the second factor $\sum_{n=0}^{N-1}(\cdots)$.
Further, using $\sqrt{N} \bigl(\widehat{\beta}_{res} - \beta \bigr) = \sqrt{N} R \bigl(\widehat{\gamma} - \gamma  \bigr)$, we get the deviation form of $\beta$, that is
\begin{align}\label{Deviationform}
\sqrt{N} \bigl(\widehat{\beta}_{res} - \beta \bigr) =& R \left[R^{\prime} \left\{\frac{1}{N} X X^{\prime} \otimes \boldsymbol{I}_m \right\} R \right]^{-1} R^{\prime}  \nonumber  \\ &\times \begin{pmatrix}
\boldsymbol{I}_m & \boldsymbol{0}_{m\times m^2} & \dots & \dots &  \boldsymbol{0} \\ \boldsymbol{0}_{m^2p(1) \times m} & \sum_{k=1}^{\infty}  (C_k(1) \otimes \boldsymbol{I}_m)  & \ddots  & & \vdots \\ \vdots & \ddots & \ddots & \ddots & \vdots \\  \vdots &  & \ddots & \boldsymbol{I}_m &  \boldsymbol{0}_{m \times m^2} \\ \boldsymbol{0} & \dots & \dots & \boldsymbol{0}_{m^2p(S) \times m} & \sum_{k=1}^{\infty}  (C_k(S)  \otimes \boldsymbol{I}_m)
\end{pmatrix}  \\ &\times \frac{1}{\sqrt{N}} \sum\limits_{n=0}^{N-1} \begin{pmatrix} \epsilon_{Sn+1} \nonumber \\ vec\{\epsilon_{Sn+1} [\epsilon_{Sn+1-k}^{\prime} + \nu(1-k)^{\prime}] \} \\ \epsilon_{Sn+2} \\ vec\{\epsilon_{Sn+2}[ \epsilon_{Sn+2-k}^{\prime} +   \nu(2-k)^{\prime}] \} \\ \vdots \\ \epsilon_{Sn+S}  \\ vec\{\epsilon_{Sn+S} [ \epsilon_{Sn+S-k}^{\prime} +   \nu(S-k)^{\prime} ]   \} \end{pmatrix}.
\end{align}
Next, we derive asymptotic properties of the estimator of the periodic covariance matrix $\Sigma_{\epsilon}(s)$. For this purpose, we define $\sigma(s) = vech\{ \Sigma_{\epsilon}(s)\}$,  $\widehat{\sigma}(s) = vech\{\widehat{\Sigma}_{\epsilon} (s)\}$ and $\widetilde{\sigma}(s) = vech\{\widetilde{\Sigma}_{\epsilon}(s) \}$, where $\widehat{\Sigma}_{\epsilon}(s) = \frac{1}{N- k(s)}\sum_{n=0}^{N-1} \widehat{\epsilon}_{Sn+s}  \widehat{\epsilon}_{Sn+s}^{\prime}$ and $\widetilde{\Sigma}_{\epsilon}(s) = \frac{1}{N}\sum_{n=0}^{N-1} \epsilon_{Sn+s}  \epsilon_{Sn+s}^{\prime}$. Here, $\widehat{\Sigma}_{\epsilon}(s)$ is the natural estimator of the seasonal covariance matrix $\Sigma_{\epsilon}(s)$, $s= 1,\dots,S$ based on the PVAR residuals, $\widetilde{\Sigma}_{\epsilon}(s)$ is the corresponding (infeasible) estimator based on the non-observable PVAR error terms. The $vech$-operator stacks the entries on the lower-triangular part of a square matrix columnwise below each other.  Further, let $\sigma = (\sigma(1)^{\prime}, \dots,  \sigma(S)^{\prime})^{\prime}$, $\widehat{\sigma} = (\widehat{\sigma}(1)^{\prime}, \dots,  \widehat{\sigma}(S)^{\prime})^{\prime}$ and $\widetilde{\sigma} = (\widetilde{\sigma}(1)^{\prime}, \dots,  \widetilde{\sigma}(S)^{\prime})^{\prime}$. By standard arguments, it can be easily shown that
\[ \sqrt{N}\bigl( \widetilde{\sigma} - \widehat{\sigma} \bigr) = o_P(1) .\]
Hence, for further calculations, $\widetilde{\sigma}$ can be used and its deviation form can be expressed by
\[ \sqrt{N}\bigl( \widetilde{\sigma} - \sigma \bigr) = \frac{1}{\sqrt{N}}\sum\limits_{n=0}^{N-1} L_{mS} \begin{pmatrix} vec\{ \epsilon_{Sn+1}  \epsilon_{Sn+1}^{\prime} \} - vec\{ \Sigma_{\epsilon}(1)  \} \\ \vdots \\ vec\{ \epsilon_{Sn+S}  \epsilon_{Sn+S}^{\prime} \} - vec\{ \Sigma_{\epsilon}(S)  \} \end{pmatrix} .\]
Together with \eqref{Deviationform}, the joint deviation form can be represented as
\begin{align}\label{A_representation}
\sqrt{N} \begin{pmatrix}
\widehat{\beta}_{res} - \beta  \\  \widetilde{\sigma} - \sigma \end{pmatrix} &= \begin{pmatrix}
    \widehat{Q}^{\beta} R^{\beta} \frac{1}{\sqrt{N}} \sum_{n=0}^{N-1} \begin{pmatrix} \epsilon_{Sn+1} \\ vec\{\epsilon_{Sn+1} [\epsilon_{Sn+1-k}^{\prime} + \nu(1-k)^{\prime}] \} \\ \epsilon_{Sn+2} \\ vec\{\epsilon_{Sn+2}[ \epsilon_{Sn+2-k}^{\prime} +   \nu(2-k)^{\prime}] \} \\ \vdots \\ \epsilon_{Sn+S}  \\ vec\{\epsilon_{Sn+S} [ \epsilon_{Sn+S-k}^{\prime} +   \nu(S-k)^{\prime} ]   \} \end{pmatrix}  \\  \\  \frac{1}{\sqrt{N}}\sum_{n=0}^{N-1} L_{mS} \begin{pmatrix} vec\{ \epsilon_{Sn+1}  \epsilon_{Sn+1}^{\prime} \} - vec\{ \Sigma_{\epsilon}(1)  \} \\ \vdots \\ vec\{ \epsilon_{Sn+S}  \epsilon_{Sn+S}^{\prime} \} - vec\{ \Sigma_{\epsilon}(S)  \} \end{pmatrix}  \end{pmatrix}\\ \nonumber \\ &= A_q + (A-A_q)   \nonumber ,
\end{align}
with 
\begin{align*}
    \widehat{Q}^{\beta} &= R \left[R^{\prime} \left\{\frac{1}{N} X X^{\prime} \otimes \boldsymbol{I}_m \right\} R \right]^{-1} R^{\prime}  \\
    R^{\beta} &= \begin{pmatrix}
\boldsymbol{I}_m & \boldsymbol{0}_{m\times m^2} & \dots & \dots &  \boldsymbol{0} \\ \boldsymbol{0}_{m^2p(1) \times m} & \sum_{k=1}^{\infty}  (C_k(1) \otimes \boldsymbol{I}_m)  & \ddots  & & \vdots \\ \vdots & \ddots & \ddots & \ddots & \vdots \\  \vdots &  & \ddots & \boldsymbol{I}_m &  \boldsymbol{0}_{m \times m^2} \\ \boldsymbol{0} & \dots & \dots & \boldsymbol{0}_{m^2p(S) \times m} & \sum_{k=1}^{\infty}  (C_k(S)  \otimes \boldsymbol{I}_m)
\end{pmatrix},
\end{align*}
where $A$ denotes the right-hand side of \eqref{A_representation} and $A_q$ is defined as the same, but with  $\sum_{k=1}^{\infty}$ replaced by $\sum_{k=1}^{q}$ for some $q \in \mathbb{N}$. Note again that, in this short-hand notation, $R^{\beta}$ contains sums over $k$ that are also applied to the factor $\frac{1}{\sqrt{N}}\sum_{n=0}^{N-1}(\cdots)$.
For proving the Central Limit Theorem of the joint deviation form, we make use of Proposition 6.3.9 in Brockwell \& Davies (1991) and it suffices to show
 
\begin{itemize}
 \item[(a)] $A_q \overset{d}{\to}  \mathcal{N}(0, V_q)$ as $N \to \infty$,
 \item[(b)] $V_q \to V$  as $q \to \infty$,
 \item[(c)] $\forall \delta > 0: \; \; \underset{q \to \infty}{\lim} \underset{N \to \infty}{\limsup} P( |A - A_q|_1 > \delta) = 0  $.
\end{itemize}
In order to prove (a), setting $\widetilde{m} = \frac{m(m+1)}{2}$, we bring $A_q$ to the following form
\begin{align*}
A_q = \widehat{Q}_N  R_q \frac{1}{\sqrt{N}} \sum\limits_{n=0}^{N-1} \begin{pmatrix}
W_{n,q}^{(1)} \\ W_{n,q}^{(2)} 
\end{pmatrix},
\end{align*}
where
\begin{align*}
\widehat{Q}_N = \begin{pmatrix}
R \left[R^{\prime} \{\frac{1}{N} X X^{\prime} \otimes \boldsymbol{I}_m \} R \right]^{-1} R^{\prime} \; \;  \;  & \boldsymbol{0}_{m \sum_{s=1}^S (mp(s)+1) \times S \widetilde{m}} \\ \boldsymbol{0}_{ S\widetilde{m} \times m \sum_{s=1}^S (mp(s)+1)} & \boldsymbol{I}_{S\widetilde{m}} 
\end{pmatrix},
\end{align*}
\begin{align*}
R_q = \begin{pmatrix}
\boldsymbol{I}_m & \boldsymbol{0} & \dots   & \boldsymbol{0} & \dots & \dots & \dots &  \dots &  \dots & \boldsymbol{0} \\  \boldsymbol{0} & C_1(1) \otimes \boldsymbol{I}_m  & \dots & C_q(1)  \otimes \boldsymbol{I}_m & \ddots & &  &  &  & \vdots \\  \boldsymbol{0} & \boldsymbol{0}  &  \dots  &  \boldsymbol{0} & \ddots & \ddots & & & & \vdots  \\  \vdots  &  & &  & \ddots & \boldsymbol{I}_m & \boldsymbol{0}  &  \dots & \boldsymbol{0}  & \boldsymbol{0}  \\  \vdots &  &  &  & & \boldsymbol{0}  & C_1(S) \otimes \boldsymbol{I}_m & \dots  & C_q(S)  \otimes \boldsymbol{I}_m& \boldsymbol{0} \\  \boldsymbol{0} & \dots &\dots   &\dots & \dots& \boldsymbol{0}&\boldsymbol{0} &\dots & \boldsymbol{0} & \boldsymbol{I}_{S\widetilde{m}}
\end{pmatrix} 
\end{align*}
and 
\begin{align*}
\begin{pmatrix}
W_{n,q}^{(1)} \\ W_{n,q}^{(2)} 
\end{pmatrix}  = \begin{pmatrix}
  \begin{pmatrix} \epsilon_{Sn+1} \\ vec\{\epsilon_{Sn+1} [\epsilon_{Sn}^{\prime} + \nu(0)^{\prime}] \}
 \\ \vdots \\ vec\{\epsilon_{Sn+1} [\epsilon_{Sn+1-q}^{\prime} + \nu(1-q)^{\prime}]  \}  \\ \epsilon_{Sn+2} \\ vec\{\epsilon_{Sn+2} [\epsilon_{Sn+1}^{\prime} + \nu(1)^{\prime}]  \}
 \\ \vdots \\ vec\{\epsilon_{Sn+2}  [\epsilon_{Sn+2-q}^{\prime} + \nu(2-q)^{\prime}] \} \\ \vdots \\  \epsilon_{Sn+S} \\ vec\{\epsilon_{Sn+S} [\epsilon_{Sn+S-1}^{\prime} + \nu(S-1)^{\prime}] \}
 \\ \vdots \\ vec\{\epsilon_{Sn+S} [\epsilon_{Sn+S-q}^{\prime} + \nu(S-q)^{\prime}] \}
\end{pmatrix}  \\  L_{mS} \begin{pmatrix} vec\{ \epsilon_{Sn+1}  \epsilon_{Sn+1}^{\prime} \} - vec\{ \Sigma_{\epsilon}(1)  \} \\ \vdots \\ vec\{ \epsilon_{Sn+S}  \epsilon_{Sn+S}^{\prime} \} - vec\{ \Sigma_{\epsilon}(S)  \} \end{pmatrix} \end{pmatrix}
\end{align*}
are of dimension $(m\sum_{s=1}^S (mp(s) +1) + S\widetilde{m}) \times (m\sum_{s=1}^S (mp(s) +1) + S\widetilde{m}), (m\sum_{s=1}^S (mp(s) +1) + S\widetilde{m}) \times (S(m+qm^2)+S\widetilde{m})$ and $S(m+qm^2)+S\widetilde{m}$.
Following Brüggemann et al. (2016), 
 $\frac{1}{N} XX^{\prime} \overset{p}{\to} \Gamma$ as $N \to \infty$ leads to $\widehat{Q}_N \overset{p}{\to} Q$ as $N \to \infty$ ,  where 
\[ Q  = \begin{pmatrix}
R \left[R^{\prime} \{ \Gamma \otimes \boldsymbol{I}_m \} R \right]^{-1} R^{\prime} \; \;  \;  & \boldsymbol{0}_{m \sum_{s=1}^S (mp(s)+1) \times S \widetilde{m}} \\ \boldsymbol{0}_{ S\widetilde{m} \times m \sum_{s=1}^S (mp(s)+1)} & \boldsymbol{I}_{S\widetilde{m}} 
\end{pmatrix} .\]
The matrix $\Gamma = E(\frac{1}{N} XX^{\prime})$ is of dimension $\sum_{s=1}^{S} (mp(s)+1) \times \sum_{s=1}^{S} (mp(s)+1)$ and block-diagonal and can be represented as
\[  \Gamma  = \begin{pmatrix}
\Gamma(1) & 0 & \dots & 0 \\ 0 & \Gamma(2)  & \ddots & \vdots \\ \vdots & \ddots & \ddots & 0 \\  0 & \dots & 0 & \Gamma(S)
\end{pmatrix}  , \]
where $\Gamma(s)$, $s=1,\dots,S$, is given by
\[\Gamma(s) = \begin{pmatrix}
1 & \widetilde{\mu}(s)^{\prime}  \\
\widetilde{\mu}(s)  &  \widetilde{\mu}(s)\widetilde{\mu}(s)^{\prime} + \sum_{k=1}^{\infty} C_k(s) \Sigma_{\epsilon}(s-k)C_k^{\prime}(s) 
\end{pmatrix} \in \mathbb{R}^{(mp(s)+1) \times (mp(s) +1)} \]
with $\widetilde{\mu}(s) = (\mu(s-1)^{\prime}, \dots, \mu(s-p(s))^{\prime})^{\prime}$.
We can derive the closed-form solution of $\Gamma$ using that $XX^{\prime}$ can be written as
\begin{align*}
 XX^{\prime} = \begin{pmatrix}
\sum_{n=0}^{N-1} X_n(1)X_n^{\prime}(1) & 0 & \dots & 0 \\ 0 & \sum_{n=0}^{N-1} X_n(2)X_n^{\prime}(2)  & \ddots & \vdots \\ \vdots & \ddots & \ddots & 0 \\  0 & \dots & 0 & \sum_{n=0}^{N-1} X_n(S)X_n^{\prime}(S)
\end{pmatrix} 
\end{align*}
and that the moving average representation $y_{Sn+s} = \mu(s) +  \sum_{k=0}^{\infty} \Phi_{k}(s)  \epsilon_{Sn+s-k}$ of the PVAR($\boldsymbol{p}$) process exists. Consequently, under Assumption \ref*{Mixing_Assumptions}, we get for $q \in \mathbb{N}$ that
\[ A_q \overset{d}{\to}  \mathcal{N}(0, V_q) \; \text{ as } \; N \to \infty,\]
with
\[ V_q= \begin{pmatrix}
V_{q}^{(1,1)} & V_{q}^{(1,2)}  \\ V_{q}^{(2,1)}  & V^{(2,2)} 
\end{pmatrix}  = QR_q \Omega_q R_q^{\prime} Q^{\prime}. \]
Applying the CLT for innovations stated in Lemma \ref{CLT_Innovation} to obtain $\Omega_q$ and using the matrices $Q$ and $R_q$ given above,  the limiting variance matrices $V_{q}^{(1,1)}, V_{q}^{(2,1)}$ and $V^{(2,2)} $ are of dimension $(m\sum_{s=1}^S(mp(s) +1) \times m\sum_{s=1}^S(mp(s) +1)),  (S\widetilde{m} \times m\sum_{s=1}^S(mp(s) +1))$ and $(S\widetilde{m} \times S\widetilde{m})$ and given by
\begin{align}
V_{q}^{(1,1)} &= R \left[R^{\prime} \{ \Gamma \otimes \boldsymbol{I}_m \} R \right]^{-1} R^{\prime} \; \boldsymbol{\Omega}_q^{V^{(1,1)}} \; \bigl(R \left[R^{\prime} \{ \Gamma \otimes \boldsymbol{I}_m \} R \right]^{-1} R^{\prime} \bigr)^{\prime},   \nonumber   \\
V_q^{(2,1)} &= V_q^{(1,2)\prime} =  \boldsymbol{\Omega}_q^{V^{(2,1)}}  \bigl(R \left[R^{\prime} \{ \Gamma \otimes \boldsymbol{I}_m \} R \right]^{-1} R^{\prime} \bigr)^{\prime},    \label{V_q}   \\
V^{(2,2)} &= L_{mS} \Bigl( \sum\limits_{h=-\infty}^{\infty} \tau_{0,h,0} \Bigr) L_{mS}^{\prime}.   \nonumber
\end{align} 
The matrix $\boldsymbol{\Omega}_q^{V^{(1,1)}}$ is of dimension $(m\sum_{s=1}^S(mp(s) +1) \times m\sum_{s=1}^S(mp(s) +1))$ and can be written as
\[ \boldsymbol{\Omega}_q^{V^{(1,1)}} = \begin{pmatrix}
\boldsymbol{\Omega}_q^{V^{(1,1)}}(s_1,s_2) \\
s_1, s_2 = 1, \dots, S
\end{pmatrix},\]
where $\boldsymbol{\Omega}_q^{V^{(1,1)}}(s_1,s_2)$, $s_1,s_2 = 1,\dots, S$ are $(m^2p(s_1)+m) \times (m^2p(s_2) + m)$ dimensional and can be rewritten as \[\boldsymbol{\Omega}_q^{V^{(1,1)}}(s_1,s_2) = \begin{pmatrix} [\boldsymbol{\Omega}_q^{V^{(1,1)}}(s_1,s_2)]^{(i,j)} \\ i,j =1,2 \end{pmatrix}, \]  
where the submatrices
\begin{align*}
[ \boldsymbol{\Omega}_q^{V^{(1,1)}}(s_1,s_2)]^{(1,1)} &=  \Sigma_{\epsilon}(s_1) \boldsymbol{1}(s_1 = s_2), \\  [ \boldsymbol{\Omega}_q^{V^{(1,1)}}(s_1,s_2)]^{(2,1)} &= \sum\limits_{k=1}^{q} (C_k(s_1) \otimes \boldsymbol{I}_m)\Bigl(\sum\limits_{h=-\infty}^{\infty} \widecheck{\kappa}_{k,h}(s_1,s_2)\Bigr),
 \\ 
 [ \boldsymbol{\Omega}_q^{V^{(1,1)}}(s_1,s_2)]^{(1,2)} &=  \sum\limits_{k=1}^{q} \Bigl(\sum\limits_{h=-\infty}^{\infty} \widecheck{\kappa}^{\prime}_{k,h}(s_2,s_1)\Bigr) (C_k(s_2) \otimes \boldsymbol{I}_m)^{\prime}, \\ [ \boldsymbol{\Omega}_q^{V^{(1,1)}}(s_1,s_2)]^{(2,2)} &=\sum_{k,l=1}^{q} \sum_{h=-\infty}^{\infty} (C_k(s_1) \otimes \boldsymbol{I}_m)[ \widecheck{\tau}_{k,h,l}(s_1, s_2) +  \kappa_{k,h}(s_1, s_2)(\nu(s_2-l) \otimes \boldsymbol{I}_m)^{\prime} \\  &\quad + (\nu(s_1-k) \otimes \boldsymbol{I}_m) \kappa_{l,h}(s_1,s_2)^{\prime}]  (C_l(s_2) \otimes \boldsymbol{I}_m)^{\prime}
 \end{align*}
 are of dimension $(m \times m), (m^2p(s_1) \times m), (m \times m^2p(s_2))$ and $ (m^2p(s_1) \times m^2p(s_2))$, respectively. Notice that
 \begin{align*}
 \widecheck{\kappa}_{k,h}(s_1,s_2) &= \kappa_{k,h}(s_1,s_2) + (\nu(s_1-k) \otimes \boldsymbol{I}_m) \Sigma_{\epsilon}(s_1) \boldsymbol{1}(h=0, s_1 = s_2), \\
 \widecheck{\tau}_{k,h,l}(s_1, s_2) &= \tau_{k,h,l}(s_1, s_2) +(\nu(s_1-k) \otimes \boldsymbol{I}_m) \Sigma_{\epsilon}(s_1) (\nu(s_1-l) \otimes \boldsymbol{I}_m)^{\prime} \boldsymbol{1}(h=0, s_1 = s_2).
 \end{align*}
 The $(S\widetilde{m} \times m\sum_{s=1}^S(mp(s) +1))$ dimensional matrix $\boldsymbol{\Omega}_q^{V^{(2,1)}}$ can be decomposed into
 \[ \boldsymbol{\Omega}_q^{V^{(2,1)}} =  L_{mS} \bigl(\bigl[[\boldsymbol{\Omega}^{V^{(2,1)}}(s)]^{(1)}, \; \; [\boldsymbol{\Omega}_q^{V^{(2,1)}}(s)]^{(2)} \bigr],  \; \; s = 1,\dots,S  \bigr),\]
 where, for $s=1,\dots, S$, the submatrices $[\boldsymbol{\Omega}^{V^{(2,1)}}(s)]^{(1)}$ and $[\boldsymbol{\Omega}_q^{V^{(2,1)}}(s)]^{(2)}$ are of dimension $(Sm^2 \times m)$ and $(Sm^2 \times m^2p(s))$ and given by
 \begin{align*}
 [\boldsymbol{\Omega}^{V^{(2,1)}}(s)]^{(1)} &= \sum\limits_{h=-\infty}^{\infty} \begin{pmatrix} \kappa_{0,h}(1,s) \\  \vdots \\ \kappa_{0,h}(S,s) \end{pmatrix},   \\
 [\boldsymbol{\Omega}_q^{V^{(2,1)}}(s)]^{(2)} &=  \sum\limits_{k=1}^{q} \sum\limits_{h=-\infty}^{\infty} \begin{pmatrix} [\tau_{0,h,k}(1,s) + \kappa_{0,h}(1,s)(\nu(s-k) \otimes \boldsymbol{I}_m)^{\prime}](C_k(s) \otimes \boldsymbol{I}_m)^{\prime} \\ \vdots \\ [\tau_{0,h,k}(S,s) + \kappa_{0,h}(S,s)(\nu(s-k) \otimes \boldsymbol{I}_m)^{\prime}]  (C_k(s) \otimes \boldsymbol{I}_m)^{\prime} \end{pmatrix}.  \end{align*}
Due to Assumption \ref*{Mixing_Assumptions}, part b) follows directly since the fourth order cumulants are assumed to be absolutely summable and the sequence of moving average coefficient matrices $\{\Phi_{k}(s) \}_{k \in \mathbb{N}}$ decreases geometrically to zero for all $s =1,\dots, S$, which implies that $\{C_{k}(s) \}_{k \in \mathbb{N}}$, $s =1,\dots,S$ decreases geometrically to zero as well.  Hence, we have 
\[
V_q \to V =  \begin{pmatrix}
V^{(1,1)} & V^{(1,2)} \\ V^{(2,1)} & V^{(2,2)}
\end{pmatrix}  \; \text{ as } \; q \to \infty.
\]  
Note that the sums $ \sum_{k=1}^{q}$ and $ \sum_{k,l=1}^{q}$ in $V_q^{(1,1)}$, $ V_q^{(1,2)}$ and $V_q^{(2,1)}$ in \eqref{V_q} are replaced by $\sum_{k=1}^{\infty}$ and $ \sum_{k,l=1}^{\infty}$, respectively, to obtain the limiting variances $V^{(1,1)}$, $V^{(2,1)}$ and $V^{(1,2)}$. The variance matrix $V^{(2,2)}$ is already provided in \eqref{V_q}.  To prove c), note that odd parts of the first subvector and the complete second subvector of $A-A_q$ in \eqref{A_representation} are zero. Therefore, it suffices to show part c) only for the even parts of the first subvector of $A-A_q$.  For any constant vector $d\in\mathbb{R}^{m^2p(s)}$ and $\delta >0$, we use the Markov inequality to get 
\begin{align*}
& P\Biggl( \Biggl|\sum\limits_{k=q+1}^{\infty} d^{\prime}(C_k(s) \otimes \boldsymbol{I}_m) \frac{1}{\sqrt{N}} \sum\limits_{n=0}^{N-1} vec\{\epsilon_{Sn+s} [\epsilon_{Sn+s-k}^{\prime} + \nu(s-k)^{\prime}] \} \Biggr| > \delta \Biggr) \\
\leq \; \;  &\frac{1}{\delta^2 N} E \Biggl( \Biggl|\sum\limits_{k=q+1}^{\infty} d^{\prime}(C_k(s) \otimes \boldsymbol{I}_m) \sum\limits_{n=0}^{N-1} vec\{\epsilon_{Sn+s} [\epsilon_{Sn+s-k}^{\prime} + \nu(s-k)^{\prime}] \} \Biggr|^2 \Biggr) \\
= \; \;  &\frac{1}{\delta^2} \sum\limits_{k,l=q+1}^{\infty} d^{\prime}(C_k(s) \otimes \boldsymbol{I}_m) \Biggl( \frac{1}{N} \sum\limits_{n_1, n_2=0}^{N-1} E \Bigl( vec\{\epsilon_{Sn+s} [\epsilon_{Sn+s-k}^{\prime} + \nu(s-k)^{\prime}] \} \\   & \times vec\{\epsilon_{Sn+s} [\epsilon_{Sn+s-k}^{\prime} + \nu(s-k)^{\prime}] \}^{\prime} \Bigr) \Biggr) (C_l(s) \otimes \boldsymbol{I}_m)^{\prime} d \\ \underset{N \to \infty}{\to}   & \; \; \frac{1}{\delta^2} \sum\limits_{k,l=q+1}^{\infty} d^{\prime}(C_k(s) \otimes \boldsymbol{I}_m) \Biggl(  \sum\limits_{h=- \infty}^{\infty} \Bigl[
\widecheck{\tau}_{k,h,l}(s, s) +  \kappa_{k,h}(s, s)(\nu(s-l) \otimes \boldsymbol{I}_m)^{\prime} \\  &+ (\nu(s-k) \otimes \boldsymbol{I}_m) \kappa_{l,h}(s,s)^{\prime} \Bigr] \Biggr)  (C_l(s) \otimes \boldsymbol{I}_m)^{\prime} d  \\
\underset{q \to \infty}{\to} & \; \;  0 
\end{align*}
for all $s=1,\dots,S$.  Hence,  we proved
\[ \sqrt{N} \begin{pmatrix}
\widehat{\beta}_{res} - \beta  \\  \widetilde{\sigma} - \sigma \end{pmatrix}  \overset{d}{\to} \mathcal{N} \Biggl(0, \begin{pmatrix}
V^{(1,1)} & V^{(1,2)} \\ V^{(2,1)} & V^{(2,2)}
\end{pmatrix} \Biggr). \]

\begin{lemma}[CLT for Innovations]\label{CLT_Innovation}

Let $q \in \mathbb{N}$ and define $W_{n,q} = (W^{(1)\prime}_{n,q},  W^{(2)\prime}_{n})^{\prime}$ as
\begin{align*}
W^{(1)}_{n,q}  = \begin{pmatrix}
W^{(1,1)}_{n,q} \\ \vdots \\ W^{(1,S)}_{n,q}
\end{pmatrix}, 
\quad
W^{(2)\prime}_{n}  = 
 L_{mS} \begin{pmatrix} vec\{ \epsilon_{Sn+1}  \epsilon_{Sn+1}^{\prime} \} - vec\{ \Sigma_{\epsilon}(1)  \} \\ \vdots \\ vec\{ \epsilon_{Sn+S}  \epsilon_{Sn+S}^{\prime} \} - vec\{ \Sigma_{\epsilon}(S)  \} \end{pmatrix},
\end{align*}
where, for $s=1,\dots, S$,
\begin{align*}
W^{(1,s)}_{n,q} = \begin{pmatrix}
 \epsilon_{Sn+s} \\ vec\{\epsilon_{Sn+s} [\epsilon_{Sn+s-1}^{\prime} + \nu(s-1)^{\prime}] \}
 \\ \vdots \\ vec\{\epsilon_{Sn+s} [\epsilon_{Sn+s-q}^{\prime} + \nu(s-q)^{\prime}]  \} \end{pmatrix}.
\end{align*}
Then, under Assumption \ref*{Mixing_Assumptions}, we get the following CLT
\[ \frac{1}{\sqrt{N}} \sum_{n=0}^{N-1} \begin{pmatrix} W_{n,q}^{(1)} \\  W_{n,q}^{(2)} \end{pmatrix} \overset{d}{\to}  \mathcal{N}\bigl(\boldsymbol{0},   \boldsymbol{\Omega}_q  \bigr), \]
with $(S(m+qm^2) +S\widetilde{m}) \times  (S(m+qm^2) +S\widetilde{m})$-dimensional limiting variance matrix $\boldsymbol{\Omega}_q$.
\end{lemma}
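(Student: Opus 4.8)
The plan is to exploit that, once a full cycle is bundled into a single object, the sequence $\{W_{n,q}\}_{n}$ indexed by the cycle $n$ becomes strictly stationary and inherits an $\alpha$-mixing structure from the underlying innovations. First I would observe that $W_{n,q}$ is a fixed measurable function of the finite block $(w_{Sn+1-q},\dots,w_{Sn+S})$: by \eqref{periodic_white_noise} each $\epsilon_{Sn+s}=H_0(s)w_{Sn+s}$, and $W_{n,q}$ involves only innovations with time indices between $Sn+1-q$ and $Sn+S$. Since $\{w_t\}_{t\in\mathbb{Z}}$ is strictly stationary under Assumption \ref{Mixing_Assumptions}(ii), incrementing $n$ shifts the underlying block by exactly $S$ time steps and preserves its joint law, so $\{W_{n,q}\}_{n}$ is strictly stationary. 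Moreover $E(W_{n,q})=\boldsymbol{0}$: the leading entries are linear in the zero-mean $\epsilon$'s, the quadratic entries $vec\{\epsilon_{Sn+s}\epsilon_{Sn+s-j}^{\prime}\}$ have zero mean for $j\ge 1$ by the periodic white-noise property, the $\nu(s-j)$ terms are multiplied by a zero-mean factor, and $W^{(2)}_{n}$ is centered by construction.

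Next I would verify the two ingredients of a central limit theorem for stationary mixing sequences. For the mixing coefficients, since $W_{n,q}$ and $W_{n+k,q}$ are measurable with respect to $\sigma$-fields of $\{w_t\}$ separated by a gap of $S(k-1)+1-q$ time indices, the mixing coefficients $\alpha_W(k)$ of $\{W_{n,q}\}_n$ are bounded by $\alpha(S(k-1)+1-q)$ for $k$ large and hence summable at the rate inherited from Assumption \ref{Mixing_Assumptions}(iii). For the moments, the components of $W_{n,q}$ are at most quadratic in $w$, so $E|W_{n,q}|^{2+\delta}$ is controlled by $E|w_t|^{4+2\delta}$, which is finite by Assumption \ref{Mixing_Assumptions}(iii) with $\rho=4$. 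Because $\alpha(h)^{\delta/(6+\delta)}\ge\alpha(h)^{\delta/(2+\delta)}$ for $\alpha(h)\in[0,1]$, the imposed summability $\sum_h h^{2}\alpha(h)^{\delta/(6+\delta)}<\infty$ implies $\sum_h\alpha(h)^{\delta/(2+\delta)}<\infty$, so the moment/mixing trade-off required for the mixing CLT is met.

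I would then apply the Cramér--Wold device: for an arbitrary fixed vector $\lambda$, the scalar sequence $\{\lambda^{\prime}W_{n,q}\}_n$ is strictly stationary, zero-mean, $\alpha$-mixing with summable coefficients, and has a finite $(2+\delta)$-th absolute moment, so a classical CLT for stationary $\alpha$-mixing sequences (as used, e.g., in \cite{bruggemann2016inference} and \cite{jentsch2022asymptotically}) yields $\frac{1}{\sqrt{N}}\sum_{n=0}^{N-1}\lambda^{\prime}W_{n,q}\overset{d}{\to}\mathcal{N}(0,\lambda^{\prime}\boldsymbol{\Omega}_q\lambda)$, where $\boldsymbol{\Omega}_q=\sum_{h=-\infty}^{\infty}\mathrm{Cov}(W_{0,q},W_{h,q})$ is the long-run covariance. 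Collecting these one-dimensional limits gives the stated multivariate CLT with limiting variance $\boldsymbol{\Omega}_q$.

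The main obstacle I expect is establishing that $\boldsymbol{\Omega}_q$ is finite, i.e. that $h\mapsto\mathrm{Cov}(W_{0,q},W_{h,q})$ is absolutely summable and that $\frac{1}{N}\mathrm{Var}(\sum_{n=0}^{N-1} W_{n,q})\to\boldsymbol{\Omega}_q$. These covariances involve cross-moments of up to four innovations, so absolute summability reduces to absolute summability of the joint cumulants of $\{\epsilon_t\}$ up to order four; following \cite{jentsch2022asymptotically}, this is exactly what the $\alpha$-mixing condition in Assumption \ref{Mixing_Assumptions}(iii) with $\rho=4$ together with $E|w_t|^{4+2\delta}<\infty$ guarantees. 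The explicit block decomposition of $\boldsymbol{\Omega}_q$ in terms of the quantities $\kappa_{a,b}(s_1,s_2)$ and $\tau_{a,b,c}(s_1,s_2)$ defined in \eqref{cumulants1}--\eqref{cumulants2} then follows by direct but tedious bookkeeping of which innovation products appear in each sub-block of $W_{n,q}$, matching the expressions assembled in the proof of Theorem \ref{theo_joint_clt}.
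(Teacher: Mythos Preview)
Your proposal is correct and essentially aligns with the paper's approach. In fact, the paper's own proof of this lemma is quite terse: it only displays the block decomposition of $\boldsymbol{\Omega}_q$ in terms of the $\Sigma_\epsilon(s)$, $\kappa_{a,b}(s_1,s_2)$ and $\tau_{a,b,c}(s_1,s_2)$ quantities, implicitly relying on a CLT for stationary $\alpha$-mixing sequences as in \cite{bruggemann2016inference} and \cite{jentsch2022asymptotically} without spelling out the stationarity, mixing inheritance, moment verification, or Cram\'er--Wold reduction that you supply; your write-up therefore fills in precisely the steps the paper leaves to the reader, and the bookkeeping you mention at the end matches what the paper actually records.
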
 

\begin{proof}
To prove the statement, we write the limiting variance matrix $\boldsymbol{\Omega}_q$ as
\[ \boldsymbol{\Omega}_q = \begin{pmatrix} \boldsymbol{\Omega}_q^{(1)} & \boldsymbol{\Omega}_{q}^{(2,1)\prime} \\ \boldsymbol{\Omega}_{q}^{(2,1)} & \boldsymbol{\Omega}^{(2)} \end{pmatrix}, \] 
where the matrices $\boldsymbol{\Omega}_q^{(1)}, \boldsymbol{\Omega}_q^{(2,1)}$ and $\boldsymbol{\Omega}^{(2)}$  are of dimension $(S(m+qm^2) \times  S(m+qm^2)), (S\widetilde{m} \times  S(m+qm^2))$ and $(S\widetilde{m} \times S\widetilde{m})$.  
The matrix $\boldsymbol{\Omega}_q^{(1)}$ can be represented as 
 \begin{align*}
\boldsymbol{\Omega}_q^{(1)} = \begin{pmatrix}
\boldsymbol{\Omega}_q^{(1)}(s_1,s_2) \\
s_1, s_2 = 1, \dots, S
\end{pmatrix},
\end{align*}
where $\boldsymbol{\Omega}_q^{(1)}(s_1, s_2)$ can be decomposed as
\begin{align*}
 \boldsymbol{\Omega}_q^{(1)}(s_1,s_2) =  \begin{pmatrix}
 \Xi_{1}^{(1,1)}(s_1,s_2) &   \Xi_{1,q}^{(1,2)}(s_1,s_2)   \\  \Xi_{1,q}^{(2,1)}(s_1,s_2)&   \Xi_{1,q}^{(2,2)}(s_1,s_2)
\end{pmatrix} \in \mathbb{R}^{(m+qm^2) \times (m+qm^2)}, \; \; \; s_1,s_2= 1, \dots, S.
\end{align*}
For $s_1,s_2 =1,\dots,S $, the submatrices $\Xi_{1}^{(1,1)}(s_1,s_2) ,  \Xi_{1,q}^{(2,1)}(s_1,s_2), \Xi_{1,q}^{(1,2)}(s_1,s_2) $ and $\Xi_{1,q}^{(2,2)}(s_1,s_2)$ are of dimension $(m \times m), (qm^2 \times m)$, $(m \times qm^2)$ and $(qm^2 \times qm^2)$ and given by
\begin{align*}
  \Xi_{1}^{(1,1)}(s_1,s_2) &= \sum_{h= -\infty}^{\infty} Cov(\epsilon_{Sn+s_1},  \epsilon_{S(n-h)+s_2}) = \Sigma_{\epsilon}(s_1) \boldsymbol{1}(s_1 = s_2) , \\ \Xi_{1,q}^{(2,1)}(s_1, s_2) &= \sum\limits_{h=-\infty}^{\infty} \begin{pmatrix} 
Cov \bigl(  vec\{\epsilon_{Sn+s_1} [\epsilon_{Sn+s_1-1}^{\prime} + \nu(s_1-1)^{\prime}],  \epsilon_{S(n-h)+s_2} \} \bigr)   \\ \vdots \\   Cov \bigl(  vec\{\epsilon_{Sn+s_1} [\epsilon_{Sn+s_1-q}^{\prime} + \nu(s_1-q)^{\prime}] \},  \epsilon_{S(n-h)+s_2} \bigr) 
\end{pmatrix}\\ &= \sum\limits_{h=-\infty}^{\infty} \begin{pmatrix} 
 \kappa_{1,h}(s_1,s_2) + (\nu(s_1-1) \otimes \boldsymbol{I}_m) \Sigma_{\epsilon}(s_1) \boldsymbol{1}(h=0, s_1 = s_2) \\ \vdots \\   \kappa_{q,h}(s_1,s_2) + (\nu(s_1-q) \otimes \boldsymbol{I}_m) \Sigma_{\epsilon}(s_1) \boldsymbol{1}(h=0, s_1 = s_2)  
\end{pmatrix}, \\ \Xi_{1,q}^{(1,2)}(s_1, s_2) &= \sum\limits_{h=-\infty}^{\infty} \begin{pmatrix} 
Cov \bigl(  vec\{\epsilon_{Sn+s_2} [\epsilon_{Sn+s_2-1}^{\prime} + \nu(s_2-1)^{\prime}],  \epsilon_{S(n-h)+s_1} \} \bigr)   \\ \vdots \\   Cov \bigl(  vec\{\epsilon_{Sn+s_2} [\epsilon_{Sn+s_2-q}^{\prime} + \nu(s_2-q)^{\prime}] \},  \epsilon_{S(n-h)+s_1} \bigr) 
\end{pmatrix}^{\prime} \\ &= \sum\limits_{h=-\infty}^{\infty} \begin{pmatrix} 
 \kappa_{1,h}(s_2,s_1) + (\nu(s_1-1) \otimes \boldsymbol{I}_m) \Sigma_{\epsilon}(s_1) \boldsymbol{1}(h=0, s_1 = s_2) \\ \vdots \\   \kappa_{q,h}(s_2,s_1) + (\nu(s_1-q) \otimes \boldsymbol{I}_m) \Sigma_{\epsilon}(s_1) \boldsymbol{1}(h=0, s_1 = s_2)  
\end{pmatrix}^{\prime}, \\
\Xi_{1,q}^{(2,2)}(s_1,s_2) &= \sum\limits_{h=-\infty}^{\infty} \begin{pmatrix} Cov \bigl(  vec\{\epsilon_{Sn+s_1} [\epsilon_{Sn+s_1-k}^{\prime} + \nu(s_1-k)^{\prime}] \} ,  \\  vec \{\epsilon_{S(n-h)+s_2}[\epsilon_{S(n-h)+s_2-l}^{\prime}  + \nu(s_2-l)^{\prime}] \} \bigr)
 \\ k,l = 1, \dots, q
\end{pmatrix} \\ &= \sum\limits_{h=-\infty}^{\infty} \begin{pmatrix}
\tau_{k,h,l}(s_1, s_2) +  \kappa_{k,h}(s_1, s_2)(\nu(s_2-l) \otimes \boldsymbol{I}_m)^{\prime}  + (\nu(s_1-k) \otimes \boldsymbol{I}_m) \kappa_{l,h}(s_1,s_2)^{\prime} \\ + (\nu(s_1-k) \otimes \boldsymbol{I}_m) \Sigma_{\epsilon}(s_1) (\nu(s_1-l) \otimes \boldsymbol{I}_m)^{\prime} \boldsymbol{1}(h=0, s_1 = s_2) \\ k,l = 1, \dots, q
\end{pmatrix}.
\end{align*}
The matrix $\boldsymbol{\Omega}^{(2)}$ is given by 
\begin{align*}
\boldsymbol{\Omega}^{(2)} &=  L_{mS} \left( \sum_{h=-\infty}^{\infty} Cov \left( \begin{pmatrix} vec\{ \epsilon_{Sn+1}  \epsilon_{Sn+1}^{\prime} \}  \\ \vdots \\ vec\{ \epsilon_{Sn+S}  \epsilon_{Sn+S}^{\prime} \}  \end{pmatrix}, \begin{pmatrix} vec\{ \epsilon_{S(n-h)+1}  \epsilon_{S(n-h)+1}^{\prime} \}  \\ \vdots \\ vec\{ \epsilon_{S(n-h)+S}  \epsilon_{S(n-h)+S}^{\prime} \}  \end{pmatrix} \right) \right)  L_{mS}^{\prime}\\ &= L_{mS} \left( \sum\limits_{h=-\infty}^{\infty} \tau_{0,h,0} \right) L_{mS}^{\prime}, \; \; \; \in \mathbb{R}^{S\widetilde{m} \times S\widetilde{m}}.
\end{align*}
The $(S\widetilde{m} \times  S(m+qm^2))$-dimensional matrix $ \boldsymbol{\Omega}_q^{(2,1)}$ can be represented by
\[
\boldsymbol{\Omega}_q^{(2,1)} =  L_{mS} \left(\left[\Xi_{2,1}^{(1)}(s), \; \; \Xi_{2,1,q}^{(2)}(s)\right],  \; \; s = 1,\dots,S  \right),
\]
where, for $s=1,\dots, S$, the submatrices $\Xi_{2,1}^{(1)}(s)$ and $\Xi_{2,1,q}^{(2)}(s)$ are of dimension $(Sm^2 \times m)$ and $(Sm^2 \times qm^2)$ and given by
\begin{align*}
\Xi_{2,1}^{(1)}(s) &=    \sum\limits_{h=-\infty}^{\infty} Cov \left(  \begin{pmatrix} vec\{ \epsilon_{Sn+1}  \epsilon_{Sn+1}^{\prime} \}  \\ \vdots \\ vec\{ \epsilon_{Sn+S}  \epsilon_{Sn+S}^{\prime} \}  \end{pmatrix},   \epsilon_{S(n-h)+s} \right)  =   \sum\limits_{h=-\infty}^{\infty} \begin{pmatrix} \kappa_{0,h}(1,s) \\  \vdots \\ \kappa_{0,h}(S,s) \end{pmatrix}   \\
\Xi_{2,1,q}^{(2)}(s) &=  \sum\limits_{h=-\infty}^{\infty} Cov \left(  \begin{pmatrix} vec\{ \epsilon_{Sn+1}  \epsilon_{Sn+1}^{\prime} \}  \\ \vdots \\ vec\{ \epsilon_{Sn+S}  \epsilon_{Sn+S}^{\prime} \}  \end{pmatrix},  \begin{pmatrix} vec\{\epsilon_{S(n-h)+s} [\epsilon_{S(n-h)+s-1}^{\prime} + \nu(s-1)^{\prime}] \}
 \\ \vdots \\ vec\{\epsilon_{S(n-h)+s} [\epsilon_{S(n-h)+s-q}^{\prime} + \nu(s-q)^{\prime}]  \} \end{pmatrix} \right)  \\
&=  \sum\limits_{h=-\infty}^{\infty} \begin{pmatrix} \tau_{0,h,k}(1,s) + \kappa_{0,h}(1,s)(\nu(s-k) \otimes \boldsymbol{I}_m)^{\prime} &  \\ \vdots &, \; \; \;  k= 1,\dots, q \\ \tau_{0,h,k}(S,s) + \kappa_{0,h}(S,s)(\nu(s-k) \otimes \boldsymbol{I}_m)^{\prime} &  \end{pmatrix}.  
\end{align*}
\end{proof}

\section{Joint CLT with Periodic Strong White Noise}
\label{sec_iid_joint_clt}

\begin{corollary}[Joint CLT with Periodic Strong White Noise]\label{iid_joint_clt}\
Under Assumption \ref*{Mixing_Assumptions} with $(ii)$ and $(iii)$ replaced by $(ii)^{\prime}$ and $(iii)^{\prime}$, we have
\[
\sqrt{N} \begin{pmatrix}
\widehat{\beta}_{res} - \beta  \\  \widehat{\sigma} - \sigma \end{pmatrix}  \overset{d}{\to} \mathcal{N} (\boldsymbol{0},  V_{iid}),   \quad   V_{iid} = \begin{pmatrix}
V^{(1,1)}_{iid} & V^{(1,2)}_{iid}  \\ V^{(2,1)}_{iid} & V^{(2,2) }_{iid}
\end{pmatrix},
\]  
where, using  $\frac{1}{N} X X^{\prime}  \overset{p}{\to} \Gamma$, the submatrices $V^{(1,1)}_{iid},V^{(2,1)}_{iid},V^{(1,2)}_{iid}$ and $V^{(2,2)}_{iid}$ are given by
\begin{align*}
V^{(1,1)}_{iid} &= R \left[R^{\prime} \{\Gamma \otimes \boldsymbol{I}_m \} R \right]^{-1} R^{\prime} \; \boldsymbol{\Omega}^{V^{(1,1)}}_{iid} \bigl( R \left[R^{\prime} \{\Gamma \otimes \boldsymbol{I}_m \} R \right]^{-1} R^{\prime} \bigr)^{\prime}, \\ 
V^{(2,1)}_{iid}  &= V^{(1,2) \prime}_{iid}  =\boldsymbol{\Omega}^{V^{(2,1)}}_{iid}  \bigl(R \left[R^{\prime} \{ \Gamma \otimes \boldsymbol{I}_m \} R \right]^{-1} R^{\prime} \bigr)^{\prime}, \\
V^{(2,2)}_{iid} &= L_{mS} \{ \tau_{0,0,0} \} L_{mS}^{\prime}, 
\end{align*}
where 
\[
\boldsymbol{\Omega}^{V^{(1,1)}}_{iid} = 
 \begin{pmatrix}
[\boldsymbol{\Omega}^{V^{(1,1)}}_{iid}(1)]^{(1,1)} & [\boldsymbol{\Omega}^{V^{(1,1)}}_{iid}(1)]^{(2,1)\prime} &  \boldsymbol{0} & \dots &  \boldsymbol{0} \\   [\boldsymbol{\Omega}^{V^{(1,1)}}_{iid}(1)]^{(2,1)}& [\boldsymbol{\Omega}^{V^{(1,1)}}_{iid}(1)]^{(2,2)} &   &\ddots & \vdots \\  \boldsymbol{0}  & & \ddots &  &  \boldsymbol{0} \\  \vdots & \ddots  &  & [\boldsymbol{\Omega}^{V^{(1,1)}}_{iid}(S)]^{(1,1)} & [\boldsymbol{\Omega}^{V^{(1,1)}}_{iid}(S)]^{(2,1)\prime} \\ \boldsymbol{0} & \dots & \boldsymbol{0} & [\boldsymbol{\Omega}^{V^{(1,1)}}_{iid}(S)]^{(2,1)}  & [\boldsymbol{\Omega}^{V^{(1,1)}}_{iid}(S)]^{(2,2)}
\end{pmatrix}
\] 
with, for $s=1,\dots, S$,
\begin{align*}
 [\boldsymbol{\Omega}^{V^{(1,1)}}_{iid}(s)]^{(1,1)} &= \Sigma_{\epsilon}(s), \\
 [\boldsymbol{\Omega}^{V^{(1,1)}}_{iid}(s)]^{(2,1)}&=\sum\limits_{k=1}^{\infty} (C_k(s)\otimes \boldsymbol{I}_m)[\nu(s-k) \otimes \Sigma_{\epsilon}(s)], \\
[\boldsymbol{\Omega}^{V^{(1,1)}}_{iid}(s)]^{(2,2)} &= \sum\limits_{k=1}^{\infty} (C_k(s)\otimes \boldsymbol{I}_m)[ \tau_{k,0,k}(s,s)+ [\nu(s-k)\nu(s-k)^{\prime}  \otimes \Sigma_{\epsilon}(s)]] (C_k(s)\otimes \boldsymbol{I}_m)^{\prime}.
\end{align*}
The $(S\widetilde{m} \times m\sum_{s=1}^S(mp(s) +1))$ dimensional matrix $\boldsymbol{\Omega}_{iid}^{V^{(2,1)}}$ can be decomposed into
 \[ \boldsymbol{\Omega}^{V^{(2,1)}}_{iid} =  L_{mS} \bigl(\bigl[[\boldsymbol{\Omega}_{iid}^{V^{(2,1)}}(s)]^{(1)}, \; \; [\boldsymbol{\Omega}_{iid}^{V^{(2,1)}}(s)]^{(2)} \bigr],  \; \; s = 1,\dots,S  \bigr)\] 
 where, for $s=1,\dots, S$, the submatrices $[\boldsymbol{\Omega}_{iid}^{V^{(2,1)}}(s)]^{(1)}$ and $[\boldsymbol{\Omega}_{iid}^{V^{(2,1)}}(s)]^{(2)}$ are of dimension $(Sm^2 \times m)$ and $(Sm^2 \times m^2p(s))$ and given by
 \begin{align*}
 [\boldsymbol{\Omega}_{iid}^{V^{(2,1)}}(s)]^{(1)} &= e_s \otimes  \kappa_{0,0}(s,s),   \\
 [\boldsymbol{\Omega}_{iid}^{V^{(2,1)}}(s)]^{(2)} &= e_s \otimes \left[\kappa_{0,0}(s,s) \sum\limits_{k=1}^{\infty}  (\nu(s-k) \otimes \boldsymbol{I}_m)^{\prime}(C_k(s) \otimes \boldsymbol{I}_m)^{\prime}\right],  \end{align*}
where $e_s$ is an $S$-dimensional unit vector filled with 1 at the $s$-th component, while all remaining components are zero. 
\end{corollary}
As a direct consequence of Corollary \ref{iid_joint_clt}, for periodic strong white noise, we generally do not obtain block-diagnoality of $V_{iid}$. That is, $\widehat{\beta}_{res}$ and $\widehat{\sigma}$ are asymptotically dependent if the third moments $\kappa_{0,0}(s,s), s = 1\dots, S$ of the error term are unequal to zero. Hence, under periodic strong white noise, we generally find that $\widehat{\beta}_{res}(s_1)$ and $\widehat{\sigma}(s_2)$ are asymptotically dependent for $s_1 = s_2$, while they are asymptotically independent for $s_1 \neq s_2$ with $s_1, s_2 = 1,\dots, S$.  

In the case that the periodic intercept $\nu(s)$ is zero for all seasons $s = 1,\dots,S$ and no intercept is estimated, the matrices $\boldsymbol{\Omega}^{V^{(2,1)}}_{iid}$, and hence $V^{(2,1)}_{iid}$, become zero matrices leading to block-diagonality of $V_{iid}$ and therefore asymptotic independence of $\widehat{\beta}_{res}$ and $\widehat{\sigma}$. Further, we obtain the same results as Ursu \& Duchesne (2009), 
including asymptotic independence among the seasonal PVAR estimators $\widehat{\beta}_{res}(s)$, $s=1,\dots,S$, if no restrictions \emph{across} seasons are imposed. The block-diagonality of the matrix $\tau_{0,0,0}$ implies also asymptotic independence among the seasonal covariance estimators $ \widehat{\sigma}(s)$, $s =1,\dots, S$.

\section{Proof of Theorem \ref*{CLT_IR_SIR}}
\label{SIR_Proof}

A higher-dimensional time-invariant, i.e., (second-order) stationary, VAR($P$) representation of the PVAR($\boldsymbol{p}$) process in (\ref*{PVAR_model}) is given by the model equation (Franses \& Paap, 2004)
\begin{align}\label{VAR_model}
\boldsymbol{A}_0 Y_n = \nu+  \boldsymbol{A}_1 Y_{n-1} + \dots + \boldsymbol{A}_P  Y_{n-P} + \xi_n,	\quad	n\in\mathbb{Z},
\end{align}
where 
    $Y_n = (y_{Sn+1}^{\prime}, y_{Sn+2}^{\prime}, \dots , y_{Sn+S}^{\prime}  )^{\prime}$, $\xi_n = (\epsilon_{Sn+1}^{\prime}, \epsilon_{Sn+2}^{\prime}, \dots, \epsilon_{Sn+S}^{\prime})^{\prime}$, $\nu = (\nu(1)^{\prime}, \dots, \nu(S)^{\prime} )^{\prime}$
are of dimension $Sm$, respectively. The coefficient matrices $\boldsymbol{A}_0, \boldsymbol{A}_1 \dots, \boldsymbol{A}_P$ contain the PVAR coefficient matrices from (\ref*{PVAR_model}) and are of dimension ($Sm \times Sm$). Precisely, it holds
\begin{align*} 
\boldsymbol{A}_0 &=  \begin{pmatrix}
\boldsymbol{I}_m & \boldsymbol{0} & \dots & \boldsymbol{0} \\ 
\scalebox{0.6}[1]{\( - \)}A_{1}(2) & \boldsymbol{I}_m & \ddots & \vdots \\ \vdots & \ddots & \ddots 
& \boldsymbol{0} \\  \scalebox{0.6}[1]{\( - \)}
A_{S\scalebox{0.6}[0.6]{\( - \)}1}(S) 
& \dots & \scalebox{0.6}[1]{\( - \)}
A_{1}(S) & \boldsymbol{I}_m 
\end{pmatrix} ,
\boldsymbol{A}_{i} = \begin{pmatrix}
A_{Si}(1) & A_{Si\scalebox{0.6}[0.6]{\( - \)}1}(1) & \dots & A_{Si\scalebox{0.6}[0.6]{\( - \)}S\scalebox{0.6}[0.6]{\( + \)}1}(1)\\  A_{Si \scalebox{0.6}[0.6]{\( + \)}1}(2) & A_{Si }(2)  & \ddots  & \vdots \\ \vdots & \ddots & \ddots &  A_{Si \scalebox{0.6}[0.6]{\( - \)}1}(S\scalebox{0.6}[1]{\( - \)}1)\\  A_{Si \scalebox{0.6}[0.6]{\( + \)}S\scalebox{0.6}[0.6]{\( - \)}1}(S) & \dots & A_{Si \scalebox{0.6}[0.6]{\( + \)}1}(S) &  A_{Si}(S) 
\end{pmatrix}
\end{align*}
for $i=1,\ldots,P$. 
The VAR model order $P$ of the process $\{Y_n\}_{n\in\mathbb{Z}}$ is given by $P = \lceil p/S \rceil$, where $\lceil . \rceil$ is the ceiling function.  Since the number of unit roots of a PVAR($\boldsymbol{p}$) process equals the one in its higher-dimensional VAR($P$) representation (Franses \& Paap, 2004), the periodic stationarity and causality condition for a PVAR($\boldsymbol{p}$) process is satisfied if and only if its higher-dimensional VAR($P$) form is stationary and causal. As $\det(\boldsymbol{A}_0)=1$, according to Lütkepohl (2005), model \eqref{VAR_model} is equivalent to 
\begin{align*}
Y_n = \boldsymbol{A}_0^{-1} \nu+ \boldsymbol{A}_0^{-1} \boldsymbol{A}_1 Y_{n-1} + \dots +\boldsymbol{A}_0^{-1} \boldsymbol{A}_P  Y_{n-P} + \boldsymbol{A}_0^{-1} \xi_n, \quad   n \in \mathbb{Z},  
\end{align*}
such that $\{Y_n\}_{n \in \mathbb{Z}}$ is stationary and causal if and only if
\begin{align}\label{PeriodStatCon}
\det(\boldsymbol{I}_{Sm} - \boldsymbol{A}_0^{-1}\boldsymbol{A}_1 z - \dots - \boldsymbol{A}_0^{-1}\boldsymbol{A}_P z^P) \neq 0 
\end{align}
for all $z\in\mathbb{C}$ with $|z|\leq 1$.

There is also a direct link between PVAR impulse responses and the impulse responses of its higher-dimensional VAR representation. If the stationarity condition \eqref{PeriodStatCon} holds, the VAR($P$) in \eqref{VAR_model} has an MA representation 
\begin{align*}
 Y_n = \boldsymbol{A}_0^{-1} \nu + \sum\limits_{h=0}^{\infty} \Pi_h \boldsymbol{A}_0^{-1} \xi_{n-h},	\quad	n\in\mathbb{Z},
 \end{align*}
 where the moving average coefficient matrices $\Pi_h$ of the VAR process $\{Y_n\}_{n\in\mathbb{Z}}$ defined by \eqref{VAR_model} can be represented as $ \Pi_h = J\boldsymbol{A}^{h} J^{\prime}$, $h \in \mathbb{N}_{0} $ with $(Sm \times SmP)$ matrix $J = [ \boldsymbol{I}_{Sm}, \boldsymbol{0},\dots,\boldsymbol{0}]$ and the VAR($P$) companion matrix $\boldsymbol{A}$ is of dimension $(SmP \times SmP)$. The VAR($P$) companion matrix $\boldsymbol{A}$ is defined as the parameter matrix of the higher-dimensional VAR(1) representation of the VAR($P$). The VAR impulse responses $\Pi_{h}^{IR}$ are then obtained by 
 \begin{align}\label{VAR_IR}
  \Pi_{h}^{IR} = \Pi_h \boldsymbol{A}_0^{-1} = \begin{pmatrix}
 \Phi_{Sh}^{IR}(1) &  \Phi_{Sh-1}^{IR}(2) & \dots &  \Phi_{Sh-S+1}^{IR}(S) \\  \Phi_{Sh+1}^{IR}(1) &  \Phi_{Sh}^{IR}(2) & \ddots & \vdots \\ \vdots & \ddots & \ddots &  \Phi_{Sh-1}^{IR}(S) \\   \Phi_{Sh+S-1}^{IR}(1) & \dots &  \Phi_{Sh+1}^{IR}(S-1) &  \Phi_{Sh}^{IR}(S) \end{pmatrix}, \;\;  h \in \mathbb{N}_{0}.
\end{align} 
In order to prove the CLT of the (structural) impulse responses stated in Theorem \ref*{CLT_IR_SIR}, we make use of Proposition 3.6 of Lütkepohl (2005) 
and Theorem \ref*{theo_joint_clt}. 
For $h\in\mathbb{N}_0$, from the relation $\Pi_{h}^{IR} = \Pi_{h} \boldsymbol{A}_0^{-1}$, we immediately get that
the first derivative of $\Pi_{h}^{IR} $ with respect to $\beta$ is given by 
\begin{align*}
G_h= \frac{\partial vec \{\Pi_{h}^{IR} \} }{\partial \beta^{\prime}} = \frac{\partial vec \{\Pi_{h} \boldsymbol{A}_0^{-1} \} }{\partial \beta^{\prime}} = (\boldsymbol{I}_{Sm} \otimes \Pi_{h}) \frac{\partial vec \{ \boldsymbol{A}_0^{-1} \} }{\partial \beta^{\prime}}  +
\left(\left(\boldsymbol{A}_0^{-1}\right)^\prime \otimes \boldsymbol{I}_{Sm} \right) \frac{\partial vec \{ \Pi_{h} \} }{\partial \beta^{\prime}},
\end{align*}
where 
\begin{align*}
\frac{\partial vec \{ \boldsymbol{A}_0^{-1} \} }{\partial \beta^{\prime}}  = -\left(\left(\boldsymbol{A}_0^{-1}\right)^\prime \otimes \boldsymbol{A}_0^{-1} \right)) \frac{\partial vec \{ \boldsymbol{A}_0 \} }{\partial \beta^{\prime}} = -\left(\left(\boldsymbol{A}_0^{-1}\right)^\prime \otimes \boldsymbol{A}_0^{-1} \right) J_{ \boldsymbol{A}_0}
\end{align*}  
and 
\begin{align*}
\frac{\partial vec\{\Pi_{h}\} }{\partial \beta^{\prime}} &= \frac{\partial vec\{J\boldsymbol{A}^h J^{\prime} \} }{\partial \beta^{\prime}}  = (J \otimes J) \frac{\partial vec\{\boldsymbol{A}^h\} }{\partial \beta^{\prime}}=  (J \otimes J) \Bigl[ \sum\limits_{i=0}^{h-1}  (\boldsymbol{A}^{\prime})^{h-1-i} \otimes \boldsymbol{A}^h \Bigr] \frac{\partial vec\{\boldsymbol{A}\} }{\partial \beta^{\prime}} \\
 &=  \Bigl[ \sum\limits_{i=0}^{h-1}  J(\boldsymbol{A}^{\prime})^{h-1-i} \otimes J\boldsymbol{A}^h \Bigr] J_{ \boldsymbol{A}}.
\end{align*}
Here, we define  $J_{ \boldsymbol{A}_0} = \frac{\partial vec \{ \boldsymbol{A}_0 \} }{\partial \beta^{\prime}}$  and $J_{ \boldsymbol{A}} = \frac{\partial vec \{ \boldsymbol{A} \} }{\partial \beta^{\prime}}$. Note that $\Pi_{h}$ can be represented as  $\Pi_{h}= J\boldsymbol{A}^h J^{\prime} $ with $J = [ \boldsymbol{I}_{Sm}, \boldsymbol{0}_{Sm},\dots,\boldsymbol{0}_{Sm}]$ and VAR($P$) companion matrix $\boldsymbol{A}$.
We can represent the companion matrix $\boldsymbol{A}$ of the time-invariant VAR($P$) as the product $\boldsymbol{A} = \boldsymbol{A}_0^{*} \boldsymbol{A}^{*} $ where
\begin{align*}
\boldsymbol{A}_0^{*} = \begin{pmatrix}
\boldsymbol{A}_0^{-1} & \boldsymbol{0} & \dots & \boldsymbol{0} \\  \boldsymbol{0}  & \boldsymbol{I}_{Sm} & \ddots & \vdots \\ \vdots & \ddots  & \ddots &  \boldsymbol{0} \\   \boldsymbol{0} & \dots &  \boldsymbol{0} &  \boldsymbol{I}_{Sm}
\end{pmatrix}   \quad \text{and}    \quad \boldsymbol{A}^{*} = \begin{pmatrix}
\boldsymbol{A}_1&\boldsymbol{A}_2 & \dots &  \boldsymbol{A}_P \\  \boldsymbol{I}_{Sm}  & \boldsymbol{0} & \dots & \boldsymbol{0}  \\   & \ddots  & \ddots & \vdots \\   \boldsymbol{0} &   &  \boldsymbol{I}_{Sm} &  \boldsymbol{0}
\end{pmatrix}
\end{align*}
and use this representation in order to calculate $J_{ \boldsymbol{A}}$.  Hence, $G_h$ can be represented as
\[  G_h = -(\boldsymbol{I}_{Sm} \otimes \Pi_{h})\left(\left(\boldsymbol{A}_0^{-1}\right)^\prime \otimes \boldsymbol{A}_0^{-1} \right) J_{ \boldsymbol{A}_0}  +
\left(\left(\boldsymbol{A}_0^{-1}\right)^\prime \otimes \boldsymbol{I}_{Sm} \right)  \Bigl[ \sum\limits_{i=0}^{h-1}  J(\boldsymbol{A}^{\prime})^{h-1-i} \otimes J\boldsymbol{A}^h \Bigr] J_{ \boldsymbol{A}}.
\]
From $\Psi_{h}^{SIR} = \Pi_h^{IR}\boldsymbol{H}_0$, we immediately get
\begin{align*}
F_h &= \frac{\partial vec \{\Psi_{h}^{SIR}\} }{\partial \beta^{\prime}} = \frac{\partial vec \{\Pi_h^{IR}\boldsymbol{H}_0\} }{\partial \beta^{\prime}} = (\boldsymbol{I}_{Sm} \otimes \Pi_{h}^{IR}) \frac{\partial vec \{ \boldsymbol{H}_0\} }{\partial \beta^{\prime}}  +
(\boldsymbol{H}_0^{\prime} \otimes \boldsymbol{I}_{Sm} ) G_h \\
&= (\boldsymbol{H}_0^{\prime} \otimes \boldsymbol{I}_{Sm} ) G_h.
\end{align*}
Further, define $\boldsymbol{H} :=  \frac{\partial vec \{\boldsymbol{H}_0 \} }{\partial \sigma^{\prime}} =  L_{mS}^{\prime} \frac{\partial vech \{\boldsymbol{H}_0\} }{\partial \sigma^{\prime}}$.   Then $D_h =  \frac{\partial vec \{\Psi_{h}^{SIR}\} }{\partial \sigma^{\prime}} $ can be represented as 
\begin{align*}
D_h &=  \frac{\partial vec \{\Psi_{h}^{SIR}\} }{\partial \sigma^{\prime}} = \frac{\partial vec \{\Pi_h^{IR}\boldsymbol{H}_0 \} }{\partial \sigma^{\prime}} = (\boldsymbol{I}_{Sm} \otimes \Pi_{h}^{IR}) \boldsymbol{H}  +
(\boldsymbol{H}_0^{\prime} \otimes \boldsymbol{I}_{Sm} ) \frac{\partial vec \{\Pi_{h}^{IR}\} }{\partial \sigma^{\prime}} \\ &= (\boldsymbol{I}_{Sm} \otimes \Pi_{h}^{IR}) \boldsymbol{H}.
\end{align*}

\section{Proof of Theorem \ref*{Mixing_bootstrap_con}}\label{Bootstrap_Con_Proof}

We proceed in two steps to prove consistency of the residual-based seasonal block bootstrap, which is discussed in Theorem \ref*{Mixing_bootstrap_con}. First, we make use of Lemma \ref{Equivalence_BE} to show that it is asymptotically equivalent, if we replace the estimator $\widehat{\beta}_{res}$ by the true PVAR parameter $\beta$ when conducting the residual-based seasonal block bootstrap method described in Section \ref*{Subsubsection_Bootstrapscheme_weak}. Second we transfer the proof of Theorem \ref*{theo_joint_clt} to the bootstrap world and use Lemma \ref{CLT_Boot_Innovations} to derive the asymptotic normality results corresponding to the results obtained in Theorem \ref*{theo_joint_clt}. 

First, we introduce some notation. Let $\widecheck{\beta}^*-\widecheck{\beta} := R\left[R^{\prime} \{\widecheck{X}^* \widecheck{X}^{* \prime} \otimes I_m \} R \right]^{-1} R^{\prime} \{ \widecheck{X}^* \otimes I_m \} \widecheck{e}^*$, \\ $\widecheck{\sigma}^* -  \widecheck{\sigma} = (vech \{\widecheck{\Sigma}_{\epsilon}^*(1) \}^{\prime}, \dots, vech \{\widecheck
{\Sigma}_{\epsilon}^*(S) \}^{\prime})^{\prime} -  E^*[(vech \{\widecheck
{\Sigma}_{\epsilon}^*(1) \}^{\prime}, \dots, vech \{\widecheck
{\Sigma}_{\epsilon}^*(S) \}^{\prime})^{\prime}]$ with $\widecheck
{\Sigma}_{\epsilon}^*(s) = \frac{1}{N} \sum_{n=0}^{N-1}\widehat{\widecheck
{\epsilon}}_{Sn+s}^*\widehat{\widecheck{\epsilon}}_{Sn+s}^{*\prime}, s=1,\dots, S$.

The pre-sample values $\widecheck{y}^*_{s^{\prime}},\dots,\widecheck{y}^*_0$ are set to $y_{s^{\prime}},\dots,y_0$ and $\widecheck{y}^*_{1},\dots,\widecheck
{y}^*_{SN}$ are generated using
$\widecheck{y}^*_{Sn+s} =  \nu(s) +  A_{1}(s) \widecheck{y}^*_{Sn+s-1} + \dots +  A_{p(s)}(s) \widecheck{y}^*_{Sn+s-p(s)} + \widecheck{\epsilon}^{*}_{Sn+s}$, where $\widecheck{\epsilon}^{*}_{1}, \dots, \widecheck{\epsilon}^{*}_{SN}$ is an analogously drawn version of $\epsilon^{*}_{1}, \dots, \epsilon^{*}_{SN}$ as described in step (2) of the residual-based seasonal block bootstrap in Section \ref*{Subsubsection_Bootstrapscheme_weak}, but from $\epsilon_{1}, \dots, \epsilon_{SN}$ instead of the estimated counterparts. Further, we define $\widecheck
{X}^*_{n}(s) = (1,\widecheck{y}_{Sn+s-1}^{*\prime}, \dots, \widecheck{y}_{Sn+s-p(s)}^{*\prime})^{\prime}$, $s = 1,\dots, S$, $n = 0, \dots, N-1$, and $\widecheck{X}^*$ as the corresponding counterpart of $X$ introduced in Section \ref*{Section_Estimation} and $\widecheck{e}^* = vec \{ \widecheck{\epsilon}^{*}_{1}, \dots, \widecheck{\epsilon}^{*}_{SN}\}$. Finally, we get the residuals by $\widehat{\widecheck{\epsilon}}^{*}_{Sn+s} = \widecheck{y}^*_{Sn+s} -  \widecheck{\nu}^*(s) +  \widecheck{A}^*_{1}(s) \widecheck{y}^*_{Sn+s-1} + \dots +  \widecheck{A}^*_{p(s)}(s) \widecheck{y}^*_{Sn+s-p(s)}$, where $\widecheck{\nu}^*(s),\widecheck{A}^*_{1}(s),\ldots, \widecheck{A}^*_{p(s)}(s)$, $s = 1, \dots, S $ are the bootstrap estimators based on the sample $\widecheck{y}^*_{s^{\prime}},\dots,\widecheck{y}^*_{SN}$.

By Polya's Theorem and Lemma \ref{Equivalence_BE},
it suffices to show that $\sqrt{T} \bigl( ( \widecheck{\beta}^* - \ \widecheck
{\beta})^{\prime}, ({\widecheck{\sigma}}^* - \widecheck{\sigma})^{\prime}\big)^{\prime} \overset{d}{\to}  \mathcal{N}\bigl(\boldsymbol{0},   V  \bigr),$ in probability, where $V$ is the limiting variance matrix derived in Theorem \ref*{theo_joint_clt}. If we transfer the proof of Theorem \ref*{theo_joint_clt} to the bootstrap world, this results in

\begin{align}\label{Deviationform_Boot}
\sqrt{N} \begin{pmatrix}
\widecheck{\beta}^* - \widecheck{\beta}  \\  \widecheck{\sigma}^* - \widecheck{\sigma} \end{pmatrix} &=   \begin{pmatrix} \widecheck{Q}_N^{\beta*} \widecheck{R}^{\beta} \frac{1}{\sqrt{N}} \sum\limits_{n=0}^{N-1} \begin{pmatrix} \widecheck{\epsilon}^*_{Sn+1} \\ vec\{\widecheck
{\epsilon}^*_{Sn+1} [\widecheck{\epsilon}_{Sn+1-k}^{*\prime} + \nu(1-k)^{\prime}] \} \\ \widecheck{\epsilon}_{Sn+2}^* \\ vec\{\widecheck{\epsilon}^*_{Sn+2}[ \widecheck{\epsilon}_{Sn+2-k}^{*\prime} +   \nu(2-k)^{\prime}] \} \\ \vdots \\ \widecheck{\epsilon}^*_{Sn+S}  \\ vec\{\widecheck{\epsilon}^*_{Sn+S} [ \widecheck{\epsilon}_{Sn+S-k}^{*\prime} +   \nu(S-k)^{\prime} ]   \} \end{pmatrix}  \\  \frac{1}{\sqrt{N}}\sum\limits_{n=0}^{N-1} L_{mS} \begin{pmatrix} vec\{ \widecheck
{\epsilon}^*_{Sn+1}  \widecheck{\epsilon}_{Sn+1}^{*\prime} \} - E^*[vec\{ \widecheck{\epsilon}^*_{Sn+1}  \widecheck{\epsilon}_{Sn+1}^{*\prime} \}]\\ \vdots \\ vec\{ \widecheck{\epsilon}^*_{Sn+S}  \widecheck
{\epsilon}_{Sn+S}^{*\prime} \} - E^*[vec\{ \widecheck{\epsilon}^*_{Sn+S}  \widecheck{\epsilon}_{Sn+S}^{*\prime} \}] \end{pmatrix}  \end{pmatrix}\\ \nonumber \\ &= A^*_q + (A^*-A^*_q) \nonumber  ,
\end{align}
with
\begin{align*}
    \widecheck{Q}_N^{\beta*} &= R \left[R^{\prime} \left\{\frac{1}{N} \widecheck{X}^* \widecheck{X}^{*\prime} \otimes \boldsymbol{I}_m \right\} R \right]^{-1} R^{\prime}, \\
    \widecheck{R}^{\beta} &= \begin{pmatrix}
\boldsymbol{I}_m & \boldsymbol{0}_{m\times m^2} & \dots & \dots &  \boldsymbol{0} \\ \boldsymbol{0}_{m^2p(1) \times m} & \sum_{k=1}^{SN-1}  (C_k(1) \otimes \boldsymbol{I}_m)  & \ddots  & & \vdots \\ \vdots & \ddots & \ddots & \ddots & \vdots \\  \vdots &  & \ddots & \boldsymbol{I}_m &  \boldsymbol{0}_{m \times m^2} \\ \boldsymbol{0} & \dots & \dots & \boldsymbol{0}_{m^2p(S) \times m} & \sum_{k=1}^{SN-1}  (C_k(S)  \otimes \boldsymbol{I}_m)
\end{pmatrix}, 
\end{align*}
where $A^*$ denotes the right-hand side in \eqref{Deviationform_Boot} and $A^*_q$ is defined as the same, but with  $\sum_{k=1}^{SN-1}$ replaced by $\sum_{k=1}^{q}$ for some $q \in \mathbb{N}$. Note again that, in this short-hand notation, $\widecheck{R}^{\beta}$ contains sums over $k$ that are also applied to the factor $\frac{1}{\sqrt{N}}\sum_{n=0}^{N-1}(\cdots)$. To prove the bootstrap central limit theorem of the joint deviation form, we make use of Proposition 6.3.9 of Brockwell \& Davies (1991) 
and it suffices to show
 
 \begin{itemize}
 \item[(a)] $A^*_q \overset{d}{\to}  \mathcal{N}(0, V_q)$ in probability as $N \to \infty$
 \item[(b)] $V_q \to V$  as $q \to \infty$
 \item[(c)] $\forall \delta > 0: \; \; \underset{q \to \infty}{\lim} \underset{N \to \infty}{\limsup}\ P^*( |A^* - A_q^*|_1 > \delta) = 0  $ in probability.
 \end{itemize}
In order to prove (a), with $\widetilde{m} = \frac{m(m+1)}{2}$, we bring $A^*_q$ to the following form
\begin{align*}
A^*_q = \widecheck{Q}^*_N  R_q \frac{1}{\sqrt{N}} \sum_{n=0}^{N-1} 
\widecheck{W}_{n,q}^{*},
\end{align*}
where $R_q$ as defined in the proof of Theorem \ref*{theo_joint_clt} and with
\begin{align*}
\widecheck{Q}^*_N = \begin{pmatrix}
R \left[R^{\prime} \left\{\frac{1}{N} \widecheck{X}^* \widecheck{X}^{*\prime} \otimes \boldsymbol{I}_m \right\} R \right]^{-1} R^{\prime} \; \;  \;  & \boldsymbol{0}_{m \sum\limits_{s=1}^S (mp(s)+1) \times S \widetilde{m}} \\ \boldsymbol{0}_{ S\widetilde{m} \times m \sum\limits_{s=1}^S (mp(s)+1)} & \boldsymbol{I}_{S\tilde{m}} 
\end{pmatrix},
\end{align*}
and 
\begin{align*}
  \widecheck{W}_{n,q}^{*} = \begin{pmatrix}
  \begin{pmatrix} \widecheck{\epsilon}^*_{Sn+1} \\ vec\{\widecheck{\epsilon}^*_{Sn+1} [\widecheck{\epsilon}_{Sn}^{*\prime} + \nu(0)^{\prime}] \}
 \\ \vdots \\ vec\{\widecheck{\epsilon}^*_{Sn+1} [\widecheck{\epsilon}_{Sn+1-q}^{*\prime} + \nu(1-q)^{\prime}]  \}  \\ 
 \widecheck{\epsilon}^*_{Sn+2} \\ vec\{\widecheck{\epsilon}^*_{Sn+2} [\widecheck{\epsilon}_{Sn+1}^{*\prime} + \nu(1)^{\prime}]  \}
 \\ \vdots \\ vec\{\widecheck{\epsilon}^*_{Sn+2}  [\widecheck{\epsilon}_{Sn+2-q}^{*\prime} + \nu(2-q)^{\prime}] \} \\ \vdots \\  \widecheck{\epsilon}^*_{Sn+S} \\ vec\{\widecheck{\epsilon}^*_{Sn+S} [\widecheck{\epsilon}_{Sn+S-1}^{*\prime} + \nu(S-1)^{\prime}] \}
 \\ \vdots \\ vec\{\widecheck{\epsilon}^*_{Sn+S} [\widecheck{\epsilon}_{Sn+S-q}^{*\prime} + \nu(S-q)^{\prime}] \}
\end{pmatrix}  \\  L_{mS} \begin{pmatrix} vec\{ \widecheck{\epsilon}^*_{Sn+1}  \widecheck{\epsilon}_{Sn+1}^{*\prime} \} - E^*[ vec\{ \widecheck{\epsilon}^*_{Sn+1}  
\widecheck{\epsilon}_{Sn+1}^{*\prime} \}] \\ \vdots \\ vec\{ \widecheck{\epsilon}^*_{Sn+S}  \widecheck{\epsilon}_{Sn+S}^{*\prime} \} -  E^*[vec\{ \widecheck{\epsilon}^*_{Sn+S}  \widecheck{\epsilon}_{Sn+S}^{*\prime} \}] \end{pmatrix} \end{pmatrix},
\end{align*}
where $\widecheck{Q}_N^*$ and $\widecheck{W}_{n,q}^{*}$ are of dimension $(m\sum_{s=1}^S (mp(s) +1) + S\widetilde{m}) \times (m\sum_{s=1}^S (mp(s) +1) + S\widetilde{m})$ and $S(m+qm^2)+S\widetilde{m}$. Using Lemma \ref{CLT_Boot_Innovations}, we have 
\[ A_q^* \overset{d}{\to}  \mathcal{N}(0, V_q) \; \text{ as } \; N \to \infty,\]
with
\[ V_q = QR_q \Omega_q R_q^{\prime} Q^{\prime}, \]
where $Q$ and $\Omega_q$ are defined as in the proof of Theorem \ref*{theo_joint_clt}. Note that it is possible to show that $\widecheck{Q}_N^* \to Q$ with respect to $P^*$ by using similar arguments as in the proof of Lemma A.2 in Brüggemann et al.~(2016).  
Part b) of the proof follows directly by applying the cumulant summability assumption imposed in Assumption \ref*{Mixing_bootstrap_con} and the exponential decay of $\{ C_k(s)\}_{k \in \mathbb{N}}$ for all $s = 1, \dots, S$. Part c) of the proof follows by using similar arguments as in Theorem 4.1 in Brüggemann et al.~(2016).
\hfill  $\square$

\begin{lemma}[Equivalence of Bootstrap Estimators]\label{Equivalence_BE}
Under the assumptions of Theorem \ref*{Mixing_bootstrap_con}, we have
\begin{align*}
    \sqrt{N}\bigl((\widehat{\beta}_{res}^* - \widehat{\beta}_{res}) -  (\widecheck
    {\beta}^* - \widecheck{\beta})\bigr) &= o_{P^*}(1), \\
    \sqrt{N}\bigl((\widehat{\sigma}^* - \widehat{\sigma}) -  (\widecheck{\sigma}^* - \widecheck{\sigma})\bigr) &= o_{P^*}(1).
\end{align*}
\end{lemma}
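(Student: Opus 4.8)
The plan is to exploit a coupling between the two resampling schemes and to reduce everything to the $\sqrt{T}$-consistency of $\widehat{\beta}_{res}$ established in Theorem \ref{theo_joint_clt}. Concretely, I would run the feasible and the infeasible bootstrap with the \emph{same} sequence of random block indices $k_1,k_{b+1},\dots$, so that for every resampled position the two innovation draws differ only through $\widehat{\epsilon}^*_{t}-\widecheck{\epsilon}^*_{t}=\widehat{\epsilon}_{\tau(t)}-\epsilon_{\tau(t)}$, where $\tau(t)$ is the common index into the original residual blocks. Since $\widehat{\epsilon}_{Sn+s}-\epsilon_{Sn+s}=-[(\widehat{\nu}(s)-\nu(s))+\sum_{j=1}^{p(s)}(\widehat{A}_j(s)-A_j(s))y_{Sn+s-j}]$ is linear in $\widehat{\beta}_{res}-\beta=O_P(T^{-1/2})$ with $L^2$-bounded regressors $X_n(s)$, all residual differences are uniformly of order $O_P(T^{-1/2})$ in the relevant averaged sense.

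Next I would write both deviations in the explicit least-squares form underlying \eqref{Res_PVAR_Estimator}, namely
\begin{align*}
\widehat{\beta}^*_{res}-\widehat{\beta}_{res}&=R[R'\{X^*X^{*\prime}\otimes\boldsymbol{I}_m\}R]^{-1}R'\{X^*\otimes\boldsymbol{I}_m\}\widehat{e}^*,\\
\widecheck{\beta}^*-\widecheck{\beta}&=R[R'\{\widecheck{X}^*\widecheck{X}^{*\prime}\otimes\boldsymbol{I}_m\}R]^{-1}R'\{\widecheck{X}^*\otimes\boldsymbol{I}_m\}\widecheck{e}^*,
\end{align*}
where $X^*$ and $\widecheck{X}^*$ are the design matrices built from the feasible path $y^*$ and the infeasible path $\widecheck{y}^*$, respectively, and $\widehat{e}^*,\widecheck{e}^*$ are the vectorized drawn innovations. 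The discrepancy between the two estimators then stems from two sources: the difference of the normalized design matrices $\tfrac1N(X^*X^{*\prime}-\widecheck{X}^*\widecheck{X}^{*\prime})$ and the difference of the cross terms $\tfrac1{\sqrt N}(\{X^*\otimes\boldsymbol{I}_m\}\widehat{e}^*-\{\widecheck{X}^*\otimes\boldsymbol{I}_m\}\widecheck{e}^*)$. By a Slutsky-type argument it suffices to show that the first is $o_{P^*}(1)$ and the second is $o_{P^*}(1)$, each in $P$-probability.

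The main obstacle is controlling the path difference $\Delta_{Sn+s}:=y^*_{Sn+s}-\widecheck{y}^*_{Sn+s}$, which couples the parameter perturbation and the innovation perturbation through the bootstrap recursion. Subtracting the two generating equations yields the periodic recursion $\Delta_{Sn+s}=\sum_{j=1}^{p(s)}A_j(s)\Delta_{Sn+s-j}+\eta_{Sn+s}$ driven by the \emph{true} stable coefficients $A_j(s)$, with forcing term $\eta_{Sn+s}=(\widehat{\nu}(s)-\nu(s))+\sum_{j=1}^{p(s)}(\widehat{A}_j(s)-A_j(s))y^*_{Sn+s-j}+(\widehat{\epsilon}^*_{Sn+s}-\widecheck{\epsilon}^*_{Sn+s})$ and identical pre-sample values $\Delta_{s^{\prime}}=\dots=\Delta_0=0$. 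Because \eqref{PeriodStatCon} forces the companion matrix of the higher-dimensional VAR($P$) representation to have spectral radius strictly below one, $\Delta$ is a stable, absolutely summable linear filter of $\eta$, so $E^*|\Delta_t|^2$ is bounded by the corresponding second moments of $\eta_t$ up to a constant. Each component of $\eta$ is of order $O_P(T^{-1/2})$: the coefficient-perturbation part because $\widehat{\beta}_{res}-\beta=O_P(T^{-1/2})$ together with uniformly $L^2$-bounded bootstrap regressors $y^*$, whose conditional moments are controlled, with $P$-probability tending to one, via the near-stability of $\widehat{\beta}_{res}$ and the finite eighth moments in Assumption \ref{Mixing_assumption8}; the innovation-perturbation part by the coupling identity above. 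Feeding these bounds back into the two discrepancy sources and scaling by $\sqrt N$ (equivalently $\sqrt T$, since $S$ is fixed) shows that both vanish in $P^*$-probability, in $P$-probability, which proves the claim for $\beta$.

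Finally, the statement for $\widehat{\sigma}^*-\widehat{\sigma}$ versus $\widecheck{\sigma}^*-\widecheck{\sigma}$ follows along the same lines, the only change being that the covariance estimators are quadratic rather than linear in the residuals; writing the per-season cross products $\widehat{\widecheck{\epsilon}}^*_t\widehat{\widecheck{\epsilon}}^{*\prime}_t$ minus their feasible counterparts as sums of products of $\Delta$, $\eta$ and bounded bootstrap residuals, and invoking the same $O_P(T^{-1/2})$ bounds (now requiring the eighth-moment condition to keep the quadratic remainder integrable), yields the $o_{P^*}(1)$ rate after $\sqrt T$-scaling. Throughout I would mirror the technical estimates of \cite{bruggemann2016inference}, adapting their moving-block arguments to the present seasonal, periodic resampling and to the across-season linear restrictions encoded in $R$.
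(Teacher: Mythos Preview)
Your proposal is correct and follows essentially the same route as the paper: the paper's own proof is a one-line deferral to Lemma~A.1 in \cite{bruggemann2016inference} and Lemma~B.2 in \cite{jentsch2022asymptotically}, adapted to periodically correlated innovations and the seasonal block resampling, and your outline is precisely a fleshed-out version of what those referenced arguments entail (coupling via common block indices, the stable periodic recursion for $\Delta_t=y^*_t-\widecheck{y}^*_t$ driven by the true $A_j(s)$, and a Slutsky decomposition of the design/cross-term differences). The only cosmetic difference is that the paper also cites \cite{jentsch2022asymptotically} for the $\sigma$-part, whereas you derive it directly from the quadratic residual structure; both lead to the same estimates.
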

\begin{proof}
To prove the statement, we use similar arguments as in the proof of Lemma A.1 in Brüggemann et al.~(2016) 
and Lemma B.2 in Jentsch \& Lunsford (2022). 
However, note that
we have to allow for periodically correlated processes and replace the moving block bootstrap method by a seasonal variant of it. 
\end{proof}

\begin{lemma}[CLT for Bootstrap Innovations]\label{CLT_Boot_Innovations}
Let $q \in \mathbb{N}$ and define $\widecheck{W}_{n,q}^* = (\widecheck{W}^{(1)*\prime}_{n,q},  \widecheck{W}^{(2)*\prime}_{n})^{\prime}$ as
\begin{align*}
\widecheck{W}^{(1)*}_{n,q}  = \begin{pmatrix}
\widecheck{W}^{(1,1)*}_{n,q} \\ \vdots \\ \widecheck{W}^{(1,S)*}_{n,q}
\end{pmatrix}, 
\quad
\widecheck{W}^{(2)*\prime}_{n}  = 
 L_{mS} \begin{pmatrix} vec\{ \widecheck{\epsilon}^*_{Sn+1}  \widecheck{\epsilon}_{Sn+1}^{*\prime} \} - E^*[vec\{ \widecheck{\epsilon}_{Sn+1}^* \widecheck{\epsilon}_{Sn+1}^{*\prime}  \} ] \\ \vdots \\ vec\{ \widecheck{\epsilon}^*_{Sn+S}  \widecheck{\epsilon}_{Sn+S}^{*\prime} \} - E^*[vec\{ \widecheck{\epsilon}_{Sn+S}^* \widecheck{\epsilon}_{Sn+S}^{*\prime}  \} ]  \end{pmatrix},
\end{align*}
where, for $s=1,\dots, S$,
\begin{align*}
\widecheck{W}^{(1,s)*}_{n,q} = \begin{pmatrix}
 \widecheck{\epsilon}_{Sn+s}^* \\ vec\{\widecheck{\epsilon}_{Sn+s}^* [\widecheck{\epsilon}_{Sn+s-1}^{*\prime} + \nu(s-1)^{\prime}] \}
 \\ \vdots \\ vec\{\widecheck{\epsilon}_{Sn+s}^* [\widecheck{\epsilon}_{Sn+s-q}^{*\prime} + \nu(s-q)^{\prime}]  \} \end{pmatrix}.
\end{align*}
Then, under the assumptions of Theorem \ref*{Mixing_bootstrap_con}, we have
\[ \frac{1}{\sqrt{N}} \sum_{n=0}^{N-1} \widecheck{W}^*_{n,q} \overset{d}{\to}  \mathcal{N}\bigl(\boldsymbol{0},   \boldsymbol{\Omega}_q  \bigr) \] in probability, where $\boldsymbol{\Omega}_q$ as defined in Lemma \ref{CLT_Innovation}.
\end{lemma}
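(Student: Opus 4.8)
The plan is to establish this conditional central limit theorem by a Cram\'er--Wold reduction combined with a Lindeberg--Feller CLT for a triangular array of conditionally independent, seasonally aligned block sums, adapting the moving-block-bootstrap arguments of \cite{bruggemann2016inference} and \cite{jentsch2022asymptotically} to the periodic resampling scheme of \cite{dudek2016generalized}. Because $q$ is fixed, the vector $\widecheck{W}_{n,q}^*$ has fixed dimension, so it suffices to show that, conditionally on the sample and in probability, $\lambda^{\prime}\frac{1}{\sqrt{N}}\sum_{n=0}^{N-1}\widecheck{W}_{n,q}^*\overset{d}{\to}\mathcal{N}(0,\lambda^{\prime}\boldsymbol{\Omega}_q\lambda)$ for every fixed $\lambda$, with $\boldsymbol{\Omega}_q$ the matrix from Lemma \ref{CLT_Innovation}.

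First I would exploit the block structure. Since the seasonal block bootstrap draws the $l=\lceil T/b\rceil$ blocks independently and each coordinate of $\widecheck{W}_{n,q}^*$ depends only on the resampled innovations in a window of width at most $q+S$, the normalized sum decomposes into $l$ conditionally independent block contributions plus a remainder collecting the terms whose window straddles two glued blocks; with $q$ fixed and $b\to\infty$ there are $O(1)$ straddling terms per block and $O(T/b)$ blocks, so the remainder is $O_{P^*}(b^{-1/2})=o_{P^*}(1)$. I would then work with conditionally centered block contributions: the subtraction of $E^*[\cdot]$ is explicit for the second subvector, while for the first subvector the standard moving-block recentering removes the $O_{P^*}(1)$ bootstrap bias, so that the conditional mean of the normalized sum is zero and the limiting covariance is left unchanged.

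The core step is to show that the conditional covariance matrix of $\frac{1}{\sqrt{N}}\sum_{n}\widecheck{W}_{n,q}^*$ converges in probability to $\boldsymbol{\Omega}_q$. Here the seasonal alignment of the blocks is essential: because block start points are chosen on a grid spaced by $S$, every resampled innovation keeps its original seasonal label, so the conditional variance of a single block sum is an empirical average of within-block products of the $W_{n,q}$-type quantities that is exactly the block-bootstrap estimator of the long-run covariances $\sum_{h=-\infty}^{\infty}(\cdot)$ entering the $(s_1,s_2)$ blocks of $\boldsymbol{\Omega}_q$ in Lemma \ref{CLT_Innovation}. Consistency of this estimator, together with the correct reconstruction of the seasonal cross-terms $\kappa_{a,b}(s_1,s_2)$ and $\tau_{a,b,c}(s_1,s_2)$ rather than their seasonal average, would follow from $b\to\infty$ and $b^3/T\to0$ combined with the summability of joint cumulants up to order eight guaranteed by Assumption \ref{Mixing_assumption8}, which bounds the variance of the empirical autocovariances. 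I expect this to be the main obstacle: one must control the bias and the variance of the seasonal long-run-variance estimator simultaneously and verify that its periodic bookkeeping matches $\boldsymbol{\Omega}_q$ block by block.

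Finally I would verify the Lindeberg condition for the array of block contributions. Each block contribution is a normalized sum of $O(b)$ terms, and using the finite eighth-order moments of $\{w_t\}$ from Assumption \ref{Mixing_assumption8} together with cumulant summability one bounds its conditional fourth moment by $O((b/S)^2)$; the resulting Lyapunov ratio is then of order $O(b/T)$ and vanishes since $b^3/T\to0$ implies $b/T\to0$. Applying the Lindeberg--Feller CLT for triangular arrays of conditionally independent summands and combining it with the Cram\'er--Wold device yields the asserted conditional convergence in probability, which is exactly what is invoked for $A_q^*$ in the proof of Theorem \ref{Mixing_bootstrap_con}.
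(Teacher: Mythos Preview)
Your proposal is correct and follows essentially the same route as the paper, whose own proof is only a brief sketch pointing to Lemma~A.3 of \cite{bruggemann2016inference} and Lemma~B.3 of \cite{jentsch2022asymptotically}: establish asymptotic normality of the centered sum $\frac{1}{\sqrt{N}}\sum_{n}\bigl(\widecheck{W}^*_{n,q}-E^*(\widecheck{W}^*_{n,q})\bigr)$ via a block-wise Lindeberg--Feller argument with seasonal alignment, then remove the centering by showing $\sqrt{N}E^*(\widecheck{W}^*_{n,q})=o_P(1)$. The one refinement the paper makes explicit is that the first subvector in the lemma is \emph{not} centered, so rather than saying ``the conditional mean of the normalized sum is zero'' you should phrase this step as the paper does: the centered version satisfies the CLT, and the leftover centering term is asymptotically negligible after scaling by~$\sqrt{N}$.
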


\begin{proof}
To prove the statement, we use similar arguments as in the proof of Lemma A.3 in Brüggemann et al.~(2016) 
and Lemma B.3 in Jentsch \& Lunsford (2022). 
In the proof, we make use of the assumptions made in Theorem \ref*{Mixing_bootstrap_con} to derive asymptotic normality of the properly centered quantity $\frac{1}{\sqrt{N}} \sum_{n=0}^{N-1} \widecheck{W}^*_{n,q} - E^*(\widecheck
{W}^*_{n,q})$. Note that $E^*(\widecheck{W}^{(1)*}_{n,q})$ does not necessarily have to be exactly zero. Finally, using $\sqrt{N}E^*(\widecheck{W}^*_{n,q}) = o_P(1)$, we obtain the claimed result.
\end{proof}

\section{Seasonal Test for Impulse Responses}
If the goal is to test whether the seasonal impulse responses differ significantly from their \emph{averaged} seasonal impulse responses across the seasons, we can formulate the null hypothesis as
\begin{align}\label{H_0_2}
\widetilde K_0: \Theta_{ij}^{SIR}(s)  = \widebar{\Theta}^{SIR}_{ij}, \;  \; \text{ for all } s = 1,\dots, S,
\end{align}
where $\widebar{\Theta}^{SIR}_{ij}$ is the impulse response function from the SVAR($p$) with seasonal means. Then, we define the test statistic as $\widetilde{Q}_N(i,j) = \sum_{s=1}^S | \widehat{\Theta}^{SIR}_{ij}(s) - \widehat{\widebar{\Theta}}^{SIR}_{ij}|_2^2$ and perform the following steps to test for $\widetilde K_0$. 
\begin{itemize} 
\item[$(1^{\prime})$] Fit the SPVAR($p$) to the seasonal unadjusted data $y^{SU*}$ and calculate the SPVAR estimates $\widehat{\nu}(s),\widehat{A}_1(s), \dots, \widehat{A}_p(s), \widehat{H}_0(s)$ for $s=1,\dots, S$ and the structural residuals $\widehat{w}_1, \dots, \widehat{w}_{SN}$. Further, calculate the pooled PVAR estimates $\widebar{\widehat{A}}_1, \dots, \widebar{\widehat{A}}_p$ and the pooled periodic impact matrix $\widebar{\widehat{H}}_0$.
\item[$(2^{\prime})$] Perform a seasonal block bootstrap to obtain bootstrap pseudo observations $\widehat{w}_1^*, \dots, \widehat{w}_{SN}^*$ and use $(\nu(1),\dots, \nu(S),\widebar{\widehat{A}}_1, \dots, \widebar{\widehat{A}}_p, \widebar{\widehat{H}}_0)$ to get a bootstrap sample $y^{*} = (y_1^{*},\dots,y_{SN}^*$) of seasonally unadjusted data.
\item[$(3^{\prime})$] Fit an SVAR($p$) with seasonal intercepts to $y^{*}$ and an SPVAR($p$) to $y^{*}$ and calculate the bootstrap versions $\widehat{\widebar{\Theta}}^{SIR*}_{ij}$ and $\widehat{\Theta}^{SIR*}_{ij}(s), s = 1,\dots,S$. Calculate $\widetilde{Q}_N^*(i,j)$ by inserting the bootstrap versions $\widehat{\widebar{\Theta}}^{SIR*}_{ij}$ and $\widehat{\Theta}^{SIR*}_{ij}(s)$. 
\item[$(4^{\prime})$] Repeat $(2^{\prime})- (3^{\prime})$ $L$ times, where $L$ is large, and reject $\widetilde K_0$ in \eqref{H_0_2}, if $\widetilde{Q}_N(i,j) > \widetilde{q}^*_{1-\alpha}$, where $\widetilde{q}^*_{ 1-\alpha}$ the $1-\alpha$ quantile of $\widetilde{Q}_N^{*(1)}(i,j), \dots,\widetilde{Q}_N^{*(L)}(i,j)$.
\end{itemize}

\section{Real Data Application with Approximate In-Average Identification}\label{Real_Data_InAverage}

In this section, seasonal structural impulse responses using the restricted SPVAR(12) on seasonally unadjusted data are analyzed, in which the aggregate demand $(ad)$, aggregate supply $(as)$ and monetary policy shocks $(mp)$ are identified. To identify the structural $(ad)$, $(as)$ and $(mp)$ shocks, we use approximate in-average identification and apply a mixture of short- and long-run restrictions as described in the last part of Section \ref{subsection_Identification}. In this Section, we compare SPVAR impulse responses identified by approximate in-average identification with the counterparts identified by the full identification approach. 

\begin{figure}[t]
\centering
\includegraphics[scale=0.8]{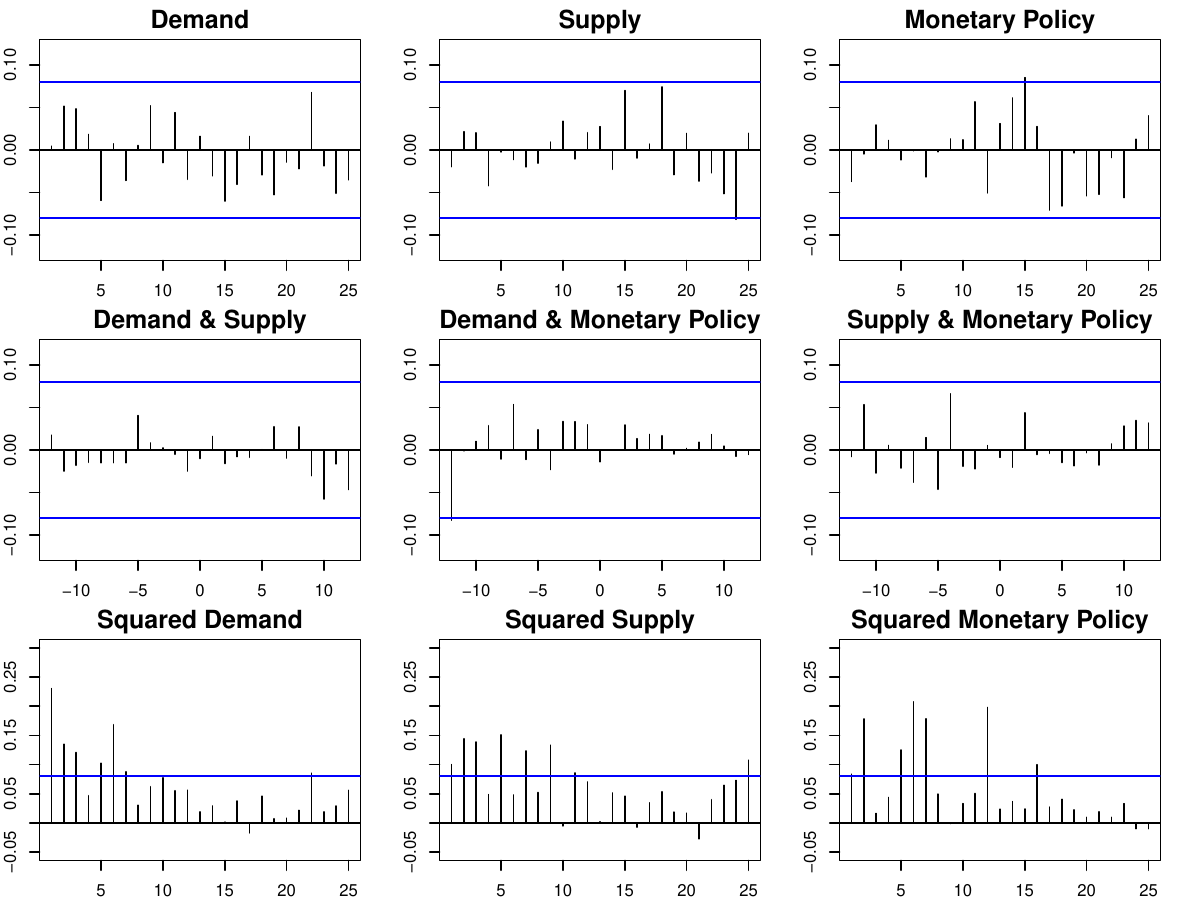}
\caption{ACFs (top panels) and CCFs (central panels) of structural PVAR shocks  $\widehat{w}_1, \dots, \widehat{w}_{SN}$ and ACFs of squared structural PVAR shocks $\widehat{w}^2_1, \dots, \widehat{w}^2_{SN}$ (bottom panels). Structural shocks are identified based on approximate in-average identification.}
\label{WWN_InAverage}
\end{figure}

Note that approximate in-average identification does not necessarily ensure that the structural shocks are fully orthogonal to each other. In the panels of the center row of Figure \ref{WWN_InAverage}, however, it can be clearly seen that all cross-correlations (CCFs) at lag 0 appear to be negligibly small. Accordingly, the non-orthogonality of the structural shocks has only a very minor effect on the point estimates of the impulse responses and the structural shocks can therefore be treated as (approximately) orthogonal to each other. In general, the non-orthogonality of the structural shocks should not be too pronounced, since in addition to the point estimates, the confidence bands of the impulse responses in particular can also be influenced. In our application, a subsequent Cholesky identification approach shows that the effects of the non-orthogonality of the structural shocks on the point estimate and confidence intervals of the impulse responses are negligibly small.

In Figures \ref{Comp_IR1} - \ref{Comp_IR9}, we compare the SPVAR impulse responses identified by the full identification approach (in black) with those identified by approximate in-average identification (in red). The dashed lines give (pointwise) 68\% confidence intervals of the structural impulse responses of the SPVARs. For the construction of confidence intervals of the seasonal impulse responses, the non-seasonal variant of the residual-based seasonal block bootstrap described in Section \ref{Subsubsection_Bootstrapscheme_weak} is used, because the bottom rows of Figures \ref{WWN1} and \ref{WWN_InAverage} show significant autoregressive structure in the squared structural shocks such that the strong periodic white noise assumption appears to not hold in both SPVARs. The block length of the residual-based seasonal block bootstrap is set to $b=5$, which is in line with the choice in \cite{jentsch2022asymptotically}, who considered a similar data generating process, and with the results in \cite{bertail2024optimal}.
We again use standard percentile intervals for the construction of the confidence intervals and $L = 1000$ bootstrap repetitions for each model.

Figures \ref{Comp_IR1}, \ref{Comp_IR2}, and \ref{Comp_IR3} present the SPVAR impulse responses generated by a monetary policy shock. In Figure \ref{Comp_IR1}, we clearly see the effect of long-run restrictions in the discussed identification schemes. While the impulse response is restricted to 0 in the long-run for all seasons in the case of full identification, it will be 0 only on average in the case of approximate in-average identification. However, this periodic flexibility of approximate in-average identification does not result in significant periodic impulse responses triggered by monetary policy shocks. In total, it appears that monetary policy shocks do not have significant periodic effects on the macro variables.

Figures \ref{Comp_IR4} - \ref{Comp_IR9} show some clear differences in the SPVAR impulse responses. These clear differences arise, on the one hand, from the fact that full identification takes into account periodic contemporaneous effects, while also imposing significantly more restrictions with regard to seasonality, compared to approximate in-average identification. 

 \begin{figure}[h!]
\centering
\includegraphics[scale=0.75]{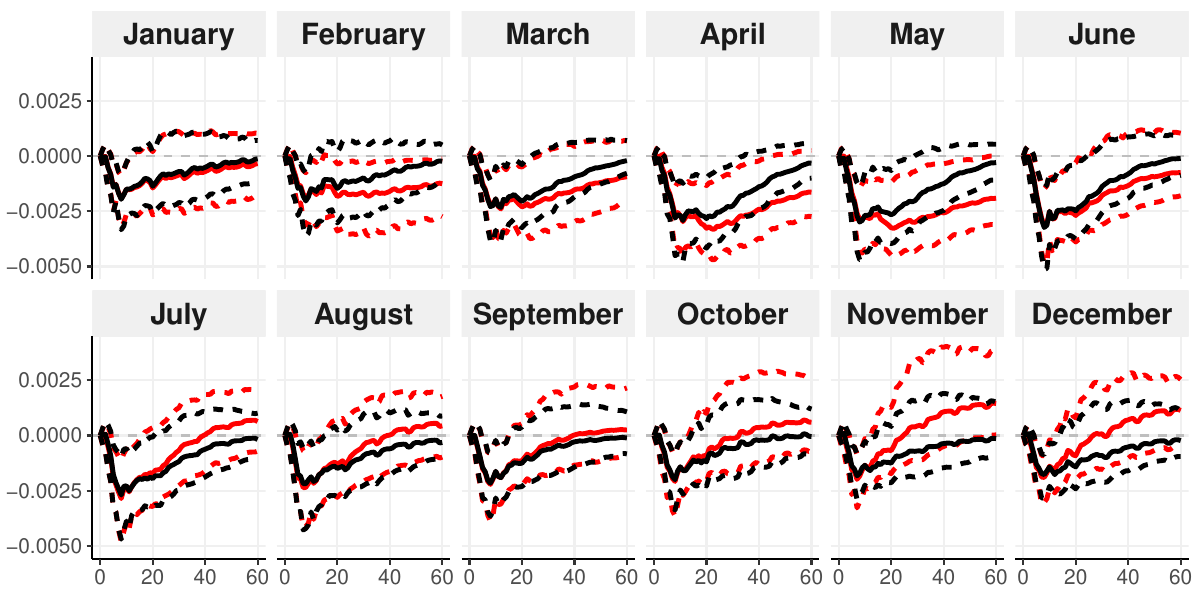}
\caption{Seasonal structural impulse responses of IP after a monetary policy shock. SPVAR impulse responses in red are identified by approximate in-average identification, while SPVAR impulse responses in black are identified by the full identification approach. The corresponding month indicates the time of occurrence of the shock.}
\label{Comp_IR1}
\end{figure}
\newpage
\begin{figure}[h!]
\centering
\includegraphics[scale=0.75]{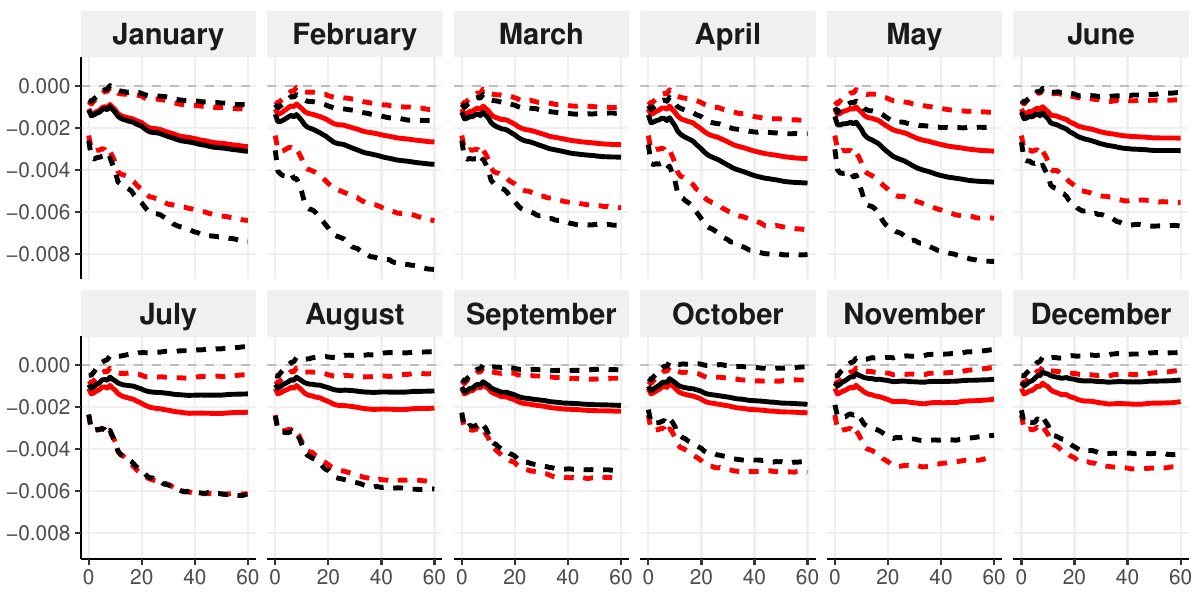}
\caption{Seasonal structural impulse responses of INF after a monetary policy shock. SPVAR impulse responses in red are identified by approximate in-average identification, while SPVAR impulse responses in black are identified by the full identification approach. The corresponding month indicates the time of occurrence of the shock.}
\label{Comp_IR2}
\end{figure}

\begin{figure}[h!]
\centering
\includegraphics[scale=0.75]{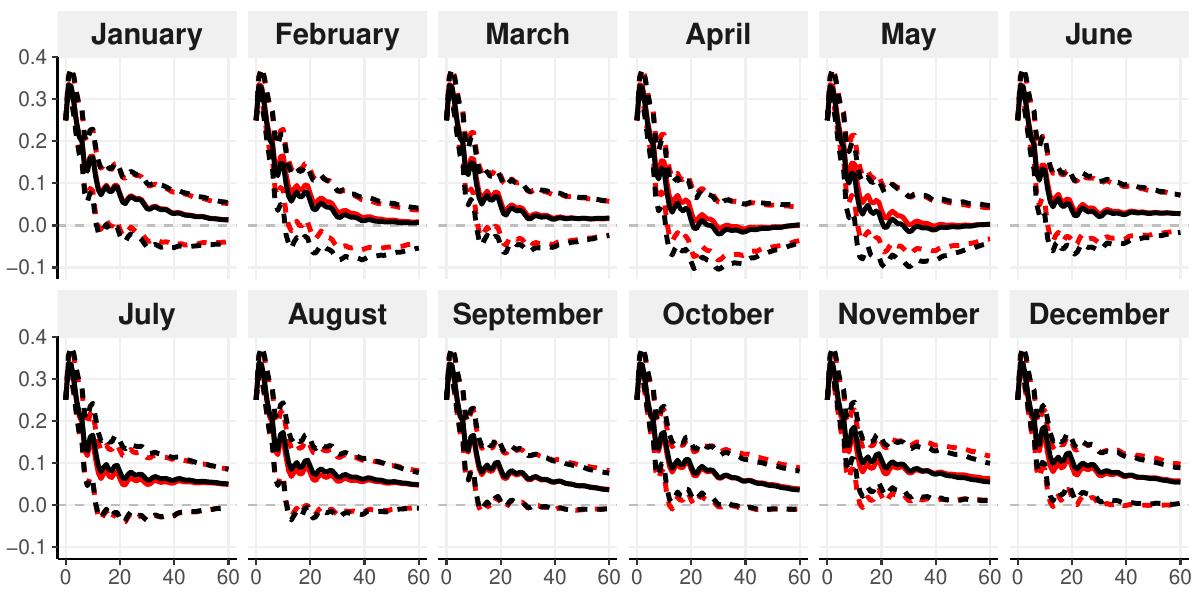}
\caption{Seasonal structural impulse responses of FFR after a monetary policy shock. SPVAR impulse responses in red are identified by approximate in-average identification, while SPVAR impulse responses in black are identified by the full identification approach. The corresponding month indicates the time of occurrence of the shock.}
\label{Comp_IR3}
\end{figure}

\newpage
\begin{figure}[h!]
\centering
\includegraphics[scale=0.75]{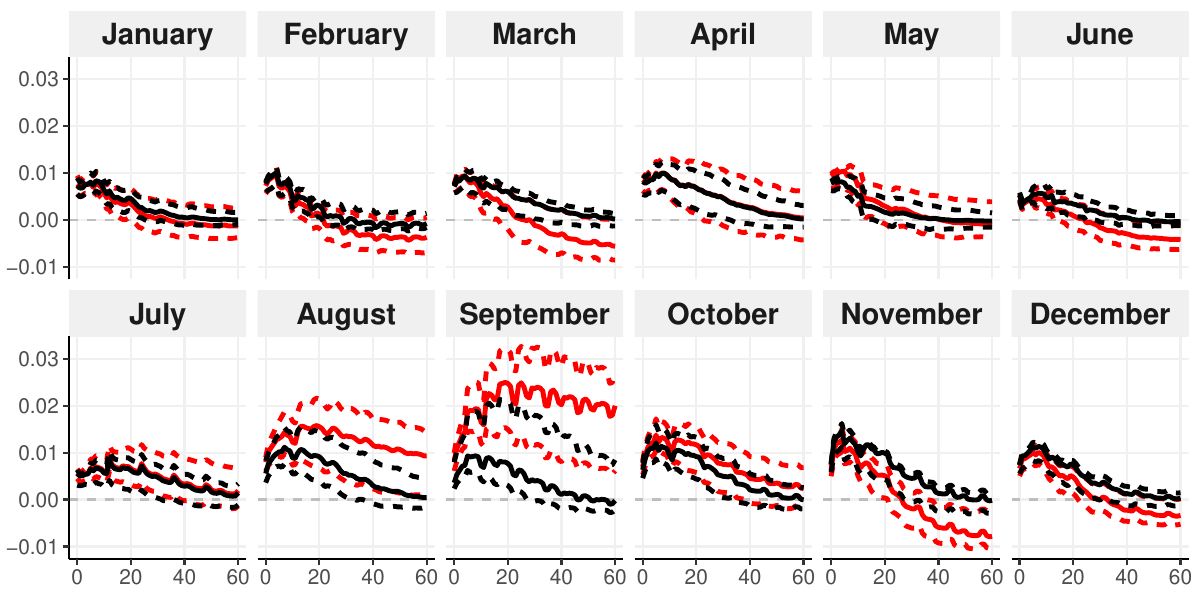}
\caption{Seasonal structural impulse responses of IP after a positive demand shock. SPVAR impulse responses in red are identified by approximate in-average identification, while SPVAR impulse responses in black are identified by the full identification approach. The corresponding month indicates the time of occurrence of the shock.}
\label{Comp_IR4}
\end{figure}

\begin{figure}[h!]
\centering
\includegraphics[scale=0.75]{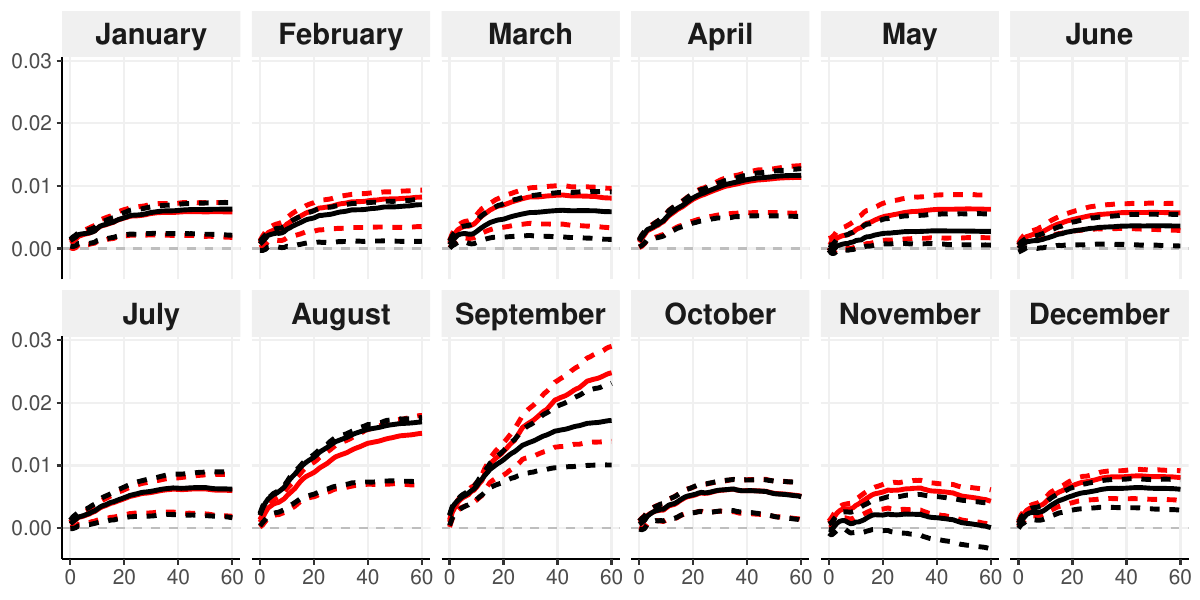}
\caption{Seasonal structural impulse responses of INF after a positive demand shock. SPVAR impulse responses in red are identified by approximate in-average identification, while SPVAR impulse responses in black are identified by the full identification approach. The corresponding month indicates the time of occurrence of the shock.}
\label{Comp_IR5}
\end{figure}
\newpage
\begin{figure}[h!]
\centering
\includegraphics[scale=0.75]{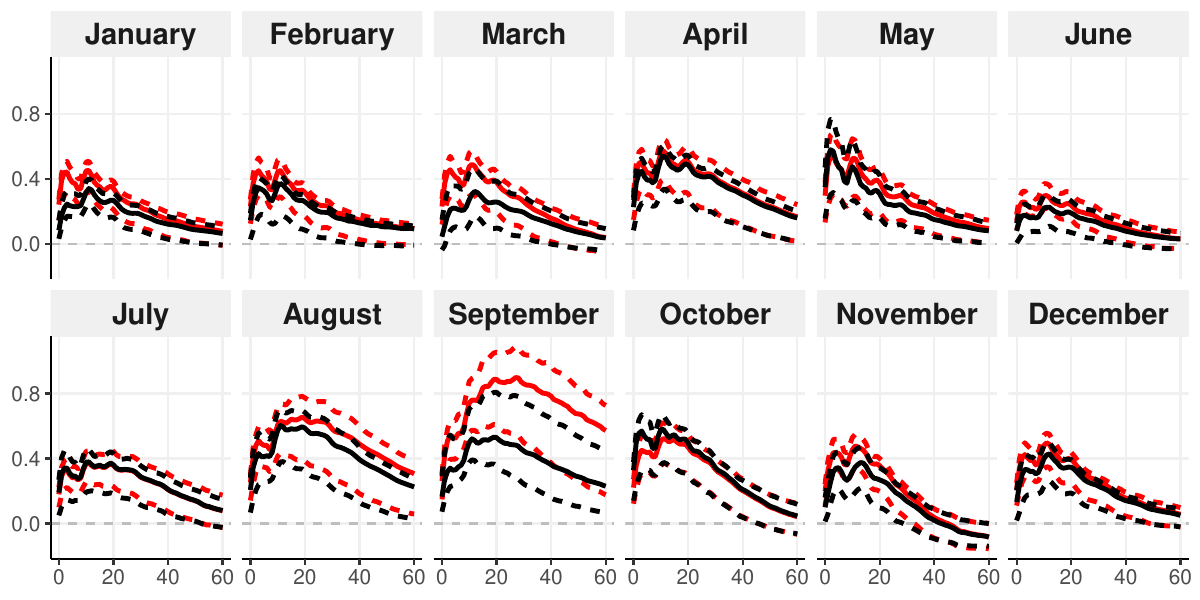}
\caption{Seasonal structural impulse responses of FFR after a positive demand shock. SPVAR impulse responses in red are identified by approximate in-average identification, while SPVAR impulse responses in black are identified by the full identification approach. The corresponding month indicates the time of occurrence of the shock.}
\label{Comp_IR6}
\end{figure}

\begin{figure}[h!]
\centering
\includegraphics[scale=0.75]{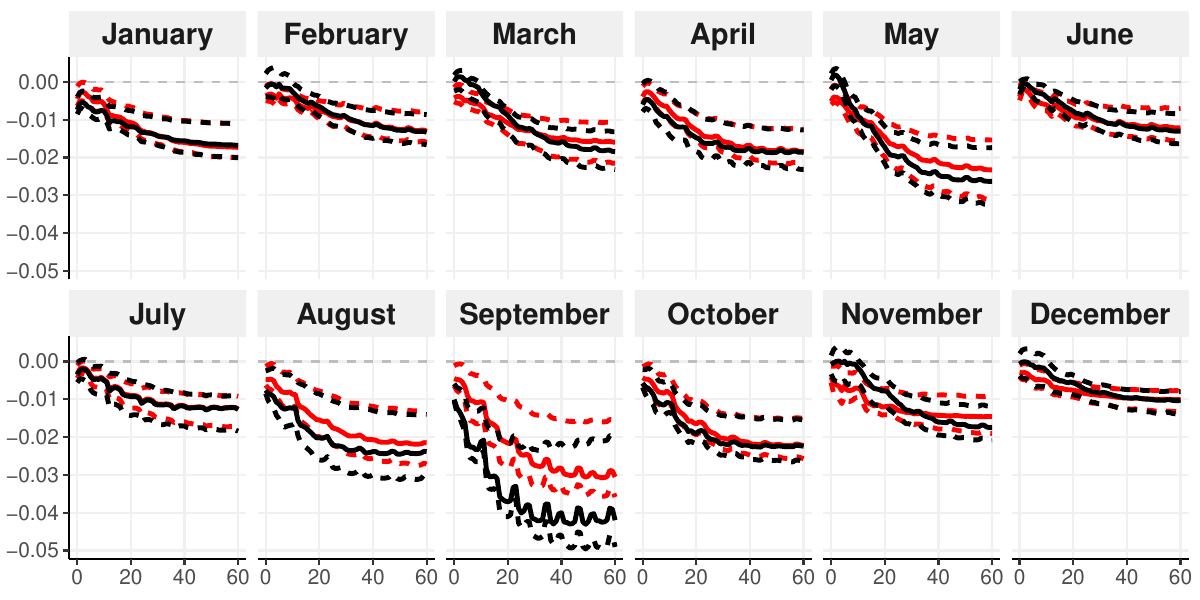}
\caption{Seasonal structural impulse responses of IP after a negative supply shock. SPVAR impulse responses in red are identified by approximate in-average identification, while SPVAR impulse responses in black are identified by the full identification approach. The corresponding month indicates the time of occurrence of the shock.}
\label{Comp_IR7}
\end{figure}
\newpage
\begin{figure}[h!]
\centering
\includegraphics[scale=0.75]{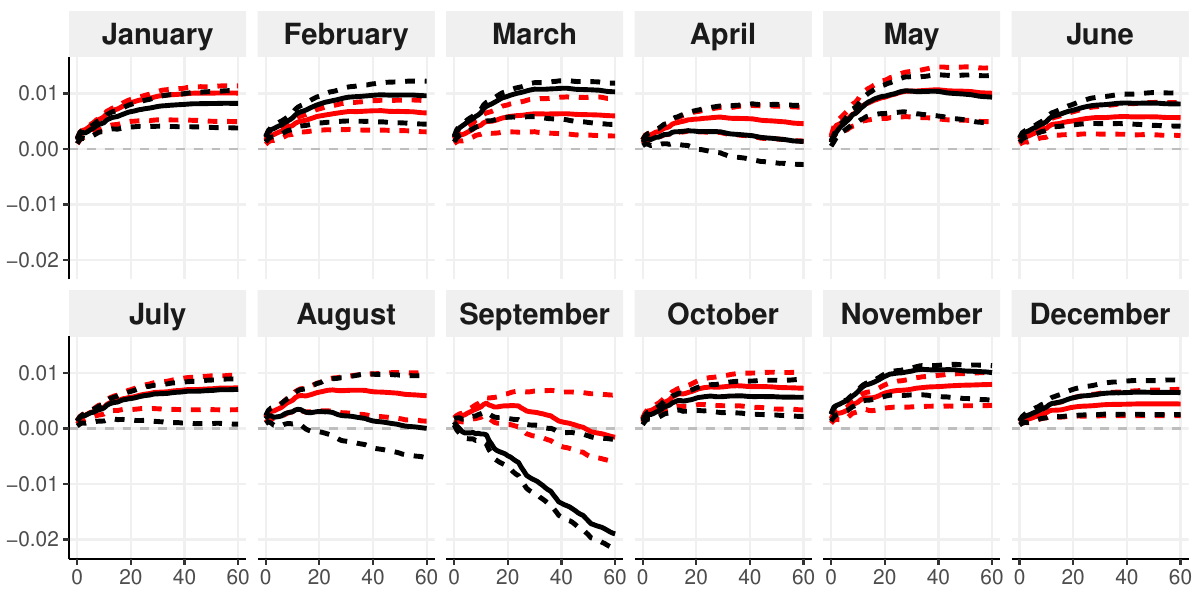}
\caption{Seasonal structural impulse responses of INF after a negative supply shock. SPVAR impulse responses in red are identified by approximate in-average identification, while SPVAR impulse responses in black are identified by the full identification approach. The corresponding month indicates the time of occurrence of the shock.}
\label{Comp_IR8}
\end{figure}

\begin{figure}[h!]
\centering
\includegraphics[scale=0.75]{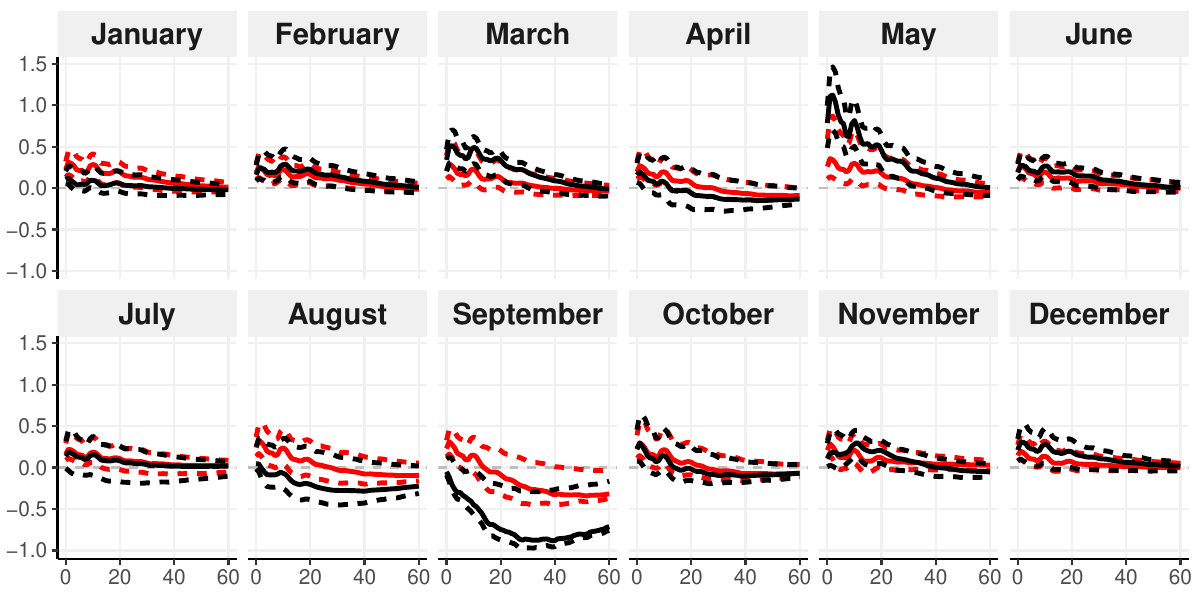}
\caption{Seasonal structural impulse responses of FFR after a negative supply shock. SPVAR impulse responses in red are identified by approximate in-average identification, while SPVAR impulse responses in black are identified by the full identification approach. The corresponding month indicates the time of occurrence of the shock.}
\label{Comp_IR9}
\end{figure}

\newpage

\section{Structural Impulse Responses Plots}\label{SIR_Plots}

In this section, we present the seasonal structural impulse responses generated by positive demand and negative supply shocks to IP, INF and FFR. The SPVAR impulse responses are identified by the full identification approach and are compared to SVAR impulse responses on seasonally adjusted data which are identified by imposing the same mixture of short- and long-run restrictions.

 \begin{figure}[h!]
\centering
\includegraphics[scale=0.75]{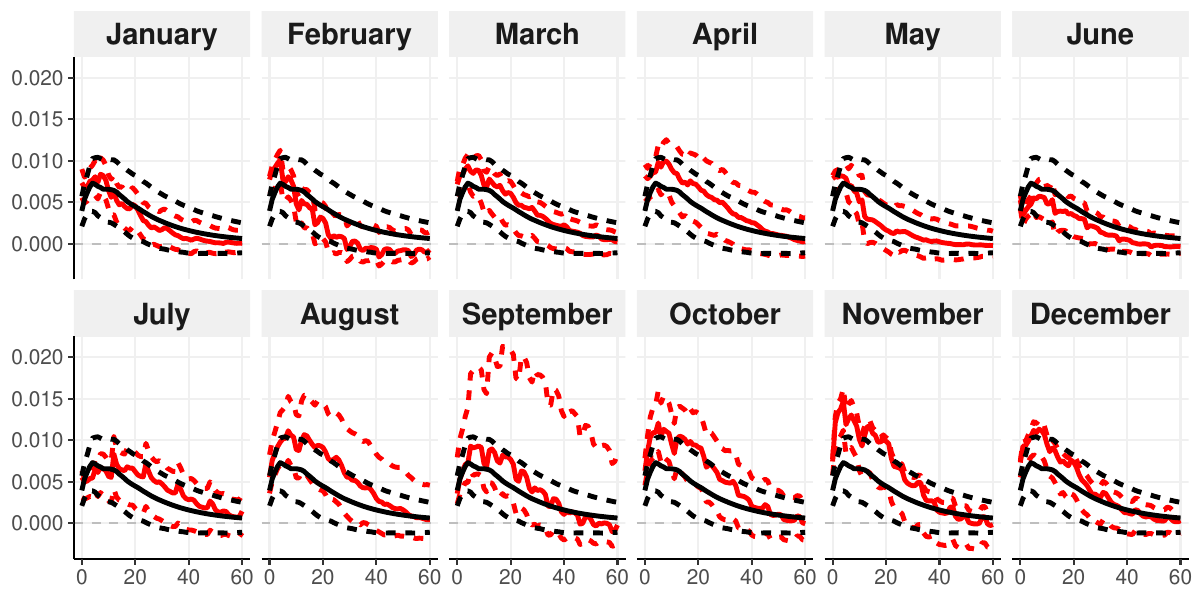}
\caption{Seasonal structural impulse responses of IP after a positive demand shock. SPVAR impulse responses in red are identified by the full identification approach, while SVAR impulse responses are in black and are constant across the seasons. The corresponding month indicates the time of occurrence of the shock.}
\label{NS_IR4}
\end{figure}
\newpage
 \begin{figure}[h!]
\centering
\includegraphics[scale=0.75]{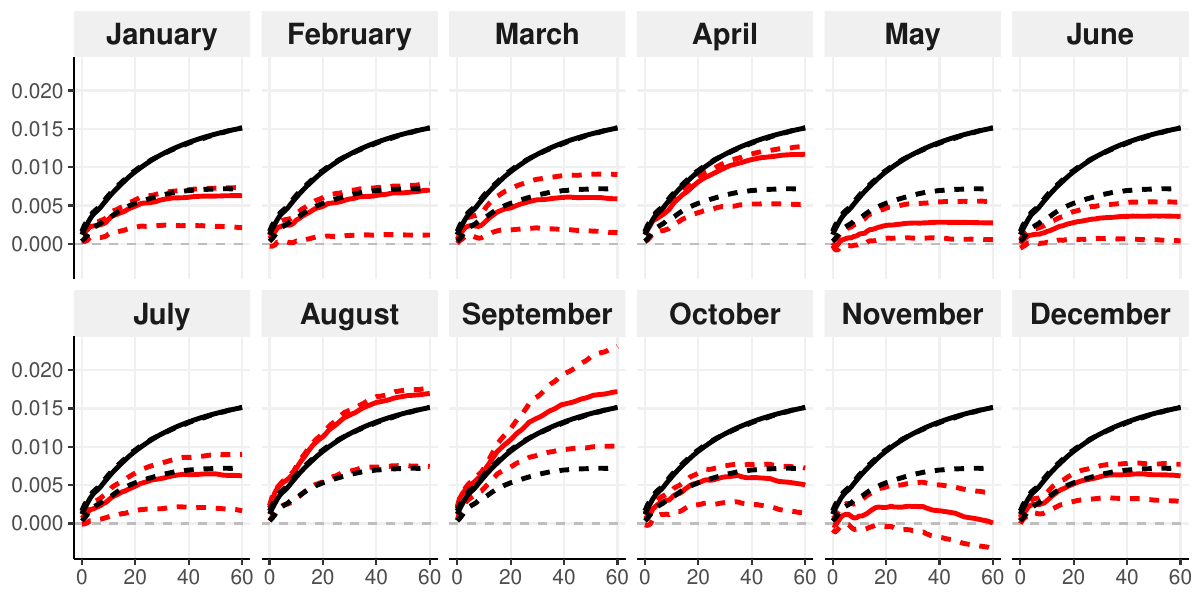}
\caption{Seasonal structural impulse responses of INF after a positive demand shock. SPVAR impulse responses in red are identified by the full identification approach, while SVAR impulse responses are in black and are constant across the seasons. The corresponding month indicates the time of occurrence of the shock.}
\label{NS_IR5}
\end{figure}
\newpage
 \begin{figure}[h!]
\centering
\includegraphics[scale=0.75]{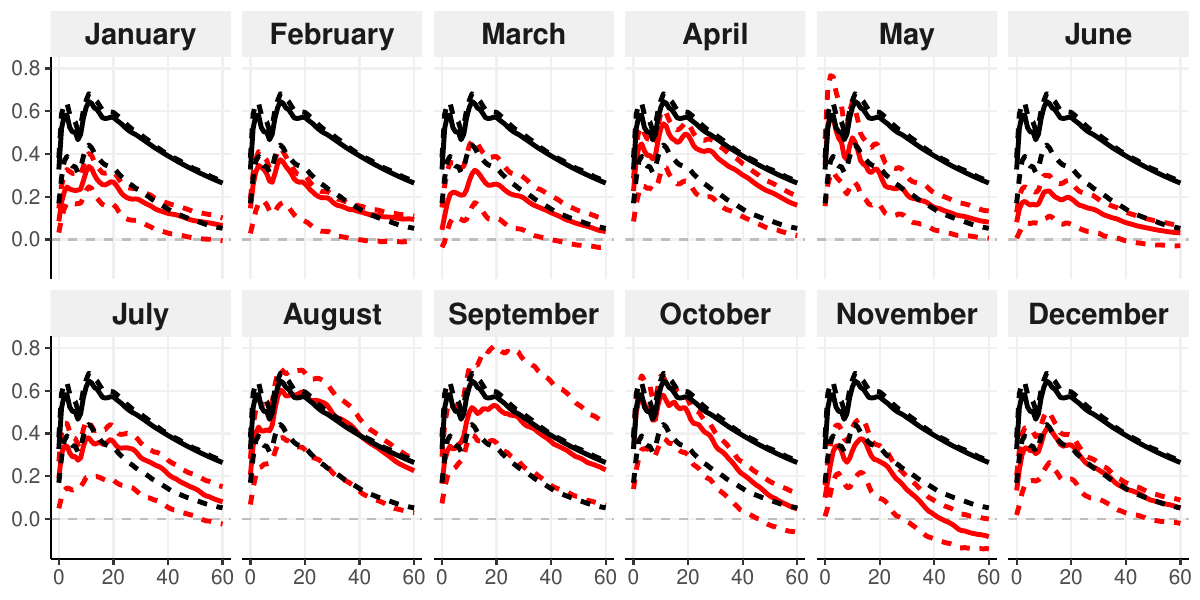}
\caption{Seasonal structural impulse responses of FFR after a positive demand shock. SPVAR impulse responses in red are identified by the full identification approach, while SVAR impulse responses are in black and are constant across the seasons. The corresponding month indicates the time of occurrence of the shock.}
\label{NS_IR6}
\end{figure}

 \begin{figure}[h!]
\centering
\includegraphics[scale=0.75]{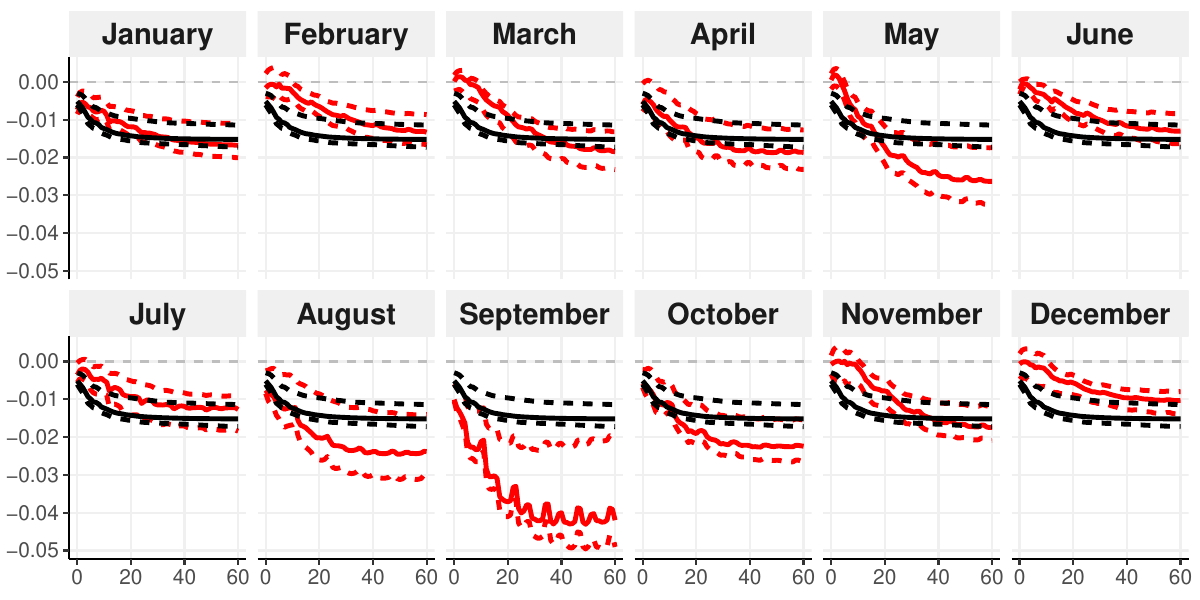}
\caption{Seasonal structural impulse responses of IP after a negative supply shock. SPVAR impulse responses in red are identified by the full identification approach, while SVAR impulse responses are in black and are constant across the seasons. The corresponding month indicates the time of occurrence of the shock.}
\label{NS_IR7}
\end{figure}
\newpage

\begin{figure}[h!]
\centering
\includegraphics[scale=0.75]{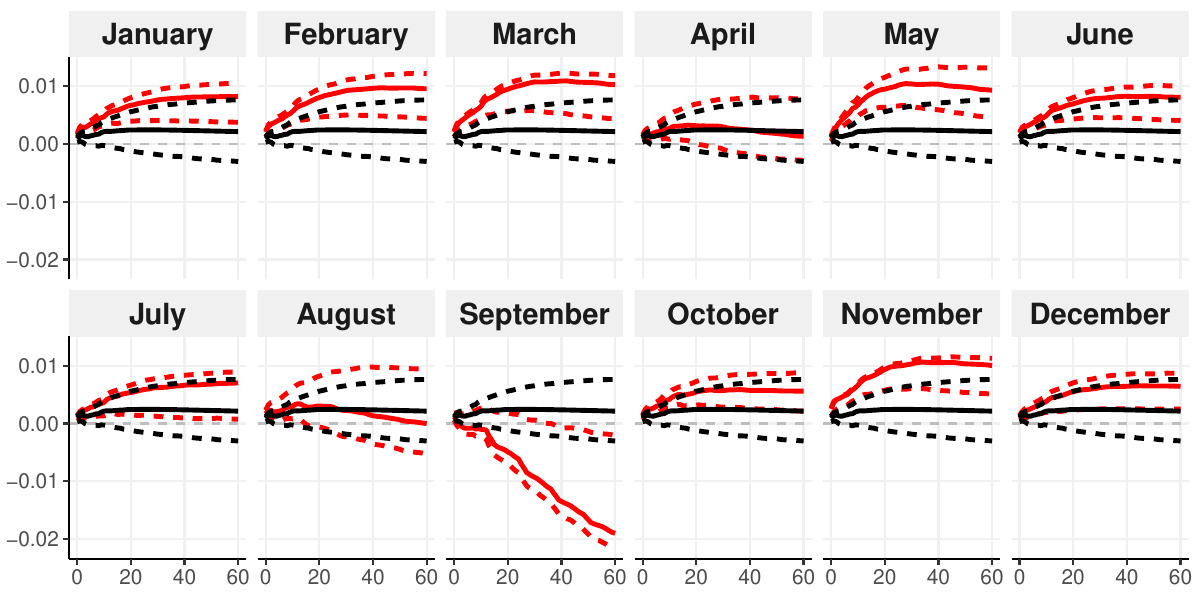}
\caption{Seasonal structural impulse responses of INF after a negative supply shock. SPVAR impulse responses in red are identified by the full identification approach, while SVAR impulse responses are in black and are constant across the seasons. The corresponding month indicates the time of occurrence of the shock.}
\label{NS_IR8}
\end{figure}

\begin{figure}[h!]
\centering
\includegraphics[scale=0.75]{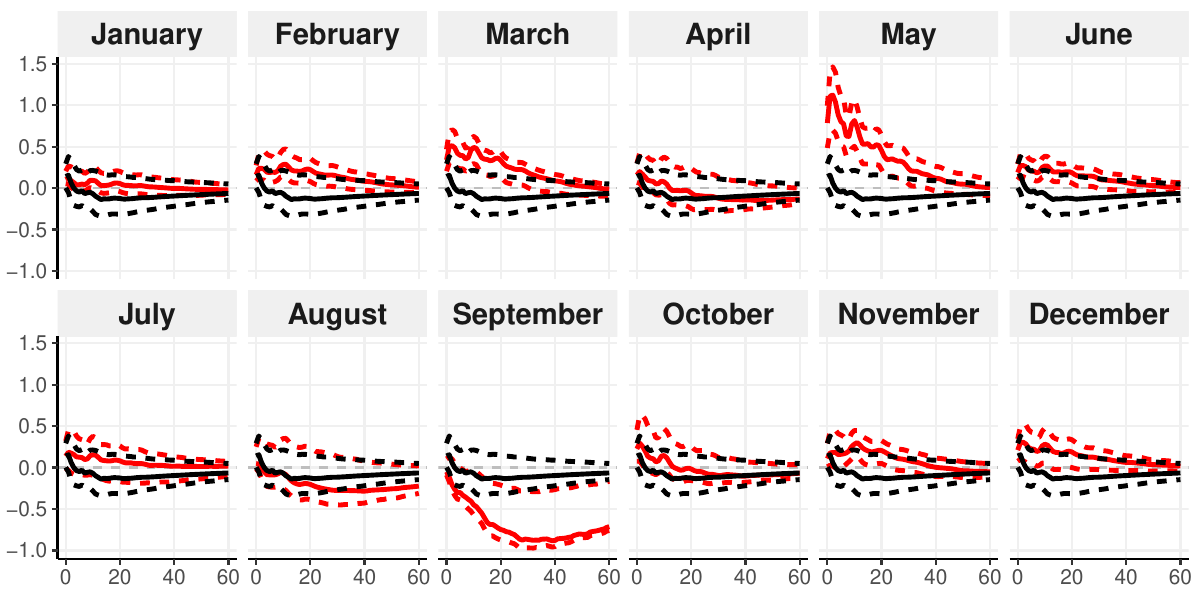}
\caption{Seasonal structural impulse responses of FFR after a negative supply shock. SPVAR impulse responses in red are identified by the full identification approach, while SVAR impulse responses are in black and are constant across the seasons. The corresponding month indicates the time of occurrence of the shock.
}
\label{NS_IR9}
\end{figure}

\newpage

\section{Coverage Rates}\label{Coverage_Tables}

In this section, simulated coverage rates are presented. The coverage rates for GARCH specifications G0 and G3 for lag $k=12$ and $N=20$ are shown in Table \ref{Cov_G0G3_N100}. Table \ref{Cov_b1swn_k0} - \ref{Cov_b7para1_k0} illustrate the coverage rates for lag $k=0$, while Table \ref{Cov_b1para1_k12} - \ref{Cov_b7para1_k12} present the coverage rates for GARCH specifications G1 and G2 for lag $k=12$. Due to a simplified presentation, we only provide coverage rates for lags $k \in\{0, 12\}$. For other lags, the coverage rate is not particularly different.

\begin{table}[h!]
\centering
\scriptsize
\begin{tabular}{l|l|l|c|ccccccccc}
 & $b$ &$ N$ & $s$ & $ ad \to y^{ip}$ & $ ad \to  \pi$ &  $ ad \to  i$ & $as \to y^{ip}$ & $ as\to \pi$ & $ as \to i$ & $mp \to y^{ip}$ & $ mp \to \pi$ &  $mp \to i$ \\   \hline
& & &  1 & 0.818 & 0.763 & 0.734 & 0.714 & 0.859 & 0.963 & 0.971 & 0.987 & 0.922 \\ 
&  & &    2 & 0.829 & 0.900 & 0.710 & 0.993 & 0.782 & 0.954 & 0.984 & 0.992 & 0.941 \\ 
  &  & &  3 & 0.734 & 0.962 & 0.888 & 0.980 & 0.681 & 0.810 & 0.944 & 0.991 & 0.895 \\ 
  &   & &   4 & 0.760 & 0.556 & 0.514 & 0.851 & 0.979 & 0.968 & 0.954 & 0.993 & 0.934 \\ 
   & & &    5 & 0.970 & 0.983 & 0.725 & 0.947 & 0.834 & 0.777 & 0.947 & 0.959 & 0.934 \\ 
 G0 & 1 & 20 &   6 & 0.864 & 0.993 & 0.938 & 0.970 & 0.759 & 0.949 & 0.960 & 0.987 & 0.926 \\ 
  &  & &   7 & 0.764 & 0.907 & 0.748 & 0.944 & 0.891 & 0.979 & 0.910 & 0.961 & 0.807 \\ 
  &  & &   8 & 0.802 & 0.385 & 0.613 & 0.660 & 0.986 & 0.931 & 0.947 & 0.968 & 0.844 \\ 
  &  & &   9 & 0.976 & 0.495 & 0.851 & 0.428 & 0.949 & 0.730 & 0.935 & 0.982 & 0.852 \\ 
  &  & &   10 & 0.843 & 0.867 & 0.574 & 0.748 & 0.928 & 0.988 & 0.939 & 0.985 & 0.829 \\ 
   & & &   11 & 0.563 & 0.996 & 0.858 & 0.989 & 0.587 & 0.961 & 0.943 & 0.965 & 0.760 \\ 
  &  & &   12 & 0.604 & 0.880 & 0.657 & 0.987 & 0.916 & 0.963 & 0.949 & 0.954 & 0.815 \\  
   \hline
  &  & &  1 & 0.826 & 0.762 & 0.739 & 0.719 & 0.856 & 0.965 & 0.974 & 0.988 & 0.922 \\ 
 &  & &    2 & 0.832 & 0.904 & 0.707 & 0.992 & 0.775 & 0.956 & 0.986 & 0.991 & 0.943 \\ 
  &  & &   3 & 0.727 & 0.962 & 0.892 & 0.980 & 0.684 & 0.814 & 0.943 & 0.992 & 0.897 \\ 
 &  & &    4 & 0.761 & 0.550 & 0.514 & 0.848 & 0.978 & 0.967 & 0.957 & 0.992 & 0.931 \\ 
   & & &   5 & 0.972 & 0.985 & 0.728 & 0.947 & 0.830 & 0.783 & 0.945 & 0.951 & 0.930 \\ 
  G0 & 3 & 20 &   6 & 0.852 & 0.990 & 0.941 & 0.972 & 0.753 & 0.949 & 0.955 & 0.989 & 0.929 \\ 
  &  & &   7 & 0.758 & 0.912 & 0.737 & 0.937 & 0.891 & 0.980 & 0.911 & 0.965 & 0.822 \\ 
  &  & &   8 & 0.793 & 0.375 & 0.606 & 0.646 & 0.985 & 0.925 & 0.943 & 0.965 & 0.851 \\ 
   & & &   9 & 0.975 & 0.505 & 0.853 & 0.425 & 0.949 & 0.734 & 0.938 & 0.979 & 0.854 \\ 
   &  & &  10 & 0.830 & 0.868 & 0.572 & 0.748 & 0.930 & 0.983 & 0.940 & 0.985 & 0.836 \\ 
   & & &   11 & 0.561 & 0.995 & 0.850 & 0.988 & 0.578 & 0.959 & 0.932 & 0.964 & 0.765 \\ 
  &  & &   12 & 0.610 & 0.883 & 0.655 & 0.985 & 0.903 & 0.965 & 0.952 & 0.957 & 0.817 \\ 
  \hline
  & & & 1 & 0.827 & 0.756 & 0.707 & 0.717 & 0.838 & 0.961 & 0.984 & 0.992 & 0.907 \\ 
  &  & &  2 & 0.839 & 0.895 & 0.669 & 0.988 & 0.725 & 0.942 & 0.981 & 0.992 & 0.937 \\ 
  &  & &  3 & 0.700 & 0.950 & 0.855 & 0.987 & 0.645 & 0.790 & 0.954 & 0.994 & 0.900 \\ 
   &  & & 4 & 0.751 & 0.531 & 0.486 & 0.830 & 0.974 & 0.973 & 0.954 & 0.990 & 0.936 \\ 
  &  & &  5 & 0.967 & 0.990 & 0.680 & 0.945 & 0.775 & 0.752 & 0.946 & 0.943 & 0.927 \\ 
 G3 & 1 & 20 &  6 & 0.844 & 0.991 & 0.928 & 0.965 & 0.711 & 0.940 & 0.955 & 0.992 & 0.930 \\ 
   & & &  7 & 0.734 & 0.891 & 0.706 & 0.921 & 0.875 & 0.973 & 0.921 & 0.969 & 0.830 \\ 
  &  & &  8 & 0.778 & 0.350 & 0.559 & 0.630 & 0.983 & 0.920 & 0.947 & 0.980 & 0.856 \\ 
  &  & &  9 & 0.976 & 0.469 & 0.821 & 0.377 & 0.952 & 0.723 & 0.935 & 0.986 & 0.849 \\ 
  &  & &  10 & 0.800 & 0.839 & 0.539 & 0.706 & 0.918 & 0.984 & 0.945 & 0.988 & 0.848 \\ 
  &  & &  11 & 0.514 & 0.992 & 0.812 & 0.992 & 0.561 & 0.945 & 0.942 & 0.972 & 0.759 \\ 
  &  & &  12 & 0.533 & 0.846 & 0.603 & 0.986 & 0.901 & 0.956 & 0.951 & 0.961 & 0.828 \\  
   \hline
  &  & & 1 & 0.824 & 0.758 & 0.696 & 0.712 & 0.846 & 0.960 & 0.985 & 0.990 & 0.914 \\ 
  &  & &  2 & 0.829 & 0.892 & 0.653 & 0.991 & 0.723 & 0.944 & 0.984 & 0.993 & 0.940 \\ 
  &  & &  3 & 0.700 & 0.952 & 0.847 & 0.986 & 0.646 & 0.794 & 0.953 & 0.992 & 0.898 \\ 
  &  & &  4 & 0.749 & 0.532 & 0.475 & 0.826 & 0.973 & 0.973 & 0.961 & 0.994 & 0.939 \\ 
  &   & & 5 & 0.955 & 0.988 & 0.685 & 0.940 & 0.776 & 0.750 & 0.945 & 0.942 & 0.931 \\ 
 G3 &  3 & 20 &  6 & 0.845 & 0.991 & 0.928 & 0.966 & 0.707 & 0.936 & 0.963 & 0.992 & 0.934 \\ 
  &  & &  7 & 0.733 & 0.890 & 0.699 & 0.921 & 0.875 & 0.970 & 0.925 & 0.966 & 0.845 \\ 
  &  & &  8 & 0.773 & 0.351 & 0.550 & 0.628 & 0.981 & 0.919 & 0.945 & 0.979 & 0.852 \\ 
   &  & & 9 & 0.974 & 0.469 & 0.820 & 0.369 & 0.954 & 0.719 & 0.946 & 0.987 & 0.853 \\ 
  &  & &  10 & 0.785 & 0.839 & 0.533 & 0.715 & 0.916 & 0.981 & 0.946 & 0.986 & 0.847 \\ 
  &  & &  11 & 0.498 & 0.992 & 0.815 & 0.993 & 0.566 & 0.942 & 0.934 & 0.972 & 0.759 \\ 
  &   & &  12 & 0.539 & 0.850 & 0.594 & 0.989 & 0.894 & 0.954 & 0.951 & 0.961 & 0.824 \\ 
\end{tabular}
\caption{Simulated coverage of 90\%-confidence intervals of seasonal structural impulse response of industrial production ($y^{ip}$), inflation $(\pi)$ and federal funds rate ($i$) $k=12$ months after a shock to aggregate demand $(ad)$, aggregate supply $(as)$ and monetary policy ($mp$). Structural schocks are generated using GARCH specification G0 and G3.}
\label{Cov_G0G3_N100}
\end{table}

\begin{table}[t!]
\centering
\scriptsize
\begin{tabular}{l|l|c|ccccccccc}
 $ b$ &$ N$ & $s$ & $ ad \to y^{ip}$ & $ ad \to  \pi$ &  $ ad \to  i$ & $as \to y^{ip}$ & $ as\to \pi$ & $ as \to i$ & $mp \to y^{ip}$ & $ mp \to \pi$ &  $mp \to i$ \\ 
  \hline
 &  &    1 & 0.847 & 0.746 & 0.996 & 0.623 & 0.914 & 0.982 & 1.000 & 0.992 & 0.962 \\ 
   &  &    2 & 0.393 & 0.976 & 0.980 & 0.993 & 0.692 & 0.986 & 1.000 & 0.993 & 0.986 \\ 
   &  &    3 & 0.477 & 0.990 & 0.998 & 0.983 & 0.710 & 0.973 & 1.000 & 0.984 & 0.915 \\ 
   &  &    4 & 0.641 & 0.821 & 0.925 & 0.835 & 0.958 & 0.977 & 1.000 & 0.991 & 0.962 \\ 
   &  &    5 & 0.446 & 0.995 & 0.919 & 0.986 & 0.952 & 0.909 & 1.000 & 0.800 & 0.952 \\ 
  1 & 20  &    6 & 0.849 & 0.999 & 1.000 & 0.996 & 0.661 & 0.994 & 1.000 & 0.990 & 0.967 \\ 
   &  &    7 & 0.785 & 0.979 & 0.992 & 0.974 & 0.824 & 0.989 & 1.000 & 0.994 & 0.835 \\ 
   &  &    8 & 0.970 & 0.584 & 0.980 & 0.567 & 0.917 & 0.970 & 1.000 & 0.985 & 0.841 \\ 
   &  &    9 & 0.986 & 0.588 & 0.998 & 0.303 & 0.961 & 0.943 & 1.000 & 0.988 & 0.843 \\ 
   &  &   10 & 0.852 & 0.961 & 0.876 & 0.794 & 0.894 & 0.987 & 1.000 & 0.978 & 0.852 \\ 
   &  &   11 & 0.391 & 0.993 & 0.996 & 0.988 & 0.478 & 0.977 & 1.000 & 0.985 & 0.708 \\ 
   &  &   12 & 0.400 & 0.995 & 0.986 & 0.990 & 0.890 & 0.986 & 1.000 & 0.976 & 0.919 \\  
   \hline
 &  &    1 & 0.847 & 0.759 & 0.997 & 0.620 & 0.914 & 0.977 & 1.000 & 0.994 & 0.961 \\ 
   &  &    2 & 0.386 & 0.974 & 0.978 & 0.991 & 0.695 & 0.987 & 1.000 & 0.987 & 0.982 \\ 
   &  &    3 & 0.477 & 0.989 & 0.999 & 0.984 & 0.709 & 0.971 & 1.000 & 0.988 & 0.911 \\ 
   &  &    4 & 0.638 & 0.812 & 0.929 & 0.832 & 0.964 & 0.975 & 1.000 & 0.991 & 0.961 \\ 
   &  &    5 & 0.443 & 0.996 & 0.912 & 0.988 & 0.952 & 0.913 & 1.000 & 0.791 & 0.954 \\ 
  3 & 20  &    6 & 0.853 & 0.999 & 1.000 & 0.994 & 0.654 & 0.995 & 1.000 & 0.990 & 0.971 \\ 
   &  &    7 & 0.788 & 0.977 & 0.992 & 0.976 & 0.824 & 0.987 & 1.000 & 0.992 & 0.834 \\ 
   &  &    8 & 0.969 & 0.583 & 0.982 & 0.551 & 0.914 & 0.969 & 1.000 & 0.989 & 0.849 \\ 
   &  &    9 & 0.984 & 0.585 & 0.999 & 0.305 & 0.961 & 0.943 & 1.000 & 0.988 & 0.836 \\ 
   &  &   10 & 0.861 & 0.966 & 0.883 & 0.793 & 0.887 & 0.988 & 1.000 & 0.982 & 0.854 \\ 
   &  &   11 & 0.376 & 0.992 & 0.996 & 0.987 & 0.484 & 0.978 & 1.000 & 0.984 & 0.713 \\ 
   &  &   12 & 0.394 & 0.994 & 0.984 & 0.989 & 0.889 & 0.987 & 1.000 & 0.977 & 0.916 \\ 
  \hline
 &  &    1 & 0.958 & 0.774 & 0.982 & 0.805 & 0.915 & 0.898 & 1.000 & 0.913 & 0.919 \\ 
   &  &    2 & 0.717 & 0.898 & 0.905 & 0.927 & 0.777 & 0.903 & 1.000 & 0.910 & 0.912 \\ 
   &  &    3 & 0.716 & 0.956 & 0.970 & 0.934 & 0.787 & 0.927 & 1.000 & 0.937 & 0.848 \\ 
   &  &    4 & 0.922 & 0.779 & 0.890 & 0.849 & 0.946 & 0.890 & 1.000 & 0.933 & 0.891 \\ 
   &  &    5 & 0.698 & 0.935 & 0.866 & 0.924 & 0.887 & 0.938 & 1.000 & 0.907 & 0.864 \\ 
  1  & 50 &    6 & 0.941 & 0.964 & 0.969 & 0.965 & 0.739 & 0.922 & 1.000 & 0.909 & 0.901 \\ 
   &  &    7 & 0.924 & 0.923 & 0.973 & 0.954 & 0.793 & 0.928 & 1.000 & 0.910 & 0.806 \\ 
   &  &    8 & 0.923 & 0.699 & 0.959 & 0.759 & 0.927 & 0.848 & 1.000 & 0.890 & 0.798 \\ 
   &  &    9 & 0.940 & 0.691 & 0.977 & 0.606 & 0.962 & 0.843 & 1.000 & 0.918 & 0.808 \\ 
   &  &   10 & 0.934 & 0.905 & 0.865 & 0.859 & 0.862 & 0.897 & 1.000 & 0.938 & 0.862 \\ 
   &  &   11 & 0.688 & 0.964 & 0.982 & 0.949 & 0.662 & 0.889 & 1.000 & 0.880 & 0.765 \\ 
   &  &   12 & 0.682 & 0.952 & 0.943 & 0.932 & 0.843 & 0.912 & 1.000 & 0.879 & 0.882 \\ 
   \hline
&  &    1 & 0.957 & 0.774 & 0.981 & 0.800 & 0.917 & 0.905 & 1.000 & 0.914 & 0.915 \\ 
   &  &    2 & 0.712 & 0.902 & 0.907 & 0.923 & 0.780 & 0.899 & 1.000 & 0.916 & 0.919 \\ 
   &  &    3 & 0.712 & 0.962 & 0.967 & 0.929 & 0.779 & 0.928 & 1.000 & 0.940 & 0.848 \\ 
   &  &    4 & 0.920 & 0.777 & 0.885 & 0.843 & 0.942 & 0.891 & 1.000 & 0.938 & 0.883 \\ 
   &  &    5 & 0.693 & 0.937 & 0.869 & 0.931 & 0.890 & 0.941 & 1.000 & 0.902 & 0.860 \\ 
 5  & 50  &    6 & 0.945 & 0.967 & 0.970 & 0.962 & 0.737 & 0.924 & 1.000 & 0.906 & 0.901 \\ 
   &  &    7 & 0.926 & 0.921 & 0.971 & 0.949 & 0.792 & 0.932 & 1.000 & 0.909 & 0.805 \\ 
   &  &    8 & 0.928 & 0.701 & 0.957 & 0.759 & 0.927 & 0.852 & 1.000 & 0.888 & 0.798 \\ 
   &  &    9 & 0.936 & 0.688 & 0.978 & 0.601 & 0.964 & 0.850 & 1.000 & 0.916 & 0.806 \\ 
   &  &   10 & 0.929 & 0.911 & 0.862 & 0.850 & 0.861 & 0.900 & 1.000 & 0.939 & 0.858 \\ 
   &  &   11 & 0.690 & 0.961 & 0.983 & 0.952 & 0.653 & 0.891 & 1.000 & 0.885 & 0.761 \\ 
   &  &   12 & 0.686 & 0.954 & 0.943 & 0.931 & 0.852 & 0.922 & 1.000 & 0.883 & 0.883 \\ 
\end{tabular}
\caption{Simulated coverage of 90\%-confidence intervals of contemporaneous ($k = 0$) seasonal structural impulse responses generated by aggregate demand $(ad)$, aggregate supply ($as$) and monetary policy $(mp)$ shocks on industrial production ($y^{ip}$), inflation $(\pi)$ and federal funds rate ($i$). Structural schocks are generated using GARCH specification G0.} 
\label{Cov_b1swn_k0}
\end{table}

\begin{table}[t!]
\centering
\scriptsize
\begin{tabular}{l|l|c|ccccccccc}
 $b$ &$ N$ & $s$ & $ ad \to y^{ip}$ & $ ad \to  \pi$ &  $ ad \to  i$ & $as \to y^{ip}$ & $ as\to \pi$ & $ as \to i$ & $mp \to y^{ip}$ & $ mp \to \pi$ &  $mp \to i$ \\ 
  \hline
  &  &    1 & 0.789 & 0.728 & 1.000 & 0.579 & 0.886 & 0.980 & 1.000 & 0.992 & 0.952 \\ 
   &  &    2 & 0.333 & 0.977 & 0.981 & 0.996 & 0.639 & 0.991 & 1.000 & 0.991 & 0.975 \\ 
   &  &    3 & 0.385 & 0.993 & 0.998 & 0.981 & 0.670 & 0.961 & 1.000 & 0.974 & 0.887 \\ 
   &  &    4 & 0.539 & 0.813 & 0.921 & 0.798 & 0.949 & 0.988 & 1.000 & 0.986 & 0.944 \\ 
   &  &    5 & 0.336 & 0.997 & 0.898 & 0.986 & 0.946 & 0.873 & 1.000 & 0.714 & 0.944 \\ 
  1 & 20 &    6 & 0.753 & 0.997 & 0.999 & 0.995 & 0.618 & 0.990 & 1.000 & 0.987 & 0.965 \\ 
   &  &    7 & 0.694 & 0.972 & 0.988 & 0.974 & 0.811 & 0.985 & 1.000 & 0.991 & 0.821 \\ 
   &  &    8 & 0.944 & 0.542 & 0.979 & 0.515 & 0.893 & 0.977 & 1.000 & 0.984 & 0.820 \\ 
   &  &    9 & 0.982 & 0.540 & 0.997 & 0.267 & 0.951 & 0.954 & 1.000 & 0.990 & 0.792 \\ 
   &  &   10 & 0.786 & 0.967 & 0.878 & 0.754 & 0.890 & 0.986 & 1.000 & 0.966 & 0.854 \\ 
   &  &   11 & 0.316 & 0.995 & 0.999 & 0.990 & 0.448 & 0.981 & 1.000 & 0.984 & 0.634 \\ 
   &  &   12 & 0.300 & 0.991 & 0.982 & 0.992 & 0.865 & 0.989 & 1.000 & 0.983 & 0.891 \\ 
   \hline
  &  &    1 & 0.788 & 0.737 & 1.000 & 0.577 & 0.893 & 0.984 & 1.000 & 0.992 & 0.953 \\ 
   &  &    2 & 0.322 & 0.972 & 0.980 & 0.994 & 0.631 & 0.991 & 1.000 & 0.993 & 0.975 \\ 
   &  &    3 & 0.381 & 0.995 & 0.998 & 0.982 & 0.657 & 0.960 & 1.000 & 0.974 & 0.887 \\ 
   &  &    4 & 0.539 & 0.816 & 0.917 & 0.802 & 0.952 & 0.988 & 1.000 & 0.987 & 0.949 \\ 
   &  &    5 & 0.334 & 0.996 & 0.895 & 0.986 & 0.949 & 0.873 & 1.000 & 0.711 & 0.950 \\ 
  3 & 20 &    6 & 0.750 & 0.998 & 0.999 & 0.996 & 0.607 & 0.992 & 1.000 & 0.991 & 0.963 \\ 
   &  &    7 & 0.695 & 0.970 & 0.992 & 0.978 & 0.811 & 0.989 & 1.000 & 0.988 & 0.827 \\ 
   &  &    8 & 0.944 & 0.530 & 0.979 & 0.510 & 0.889 & 0.976 & 1.000 & 0.987 & 0.809 \\ 
   &  &    9 & 0.985 & 0.539 & 0.998 & 0.262 & 0.956 & 0.952 & 1.000 & 0.992 & 0.791 \\ 
   &  &   10 & 0.791 & 0.965 & 0.883 & 0.749 & 0.889 & 0.989 & 1.000 & 0.968 & 0.851 \\ 
   &  &   11 & 0.320 & 0.996 & 0.998 & 0.991 & 0.444 & 0.980 & 1.000 & 0.986 & 0.639 \\ 
   &  &   12 & 0.294 & 0.993 & 0.980 & 0.989 & 0.853 & 0.990 & 1.000 & 0.979 & 0.894 \\ 
    \hline
 &  &    1 & 0.932 & 0.739 & 0.977 & 0.768 & 0.883 & 0.906 & 1.000 & 0.910 & 0.877 \\ 
   &  &    2 & 0.646 & 0.896 & 0.900 & 0.938 & 0.752 & 0.922 & 1.000 & 0.916 & 0.897 \\ 
   &  &    3 & 0.627 & 0.960 & 0.970 & 0.936 & 0.729 & 0.908 & 1.000 & 0.918 & 0.820 \\ 
   &  &    4 & 0.866 & 0.751 & 0.871 & 0.833 & 0.927 & 0.876 & 1.000 & 0.930 & 0.863 \\ 
   &  &    5 & 0.595 & 0.948 & 0.863 & 0.934 & 0.907 & 0.896 & 1.000 & 0.831 & 0.841 \\ 
  1 & 50 &    6 & 0.866 & 0.966 & 0.966 & 0.958 & 0.676 & 0.929 & 1.000 & 0.917 & 0.861 \\ 
   &  &    7 & 0.875 & 0.919 & 0.970 & 0.942 & 0.771 & 0.934 & 1.000 & 0.923 & 0.778 \\ 
   &  &    8 & 0.926 & 0.679 & 0.960 & 0.724 & 0.917 & 0.865 & 1.000 & 0.906 & 0.755 \\ 
   &  &    9 & 0.933 & 0.659 & 0.985 & 0.563 & 0.954 & 0.860 & 1.000 & 0.919 & 0.756 \\ 
   &  &   10 & 0.907 & 0.910 & 0.827 & 0.835 & 0.842 & 0.893 & 1.000 & 0.936 & 0.824 \\ 
   &  &   11 & 0.570 & 0.962 & 0.974 & 0.945 & 0.594 & 0.889 & 1.000 & 0.894 & 0.694 \\ 
   &  &   12 & 0.574 & 0.937 & 0.947 & 0.935 & 0.811 & 0.918 & 1.000 & 0.890 & 0.837 \\  
   \hline
   &  &    1 & 0.933 & 0.734 & 0.977 & 0.774 & 0.889 & 0.909 & 1.000 & 0.911 & 0.883 \\ 
   &  &    2 & 0.630 & 0.911 & 0.903 & 0.939 & 0.751 & 0.928 & 1.000 & 0.915 & 0.897 \\ 
   &  &    3 & 0.625 & 0.962 & 0.979 & 0.943 & 0.732 & 0.912 & 1.000 & 0.923 & 0.819 \\ 
   &  &    4 & 0.863 & 0.761 & 0.876 & 0.838 & 0.927 & 0.880 & 1.000 & 0.930 & 0.864 \\ 
   &  &    5 & 0.588 & 0.953 & 0.865 & 0.939 & 0.906 & 0.895 & 1.000 & 0.830 & 0.839 \\ 
  5 & 50 &    6 & 0.865 & 0.974 & 0.979 & 0.963 & 0.665 & 0.930 & 1.000 & 0.913 & 0.862 \\ 
   &  &    7 & 0.869 & 0.919 & 0.968 & 0.949 & 0.776 & 0.942 & 1.000 & 0.927 & 0.782 \\ 
   &  &    8 & 0.922 & 0.685 & 0.961 & 0.738 & 0.919 & 0.878 & 1.000 & 0.916 & 0.753 \\ 
   &  &    9 & 0.939 & 0.653 & 0.983 & 0.558 & 0.959 & 0.863 & 1.000 & 0.928 & 0.746 \\ 
   &  &   10 & 0.916 & 0.918 & 0.836 & 0.839 & 0.841 & 0.900 & 1.000 & 0.940 & 0.826 \\ 
   &  &   11 & 0.563 & 0.962 & 0.973 & 0.954 & 0.577 & 0.890 & 1.000 & 0.891 & 0.683 \\ 
   &  &   12 & 0.569 & 0.942 & 0.947 & 0.934 & 0.814 & 0.928 & 1.000 & 0.889 & 0.833 \\ 
\end{tabular}
\caption{Simulated coverage of 90\%-confidence intervals of contemporaneous ($k = 0$) seasonal structural impulse responses generated by aggregate demand $(ad)$, aggregate supply ($as$) and monetary policy $(mp)$ shocks on industrial production ($y^{ip}$), inflation $(\pi)$ and federal funds rate ($i$). Structural schocks are generated using GARCH specification G3.} 
\label{Cov_b1para3_k0}
\end{table}

\begin{table}[h!]
\centering
\scriptsize
\begin{tabular}{l|l|l|c|ccccccccc}
 & $b$ &$ N$ & $s$ & $ ad \to y^{ip}$ & $ ad \to  \pi$ &  $ ad \to  i$ & $as \to y^{ip}$ & $ as\to \pi$ & $ as \to i$ & $mp \to y^{ip}$ & $ mp \to \pi$ &  $mp \to i$ \\ 
    \hline
 &  &  &    1 & 0.927 & 0.823 & 0.943 & 0.854 & 0.913 & 0.880 & 1.000 & 0.895 & 0.896 \\ 
   &  &  &    2 & 0.786 & 0.867 & 0.903 & 0.877 & 0.850 & 0.887 & 1.000 & 0.883 & 0.897 \\ 
   &  &  &    3 & 0.800 & 0.929 & 0.944 & 0.931 & 0.838 & 0.923 & 1.000 & 0.900 & 0.851 \\ 
   &  &  &    4 & 0.937 & 0.821 & 0.890 & 0.876 & 0.937 & 0.883 & 1.000 & 0.903 & 0.869 \\ 
   &  &  &    5 & 0.782 & 0.913 & 0.878 & 0.916 & 0.903 & 0.901 & 1.000 & 0.912 & 0.861 \\ 
 G0  & 1  & 100 &    6 & 0.944 & 0.933 & 0.925 & 0.918 & 0.823 & 0.901 & 1.000 & 0.890 & 0.883 \\ 
   &  &  &    7 & 0.936 & 0.913 & 0.956 & 0.940 & 0.816 & 0.914 & 1.000 & 0.900 & 0.846 \\ 
   &  &  &    8 & 0.903 & 0.796 & 0.958 & 0.833 & 0.938 & 0.866 & 1.000 & 0.881 & 0.823 \\ 
   &  &  &    9 & 0.887 & 0.760 & 0.945 & 0.750 & 0.931 & 0.851 & 1.000 & 0.890 & 0.834 \\ 
   &  &  &   10 & 0.928 & 0.896 & 0.893 & 0.877 & 0.870 & 0.891 & 1.000 & 0.908 & 0.873 \\ 
   &  &  &   11 & 0.765 & 0.931 & 0.949 & 0.926 & 0.749 & 0.884 & 1.000 & 0.877 & 0.804 \\ 
   &  &  &   12 & 0.773 & 0.910 & 0.927 & 0.910 & 0.871 & 0.912 & 1.000 & 0.881 & 0.873 \\ 
   \hline
  &  &  &    1 & 0.912 & 0.824 & 0.943 & 0.841 & 0.916 & 0.883 & 1.000 & 0.904 & 0.897 \\ 
   &  &  &    2 & 0.788 & 0.869 & 0.905 & 0.871 & 0.843 & 0.888 & 1.000 & 0.887 & 0.896 \\ 
   &  &  &    3 & 0.802 & 0.928 & 0.945 & 0.931 & 0.840 & 0.921 & 1.000 & 0.906 & 0.862 \\ 
   &  &  &    4 & 0.943 & 0.819 & 0.892 & 0.879 & 0.938 & 0.886 & 1.000 & 0.905 & 0.862 \\ 
   &  &  &    5 & 0.782 & 0.915 & 0.883 & 0.911 & 0.907 & 0.907 & 1.000 & 0.914 & 0.859 \\ 
   G0 & 7 & 100  &    6 & 0.942 & 0.934 & 0.927 & 0.919 & 0.814 & 0.899 & 1.000 & 0.900 & 0.885 \\ 
   &  &  &    7 & 0.934 & 0.908 & 0.958 & 0.936 & 0.812 & 0.915 & 1.000 & 0.905 & 0.845 \\ 
   &  &  &    8 & 0.909 & 0.797 & 0.955 & 0.837 & 0.941 & 0.865 & 1.000 & 0.889 & 0.819 \\ 
   &  &  &    9 & 0.885 & 0.769 & 0.947 & 0.740 & 0.924 & 0.851 & 1.000 & 0.897 & 0.849 \\ 
   &  &  &   10 & 0.923 & 0.889 & 0.896 & 0.883 & 0.873 & 0.891 & 1.000 & 0.912 & 0.877 \\ 
   &  &  &   11 & 0.767 & 0.934 & 0.948 & 0.923 & 0.747 & 0.889 & 1.000 & 0.880 & 0.804 \\ 
   &  &  &   12 & 0.773 & 0.903 & 0.931 & 0.915 & 0.866 & 0.911 & 1.000 & 0.886 & 0.884 \\ 
    \hline
 &  &  &    1 & 0.898 & 0.796 & 0.938 & 0.834 & 0.885 & 0.888 & 1.000 & 0.898 & 0.879 \\ 
   &  &  &    2 & 0.731 & 0.866 & 0.879 & 0.886 & 0.825 & 0.889 & 1.000 & 0.909 & 0.875 \\ 
   &  &  &    3 & 0.719 & 0.923 & 0.926 & 0.937 & 0.812 & 0.902 & 1.000 & 0.895 & 0.848 \\ 
   &  &  &    4 & 0.927 & 0.805 & 0.873 & 0.871 & 0.928 & 0.889 & 1.000 & 0.903 & 0.853 \\ 
   &  &  &    5 & 0.716 & 0.928 & 0.864 & 0.926 & 0.923 & 0.887 & 1.000 & 0.833 & 0.851 \\ 
  G3 & 1 & 100 &    6 & 0.884 & 0.925 & 0.928 & 0.923 & 0.780 & 0.908 & 1.000 & 0.888 & 0.863 \\ 
   &  &  &    7 & 0.906 & 0.913 & 0.943 & 0.935 & 0.780 & 0.933 & 1.000 & 0.900 & 0.820 \\ 
   &  &  &    8 & 0.888 & 0.772 & 0.943 & 0.806 & 0.926 & 0.872 & 1.000 & 0.889 & 0.811 \\ 
   &  &  &    9 & 0.880 & 0.721 & 0.946 & 0.694 & 0.930 & 0.871 & 1.000 & 0.903 & 0.823 \\ 
   &  &  &   10 & 0.886 & 0.897 & 0.853 & 0.874 & 0.864 & 0.888 & 1.000 & 0.903 & 0.864 \\ 
   &  &  &   11 & 0.695 & 0.913 & 0.946 & 0.922 & 0.708 & 0.904 & 1.000 & 0.893 & 0.747 \\ 
   &  &  &   12 & 0.696 & 0.909 & 0.932 & 0.911 & 0.858 & 0.895 & 1.000 & 0.882 & 0.851 \\ 
   \hline
  &  &  &    1 & 0.902 & 0.795 & 0.944 & 0.843 & 0.889 & 0.892 & 1.000 & 0.905 & 0.881 \\ 
   &  &  &    2 & 0.726 & 0.874 & 0.878 & 0.894 & 0.820 & 0.895 & 1.000 & 0.913 & 0.889 \\ 
   &  &  &    3 & 0.711 & 0.930 & 0.938 & 0.944 & 0.811 & 0.910 & 1.000 & 0.906 & 0.846 \\ 
   &  &  &    4 & 0.936 & 0.809 & 0.878 & 0.875 & 0.933 & 0.889 & 1.000 & 0.916 & 0.853 \\ 
   &  &  &    5 & 0.701 & 0.930 & 0.867 & 0.926 & 0.920 & 0.888 & 1.000 & 0.836 & 0.853 \\ 
  G3  & 7 & 100 &    6 & 0.887 & 0.937 & 0.936 & 0.933 & 0.775 & 0.915 & 1.000 & 0.897 & 0.871 \\ 
   &  &  &    7 & 0.908 & 0.927 & 0.945 & 0.940 & 0.771 & 0.933 & 1.000 & 0.904 & 0.834 \\ 
   &  &  &    8 & 0.902 & 0.779 & 0.939 & 0.804 & 0.935 & 0.875 & 1.000 & 0.896 & 0.814 \\ 
   &  &  &    9 & 0.885 & 0.723 & 0.948 & 0.684 & 0.926 & 0.872 & 1.000 & 0.909 & 0.811 \\ 
   &  &  &   10 & 0.899 & 0.894 & 0.854 & 0.866 & 0.861 & 0.890 & 1.000 & 0.907 & 0.867 \\ 
   &  &  &   11 & 0.675 & 0.923 & 0.955 & 0.926 & 0.706 & 0.908 & 1.000 & 0.893 & 0.751 \\ 
   &  &  &   12 & 0.683 & 0.917 & 0.938 & 0.914 & 0.860 & 0.902 & 1.000 & 0.885 & 0.866 \\ 
\end{tabular}
\caption{Simulated coverage of 90\%-confidence intervals of contemporaneous ($k = 0$) seasonal structural impulse responses generated by aggregate demand $(ad)$, aggregate supply ($as$) and monetary policy $(mp)$ shocks on industrial production ($y^{ip}$), inflation $(\pi)$ and federal funds rate ($i$). Structural schocks are generated using GARCH specification G0 and G3.}
\label{Cov_G0G3_N100_k0}
\end{table}

\begin{table}[t!]
\centering
\scriptsize
\begin{tabular}{l|l|c|ccccccccc}
$b$&$ N$ & $s$ & $ ad \to y^{ip}$ & $ ad \to  \pi$ &  $ ad \to  i$ & $as \to y^{ip}$ & $ as\to \pi$ & $ as \to i$ & $mp \to y^{ip}$ & $ mp \to \pi$ &  $mp \to i$ \\ 
  \hline
&  &    1 & 0.826 & 0.746 & 0.997 & 0.607 & 0.901 & 0.978 & 1.000 & 0.988 & 0.947 \\ 
   &  &    2 & 0.374 & 0.975 & 0.985 & 0.994 & 0.677 & 0.992 & 1.000 & 0.992 & 0.972 \\ 
   &  &    3 & 0.442 & 0.986 & 0.999 & 0.986 & 0.701 & 0.971 & 1.000 & 0.984 & 0.904 \\ 
   &  &    4 & 0.598 & 0.820 & 0.919 & 0.821 & 0.953 & 0.977 & 1.000 & 0.987 & 0.950 \\ 
   &  &    5 & 0.402 & 0.995 & 0.909 & 0.987 & 0.948 & 0.885 & 1.000 & 0.766 & 0.947 \\ 
 1  & 20 &    6 & 0.817 & 0.999 & 0.999 & 0.996 & 0.653 & 0.993 & 1.000 & 0.986 & 0.963 \\ 
   &  &    7 & 0.741 & 0.972 & 0.994 & 0.973 & 0.812 & 0.985 & 1.000 & 0.992 & 0.833 \\ 
   &  &    8 & 0.958 & 0.567 & 0.982 & 0.542 & 0.904 & 0.967 & 1.000 & 0.987 & 0.828 \\ 
   &  &    9 & 0.986 & 0.592 & 0.998 & 0.292 & 0.962 & 0.949 & 1.000 & 0.986 & 0.823 \\ 
   &  &   10 & 0.835 & 0.960 & 0.877 & 0.780 & 0.880 & 0.990 & 1.000 & 0.975 & 0.850 \\ 
   &  &   11 & 0.375 & 0.993 & 0.997 & 0.991 & 0.479 & 0.976 & 1.000 & 0.983 & 0.690 \\ 
   &  &   12 & 0.375 & 0.991 & 0.984 & 0.989 & 0.880 & 0.985 & 1.000 & 0.979 & 0.911 \\  
   \hline
  &  &    1 & 0.826 & 0.749 & 0.995 & 0.605 & 0.901 & 0.978 & 1.000 & 0.990 & 0.953 \\ 
   &  &    2 & 0.365 & 0.974 & 0.981 & 0.994 & 0.681 & 0.989 & 1.000 & 0.992 & 0.970 \\ 
   &  &    3 & 0.443 & 0.990 & 0.999 & 0.984 & 0.703 & 0.971 & 1.000 & 0.986 & 0.899 \\ 
   &  &    4 & 0.607 & 0.820 & 0.926 & 0.822 & 0.953 & 0.974 & 1.000 & 0.986 & 0.955 \\ 
   &  &    5 & 0.398 & 0.995 & 0.909 & 0.988 & 0.948 & 0.888 & 1.000 & 0.766 & 0.953 \\ 
  3 & 20 &    6 & 0.819 & 1.000 & 1.000 & 0.996 & 0.647 & 0.993 & 1.000 & 0.986 & 0.957 \\ 
   &  &    7 & 0.747 & 0.972 & 0.994 & 0.973 & 0.816 & 0.987 & 1.000 & 0.995 & 0.832 \\ 
   &  &    8 & 0.961 & 0.565 & 0.985 & 0.544 & 0.904 & 0.970 & 1.000 & 0.989 & 0.829 \\ 
   &  &    9 & 0.986 & 0.587 & 0.998 & 0.288 & 0.961 & 0.954 & 1.000 & 0.988 & 0.822 \\ 
   &  &   10 & 0.840 & 0.964 & 0.878 & 0.785 & 0.887 & 0.988 & 1.000 & 0.974 & 0.847 \\ 
   &  &   11 & 0.366 & 0.994 & 0.997 & 0.988 & 0.473 & 0.978 & 1.000 & 0.985 & 0.675 \\ 
   &  &   12 & 0.368 & 0.994 & 0.981 & 0.990 & 0.873 & 0.984 & 1.000 & 0.980 & 0.912 \\  
    \hline
  &  &    1 & 0.941 & 0.762 & 0.983 & 0.791 & 0.912 & 0.907 & 1.000 & 0.917 & 0.898 \\ 
   &  &    2 & 0.675 & 0.900 & 0.909 & 0.922 & 0.771 & 0.900 & 1.000 & 0.911 & 0.908 \\ 
   &  &    3 & 0.681 & 0.957 & 0.970 & 0.938 & 0.772 & 0.923 & 1.000 & 0.941 & 0.834 \\ 
   &  &    4 & 0.907 & 0.762 & 0.880 & 0.836 & 0.933 & 0.888 & 1.000 & 0.931 & 0.877 \\ 
   &  &    5 & 0.669 & 0.935 & 0.853 & 0.928 & 0.885 & 0.935 & 1.000 & 0.888 & 0.864 \\ 
  1 & 50 &    6 & 0.918 & 0.962 & 0.974 & 0.963 & 0.722 & 0.926 & 1.000 & 0.907 & 0.883 \\ 
   &  &    7 & 0.915 & 0.915 & 0.969 & 0.947 & 0.780 & 0.928 & 1.000 & 0.910 & 0.800 \\ 
   &  &    8 & 0.915 & 0.680 & 0.955 & 0.745 & 0.922 & 0.849 & 1.000 & 0.893 & 0.781 \\ 
   &  &    9 & 0.936 & 0.686 & 0.979 & 0.590 & 0.957 & 0.846 & 1.000 & 0.916 & 0.789 \\ 
   &  &   10 & 0.920 & 0.906 & 0.851 & 0.846 & 0.859 & 0.887 & 1.000 & 0.936 & 0.854 \\ 
   &  &   11 & 0.660 & 0.959 & 0.980 & 0.943 & 0.627 & 0.882 & 1.000 & 0.881 & 0.752 \\ 
   &  &   12 & 0.647 & 0.950 & 0.932 & 0.928 & 0.830 & 0.911 & 1.000 & 0.884 & 0.868 \\   
   \hline
 &  &    1 & 0.943 & 0.753 & 0.978 & 0.788 & 0.916 & 0.902 & 1.000 & 0.921 & 0.906 \\ 
   &  &    2 & 0.667 & 0.906 & 0.905 & 0.932 & 0.771 & 0.900 & 1.000 & 0.910 & 0.907 \\ 
   &  &    3 & 0.678 & 0.960 & 0.972 & 0.932 & 0.774 & 0.925 & 1.000 & 0.934 & 0.832 \\ 
   &  &    4 & 0.903 & 0.768 & 0.872 & 0.842 & 0.930 & 0.886 & 1.000 & 0.929 & 0.881 \\ 
   &  &    5 & 0.665 & 0.937 & 0.852 & 0.932 & 0.884 & 0.932 & 1.000 & 0.886 & 0.858 \\ 
 5  & 50 &    6 & 0.917 & 0.964 & 0.971 & 0.960 & 0.719 & 0.922 & 1.000 & 0.907 & 0.890 \\ 
   &  &    7 & 0.917 & 0.916 & 0.968 & 0.946 & 0.789 & 0.931 & 1.000 & 0.916 & 0.795 \\ 
   &  &    8 & 0.919 & 0.688 & 0.953 & 0.741 & 0.920 & 0.843 & 1.000 & 0.890 & 0.783 \\ 
   &  &    9 & 0.937 & 0.687 & 0.981 & 0.584 & 0.960 & 0.843 & 1.000 & 0.910 & 0.785 \\ 
   &  &   10 & 0.928 & 0.908 & 0.854 & 0.845 & 0.862 & 0.897 & 1.000 & 0.933 & 0.856 \\ 
   &  &   11 & 0.662 & 0.957 & 0.981 & 0.948 & 0.622 & 0.889 & 1.000 & 0.880 & 0.737 \\ 
   &  &   12 & 0.643 & 0.950 & 0.939 & 0.926 & 0.828 & 0.913 & 1.000 & 0.885 & 0.862 \\ 
\end{tabular}
\caption{Simulated coverage of 90\%-confidence intervals of contemporaneous ($k = 0$) seasonal structural impulse responses generated by aggregate demand $(ad)$, aggregate supply ($as$) and monetary policy $(mp)$ shocks on industrial production ($y^{ip}$), inflation $(\pi)$ and federal funds rate ($i$). Structural schocks are generated using GARCH specification G1.} 
\label{Cov_b1para1_k0}
\end{table}

\begin{table}[t!]
\centering
\scriptsize
\begin{tabular}{l|l|c|ccccccccc}
$b$&$ N$ & $s$ & $ ad \to y^{ip}$ & $ ad \to  \pi$ &  $ ad \to  i$ & $as \to y^{ip}$ & $ as\to \pi$ & $ as \to i$ & $mp \to y^{ip}$ & $ mp \to \pi$ &  $mp \to i$ \\ 
  \hline
   &  &    1 & 0.699 & 0.687 & 0.996 & 0.506 & 0.834 & 0.982 & 1.000 & 0.997 & 0.924 \\ 
   &  &    2 & 0.283 & 0.973 & 0.978 & 0.995 & 0.582 & 0.995 & 1.000 & 0.993 & 0.951 \\ 
   &  &    3 & 0.314 & 0.992 & 1.000 & 0.981 & 0.600 & 0.934 & 1.000 & 0.950 & 0.858 \\ 
   &  &    4 & 0.456 & 0.777 & 0.900 & 0.746 & 0.932 & 0.979 & 1.000 & 0.968 & 0.930 \\ 
   &  &    5 & 0.291 & 0.995 & 0.886 & 0.982 & 0.942 & 0.830 & 1.000 & 0.633 & 0.941 \\ 
  1 & 20 &    6 & 0.646 & 1.000 & 0.999 & 0.990 & 0.553 & 0.989 & 1.000 & 0.988 & 0.941 \\ 
   &  &    7 & 0.561 & 0.966 & 0.990 & 0.950 & 0.766 & 0.983 & 1.000 & 0.989 & 0.774 \\ 
   &  &    8 & 0.903 & 0.494 & 0.978 & 0.447 & 0.864 & 0.981 & 1.000 & 0.987 & 0.760 \\ 
   &  &    9 & 0.979 & 0.491 & 0.999 & 0.221 & 0.944 & 0.951 & 1.000 & 0.981 & 0.727 \\ 
   &  &   10 & 0.682 & 0.956 & 0.821 & 0.673 & 0.850 & 0.988 & 1.000 & 0.930 & 0.807 \\ 
   &  &   11 & 0.275 & 0.994 & 0.996 & 0.988 & 0.382 & 0.983 & 1.000 & 0.984 & 0.564 \\ 
   &  &   12 & 0.261 & 0.988 & 0.973 & 0.986 & 0.823 & 0.984 & 1.000 & 0.990 & 0.859 \\ 
   \hline
  &  &    1 & 0.698 & 0.685 & 0.997 & 0.515 & 0.838 & 0.980 & 1.000 & 0.997 & 0.916 \\ 
   &  &    2 & 0.284 & 0.973 & 0.983 & 0.992 & 0.579 & 0.994 & 1.000 & 0.994 & 0.954 \\ 
   &  &    3 & 0.317 & 0.993 & 1.000 & 0.983 & 0.601 & 0.935 & 1.000 & 0.949 & 0.855 \\ 
   &  &    4 & 0.448 & 0.778 & 0.899 & 0.744 & 0.929 & 0.983 & 1.000 & 0.969 & 0.934 \\ 
   &  &    5 & 0.284 & 0.995 & 0.881 & 0.986 & 0.939 & 0.826 & 1.000 & 0.629 & 0.939 \\ 
  3 & 20 &    6 & 0.645 & 1.000 & 1.000 & 0.991 & 0.551 & 0.990 & 1.000 & 0.989 & 0.942 \\ 
   &  &    7 & 0.565 & 0.969 & 0.989 & 0.955 & 0.778 & 0.985 & 1.000 & 0.987 & 0.777 \\ 
   &  &    8 & 0.895 & 0.486 & 0.977 & 0.448 & 0.863 & 0.978 & 1.000 & 0.991 & 0.756 \\ 
   &  &    9 & 0.981 & 0.501 & 0.998 & 0.214 & 0.941 & 0.946 & 1.000 & 0.987 & 0.728 \\ 
   &  &   10 & 0.696 & 0.960 & 0.823 & 0.681 & 0.852 & 0.984 & 1.000 & 0.927 & 0.805 \\ 
   &  &   11 & 0.268 & 0.993 & 0.997 & 0.989 & 0.382 & 0.985 & 1.000 & 0.986 & 0.568 \\ 
   &  &   12 & 0.258 & 0.990 & 0.971 & 0.979 & 0.830 & 0.987 & 1.000 & 0.989 & 0.853 \\  
   \hline
 &  &    1 & 0.883 & 0.694 & 0.971 & 0.721 & 0.864 & 0.896 & 1.000 & 0.912 & 0.812 \\ 
   &  &    2 & 0.520 & 0.885 & 0.887 & 0.929 & 0.682 & 0.918 & 1.000 & 0.912 & 0.852 \\ 
   &  &    3 & 0.542 & 0.956 & 0.962 & 0.939 & 0.688 & 0.896 & 1.000 & 0.920 & 0.770 \\ 
   &  &    4 & 0.799 & 0.713 & 0.830 & 0.807 & 0.888 & 0.881 & 1.000 & 0.907 & 0.819 \\ 
   &  &    5 & 0.510 & 0.946 & 0.825 & 0.925 & 0.869 & 0.846 & 1.000 & 0.747 & 0.811 \\ 
  1 & 50 &    6 & 0.773 & 0.964 & 0.966 & 0.944 & 0.622 & 0.936 & 1.000 & 0.911 & 0.828 \\ 
   &  &    7 & 0.798 & 0.904 & 0.963 & 0.932 & 0.740 & 0.921 & 1.000 & 0.922 & 0.723 \\ 
   &  &    8 & 0.925 & 0.624 & 0.936 & 0.660 & 0.896 & 0.874 & 1.000 & 0.896 & 0.673 \\ 
   &  &    9 & 0.934 & 0.605 & 0.975 & 0.481 & 0.941 & 0.856 & 1.000 & 0.905 & 0.680 \\ 
   &  &   10 & 0.865 & 0.889 & 0.798 & 0.803 & 0.826 & 0.895 & 1.000 & 0.902 & 0.790 \\ 
   &  &   11 & 0.499 & 0.952 & 0.976 & 0.937 & 0.510 & 0.888 & 1.000 & 0.889 & 0.592 \\ 
   &  &   12 & 0.481 & 0.925 & 0.929 & 0.910 & 0.785 & 0.914 & 1.000 & 0.902 & 0.801 \\   
   \hline
  &  &    1 & 0.886 & 0.695 & 0.972 & 0.717 & 0.865 & 0.903 & 1.000 & 0.918 & 0.819 \\ 
   &  &    2 & 0.503 & 0.882 & 0.882 & 0.928 & 0.685 & 0.923 & 1.000 & 0.918 & 0.858 \\ 
   &  &    3 & 0.538 & 0.956 & 0.967 & 0.938 & 0.686 & 0.897 & 1.000 & 0.920 & 0.777 \\ 
   &  &    4 & 0.795 & 0.724 & 0.833 & 0.806 & 0.890 & 0.894 & 1.000 & 0.905 & 0.822 \\ 
   &  &    5 & 0.499 & 0.945 & 0.826 & 0.921 & 0.869 & 0.852 & 1.000 & 0.740 & 0.816 \\ 
  5 & 50 &    6 & 0.775 & 0.967 & 0.974 & 0.950 & 0.614 & 0.944 & 1.000 & 0.911 & 0.832 \\ 
   &  &    7 & 0.805 & 0.898 & 0.960 & 0.935 & 0.742 & 0.932 & 1.000 & 0.930 & 0.725 \\ 
   &  &    8 & 0.935 & 0.625 & 0.940 & 0.653 & 0.900 & 0.886 & 1.000 & 0.898 & 0.672 \\ 
   &  &    9 & 0.941 & 0.595 & 0.977 & 0.476 & 0.942 & 0.854 & 1.000 & 0.906 & 0.680 \\ 
   &  &   10 & 0.871 & 0.885 & 0.805 & 0.811 & 0.823 & 0.900 & 1.000 & 0.901 & 0.792 \\ 
   &  &   11 & 0.492 & 0.953 & 0.976 & 0.943 & 0.513 & 0.896 & 1.000 & 0.891 & 0.597 \\ 
   &  &   12 & 0.466 & 0.927 & 0.929 & 0.909 & 0.785 & 0.915 & 1.000 & 0.902 & 0.805 \\ 
\end{tabular}
\caption{Simulated coverage of 90\%-confidence intervals of contemporaneous ($k = 0$) seasonal structural impulse responses generated by aggregate demand $(ad)$, aggregate supply ($as$) and monetary policy $(mp)$ shocks on industrial production ($y^{ip}$), inflation $(\pi)$ and federal funds rate ($i$). Structural schocks are generated using GARCH specification G2.} 
\label{Cov_b1para2_k0}
\end{table}

\begin{table}[t!]
\centering
\scriptsize
\begin{tabular}{l|l|l|c|ccccccccc}
  & $b$&$N$ & $s$ & $ ad \to y^{ip}$ & $ ad \to  \pi$ &  $ ad \to  i$ & $as \to y^{ip}$ & $ as\to \pi$ & $ as \to i$ & $mp \to y^{ip}$ & $ mp \to \pi$ &  $mp \to i$ \\ 
 \hline
 &  &  &    1 & 0.917 & 0.815 & 0.946 & 0.844 & 0.914 & 0.884 & 1.000 & 0.894 & 0.872 \\ 
   &  &  &    2 & 0.764 & 0.868 & 0.896 & 0.890 & 0.827 & 0.893 & 1.000 & 0.885 & 0.896 \\ 
   &  &  &    3 & 0.769 & 0.924 & 0.942 & 0.929 & 0.824 & 0.908 & 1.000 & 0.894 & 0.850 \\ 
   &  &  &    4 & 0.935 & 0.815 & 0.890 & 0.876 & 0.935 & 0.885 & 1.000 & 0.893 & 0.867 \\ 
   &  &  &    5 & 0.756 & 0.916 & 0.883 & 0.912 & 0.899 & 0.900 & 1.000 & 0.883 & 0.860 \\ 
  G1 & 1 & 100 &    6 & 0.924 & 0.930 & 0.924 & 0.915 & 0.788 & 0.909 & 1.000 & 0.889 & 0.876 \\ 
   &  &  &    7 & 0.925 & 0.916 & 0.958 & 0.937 & 0.805 & 0.917 & 1.000 & 0.893 & 0.835 \\ 
   &  &  &    8 & 0.906 & 0.801 & 0.961 & 0.819 & 0.931 & 0.860 & 1.000 & 0.881 & 0.808 \\ 
   &  &  &    9 & 0.885 & 0.759 & 0.940 & 0.722 & 0.931 & 0.850 & 1.000 & 0.895 & 0.831 \\ 
   &  &  &   10 & 0.923 & 0.894 & 0.890 & 0.867 & 0.866 & 0.896 & 1.000 & 0.903 & 0.863 \\ 
   &  &  &   11 & 0.747 & 0.931 & 0.949 & 0.921 & 0.731 & 0.883 & 1.000 & 0.876 & 0.786 \\ 
   &  &  &   12 & 0.747 & 0.912 & 0.931 & 0.915 & 0.866 & 0.905 & 1.000 & 0.875 & 0.863 \\ 
   \hline
  &  &  &    1 & 0.918 & 0.813 & 0.941 & 0.841 & 0.915 & 0.882 & 1.000 & 0.902 & 0.882 \\ 
   &  &  &    2 & 0.761 & 0.864 & 0.893 & 0.885 & 0.833 & 0.891 & 1.000 & 0.887 & 0.895 \\ 
   &  &  &    3 & 0.763 & 0.925 & 0.945 & 0.932 & 0.826 & 0.915 & 1.000 & 0.900 & 0.846 \\ 
   &  &  &    4 & 0.939 & 0.809 & 0.888 & 0.876 & 0.933 & 0.885 & 1.000 & 0.906 & 0.873 \\ 
   &  &  &    5 & 0.757 & 0.916 & 0.883 & 0.913 & 0.900 & 0.902 & 1.000 & 0.880 & 0.866 \\ 
  G1 & 7 & 100 &    6 & 0.925 & 0.933 & 0.931 & 0.917 & 0.784 & 0.906 & 1.000 & 0.889 & 0.871 \\ 
   &  &  &    7 & 0.924 & 0.917 & 0.957 & 0.939 & 0.803 & 0.923 & 1.000 & 0.900 & 0.837 \\ 
   &  &  &    8 & 0.902 & 0.792 & 0.956 & 0.820 & 0.936 & 0.866 & 1.000 & 0.885 & 0.807 \\ 
   &  &  &    9 & 0.884 & 0.766 & 0.946 & 0.717 & 0.930 & 0.855 & 1.000 & 0.897 & 0.829 \\ 
   &  &  &   10 & 0.923 & 0.893 & 0.891 & 0.865 & 0.866 & 0.899 & 1.000 & 0.902 & 0.869 \\ 
   &  &  &   11 & 0.746 & 0.934 & 0.946 & 0.922 & 0.732 & 0.890 & 1.000 & 0.885 & 0.786 \\ 
   &  &  &   12 & 0.746 & 0.910 & 0.930 & 0.918 & 0.866 & 0.899 & 1.000 & 0.890 & 0.871 \\ 
   \hline
  &  &  &    1 & 0.881 & 0.763 & 0.945 & 0.796 & 0.860 & 0.894 & 1.000 & 0.877 & 0.804 \\ 
   &  &  &    2 & 0.637 & 0.859 & 0.868 & 0.879 & 0.767 & 0.883 & 1.000 & 0.890 & 0.815 \\ 
   &  &  &    3 & 0.615 & 0.913 & 0.907 & 0.913 & 0.758 & 0.887 & 1.000 & 0.885 & 0.780 \\ 
   &  &  &    4 & 0.884 & 0.762 & 0.832 & 0.842 & 0.905 & 0.875 & 1.000 & 0.870 & 0.791 \\ 
   &  &  &    5 & 0.622 & 0.927 & 0.840 & 0.921 & 0.895 & 0.834 & 1.000 & 0.748 & 0.813 \\ 
  G2 & 1 & 100 &    6 & 0.801 & 0.924 & 0.920 & 0.898 & 0.709 & 0.912 & 1.000 & 0.874 & 0.790 \\ 
   &  &  &    7 & 0.860 & 0.911 & 0.938 & 0.920 & 0.771 & 0.922 & 1.000 & 0.895 & 0.762 \\ 
   &  &  &    8 & 0.876 & 0.734 & 0.929 & 0.746 & 0.904 & 0.878 & 1.000 & 0.887 & 0.729 \\ 
   &  &  &    9 & 0.889 & 0.661 & 0.928 & 0.611 & 0.937 & 0.857 & 1.000 & 0.888 & 0.733 \\ 
   &  &  &   10 & 0.846 & 0.878 & 0.808 & 0.842 & 0.824 & 0.892 & 1.000 & 0.881 & 0.805 \\ 
   &  &  &   11 & 0.581 & 0.900 & 0.946 & 0.905 & 0.632 & 0.901 & 1.000 & 0.900 & 0.641 \\ 
   &  &  &   12 & 0.594 & 0.902 & 0.910 & 0.910 & 0.805 & 0.891 & 1.000 & 0.874 & 0.809 \\   
  \hline
  &  &  &    1 & 0.889 & 0.768 & 0.952 & 0.802 & 0.868 & 0.901 & 1.000 & 0.893 & 0.811 \\ 
   &  &  &    2 & 0.628 & 0.865 & 0.875 & 0.883 & 0.750 & 0.886 & 1.000 & 0.896 & 0.825 \\ 
   &  &  &    3 & 0.608 & 0.924 & 0.918 & 0.925 & 0.756 & 0.884 & 1.000 & 0.894 & 0.784 \\ 
   &  &  &    4 & 0.893 & 0.768 & 0.842 & 0.846 & 0.909 & 0.872 & 1.000 & 0.877 & 0.809 \\ 
   &  &  &    5 & 0.613 & 0.930 & 0.844 & 0.921 & 0.901 & 0.828 & 1.000 & 0.753 & 0.827 \\ 
  G2 & 7 & 100 &    6 & 0.803 & 0.938 & 0.934 & 0.909 & 0.703 & 0.915 & 1.000 & 0.875 & 0.804 \\ 
   &  &  &    7 & 0.859 & 0.923 & 0.945 & 0.936 & 0.777 & 0.933 & 1.000 & 0.908 & 0.778 \\ 
   &  &  &    8 & 0.880 & 0.743 & 0.929 & 0.747 & 0.909 & 0.879 & 1.000 & 0.896 & 0.738 \\ 
   &  &  &    9 & 0.889 & 0.657 & 0.938 & 0.595 & 0.940 & 0.862 & 1.000 & 0.893 & 0.735 \\ 
   &  &  &   10 & 0.848 & 0.875 & 0.813 & 0.843 & 0.827 & 0.903 & 1.000 & 0.882 & 0.809 \\ 
   &  &  &   11 & 0.571 & 0.912 & 0.943 & 0.920 & 0.630 & 0.908 & 1.000 & 0.902 & 0.644 \\ 
   &  &  &   12 & 0.584 & 0.916 & 0.913 & 0.923 & 0.801 & 0.902 & 1.000 & 0.884 & 0.804 \\  
\end{tabular}
\caption{Simulated coverage of 90\%-confidence intervals of contemporaneous ($k = 0$) seasonal structural impulse responses generated by aggregate demand $(ad)$, aggregate supply ($as$) and monetary policy $(mp)$ shocks on industrial production ($y^{ip}$), inflation $(\pi)$ and federal funds rate ($i$). Structural schocks are generated using GARCH specification G1 and G2.} 
\label{Cov_b7para1_k0}
\end{table}

\begin{table}[t!]
\centering
\scriptsize
\begin{tabular}{l|l|c|ccccccccc}
$b$&$ N$ & $s$ & $ ad \to y^{ip}$ & $ ad \to  \pi$ &  $ ad \to  i$ & $as \to y^{ip}$ & $ as\to \pi$ & $ as \to i$ & $mp \to y^{ip}$ & $ mp \to \pi$ &  $mp \to i$ \\ 
  \hline
 &  &    1 & 0.829 & 0.756 & 0.727 & 0.715 & 0.846 & 0.965 & 0.978 & 0.992 & 0.921 \\ 
   &  &    2 & 0.830 & 0.897 & 0.707 & 0.988 & 0.761 & 0.952 & 0.985 & 0.991 & 0.935 \\ 
   &  &    3 & 0.725 & 0.956 & 0.889 & 0.979 & 0.666 & 0.805 & 0.940 & 0.988 & 0.901 \\ 
   &  &    4 & 0.766 & 0.547 & 0.502 & 0.847 & 0.972 & 0.970 & 0.955 & 0.992 & 0.931 \\ 
   &  &    5 & 0.971 & 0.988 & 0.707 & 0.944 & 0.816 & 0.771 & 0.957 & 0.954 & 0.933 \\ 
  1 & 20  &    6 & 0.857 & 0.990 & 0.931 & 0.966 & 0.743 & 0.946 & 0.957 & 0.989 & 0.925 \\ 
   &  &    7 & 0.739 & 0.902 & 0.726 & 0.940 & 0.887 & 0.977 & 0.910 & 0.966 & 0.810 \\ 
   &  &    8 & 0.783 & 0.383 & 0.597 & 0.647 & 0.987 & 0.931 & 0.944 & 0.974 & 0.850 \\ 
   &  &    9 & 0.973 & 0.486 & 0.846 & 0.414 & 0.948 & 0.732 & 0.940 & 0.983 & 0.849 \\ 
   &  &   10 & 0.819 & 0.847 & 0.582 & 0.746 & 0.927 & 0.980 & 0.936 & 0.986 & 0.828 \\ 
   &  &   11 & 0.543 & 0.993 & 0.843 & 0.990 & 0.561 & 0.960 & 0.935 & 0.958 & 0.758 \\ 
   &  &   12 & 0.583 & 0.879 & 0.637 & 0.986 & 0.904 & 0.966 & 0.948 & 0.953 & 0.825 \\ 
   \hline
 &  &    1 & 0.822 & 0.749 & 0.729 & 0.710 & 0.851 & 0.961 & 0.976 & 0.990 & 0.914 \\ 
   &  &    2 & 0.827 & 0.896 & 0.708 & 0.990 & 0.764 & 0.959 & 0.982 & 0.991 & 0.944 \\ 
   &  &    3 & 0.716 & 0.954 & 0.888 & 0.982 & 0.662 & 0.814 & 0.938 & 0.991 & 0.903 \\ 
   &  &    4 & 0.760 & 0.533 & 0.499 & 0.842 & 0.972 & 0.966 & 0.956 & 0.994 & 0.932 \\ 
   &  &    5 & 0.974 & 0.989 & 0.705 & 0.942 & 0.808 & 0.780 & 0.948 & 0.946 & 0.931 \\ 
 3  & 20 &    6 & 0.855 & 0.989 & 0.931 & 0.966 & 0.733 & 0.947 & 0.951 & 0.993 & 0.925 \\ 
   &  &    7 & 0.737 & 0.909 & 0.723 & 0.933 & 0.891 & 0.978 & 0.910 & 0.966 & 0.821 \\ 
   &  &    8 & 0.783 & 0.368 & 0.599 & 0.633 & 0.986 & 0.926 & 0.942 & 0.975 & 0.849 \\ 
   &  &    9 & 0.974 & 0.492 & 0.850 & 0.409 & 0.950 & 0.729 & 0.939 & 0.983 & 0.848 \\ 
   &  &   10 & 0.812 & 0.851 & 0.567 & 0.728 & 0.926 & 0.981 & 0.942 & 0.989 & 0.839 \\ 
   &  &   11 & 0.541 & 0.994 & 0.844 & 0.989 & 0.563 & 0.959 & 0.939 & 0.963 & 0.762 \\ 
   &  &   12 & 0.580 & 0.878 & 0.628 & 0.987 & 0.905 & 0.965 & 0.948 & 0.958 & 0.823 \\ 
   \hline
  &  &    1 & 0.894 & 0.780 & 0.840 & 0.774 & 0.884 & 0.895 & 0.928 & 0.923 & 0.849 \\ 
   &  &    2 & 0.884 & 0.858 & 0.783 & 0.945 & 0.874 & 0.936 & 0.923 & 0.924 & 0.852 \\ 
   &  &    3 & 0.788 & 0.950 & 0.918 & 0.934 & 0.768 & 0.863 & 0.906 & 0.920 & 0.811 \\ 
   &  &    4 & 0.857 & 0.686 & 0.717 & 0.824 & 0.926 & 0.866 & 0.894 & 0.931 & 0.837 \\ 
   &  &    5 & 0.919 & 0.908 & 0.789 & 0.911 & 0.828 & 0.860 & 0.903 & 0.950 & 0.820 \\ 
  1 & 50 &    6 & 0.889 & 0.950 & 0.926 & 0.932 & 0.804 & 0.937 & 0.917 & 0.902 & 0.820 \\ 
   &  &    7 & 0.859 & 0.914 & 0.799 & 0.925 & 0.896 & 0.948 & 0.885 & 0.861 & 0.779 \\ 
   &  &    8 & 0.917 & 0.575 & 0.784 & 0.758 & 0.923 & 0.861 & 0.909 & 0.856 & 0.787 \\ 
   &  &    9 & 0.944 & 0.731 & 0.958 & 0.635 & 0.883 & 0.805 & 0.903 & 0.903 & 0.790 \\ 
   &  &   10 & 0.906 & 0.894 & 0.763 & 0.797 & 0.895 & 0.890 & 0.887 & 0.897 & 0.805 \\ 
   &  &   11 & 0.692 & 0.944 & 0.920 & 0.942 & 0.680 & 0.956 & 0.876 & 0.863 & 0.754 \\ 
   &  &   12 & 0.704 & 0.908 & 0.812 & 0.946 & 0.896 & 0.934 & 0.877 & 0.863 & 0.803 \\  
  \hline
 &  &    1 & 0.892 & 0.780 & 0.841 & 0.773 & 0.885 & 0.897 & 0.929 & 0.926 & 0.853 \\ 
   &  &    2 & 0.879 & 0.864 & 0.779 & 0.943 & 0.872 & 0.941 & 0.915 & 0.924 & 0.861 \\ 
   &  &    3 & 0.774 & 0.950 & 0.906 & 0.934 & 0.769 & 0.863 & 0.905 & 0.924 & 0.815 \\ 
   &  &    4 & 0.857 & 0.687 & 0.715 & 0.830 & 0.930 & 0.865 & 0.898 & 0.931 & 0.842 \\ 
   &  &    5 & 0.913 & 0.911 & 0.789 & 0.920 & 0.830 & 0.860 & 0.906 & 0.947 & 0.817 \\ 
  5 & 50 &    6 & 0.890 & 0.957 & 0.926 & 0.932 & 0.805 & 0.941 & 0.923 & 0.911 & 0.825 \\ 
   &  &    7 & 0.861 & 0.914 & 0.804 & 0.920 & 0.891 & 0.955 & 0.890 & 0.866 & 0.780 \\ 
   &  &    8 & 0.922 & 0.557 & 0.780 & 0.753 & 0.925 & 0.860 & 0.920 & 0.853 & 0.780 \\ 
   &  &    9 & 0.947 & 0.718 & 0.956 & 0.622 & 0.878 & 0.810 & 0.910 & 0.901 & 0.798 \\ 
   &  &   10 & 0.909 & 0.885 & 0.758 & 0.793 & 0.900 & 0.898 & 0.881 & 0.897 & 0.809 \\ 
   &  &   11 & 0.686 & 0.938 & 0.919 & 0.950 & 0.666 & 0.956 & 0.882 & 0.862 & 0.745 \\ 
   &  &   12 & 0.691 & 0.904 & 0.813 & 0.951 & 0.893 & 0.934 & 0.879 & 0.861 & 0.803 \\ 
\end{tabular}
\caption{Simulated coverage of 90\%-confidence intervals of seasonal structural impulse response of industrial production ($y^{ip}$), inflation $(\pi)$ and federal funds rate ($i$) $k=12$ months after a shock to aggregate demand $(ad)$, aggregate supply $(as)$ and monetary policy ($mp$). Structural schocks are generated using GARCH specification G1.}
\label{Cov_b1para1_k12}
\end{table}

\begin{table}[t!]
\centering
\scriptsize
\begin{tabular}{l|l|c|ccccccccc}
$b$&$ N$ & $s$ & $ ad \to y^{ip}$ & $ ad \to  \pi$ &  $ ad \to  i$ & $as \to y^{ip}$ & $ as\to \pi$ & $ as \to i$ & $mp \to y^{ip}$ & $ mp \to \pi$ &  $mp \to i$ \\ 
  \hline
  &  &    1 & 0.798 & 0.704 & 0.653 & 0.668 & 0.807 & 0.966 & 0.974 & 0.991 & 0.929 \\ 
   &  &    2 & 0.798 & 0.871 & 0.646 & 0.976 & 0.683 & 0.937 & 0.973 & 0.996 & 0.943 \\ 
   &  &    3 & 0.664 & 0.927 & 0.845 & 0.986 & 0.587 & 0.748 & 0.926 & 0.989 & 0.906 \\ 
   &  &    4 & 0.717 & 0.484 & 0.434 & 0.805 & 0.966 & 0.971 & 0.941 & 0.989 & 0.948 \\ 
   &  &    5 & 0.947 & 0.985 & 0.638 & 0.919 & 0.729 & 0.731 & 0.938 & 0.931 & 0.945 \\ 
 1  & 20 &    6 & 0.828 & 0.986 & 0.898 & 0.952 & 0.668 & 0.927 & 0.942 & 0.995 & 0.934 \\ 
   &  &    7 & 0.672 & 0.855 & 0.665 & 0.901 & 0.846 & 0.975 & 0.886 & 0.976 & 0.829 \\ 
   &  &    8 & 0.738 & 0.319 & 0.513 & 0.592 & 0.981 & 0.930 & 0.935 & 0.979 & 0.852 \\ 
   &  &    9 & 0.966 & 0.419 & 0.802 & 0.334 & 0.932 & 0.684 & 0.924 & 0.983 & 0.849 \\ 
   &  &   10 & 0.743 & 0.804 & 0.505 & 0.653 & 0.893 & 0.982 & 0.926 & 0.987 & 0.844 \\ 
   &  &   11 & 0.481 & 0.993 & 0.779 & 0.984 & 0.499 & 0.928 & 0.924 & 0.975 & 0.738 \\ 
   &  &   12 & 0.501 & 0.821 & 0.535 & 0.979 & 0.869 & 0.948 & 0.934 & 0.972 & 0.813 \\ 
   \hline
 &  &    1 & 0.794 & 0.705 & 0.648 & 0.666 & 0.809 & 0.971 & 0.975 & 0.991 & 0.922 \\ 
   &  &    2 & 0.790 & 0.874 & 0.639 & 0.977 & 0.680 & 0.932 & 0.972 & 0.995 & 0.946 \\ 
   &  &    3 & 0.662 & 0.938 & 0.840 & 0.987 & 0.579 & 0.749 & 0.920 & 0.992 & 0.906 \\ 
   &  &    4 & 0.711 & 0.473 & 0.433 & 0.797 & 0.967 & 0.973 & 0.942 & 0.993 & 0.949 \\ 
   &  &    5 & 0.954 & 0.982 & 0.625 & 0.920 & 0.731 & 0.727 & 0.932 & 0.932 & 0.948 \\ 
  3 & 20 &    6 & 0.825 & 0.988 & 0.902 & 0.950 & 0.666 & 0.924 & 0.943 & 0.995 & 0.937 \\ 
   &  &    7 & 0.667 & 0.854 & 0.656 & 0.903 & 0.842 & 0.975 & 0.891 & 0.979 & 0.833 \\ 
   &  &    8 & 0.743 & 0.315 & 0.514 & 0.581 & 0.975 & 0.926 & 0.934 & 0.980 & 0.860 \\ 
   &  &    9 & 0.966 & 0.415 & 0.801 & 0.330 & 0.939 & 0.683 & 0.921 & 0.983 & 0.850 \\ 
   &  &   10 & 0.743 & 0.812 & 0.504 & 0.646 & 0.896 & 0.981 & 0.923 & 0.987 & 0.847 \\ 
   &  &   11 & 0.477 & 0.995 & 0.776 & 0.984 & 0.499 & 0.926 & 0.926 & 0.975 & 0.750 \\ 
   &  &   12 & 0.495 & 0.814 & 0.537 & 0.978 & 0.866 & 0.945 & 0.940 & 0.975 & 0.821 \\ 
   \hline
  &  &    1 & 0.849 & 0.699 & 0.734 & 0.698 & 0.817 & 0.883 & 0.910 & 0.927 & 0.833 \\ 
   &  &    2 & 0.814 & 0.829 & 0.710 & 0.918 & 0.805 & 0.928 & 0.903 & 0.930 & 0.835 \\ 
   &  &    3 & 0.705 & 0.917 & 0.883 & 0.926 & 0.672 & 0.813 & 0.863 & 0.914 & 0.784 \\ 
   &  &    4 & 0.820 & 0.606 & 0.602 & 0.785 & 0.893 & 0.843 & 0.874 & 0.922 & 0.819 \\ 
   &  &    5 & 0.872 & 0.901 & 0.739 & 0.861 & 0.751 & 0.798 & 0.862 & 0.913 & 0.806 \\ 
  1 & 50  &    6 & 0.825 & 0.935 & 0.869 & 0.895 & 0.717 & 0.906 & 0.896 & 0.914 & 0.813 \\ 
   &  &    7 & 0.779 & 0.875 & 0.732 & 0.891 & 0.849 & 0.932 & 0.842 & 0.884 & 0.762 \\ 
   &  &    8 & 0.894 & 0.455 & 0.680 & 0.696 & 0.911 & 0.842 & 0.897 & 0.871 & 0.758 \\ 
   &  &    9 & 0.942 & 0.596 & 0.922 & 0.533 & 0.884 & 0.757 & 0.878 & 0.885 & 0.746 \\ 
   &  &   10 & 0.867 & 0.827 & 0.661 & 0.737 & 0.858 & 0.886 & 0.859 & 0.898 & 0.773 \\ 
   &  &   11 & 0.574 & 0.933 & 0.851 & 0.932 & 0.574 & 0.912 & 0.851 & 0.866 & 0.710 \\ 
   &  &   12 & 0.606 & 0.843 & 0.700 & 0.926 & 0.823 & 0.900 & 0.859 & 0.864 & 0.763 \\ 
   \hline
   &  &    1 & 0.850 & 0.709 & 0.736 & 0.705 & 0.820 & 0.889 & 0.910 & 0.931 & 0.841 \\ 
   &  &    2 & 0.817 & 0.823 & 0.704 & 0.919 & 0.798 & 0.937 & 0.906 & 0.928 & 0.847 \\ 
   &  &    3 & 0.710 & 0.923 & 0.887 & 0.931 & 0.673 & 0.818 & 0.865 & 0.921 & 0.798 \\ 
   &  &    4 & 0.810 & 0.608 & 0.586 & 0.784 & 0.901 & 0.850 & 0.880 & 0.924 & 0.823 \\ 
   &  &    5 & 0.883 & 0.897 & 0.734 & 0.867 & 0.754 & 0.807 & 0.870 & 0.919 & 0.816 \\ 
 5  & 50 &    6 & 0.823 & 0.939 & 0.873 & 0.896 & 0.719 & 0.907 & 0.906 & 0.914 & 0.820 \\ 
   &  &    7 & 0.785 & 0.871 & 0.735 & 0.900 & 0.862 & 0.941 & 0.846 & 0.888 & 0.766 \\ 
   &  &    8 & 0.890 & 0.455 & 0.683 & 0.705 & 0.925 & 0.854 & 0.900 & 0.877 & 0.767 \\ 
   &  &    9 & 0.947 & 0.590 & 0.926 & 0.533 & 0.878 & 0.754 & 0.885 & 0.890 & 0.761 \\ 
   &  &   10 & 0.872 & 0.828 & 0.667 & 0.743 & 0.861 & 0.887 & 0.864 & 0.901 & 0.770 \\ 
   &  &   11 & 0.561 & 0.943 & 0.851 & 0.932 & 0.572 & 0.923 & 0.857 & 0.864 & 0.722 \\ 
   &  &   12 & 0.605 & 0.833 & 0.701 & 0.929 & 0.829 & 0.911 & 0.866 & 0.867 & 0.778 \\ 
\end{tabular}
\caption{Simulated coverage of 90\%-confidence intervals of seasonal structural impulse response of industrial production ($y^{ip}$), inflation $(\pi)$ and federal funds rate ($i$) $k=12$ months after a shock to aggregate demand $(ad)$, aggregate supply $(as)$ and monetary policy ($mp$). Structural schocks are generated using GARCH specification G2}
\label{Cov_b1para2_k12}
\end{table}

\begin{table}[t!]
\centering
\scriptsize
\begin{tabular}{l|l|l|c|ccccccccc}
& $b$&$ N$ & $s$ & $ ad \to y^{ip}$ & $ ad \to  \pi$ &  $ ad \to  i$ & $as \to y^{ip}$ & $ as\to \pi$ & $ as \to i$ & $mp \to y^{ip}$ & $ mp \to \pi$ &  $mp \to i$ \\ 
   \hline
 &  &  &    1 & 0.924 & 0.842 & 0.883 & 0.815 & 0.903 & 0.888 & 0.917 & 0.895 & 0.858 \\ 
   &  &  &    2 & 0.891 & 0.857 & 0.806 & 0.926 & 0.892 & 0.900 & 0.914 & 0.893 & 0.852 \\ 
   &  &  &    3 & 0.837 & 0.911 & 0.907 & 0.929 & 0.821 & 0.887 & 0.901 & 0.908 & 0.829 \\ 
   &  &  &    4 & 0.913 & 0.783 & 0.797 & 0.865 & 0.912 & 0.886 & 0.905 & 0.900 & 0.833 \\ 
   &  &  &    5 & 0.925 & 0.904 & 0.835 & 0.895 & 0.882 & 0.896 & 0.893 & 0.926 & 0.832 \\ 
  G1 & 1 & 100  &    6 & 0.891 & 0.922 & 0.907 & 0.915 & 0.867 & 0.931 & 0.913 & 0.889 & 0.850 \\ 
   &  &  &    7 & 0.909 & 0.924 & 0.844 & 0.914 & 0.911 & 0.938 & 0.896 & 0.868 & 0.821 \\ 
   &  &  &    8 & 0.924 & 0.708 & 0.860 & 0.833 & 0.910 & 0.864 & 0.914 & 0.865 & 0.815 \\ 
   &  &  &    9 & 0.897 & 0.810 & 0.959 & 0.750 & 0.871 & 0.826 & 0.900 & 0.889 & 0.823 \\ 
   &  &  &   10 & 0.919 & 0.908 & 0.840 & 0.848 & 0.911 & 0.900 & 0.909 & 0.895 & 0.842 \\ 
   &  &  &   11 & 0.788 & 0.927 & 0.925 & 0.913 & 0.740 & 0.928 & 0.882 & 0.865 & 0.802 \\ 
   &  &  &   12 & 0.794 & 0.916 & 0.858 & 0.933 & 0.932 & 0.921 & 0.896 & 0.867 & 0.822 \\ 
   \hline
  &  &  &    1 & 0.919 & 0.836 & 0.889 & 0.811 & 0.912 & 0.885 & 0.913 & 0.892 & 0.856 \\ 
   &  &  &    2 & 0.888 & 0.860 & 0.815 & 0.927 & 0.886 & 0.900 & 0.919 & 0.898 & 0.849 \\ 
   &  &  &    3 & 0.833 & 0.910 & 0.901 & 0.930 & 0.825 & 0.890 & 0.899 & 0.911 & 0.832 \\ 
   &  &  &    4 & 0.912 & 0.776 & 0.800 & 0.866 & 0.916 & 0.892 & 0.903 & 0.898 & 0.837 \\ 
   &  &  &    5 & 0.921 & 0.904 & 0.822 & 0.907 & 0.883 & 0.889 & 0.895 & 0.930 & 0.827 \\ 
  G1 & 7 & 100 &    6 & 0.884 & 0.922 & 0.912 & 0.917 & 0.866 & 0.931 & 0.919 & 0.894 & 0.845 \\ 
   &  &  &    7 & 0.910 & 0.920 & 0.839 & 0.915 & 0.914 & 0.938 & 0.893 & 0.873 & 0.824 \\ 
   &  &  &    8 & 0.920 & 0.712 & 0.850 & 0.830 & 0.906 & 0.863 & 0.914 & 0.865 & 0.820 \\ 
   &  &  &    9 & 0.892 & 0.812 & 0.965 & 0.754 & 0.874 & 0.833 & 0.902 & 0.890 & 0.824 \\ 
   &  &  &   10 & 0.930 & 0.899 & 0.828 & 0.853 & 0.912 & 0.899 & 0.903 & 0.899 & 0.835 \\ 
   &  &  &   11 & 0.785 & 0.920 & 0.924 & 0.913 & 0.734 & 0.932 & 0.886 & 0.867 & 0.802 \\ 
   &  &  &   12 & 0.798 & 0.911 & 0.853 & 0.930 & 0.932 & 0.917 & 0.893 & 0.866 & 0.812 \\ 
   \hline
  &  &  &    1 & 0.869 & 0.780 & 0.792 & 0.753 & 0.836 & 0.865 & 0.884 & 0.898 & 0.815 \\ 
   &  &  &    2 & 0.827 & 0.830 & 0.740 & 0.887 & 0.832 & 0.892 & 0.890 & 0.900 & 0.838 \\ 
   &  &  &    3 & 0.736 & 0.857 & 0.813 & 0.887 & 0.740 & 0.831 & 0.869 & 0.904 & 0.793 \\ 
   &  &  &    4 & 0.848 & 0.702 & 0.696 & 0.805 & 0.890 & 0.870 & 0.854 & 0.892 & 0.803 \\ 
   &  &  &    5 & 0.867 & 0.875 & 0.765 & 0.817 & 0.795 & 0.819 & 0.866 & 0.879 & 0.792 \\ 
  G2 & 1 & 100  &    6 & 0.805 & 0.888 & 0.805 & 0.860 & 0.772 & 0.904 & 0.878 & 0.878 & 0.793 \\ 
   &  &  &    7 & 0.839 & 0.891 & 0.769 & 0.868 & 0.859 & 0.910 & 0.850 & 0.866 & 0.793 \\ 
   &  &  &    8 & 0.905 & 0.589 & 0.744 & 0.731 & 0.881 & 0.845 & 0.882 & 0.865 & 0.786 \\ 
   &  &  &    9 & 0.884 & 0.687 & 0.914 & 0.647 & 0.861 & 0.775 & 0.875 & 0.900 & 0.792 \\ 
   &  &  &   10 & 0.863 & 0.862 & 0.730 & 0.783 & 0.855 & 0.881 & 0.864 & 0.906 & 0.802 \\ 
   &  &  &   11 & 0.660 & 0.883 & 0.838 & 0.890 & 0.628 & 0.886 & 0.858 & 0.859 & 0.760 \\ 
   &  &  &   12 & 0.684 & 0.869 & 0.773 & 0.907 & 0.868 & 0.919 & 0.877 & 0.854 & 0.782 \\   
   \hline
  &  &  &    1 & 0.883 & 0.781 & 0.804 & 0.766 & 0.857 & 0.884 & 0.890 & 0.904 & 0.826 \\ 
   &  &  &    2 & 0.847 & 0.838 & 0.747 & 0.902 & 0.836 & 0.894 & 0.899 & 0.912 & 0.834 \\ 
   &  &  &    3 & 0.756 & 0.873 & 0.845 & 0.902 & 0.753 & 0.837 & 0.874 & 0.911 & 0.800 \\ 
   &  &  &    4 & 0.853 & 0.718 & 0.698 & 0.822 & 0.896 & 0.877 & 0.860 & 0.905 & 0.815 \\ 
   &  &  &    5 & 0.884 & 0.881 & 0.759 & 0.843 & 0.811 & 0.824 & 0.873 & 0.898 & 0.806 \\ 
  G2 & 7 & 100 &    6 & 0.823 & 0.902 & 0.819 & 0.882 & 0.779 & 0.920 & 0.892 & 0.887 & 0.820 \\ 
   &  &  &    7 & 0.837 & 0.897 & 0.772 & 0.887 & 0.869 & 0.928 & 0.859 & 0.873 & 0.792 \\ 
   &  &  &    8 & 0.916 & 0.589 & 0.753 & 0.755 & 0.901 & 0.853 & 0.887 & 0.870 & 0.790 \\ 
   &  &  &    9 & 0.887 & 0.691 & 0.921 & 0.636 & 0.875 & 0.791 & 0.882 & 0.901 & 0.796 \\ 
   &  &  &   10 & 0.867 & 0.863 & 0.721 & 0.790 & 0.862 & 0.889 & 0.871 & 0.901 & 0.808 \\ 
   &  &  &   11 & 0.667 & 0.894 & 0.851 & 0.902 & 0.636 & 0.902 & 0.866 & 0.871 & 0.761 \\ 
   &  &  &   12 & 0.683 & 0.871 & 0.777 & 0.919 & 0.870 & 0.917 & 0.875 & 0.864 & 0.789 \\ 
\end{tabular}
\caption{Simulated coverage of 90\%-confidence intervals of seasonal structural impulse response of industrial production ($y^{ip}$), inflation $(\pi)$ and federal funds rate ($i$) $k=12$ months after a shock to aggregate demand $(ad)$, aggregate supply $(as)$ and monetary policy ($mp$). Structural schocks are generated using GARCH specification G1 and G2.}
\label{Cov_b7para1_k12}
\end{table}


\end{document}